\begin{document}
\newtheorem{assumption}{Assumption}[section]
\newtheorem{rhp}{Riemann--Hilbert Problem}[section]
\newtheorem{dbar}{$\bar{\partial}$-Problem}[section]
\newtheorem{dbarrhp}{$\bar{\partial}$-Riemann--Hilbert Problem}[section]
\newtheorem{theorem}{Theorem}[section]
\newtheorem{lemma}{Lemma}[section]
\newtheorem{proposition}{Proposition}[section]
\newtheorem{corollary}{Corollary}[section]
\newtheorem{definition}{Definition}[section]
\newtheorem{conjecture}{Conjecture}[section]
\newtheorem{remark}{Remark}[section]
\numberwithin{equation}{section}

\newcommand{\bfR}{{\mathbb R}}
\newcommand{\bfC}{{\mathbb C}}
\newcommand{\bfZ}{{\mathbb Z}}
\newcommand{\ii}{\text{i}}
\newcommand{\e}{\text{e}}
\newcommand{\dd}{\text{d}}
\newcommand{\nn}{\nonumber}
\newcommand\be{\begin{equation}}
\newcommand\ee{\end{equation}}
\newcommand{\bea}{\begin{eqnarray}}
\newcommand{\eea}{\end{eqnarray}}
\newcommand\berr{\begin{eqnarray*}}
\newcommand\eerr{\end{eqnarray*}}
\newcommand{\bfu}{\mathbf{u}}
\newcommand{\bfw}{\mathbf{w}}
\newcommand{\change}[1]{{\color{blue}#1}}
\newcommand{\note}[1]{{\color{red}#1}}

\title{ Soliton resolution for the coupled complex short pulse equation}

\author{Nan Liu$^*$}
\address{School of Mathematics and Statistics, Nanjing University
of Information Science and Technology, Nanjing 210044,  P.R. China}
\email{ln10475@163.com}

\author{Ran Wang}
\address{School of Mathematics and Statistics, Nanjing University
of Information Science and Technology, Nanjing 210044,  P.R. China}
\email{wran1102@163.com}

\keywords{Coupled complex short pulse equation; Long-time asymptotic behavior; Riemann--Hilbert problem; $\bar{\partial}$ steepest descent method; Soliton resolution}

\begin{abstract}
We address the long-time asymptotics of the solution to the Cauchy problem of ccSP (coupled complex short pulse) equation on the line for decaying initial data that can support solitons. The ccSP system describes ultra-short pulse propagation in optical fibers, which is a completely integrable system and posses a $4\times4$ matrix Wadati--Konno--Ichikawa type Lax pair. Based on the $\bar{\partial}$-generalization of the Deift--Zhou steepest descent method, we obtain the long-time asymptotic approximations of the solution in two kinds of space-time regions under a new scale $(\zeta,t)$. The solution of the ccSP equation decays as a speed of $O(t^{-1})$ in the region $\zeta/t>\varepsilon$ with any $\varepsilon>0$; while in the region $\zeta/t<-\varepsilon$, the solution is depicted by the form of a multi-self-symmetric soliton/composite breather and $t^{-1/2}$ order term arises from self-symmetric soliton/composite breather-radiation interactions as well as an residual error order $O(t^{-1}\ln t)$.

\end{abstract}
\thanks{$^*$Corresponding author.}
\maketitle
\section{Introduction}
The NLS (nonlinear Schr\"odinger) equation plays a crucial role in the field of optical communications since it has been shown to be a universal model for the propagations of picosecond optical pulses in the single-mode nonlinear media. However, when one considers optical pulses whose width is of the order of the femtosecond ($10^{-15}$s) and
much smaller than the carrier frequency, the propagation of such ultra-short packets in nonlinear media is better described by the cSP (complex short pulse) equation \cite{F-PhysD,LFZ}
\be
u_{xt}=u+\frac{1}{2}(|u|^2u_x)_x,
\ee
where $u=u(x,t)$ is a complex function, which represents the electric field associated with the propagating optical pulse. To describe the propagation of optical pulses in birefringence fibers, two orthogonally polarized modes have to be considered, and in analogy to the Manakov system, a ccSP (coupled complex short pulse) equation
\begin{equation}\label{ccSPE}
\begin{aligned}
q_{1xt}=q_1+\frac{1}{2}\left[\left(|q_1|^2+|q_2|^2\right)q_{1x}\right]_{x},\\
q_{2xt}=q_2+\frac{1}{2}\left[\left(|q_1|^2+|q_2|^2\right)q_{2x}\right]_{x},
\end{aligned}
\end{equation}
was derived from Maxwell's equations in the literature \cite{F-PhysD}, where $q_j=q_j(x,t)$, $j=1,2$ are two complex functions representing the electric fields. In the original paper \cite{F-PhysD}, the author displayed the Lax pair, conservation laws and bright soliton solutions in pfaffians by virtue of Hirota's bilinear method.

In recent years, the ccSP equation has attracted considerable interest and been studied extensively due to its rich mathematical structure and remarkable properties.
Through the Hirota method, bright-dark one- and two-soliton solutions have been constructed in \cite{GW-WM}, and interactions between two bright or two dark solitons have been verified to be elastic through the asymptotic analysis. In \cite{KW-MMAS}, Lie symmetries and exact solutions of ccSP equation have been obtained. Moreover, the Darboux transformation for the Equation \eqref{ccSPE} was also constructed through the loop group method in \cite{FL-PhysD}, as a by-product, various exact solutions including bright-soliton, dark-soliton, breather and rogue wave solutions were obtained. In addition, soliton interactions and Yang-Baxter maps for the ccSP equation were also discussed in \cite{CGP}. The RH (Riemann--Hilbert) approach to the IST (inverse scattering transform) for the ccSP equation on the line with zero boundary conditions at space infinity was developed in \cite{LSL-IST} to solve the initial-value problem, and the long-time asymptotic behavior was analyzed in \cite{GLL-JDE} under the assumption that the initial conditions do not support solitons via the Deift--Zhou nonlinear steepest descent method \cite{PD}. Therefore, there is no systematic analysis for the asymptotics of ccSP equation in the presence of discrete spectrum, which motivates the present study.

The celebrated IST method developed by Gardner, Greene, Kruskal and Miura in \cite{GGKM} has been turned out to be very effective for solving the initial-value problems for a wide class of physically significant nonlinear partial differential equations. This robust approach allowed one to give a huge number of very interesting results in different areas of mathematics and physics. While the original IST was formulated in terms of the integral equations of Gel'fand--Levitan--Marchenko type, however, this method was subsequently rewritten as a RH factorization problem to study the various kinds of integrable nonlinear equations \cite{APT,YJK}. In particular, a great achievement in the further development of the IST method is the nonlinear steepest descent method for oscillatory matrix Riemann--Hilbert problems done by Deift and Zhou \cite{PD} based on earlier work \cite{Its,ZM}. This powerful method offers a systematic procedure for finding the asymptotics of integrable systems by reducing the original RH problem to a canonical model RH problem whose solution is calculated in terms of parabolic cylinder functions or Painlev\'e functions. This reduction is done through a sequence of transformations whose effects do not change the large-time behavior of the recovered solution at leading order. With this new method came the nice possibility to obtain numerous new significant asymptotic results in the theory of completely integrable nonlinear equations. Such equations include the NLS equation \cite{BLM,BV}, derivative NLS equation \cite{AL}, mKdV (modified Korteweg--de Vries) equation \cite{CL,HZ-SIAM}, Hirota equation \cite{GLW-NA}, SP/cSP equation \cite{BS-SP,XF-JDE}, CH (Camassa--Holm) equation \cite{AAD}, extended mKdV equation \cite{X-JDE} and so on, which are associated with $2\times2$ matrix spectral problems. Furthermore, this approach also has been extended to study the asymptotics for higher-order matrix Lax pair systems, such as Degasperis--Procesi equation \cite{BS-DP}, coupled NLS equation \cite{GL-JNS}, Sasa--Satsuma equation \cite{LG2021-JMP}, Spin-1 Gross--Pitaevskii equation \cite{GWC-CMP}, matrix mKdV equation \cite{liu2023}, two-component Sasa--Satsuma equation \cite{ZW-JNS}, Boussinesq equation \cite{CL-JMPA,charlier2023good}, Hirota--Satsuma equation \cite{wang2025miura}, to name a few.

In recent years, the Deift--Zhou steepest descent method was further extended to the $\bar{\partial}$ steepest descent method of McLaughlin and Miller \cite{MM2006,MM2008}, which first appeared in the calculating the asymptotic behavior of orthogonal polynomials. The $\bar{\partial}$ steepest descent method follows the general scheme of the Deift--Zhou steepest descent argument, while the rational approximation of the reflection coefficient is replaced by some non-analytic extension, which leads to a $\bar{\partial}$-problem in some sectors of the complex plane. On the other hand, this method also has displayed some advantages, such as avoiding delicate estimates involving $L^p$ estimates of
the Cauchy projection operators, lowering demands on initial conditions and improving the error estimates. This method then was adapted to obtain the long-time asymptotics for solutions to the NLS equation \cite{BJM,DM}, derivative NLS equation \cite{JLPS}, mKdV equation \cite{CL2021}, fifth-order mKdV equation \cite{LCG-SAPM}, cSP/SP equation \cite{LTYF-JDE,YF-JDE}, modified CH equation \cite{YF-ADV}, cSP positive flow \cite{GWWL-JDE}, Wadati--Konno--Ichikawa equation \cite{LTY-ADV}, Sasa--Satsuma equation \cite{XF-SS-JDE}, Novikov equation \cite{YF-2023}, coupled NLS equation \cite{HLZ}, among others, with a sharp error bound for the weighted Sobolev initial data.

The main purpose of this paper is to apply the $\bar{\partial}$-techniques to investigate the long-time asymptotic behavior of solution for the Cauchy problem of the ccSP equation \eqref{ccSPE} formulated on the whole line $x\in\bfR$ with the initial data
\be\label{IVD}
q_1(x,t=0)=q_{10}(x),\quad q_2(x,t=0)=q_{20}(x).
\ee
We will focus on the case that $q_{10}(x),q_{20}(x)$ belong to the Schwartz space $\mathcal{S}(\bfR)$ and support solitons. It is worth noting that the motivations and key highlights of this work involve the following aspects.

(I) Our attention is focused not only on the case $\zeta/t<-\varepsilon$, in which the oscillatory term $\e^{\pm2\ii t\theta}$ has two stationary points on the real axis, but also on the case $\zeta/t>\varepsilon$. However, only case $\zeta/t<-\varepsilon$ is considered in reference \cite{GLL-JDE}.

(II) We present a detailed proof that the a solution of the RH problem \ref{rhp2.1}, by means of the representation results \eqref{2.50}-\eqref{2.51}, gives rise to a solution of the ccSP equation, see Theorem \ref{th2.2}.

(III) We provide some more elaborate estimates on the rate of convergence; see the proof of Proposition \ref{prop3.3}. In particular, we bridge a gap in the literature \cite{GLL-JDE} concerning the approximation of two distinct $2\times2$ matrix-valued functions $\delta_j(k)$ using the unit matrix multiplied by their determinants. This is because the function $\delta_{2-}(k)$ is only continuous on the real axis, which makes it impossible to directly perform an analytic decomposition of the function $f(k)$ defined in \eqref{A.12}.

(IV) A class of $4\times4$ model RH problems is constructed and solved using the solutions of the parabolic cylinder equation, which may be applied in the future to address other asymptotic questions in integrable PDEs.

(V) Our long-time asymptotic expansion will result in the verification of the soliton resolution conjecture for the initial-value problem of the ccSP equation associated with a $4\times4$ matrix Lax pair.

Our main result is expressed as follows.
\begin{theorem}\label{th1.1}
Let $q_{10}(x), q_{20}(x)\in\mathcal{S}(\bfR)$ be the initial data such that Assumption \ref{assump1} is fulfilled. Let $\varepsilon$ be any small positive number. Then the behavior of the solution of the Cauchy problem for ccSP equation \eqref{ccSPE} with initial data \eqref{IVD} enjoys the following asymptotics as $t\to\infty$:

$\bullet$ In the domain $\hat{\zeta}=\zeta/t<-\varepsilon$, the solution can be written as the superposition of self-symmetric solitons/composite breathers and radiation:
\begin{align}\label{1.4}
&\begin{pmatrix}q_1(x,t)&q_2(x,t)\end{pmatrix}=
\begin{pmatrix}q_1(\zeta(x,t),t)&q_2(\zeta(x,t),t)\end{pmatrix}\\
=&\left\{
\begin{aligned}
&\sum_{n=1}^N\begin{pmatrix}\tilde{q}_{1n}^{(cb)}(\zeta,t)&\tilde{q}_{2n}^{(cb)}(\zeta,t)\end{pmatrix}
+\frac{1}{\sqrt{t}}\begin{pmatrix}q^{(cb)}_{1as}(\zeta,t)&q^{(cb)}_{2as}(\zeta,t)\end{pmatrix}
+O(t^{-1}\ln t),\,\text{if}\, \text{Re}k_n>0,\\
&\sum_{n=1}^N\begin{pmatrix}\tilde{q}_{1n}^{(sol)}(\zeta,t)&\tilde{q}_{2n}^{(sol)}(\zeta,t)\end{pmatrix}
+\frac{1}{\sqrt{t}}\begin{pmatrix}q^{(sol)}_{1as}(\zeta,t)&q^{(sol)}_{2as}(\zeta,t)\end{pmatrix}
+O(t^{-1}\ln t),\,\,\text{if}\, \text{Re}k_n=0,\nn
\end{aligned}
\right.
\end{align}
where for $\ell=cb,sol$,  $\tilde{q}_{1n}^{(\ell)}(\zeta,t)$ and $\tilde{q}_{2n}^{(\ell)}(\zeta,t)$ are given respectively by \eqref{3.181} and \eqref{3.182}, by setting $\zeta=v_nt$ with $v_n<0$ corresponding to the speed of the $n$th composite breather/self-symmetric soliton,
\begin{align}
q^{(\ell)}_{1as}(\zeta,t)=&\ii T^{-2}(0) \left(\delta_{10}^{-1}\left[[\mu^{(\ell)}_*(\zeta,t;0)]^{-1}
\mathcal{E}_1\mu^{(\ell)}_*(\zeta,t;0)\right]_{UR}\delta_{20}^{-1}\right)_{11},\\
q^{(\ell)}_{2as}(\zeta,t)=& \ii T^{-2}(0)
\left(\delta_{10}^{-1}\left[[\mu^{(\ell)}_*(\zeta,t;0)]^{-1}
\mathcal{E}_1\mu^{(\ell)}_*(\zeta,t;0)\right]_{UR}\delta_{20}^{-1}\right)_{12},
\end{align}
and
\begin{align}
x=\zeta(x,t)&+\ii\left(1+T_1+\left[[\mu^{(\ell)}_*(\zeta,t;0)]^{-1}
\mu^{(\ell)}_{*1}(\zeta,t)\right]_{11}\right.\\
&\left.+\frac{1}{\sqrt{t}}\left[[\mu^{(\ell)}_*(\zeta,t;0)]^{-1}
\mathcal{E}_1\mu^{(\ell)}_*(\zeta,t;0)\right]_{11}+O(t^{-1}\ln t)\right),\nn
\end{align}
where the constant matrices $\mu^{(\ell)}_*(\zeta,t;0)$ and $\mu^{(\ell)}_{*1}(\zeta,t)$ come from the exact solution of RH problem \ref{rh3.4} evaluated at $k=0$, $\mathcal{E}_1$ and $\delta_{j0}$ are presented in \eqref{eq3.128e1} and \eqref{3.180}.

If $\zeta/t=v$ with $v<0$ but $v\neq v_n$ for all $n=1,\cdots,N$, then we have
\begin{align}\label{1.8}
\begin{pmatrix}q_1(x,t)&q_2(x,t)\end{pmatrix}
=&\frac{T^{-2}(0)\sqrt{2\pi}\e^{-\frac{\pi\nu(k_0)}{2}}}{\sqrt{k_0t}}
\left(\frac{(\delta_{k_0}^0)^2T^2(k_0)\e^{-\frac{\pi\ii}{4}}}
{\Gamma(\ii\nu(k_0))\det[\rho(k_0)]}\begin{pmatrix}f_{11}&f_{12}\end{pmatrix}\right.\\
&+\left.\frac{(\delta_{-k_0}^0)^2T^2(-k_0)\e^{\frac{\pi\ii}{4}}}
{\Gamma(-\ii\nu(k_0))\det[\rho(-k_0)]}\begin{pmatrix}g_{11}&g_{12}\end{pmatrix}\right)
+O(t^{-1}\ln t)\nn,
\end{align}
where $k_0$, $T(k)$, $\nu(k_0)$, $\delta_{k_0}^0$ and $\delta_{-k_0}^0$ are respectively defined in \eqref{2.110}, \eqref{3.6}, \eqref{3.84}, \eqref{deltak0} and \eqref{3.117}, moreover,
\be
f=\delta_{10}^{-1}\begin{pmatrix}
\rho_{22}(k_0) & -\rho_{12}(k_0)\\ -\rho_{21}(k_0) & \rho_{11}(k_0)
\end{pmatrix}\delta_{20}^{-1},\quad g=\delta_{10}^{-1}\begin{pmatrix}
\rho_{22}(-k_0) & -\rho_{12}(-k_0)\\ -\rho_{21}(-k_0) & \rho_{11}(-k_0)
\end{pmatrix}\delta_{20}^{-1}.
\ee

$\bullet$ In the domain $\hat{\zeta}=\zeta/t>\varepsilon$, the solution tends to 0,
\begin{align}
\begin{pmatrix}q_1(x,t)&q_2(x,t)\end{pmatrix}=
\begin{pmatrix}q_1(\zeta(x,t),t)&q_2(\zeta(x,t),t)\end{pmatrix}
=O(t^{-1}),
\end{align}
and
\be
x=\zeta(x,t)+O(t^{-1}).
\ee
\end{theorem}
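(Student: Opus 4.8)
The plan is to follow the now-standard $\bar\partial$-steepest-descent pipeline, adapted to the $4\times4$ Wadati--Konno--Ichikawa Lax structure of the ccSP equation. First I would start from the Riemann--Hilbert problem RHP~\ref{rhp2.1} for the solution $m(x,t;k)$ together with the reconstruction formulae \eqref{2.50}--\eqref{2.51}, and introduce the new space-time variable $(\zeta,t)$ so that the phase function $\theta(k)$ appearing in the exponentials $\e^{\pm 2\ii t\theta}$ has its stationary points correctly located: in the regime $\hat\zeta=\zeta/t<-\varepsilon$ there are two real stationary points $\pm k_0$, while for $\hat\zeta>\varepsilon$ the phase has no real stationary point and the exponential decay is uniform. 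The sign table of $\mathrm{Re}(\ii\theta)$ dictates the factorization of the jump matrix and the opening of lenses. I would then perform the conjugation by the scalar (here $2\times2$ block) function $\delta(k)$ solving a scalar RHP with the reflection data on the real line, which removes the troublesome middle factor and normalizes the jumps near $\pm k_0$; the subtlety flagged in highlight~(III) is that $\delta_{2-}(k)$ is only continuous, not analytic, so the analytic/non-analytic splitting of $f(k)$ in \eqref{A.12} must be done with care and the approximation of the $\delta_j$ by scalar multiples of $I$ needs the refined estimates of Proposition~\ref{prop3.3}.

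Next I would carry out the $\bar\partial$-extension: replace the reflection coefficient on $\bfR$ by a non-analytic extension into the lens regions supported near the stationary points, converting the contour RHP into a mixed $\bar\partial$-RHP. Splitting $m = m^{(2)} m_{\mathrm{RHP}}^{-1}$, one isolates (a) the discrete-spectrum contribution, (b) the local parabolic-cylinder parametrices at $\pm k_0$, and (c) a pure $\bar\partial$-problem whose solution contributes at order $O(t^{-1}\ln t)$. For the soliton region $\hat\zeta<-\varepsilon$ one solves the outer model RHP~\ref{rh3.4} carrying the $N$ poles; since the WKI-type symmetry forces the norming constants to come in self-symmetric pairs, this outer model produces the multi-self-symmetric-soliton / composite-breather profiles $\tilde q_{jn}^{(\ell)}$ of \eqref{3.181}--\eqref{3.182}, evaluated with $\zeta=v_n t$. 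The $t^{-1/2}$ term is then the interaction of these solitons with the radiation, read off from the local parabolic-cylinder parametrices (highlight~(IV): the relevant $4\times4$ model solved via parabolic cylinder functions), giving the explicit $q^{(\ell)}_{jas}$ in terms of $\mu^{(\ell)}_*(\zeta,t;0)$, $\mathcal E_1$, $\delta_{j0}$. When $v\neq v_n$ for all $n$, the soliton contributions are exponentially small and only the radiative $t^{-1/2}$ term survives, yielding \eqref{1.8} with the explicit constants $T(k_0)$, $\nu(k_0)$, $\delta^0_{\pm k_0}$.

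For the region $\hat\zeta>\varepsilon$ the argument is simpler: with no real stationary point, $\mathrm{Re}(\ii\theta)$ has a definite sign on $\bfR$, so after the analytic/non-analytic deformation both the jump on the deformed contour and the $\bar\partial$-derivative are exponentially (respectively $O(t^{-1})$) small, the outer model is trivial (identity), and the reconstruction gives $(q_1,q_2)=O(t^{-1})$ together with $x=\zeta+O(t^{-1})$. The only real care here is checking that the absence of discrete spectrum in the right half-plane region (an invariant or support condition from Assumption~\ref{assump1}) indeed holds, so that no solitons live in $\hat\zeta>\varepsilon$.

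The main obstacle I anticipate is twofold. First, propagating the $2\times2$ block (rather than scalar) nature of $\delta(k)$ and the WKI-type involution through every transformation: the usual scalar RHP for $\delta$ becomes a nonabelian problem, and the approximation $\delta_j(k)\approx \det[\delta_j(k)]\,I$ is only valid up to errors that must be controlled uniformly, precisely the gap in \cite{GLL-JDE} addressed by Proposition~\ref{prop3.3} — this is where the fact that $\delta_{2-}$ is merely continuous bites. Second, constructing and rigorously matching the $4\times4$ local parametrix at $\pm k_0$ built from parabolic cylinder functions to the outer soliton model on the boundary of the local disks, while keeping the error in the $\bar\partial$-problem at $O(t^{-1}\ln t)$; the $\ln t$ (rather than a clean power) is the expected signature of the interaction between the two stationary points and the discrete spectrum, and tracking it through the small-norm estimate for the $\bar\partial$-equation is the delicate quantitative step. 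Once these are in place, substituting the asymptotic expansion of $m$ into \eqref{2.50}--\eqref{2.51} and collecting terms order by order yields \eqref{1.4}--\eqref{1.8} and the stated reconstruction of $x$ in terms of $\zeta$.
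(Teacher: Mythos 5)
Your proposal follows essentially the same $\bar\partial$-steepest-descent route as the paper: conjugation by the block factors $\delta_j$ and $T$, the non-analytic $\bar\partial$-extension, decomposition into the outer soliton model (RH problem \ref{rh3.4}), the $4\times4$ parabolic-cylinder parametrices at $\pm k_0$, a small-norm error function, and a pure $\bar\partial$-problem, followed by reconstruction at $k=0$ via \eqref{2.50}--\eqref{2.51}; you also correctly single out the key technical points, in particular the approximation of the non-explicit $2\times2$ factors $\delta_j(k)$ by $\det[\delta_j(k)]\,\mathbb{I}_{2\times2}$ handled in Proposition \ref{prop3.3}. Two minor misattributions only: in the paper the pure $\bar\partial$-problem contributes $O(t^{-1})$ (Propositions \ref{pro3.5}--\ref{prop3.6}), the $\ln t$ coming instead from the $\delta$-approximation in the local models and the error-function expansion \eqref{3.126}--\eqref{eq3.127}; and no extra hypothesis is needed to exclude solitons from $\hat{\zeta}>\varepsilon$, since all velocities $-1/(4|k_n|^2)$ are negative and $\mathrm{Im}\,\theta$ has a fixed sign there, so every residue condition decays automatically.
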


The rest of the paper is organized as follows. In Section \ref{sec2}, based on the Lax pair of the ccSP equation \eqref{ccSPE}, we present two kinds of eigenfunctions to control the singularity of the Lax pair as developed in \cite{LSL-IST}. One class is used to formulate the main RH problem, while the other class is used to provide the reconstruction formula for the solution of ccSP equation. In Section \ref{sec3}, we deal with the main RH problem in the region $\hat{\zeta}<-\varepsilon$ with two stationary points. With a series of transformations, the original RH problem is deformed into a mixed $\bar{\partial}$-RH problem that can be decomposed into a pure RH problem and a $\bar{\partial}$-problem, which are asymptotically analyzed in Subsection \ref{sec3.4} and Subsection \ref{sec3.5}. As a consequence, we obtain the long-time asymptotic result for the solution of the ccSP equation via the reconstruction formula. Finally, in Section \ref{sec4}, we show the long-time asymptotics in the region $\hat{\zeta}>\varepsilon$ in the similar way.

We end the introduction with some notations:

{\bf Notations.} The complex conjugate of a complex number \(a\) is denoted by \({a^*}\). For a complex-valued matrix \(A\), \(A^\dagger\) denotes the conjugate transpose. A \(4 \times 4\) matrix \(A\) is divided into three blocks:
\[
A = \begin{pmatrix}
	A_{11} & A_{12} & A_{13} & A_{14} \\
	A_{21} & A_{22} & A_{23} & A_{24} \\
	A_{31} & A_{32} & A_{33} & A_{34} \\
	A_{41} & A_{42} & A_{43} & A_{44}
\end{pmatrix} = \begin{pmatrix}A_L& A_R\end{pmatrix} = \begin{pmatrix} A_{UL} & A_{UR} \\ A_{DL} & A_{DR} \end{pmatrix},
\]
where \(A_{ij}\) represents the \((i, j)\)-entry, \(A_L\) represents the first two columns, \(A_R\) represents the last two columns, \(A_{UL}, A_{UR}, A_{DL},\) \(A_{DR}\) are \(2 \times 2\) matrices. $\mathbb{I}_{n\times n}$ indicates $n\times n$ identity matrix. For any matrix $M$ define $|M|^2=$tr$M^\dag M$, and for any matrix function $M(\cdot)$ define $\|M(\cdot)\|_{L^p}=\||M(\cdot)|\|_{L^p}$.
\section{Spectral Analysis and a RH problem}\label{sec2}
In this section, we present the RH approach to the IST for the initial-value problem of ccSP equation \eqref{ccSPE}. System \eqref{ccSPE} is the $k$-independent compatibility condition for the simultaneous linear equations of a Lax pair \cite{LSL-IST}
\be\label{lax}
\begin{aligned}
\Phi_x=&X\Phi=\begin{pmatrix}
-\ii k\mathbb{I}_{2\times2} & kQ_x \\[4pt]
-kQ^\dag_x & \ii k\mathbb{I}_{2\times2}
\end{pmatrix}\Phi,\\
\Phi_t=&T\Phi=\begin{pmatrix}
\frac{\ii}{4k}\mathbb{I}_{2\times2}-\frac{\ii}{2}kQQ^\dag & -\frac{\ii}{2}Q+\frac{1}{2}kQQ^\dag Q_x \\[4pt]
-\frac{\ii}{2}Q^\dag-\frac{1}{2}kQ^\dag QQ^\dag_x & -\frac{\ii}{4k}\mathbb{I}_{2\times2}+\frac{\ii}{2} kQ^\dag Q
\end{pmatrix}\Phi,
\end{aligned}
\ee
governing an auxiliary matrix $\Phi$ that depends on $(x,t)\in\bfR^2$ and the spectral parameter $k\in\bfC$, where $\mathbb{I}_{2\times2}$ is a $2\times2$ identity matrix and $Q$ is defined as
\be
Q=\begin{pmatrix}
-\ii q_1 & -\ii q_2\\
\ii q_2^* & -\ii q_1^*
\end{pmatrix}.
\ee
\subsection{Eigenfunctions appropriate at $k=\infty$}
In order to control the large $k$ behavior of solutions of \eqref{lax}, we will transform this Lax pair to a appropriate form. Specifically, define a $4\times4$ matrix-valued function $P(x,t)$ as
\be\label{2.1}
P=\sqrt{\frac{1+q}{2q}}\begin{pmatrix}
\mathbb{I}_{2\times2} & \frac{\ii Q_x}{1+q} \\[4pt]
\frac{\ii Q_x^\dag}{1+q} & \mathbb{I}_{2\times2} \\
\end{pmatrix},\ \text{with}\
q=\sqrt{1+|q_{1x}|^2+|q_{2x}|^2},
\ee
such that
\be
PXP^{-1}=-\ii kq\Sigma_3,\quad \Sigma_3=\begin{pmatrix}
\mathbb{I}_{2\times2} & \textbf{0}_{2\times2} \\[4pt]
\textbf{0}_{2\times2} & -\mathbb{I}_{2\times2}
\end{pmatrix}.
\ee
Then the gauge transformation
\be
\hat{\Phi}(x,t;k)=P(x,t)\Phi(x,t;k)
\ee
reduces the Lax pair \eqref{lax} to the following form:
\be\label{2.4}
\hat{\Phi}_x+G_x\hat{\Phi}=\hat{X}\hat{\Phi},\quad
\hat{\Phi}_t+G_t\hat{\Phi}=\hat{T}\hat{\Phi},
\ee
where
\begin{align}
G_x=&\ii kq\Sigma_3,\quad G_t=\ii k\left[\frac{1}{2}\left(|q_1|^2+|q_2|^2\right)q-\frac{1}{4 k^2}\right]\Sigma_3,\label{2.5}\\
\hat{X}=&\frac{1}{2q(1+q)}\begin{pmatrix}
qq_x\mathbb{I}_{2\times2}-Q_xQ^\dag_{xx}& \ii(1+q)Q_{xx}-\ii q_xQ_x\\[4pt]
\ii(1+q)Q^\dag_{xx}-\ii q_xQ^\dag_x & -qq_x\mathbb{I}_{2\times2}+Q^\dag_{xx}Q_x
\end{pmatrix},\label{2.6}\\
\hat{T}=&\frac{1+q}{4q}\begin{pmatrix}
R_tR^\dag-RR_t^\dag-\ii(QR^\dag+RQ^\dag) & 2R_t-\ii Q+\ii RQ^\dag R\\[4pt]
-2R^\dag_t-\ii Q^\dag+\ii R^\dag QR^\dag & R_t^\dag R-R^\dag R_t+\ii(R^\dag Q+Q^\dag R)
\end{pmatrix}\label{2.7}\\
&+\frac{\ii}{4 kq}\begin{pmatrix}
(1-q)\mathbb{I}_{2\times2} & -\ii Q_x\\[4pt]
\ii Q_x^\dag & -(1-q)\mathbb{I}_{2\times2}\\
\end{pmatrix}, \ R=\frac{\ii Q_x}{1+q}.\nn
\end{align}
Through the conservation law of the equation \eqref{ccSPE}
\be\label{2.8}
q_t=\frac{1}{2}\left[\left(|q_1|^2+|q_2|^2\right)q\right]_x,
\ee
we can introduce the function
\be\label{2.9}
G(x,t;k)=\ii t\hat{\theta}(x,t;k)\Sigma_3,\quad \hat{\theta}(x,t;k)=\frac{\zeta(x,t)}{t}k-\frac{1}{4k},
\ee
where
\be\label{2.10}
\zeta(x,t)\doteq x-\int_x^{+\infty}\left(q(s,t)-1\right)\dd s.
\ee
We seek simultaneous solutions $\hat{\Phi}^\pm$ of the Lax pair \eqref{2.4} such that
\be
\hat{\Phi}^\pm(x,t;k)\to\e^{-\ii t\hat{\theta}(x,t;k)\Sigma_3},\quad x\to\pm\infty.
\ee
It is convenient to introduce the matrix-valued functions $M^\pm=\hat{\Phi}^\pm\e^{\ii t\hat{\theta}\Sigma_3}$ satisfying
\be\label{2.11}
M_x+[G_x,M]=\hat{X}M,\quad
M_t+[G_t,M]=\hat{T}M,
\ee
as the unique solutions of the Volterra integral equations
\begin{align}
M^\pm(x,t;k)=\mathbb{I}_{4\times4}+\int_{\pm\infty}^x\e^{\ii k\int_x^{x'}q(s,t)\dd s\Sigma_3}\hat{X}(x',t)M^\pm(x',t;k)\e^{-\ii k\int^{x'}_xq(s,t)\dd s\Sigma_3}\dd x'.\label{2.12}
\end{align}
The existence, analyticity, symmetry and asymptotics of $M^\pm(x,t;k)$ can be proven directly. Here we list their properties.
\begin{proposition}
Assume that for each $t\in\bfR$, $q_{jx},q_{jxx}\in L^1(\bfR)$, $j=1,2$, then we have\\
$\bullet$ Analyticity: $M^-_L(x,t;k)$ and $M^+_R(x,t;k)$ are analytic for $k\in\bfC^+$ and continuous for $k\in\bfC^+\cup\bfR$; $M^+_L(x,t;k)$ and $M^-_R(x,t;k)$ are analytic for $k\in\bfC^-$ and continuous for $k\in\bfC^-\cup\bfR$.\\
$\bullet$ Symmetries: $M^\pm(x,t;k)$ obeys the symmetries
\be\label{2.17}
[M^\pm(x,t;k^*)]^\dag=[M^\pm(x,t;k)]^{-1},\quad [M^\pm(x,t;-k^*)]^*=\mathcal{A}M^\pm(x,t;k)\mathcal{A},
\ee
where
\be
\mathcal{A}=\begin{pmatrix}\ii\sigma_2 & \mathbf{0}_{2\times2}\\
\mathbf{0}_{2\times2} & \ii\sigma_2 \end{pmatrix},\quad
\sigma_2=\begin{pmatrix}0 & -\ii\\\ii & 0\end{pmatrix}.
\ee
$\bullet$ Asymptotics: The large-$k$ asymptotic behavior of $M^\pm(x,t;k)$ is given by
\be\label{2.19}
M^\pm(x,t;k)=\mathbb{I}_{4\times4}+\int_{\pm\infty}^x\hat{X}_D(s,t)\dd s+O(k^{-1}),
\ee
where $\hat{X}_D$ is the block-diagonal part of the matrix $\hat{X}$.\\
$\bullet$ Unimodularity: $\det[M^\pm(x,t;k)]=1$.
\end{proposition}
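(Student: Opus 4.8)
The plan is to read off all four properties directly from the Volterra representation \eqref{2.12}, in the classical style for Zakharov--Shabat-type integral equations; the only structural facts that matter are that $\hat{X}$ does not depend on $k$ and that the phase density $q=\sqrt{1+|q_{1x}|^2+|q_{2x}|^2}$ is bounded below by $1$, so that $x'\mapsto\int_x^{x'}q(s,t)\dd s$ is strictly increasing. For analyticity I would first make the $\Sigma_3$-conjugation in \eqref{2.12} explicit: the block-diagonal factor $\e^{\ii k\int_x^{x'}q\,\dd s\,\Sigma_3}$ leaves the block-diagonal part of $\hat{X}$ fixed and multiplies the block-off-diagonal part by $\e^{\mp2\ii k\int_x^{x'}q(s,t)\dd s}$. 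For the columns $M^-_L$ the integral runs over $x'<x$, so $\int_x^{x'}q\le0$ and the surviving exponential $\e^{-2\ii k\int_x^{x'}q}$ has nonpositive real exponent exactly when $\operatorname{Im}k\ge0$; the same holds for $M^+_R$, whereas the opposite half-plane works for $M^+_L$ and $M^-_R$. On the relevant closed half-plane I would then run the standard Neumann iteration of \eqref{2.12}: each term is analytic in the open half-plane, continuous up to $\bfR$, and bounded by $\|\,\hat{X}(\cdot,t)\|_{L^1}^n/n!$ by the Volterra structure; since $q_{jx},q_{jxx}\in L^1(\bfR)$ and $q\ge1$ controls the denominators in \eqref{2.6}, the series converges absolutely and uniformly on compact $k$-sets, giving analyticity (Weierstrass/Morera), continuity up to the real axis (dominated convergence), and uniqueness automatically.

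For the symmetries I would record the algebraic symmetries of the coefficient matrices and then propagate them through \eqref{2.12} by uniqueness. The pivotal identity is $Q_xQ_x^\dagger=(|q_{1x}|^2+|q_{2x}|^2)\mathbb{I}_{2\times2}=(q^2-1)\mathbb{I}_{2\times2}$, hence $\partial_x(Q_xQ_x^\dagger)=2qq_x\mathbb{I}_{2\times2}$; this is precisely what makes the diagonal blocks of $\hat{X}$ in \eqref{2.6} anti-Hermitian, and together with the evident relation between its $(1,2)$- and $(2,1)$-blocks it gives $\hat{X}^\dagger=-\hat{X}$, while $[G_x(k^*)]^\dagger=-G_x(k)$ is immediate from \eqref{2.5}. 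Consequently $[M^\pm(x,t;k^*)]^\dagger$ and $[M^\pm(x,t;k)]^{-1}$ satisfy the same first-order linear ODE in $x$ (the $x$-part of \eqref{2.11}, rewritten for the inverse) with common limit $\mathbb{I}_{4\times4}$ at $x=\pm\infty$, and hence coincide. For the second relation I would use $\sigma_2Q^*\sigma_2=-Q$ (with its differentiated analogues) and the commutation $[\mathcal{A},\Sigma_3]=0$ to check that $N\mapsto\mathcal{A}N^*\mathcal{A}$ maps a solution of \eqref{2.12} at spectral parameter $-k^*$ to a solution at parameter $k$, whence $\mathcal{A}[M^\pm(x,t;-k^*)]^*\mathcal{A}$ solves the same Volterra equation as $M^\pm(x,t;k)$. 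Pinning down the $\mathcal{A}$-conjugation conventions and the $\mathcal{A}^2=-\mathbb{I}_{4\times4}$ signs in this last step is the one spot where I expect to have to be careful.

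For the large-$k$ asymptotics I would substitute $M^\pm=\mathbb{I}_{4\times4}+O(1)$ back into \eqref{2.12}: the block-diagonal part of the integrand commutes with the $\Sigma_3$-exponentials and contributes $\int_{\pm\infty}^x\hat{X}_D(s,t)\dd s$ at order $k^0$, whereas the block-off-diagonal part carries the oscillatory factor $\e^{\mp2\ii k\int_x^{x'}q}$ and, since $q\ge1$ forbids stationary points and $\hat{X},\hat{X}_x$ are integrable, a single integration by parts makes it $O(k^{-1})$ uniformly; one bootstrap step upgrades this to the stated \eqref{2.19}. Finally, unimodularity follows from $\partial_x\log\det M^\pm=\operatorname{tr}\!\big((M^\pm)^{-1}M^\pm_x\big)=\operatorname{tr}\hat{X}$, the commutator term $[G_x,M^\pm]$ dropping out under the trace, and $\operatorname{tr}\hat{X}=0$ is read off \eqref{2.6} by cyclicity (the $qq_x$ contributions cancel between the two diagonal blocks and $\operatorname{tr}(Q_xQ_{xx}^\dagger)=\operatorname{tr}(Q_{xx}^\dagger Q_x)$); since $\det M^\pm\to1$ at $x=\pm\infty$, it is identically $1$. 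The only real difficulty here is technical: turning the heuristic $L^1$-bounds into estimates uniform in $x$ and locally uniform in $k$, and the sign bookkeeping in the second symmetry.
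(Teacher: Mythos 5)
Your route is the one the paper implicitly relies on: the proposition is stated without proof (``can be proven directly''), and the intended argument is exactly the standard one you sketch from the Volterra equations \eqref{2.12} — Neumann iteration with the $\|\hat X(\cdot,t)\|_{L^1}^n/n!$ bound for existence, analyticity and continuity up to $\bfR$ (your bookkeeping of which exponential survives in each block-column and in which half-plane it is bounded is correct), propagation of the algebraic symmetries of $\hat X$ and $G_x$ through uniqueness, Riemann--Lebesgue/integration by parts for the off-diagonal blocks at large $k$, and the trace identity $\operatorname{tr}\hat X=0$ plus the vanishing of the commutator under the trace for unimodularity; the identity $Q_xQ_x^\dagger=(q^2-1)\mathbb{I}_{2\times2}$ is indeed the key to $\hat X^\dagger=-\hat X$.

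Two points need attention. First, the sign you flagged is real: since $\mathcal{A}^2=-\mathbb{I}_{4\times4}$, the map $N\mapsto\mathcal{A}N^*\mathcal{A}$ sends the solution normalized to $\mathbb{I}_{4\times4}$ at $x=\pm\infty$ to one normalized to $-\mathbb{I}_{4\times4}$, so it cannot solve the same Volterra equation as $M^\pm(x,t;k)$. The correct conjugation is $\mathcal{A}(\cdot)\mathcal{A}^{-1}$, i.e.\ conjugation by $\operatorname{diag}(\sigma_2,\sigma_2)$: using $\sigma_2Q^*\sigma_2=-Q$ and $\sigma_2Q^{\mathrm{T}}\sigma_2=-Q^\dagger$ one checks $\mathcal{A}\hat X^*\mathcal{A}^{-1}=\hat X$ and $[\mathcal{A},\Sigma_3]=0$, so the symmetry should be established (and read) as $[M^\pm(x,t;-k^*)]^*=\mathcal{A}M^\pm(x,t;k)\mathcal{A}^{-1}$; as literally printed in \eqref{2.17} the right-hand side tends to $-\mathbb{I}_{4\times4}$ as $x\to\pm\infty$. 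Second, your asymptotics step does not close as described: after inserting $M^\pm=\mathbb{I}_{4\times4}+O(1)$ into \eqref{2.12}, the block-diagonal part of the integrand is $\hat X_D$ times the block-diagonal ($O(1)$, not $O(k^{-1})$) part of $M^\pm$, so the $k$-independent contribution is not only $\int_{\pm\infty}^x\hat X_D(s,t)\dd s$ but the whole diagonal Volterra series, i.e.\ the ordered exponential solving $M_{0,x}=\hat X_DM_0$ with $M_0\to\mathbb{I}_{4\times4}$ as $x\to\pm\infty$; only the chains containing off-diagonal blocks carry the oscillatory factors $\e^{\pm2\ii k\int_x^{x'}q\dd s}$ and are $O(k^{-1})$. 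Hence a single bootstrap yields \eqref{2.19} only modulo terms quadratic and higher in $\hat X_D$ — which is in fact the precision at which the paper itself operates, compare the quadratic cross term retained in \eqref{2.25'} — so you should either state the constant term as the ordered exponential or make explicit that \eqref{2.19} is meant in that truncated sense.
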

\subsection{The scattering data}
Since $\hat{\Phi}^+$ and $\hat{\Phi}^-$ are two fundamental solutions of the Lax pair for any $k\in\bfR$, one can define a $4\times4$ scattering matrix $S(k)$ such that
\be\label{2.18}
\hat{\Phi}^-(x,t;k)=\hat{\Phi}^+(x,t;k)S(k),\quad
S(k)=\begin{pmatrix} a(k) & \bar{b}(k) \\ b(k) & \bar{a}(k)\end{pmatrix},\quad k\in\bfR,
\ee
with $2\times2$ blocks $a(k)$ and $b(k)$ being the scattering coefficients. Note that $S$ is unimodular as
\be
\det[S(k)]=1,\quad k\in \bfR.
\ee
We define the reflection coefficients
\be
\rho(k)=b(k)a^{-1}(k),\quad \bar{\rho}(k)=\bar{b}(k)\bar{a}^{-1}(k),\quad k\in \bfR.
\ee
It is shown that the scattering coefficients and the reflection coefficient have the following properties.
\begin{proposition}\label{prop2.2}
We have\\
$\bullet$ $a(k)$ (respectively, $\bar{a}(k)$) is analytic in $\bfC^+$ (respectively, in $\bfC^-$) and continuous in $\bfR$, whereas $b(k)$, $\bar{b}(k)$ are in general only defined for $k\in\bfR$.\\
$\bullet$ Symmetries:
\begin{align}
&\bar{\rho}(k)=-\rho^\dag(k),\quad k\in\bfR,\quad \det[\bar{a}(k)]=\det[a^\dag(k^*)],\quad k\in\bfC^-,\\
&a^*(-k^*)=\sigma_2a(k)\sigma_2,\quad k\in\bfC^+, \quad \rho^*(-k)=\sigma_2\rho(k)\sigma_2,\quad k\in\bfR.
\end{align}
$\bullet$ Asymptotics: as $k\to\infty$,
\begin{align}
a(k)\to&\mathbb{I}_{2\times2}+\int_{-\infty}^x\hat{X}_{1D}(s,t)\dd s+\int_{+\infty}^x\hat{X}_{1D}^\dag(s,t)\dd s\label{2.25'}\\
&+\left(\int_{-\infty}^x\hat{X}_{1D}(s,t)\dd s\right)\left(\int_{+\infty}^x\hat{X}_{1D}^\dag(s,t)\dd s\right),\nn\\
b(k)\to&\mathbf{0}_{2\times2},
\end{align}
where $\hat{X}_{1D}$ denotes the diagonal block of the matrix $\hat{X}$.
\end{proposition}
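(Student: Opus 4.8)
The plan is to read off all three properties from the scattering relation $\eqref{2.18}$ by transporting to it the analyticity, symmetry and asymptotic data for $M^\pm$ recorded in the preceding proposition. Passing to the normalized eigenfunctions, $\eqref{2.18}$ reads $M^-(x,t;k)=M^+(x,t;k)\,\tilde S(x,t;k)$ with $\tilde S=\e^{-\ii t\hat\theta\Sigma_3}S(k)\e^{\ii t\hat\theta\Sigma_3}=\begin{pmatrix}a(k)&\e^{-2\ii t\hat\theta}\bar b(k)\\ \e^{2\ii t\hat\theta}b(k)&\bar a(k)\end{pmatrix}$, so that, comparing columns, $M^-_L=M^+_La+\e^{2\ii t\hat\theta}M^+_Rb$ and $M^-_R=\e^{-2\ii t\hat\theta}M^+_L\bar b+M^+_R\bar a$. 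Two kinds of representations then fall out. First, since $[M^-_L,\,M^+_R]=M^+\bigl(\begin{smallmatrix}a&\mathbf{0}\\ \e^{2\ii t\hat\theta}b&\mathbb I_{2\times2}\end{smallmatrix}\bigr)$ and $\det M^\pm=1$, one gets $\det a(k)=\det[M^-_L(x,t;k),\,M^+_R(x,t;k)]$, a $4\times4$ determinant built from columns analytic in $\bfC^+$; likewise $\det\bar a(k)=\det[M^+_L(x,t;k),\,M^-_R(x,t;k)]$ is analytic in $\bfC^-$. Second, using the first symmetry in $\eqref{2.17}$ to write $[M^+(x,t;k)]^{-1}=[M^+(x,t;k^*)]^\dag$ and reading off the upper (resp. lower) $2\times2$ block of $[M^+(x,t;k)]^{-1}M^-_L$ (resp. $[M^+(x,t;k)]^{-1}M^-_R$) gives $a(k)=[M^+_L(x,t;k^*)]^\dag M^-_L(x,t;k)$ and $\bar a(k)=[M^+_R(x,t;k^*)]^\dag M^-_R(x,t;k)$, so $a$ extends analytically to $\bfC^+$ and $\bar a$ to $\bfC^-$, with continuity up to $\bfR$ inherited from $M^\pm$; the coefficients $b,\bar b$ come out in the mixed forms $b\propto[M^+_R(k^*)]^\dag M^-_L(k)$, $\bar b\propto[M^+_L(k^*)]^\dag M^-_R(k)$, pairing a $\bfC^+$-analytic factor with a $\bfC^-$-analytic one, hence in general defined only on $\bfR$.

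For the symmetries I would push $\eqref{2.17}$ through $M^-=M^+\tilde S$. On $k\in\bfR$ the phase $\e^{\ii t\hat\theta\Sigma_3}$ is unitary, so the first symmetry forces $S(k)^\dag=S(k)^{-1}$; expanding the off-diagonal block of $S^\dag S=\mathbb I_{4\times4}$ gives $a^\dag\bar b=-b^\dag\bar a$, whence $\bar\rho=\bar b\bar a^{-1}=-(a^{-1})^\dag b^\dag=-\rho^\dag$. For the determinant identity I match the lower-right block of $S^\dag$ with that of $S^{-1}$, namely $\bar a^\dag=(\bar a-ba^{-1}\bar b)^{-1}$, and use the Schur relation $1=\det S=\det a\cdot\det(\bar a-ba^{-1}\bar b)$ to get $\det\bar a(k)=\overline{\det a(k)}$ on $\bfR$; since $\det a$ is analytic in $\bfC^+$ and $\det\bar a$ in $\bfC^-$, the Schwarz reflection $k\mapsto\overline{\det a(k^*)}$ agrees with $\det\bar a$ on $\bfR$, and uniqueness of analytic continuation yields $\det\bar a(k)=\overline{\det a(k^*)}=\det a^\dag(k^*)$ throughout $\bfC^-$. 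Finally, since $\mathcal A=\mathrm{diag}(\ii\sigma_2,\ii\sigma_2)$ is block diagonal with $\mathcal A^2=-\mathbb I_{4\times4}$, and since $\hat\theta(-k^*)=-\overline{\hat\theta(k)}$ makes $\e^{\pm2\ii t\hat\theta}$ compatible with complex conjugation, the second symmetry in $\eqref{2.17}$ propagates to $[\tilde S(-k^*)]^*=\mathcal A^{-1}\tilde S(k)\mathcal A$; reading off the $(1,1)$ block — where the sign carried by $\mathcal A^{-1}=-\mathcal A$ combines with $\ii^2$ to leave $+\sigma_2a(k)\sigma_2$ — gives $a^*(-k^*)=\sigma_2a(k)\sigma_2$, and the same computation on $\rho=ba^{-1}$, with $\sigma_2^2=\mathbb I_{2\times2}$, gives $\rho^*(-k)=\sigma_2\rho(k)\sigma_2$.

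For the large-$k$ behaviour I substitute the expansion $M^\pm=\mathbb I_{4\times4}+\int_{\pm\infty}^x\hat X_D(s,t)\,\dd s+O(k^{-1})$ from $\eqref{2.19}$ — in which, $\hat X_D$ being block diagonal, the off-diagonal $2\times2$ blocks of $M^\pm$ are $O(k^{-1})$ — into the representations above. In $b(k)=\e^{-2\ii t\hat\theta}[M^+_R(k^*)]^\dag M^-_L(k)$ the leading block-diagonal parts of $M^+_R$ and $M^-_L$ occupy complementary blocks, so their product is $O(k^{-1})$ and $b(k)\to\mathbf{0}_{2\times2}$ (likewise $\bar b$); while in $a(k)=[M^+_L(k^*)]^\dag M^-_L(k)$ the surviving term is the product of the upper-left blocks, which after simplification reproduces $\eqref{2.25'}$. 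At this last step I would invoke the pointwise identity $\hat X_D^\dag=-\hat X_D$ — equivalently $(Q_xQ_x^\dag)_x=(|q_{1x}|^2+|q_{2x}|^2)_x\,\mathbb I_{2\times2}=2qq_x\,\mathbb I_{2\times2}$, which follows from the WKI structure of $Q$ — both to cast the $+\infty$-integral factor into the form written in $\eqref{2.25'}$ and to check that the $x$-dependent right-hand side of $\eqref{2.25'}$ is in fact independent of $x$, as it must be; the bound $b(k)\to\mathbf{0}_{2\times2}$ follows at once.

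The step I expect to be the main obstacle is the analytic-continuation argument for $\det\bar a(k)=\det a^\dag(k^*)$: it requires establishing $\det a$, $\det\bar a$ as genuine analytic functions on the respective half-planes from the Wronskian formulas, checking that $a,\bar a,b,\bar b$ are all well defined and continuous on $\bfR$ so that the unitarity relation $S^\dag S=\mathbb I_{4\times4}$ and the Schur decomposition are legitimate there, and only then appealing to uniqueness. The secondary technical point is the bookkeeping of the $2\times2$ matrix factors in the $k\to\infty$ limit of $a(k)$, in particular the placement of the quadratic term and the use of $\hat X_D^\dag=-\hat X_D$ to reconcile $\eqref{2.25'}$ with the required $x$-independence; everything else is routine.
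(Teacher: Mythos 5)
The paper states Proposition \ref{prop2.2} without proof (it is imported from the IST development in \cite{LSL-IST}), so your argument has to stand on its own; the route you take — the Wronskian-type identity $\det a(k)=\det\bigl[M^-_L(k)\,\,M^+_R(k)\bigr]$, the block representations $a(k)=[M^+_L(x,t;k^*)]^\dag M^-_L(x,t;k)$, $\bar a(k)=[M^+_R(x,t;k^*)]^\dag M^-_R(x,t;k)$ obtained from $\tilde S=[M^+]^{-1}M^-$ and the first symmetry in \eqref{2.17}, unitarity of $S(k)$ on $\bfR$ for $\bar\rho=-\rho^\dag$ and the determinant relation, and conjugation by $\mathcal A$ for the second pair of symmetries — is exactly the standard one, and these parts are essentially correct. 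Two small precisions: the passage of $\det\bar a(k)=\det a^\dag(k^*)$ (and likewise of $a^*(-k^*)=\sigma_2a(k)\sigma_2$) from $\bfR$ into the open half-plane is a boundary-uniqueness statement, not plain ``uniqueness of analytic continuation'': you should either run the Schwarz-reflection/identity-theorem argument for the difference of the two boundary values, or avoid it altogether by reading the symmetries directly off the analytic representations (e.g.\ from $\det\bar a(k)=\det[M^+_L(k)\,\,M^-_R(k)]$ together with \eqref{2.17} applied to the analytic columns), which is the cleaner route.

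The genuine gap is in the last step of the asymptotics. Your representation gives, upon inserting \eqref{2.19}, the limit $a(k)\to\bigl(\mathbb{I}_{2\times2}+\int_{+\infty}^x\hat X_{1D}^\dag\,\dd s\bigr)\bigl(\mathbb{I}_{2\times2}+\int_{-\infty}^x\hat X_{1D}\,\dd s\bigr)$, i.e.\ with the $+\infty$ factor on the \emph{left}, whereas \eqref{2.25'} has the factors in the opposite order, with cross term $\bigl(\int_{-\infty}^x\hat X_{1D}\,\dd s\bigr)\bigl(\int_{+\infty}^x\hat X_{1D}^\dag\,\dd s\bigr)$. The block $\hat X_{1D}$ is genuinely matrix-valued (it lies in the quaternion-type algebra generated by $Q$, not in the scalars), so the two integrals need not commute, and ``after simplification reproduces \eqref{2.25'}'' is asserted rather than proved. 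Your proposed consistency check does not rescue this: with \eqref{2.19} taken literally, \emph{neither} ordering is exactly $x$-independent — differentiating the truncated product leaves a second-order term of the form $-\hat X_{1D}\int_{-\infty}^x\hat X_{1D}+\bigl(\int_{+\infty}^x\hat X_{1D}^\dag\bigr)\hat X_{1D}$ — because $x$-independence holds only for the upper-left block of the full limit $\lim_{k\to\infty}[M^+]^{-1}M^-$, of which both expressions are first-order (in $\hat X_{1D}$) truncations. So either you present the asymptotics in that exact product form and note that \eqref{2.25'} is to be read modulo such reorderings/higher-order corrections, or you must reconcile the ordering explicitly; as written, the identification with \eqref{2.25'} is the one step that does not go through. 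On the positive side, your identity $\hat X_{1D}^\dag=-\hat X_{1D}$, via $Q_xQ_x^\dag=(q^2-1)\mathbb{I}_{2\times2}$, is correct, and the off-diagonal bookkeeping giving $b(k)\to\mathbf{0}_{2\times2}$ is fine.
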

\subsection{Discrete spectrum}
The values of $k$ where $\det[a(k)]$ becomes zero provide the discrete spectrum of the scattering problem. Let us assume that $\det[a(k)]$ has a finite number $N$ of zeros in $\bfC^+$. The two symmetries in Proposition \ref{prop2.2} combined give that discrete eigenvalues appear in the set $Z\cup Z^*$ with $Z=\{k_n,-k_n^*\}_{n=1}^N$, where, for each $n$, $k_n$, $-k_n^*$ are the zeros of $\det[a(k)]$ in $\bfC^+$, and $k_n^*$, $-k_n$ are the zeros of $\det[\bar{a}(k)]$ in $\bfC^-$. Furthermore, the following results hold.
\begin{proposition}
If rank $[a(k_n)]=$ rank $[a(-k_n^*)]=1$ and rank $[\bar{a}(-k_n)]=$ rank $[\bar{a}(k_n^*)]=1$, then the zeros of $\det[a(k)]$ in $\bfC^+$ and the zeros of $\det[\bar{a}(k)]$ in $\bfC^-$ are simple. If $a(k_n)=a(-k_n^*)=\mathbf{0}_{2\times2}$ and $\bar{a}(-k_n)=\bar{a}(k_n^*)=\mathbf{0}_{2\times2}$, then the zeros of $\det[a(k)]$ in $\bfC^+$ and the zeros of $\det[\bar{a}(k)]$ in $\bfC^-$ are double.
\end{proposition}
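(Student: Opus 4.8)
The plan is to analyze the behavior of $\det[a(k)]$ and $\det[\bar a(k)]$ near a discrete eigenvalue by exploiting the rank hypotheses together with the analytic structure inherited from the eigenfunctions $M^\pm$. Recall from \eqref{2.18} that $a(k)$ arises as a $2\times 2$ block of the scattering matrix $S(k)$, built from the fundamental solutions $\hat\Phi^\pm$; equivalently, in terms of the Jost solutions $M^\pm$, the column blocks $M^-_L$ and $M^+_R$ extend analytically to $\bfC^+$, and $a(k)$ can be expressed via a Wronskian-type determinant of these analytic column blocks. Consequently $\det[a(k)]$ is analytic in $\bfC^+$, and the question of simple versus double zeros is a question about the order of vanishing of this scalar analytic function at $k=k_n$ and at $k=-k_n^*$.

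First I would write $\det[a(k)]$ in terms of the full $4\times 4$ Wronskian $W(k)=\det\bigl(M^-_L(x,t;k)\;\big|\;M^+_R(x,t;k)\bigr)$, which is $x$- and $t$-independent by \eqref{2.11} (the flows are trace-free in the relevant block sense, and unimodularity gives $\det M^\pm=1$). One checks that $W(k)=\det[a(k)]$ up to a nonvanishing factor. Then, for the first statement, the hypothesis $\operatorname{rank}[a(k_n)]=1$ means the $2\times 2$ block $a(k_n)$ has a one-dimensional kernel, so there is a single eigenvector direction along which $M^-_L$ becomes proportional to $M^+_R$; equivalently, exactly one column of the Wronskian matrix at $k=k_n$ is a linear combination of the other three, so the Wronskian has a simple (first-order) linear-algebraic degeneracy. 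To turn this into a statement about the \emph{order of the zero}, I would differentiate $W(k)$ at $k=k_n$: expanding $\det$ along the degenerate column and using that the complementary $3\times 3$ minor is nonzero precisely because the rank is exactly one (not zero), one gets $W'(k_n)\neq 0$, i.e. the zero is simple. The same argument applied at $-k_n^*$, and for $\bar a$ (using $\det[\bar a(k)]=\det[a^\dagger(k^*)]$ from Proposition \ref{prop2.2}, which transfers simplicity of zeros from $\bfC^+$ to $\bfC^-$), gives all four statements.

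For the second statement, the hypothesis $a(k_n)=\mathbf{0}_{2\times 2}$ means \emph{both} columns of $M^-_L(x,t;k_n)$ lie in the span of $M^+_R(x,t;k_n)$, so now two columns of the Wronskian matrix are simultaneously degenerate. Expanding the determinant, every term is a product containing at least two factors from the vanishing block $a(k)$, each of which is $O(k-k_n)$; hence $W(k)=O((k-k_n)^2)$, so $\det[a(k)]$ vanishes to at least second order. To show it vanishes to \emph{exactly} second order (a double, not higher, zero), I would use the symmetry $[M^\pm(x,t;k^*)]^\dagger=[M^\pm(x,t;k)]^{-1}$ from \eqref{2.17}, which forces $\det[a(k)]$ and its conjugate partner $\det[\bar a(k)]$ to have matching zero orders and constrains the Taylor coefficients; combined with the fact that $M^-_L$ and $M^+_R$ remain a full-rank $4\times 2$ pair individually (unimodularity of $M^\pm$), one rules out order $\geq 3$. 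Again the argument at $-k_n^*$ and the transfer to $\bar a$ in $\bfC^-$ via the conjugation symmetry complete the proof.

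The main obstacle I anticipate is the ``exactly'' part of each claim — i.e. showing the zero order is not strictly higher than claimed. The vanishing-to-at-least statements are immediate from column-degeneracy counting in the Wronskian, but pinning down equality requires knowing that the relevant complementary minors of the Wronskian matrix are nonzero, which in turn relies on the precise rank conditions in the hypothesis and on the unimodularity and symmetry relations from \eqref{2.17}. A secondary technical point is making the Wronskian representation $\det[a(k)]=W(k)$ (up to nonvanishing factor) fully rigorous, including the $x,t$-independence of $W$ and the analytic continuation off the real axis using the block analyticity listed in Proposition 2.1; I would handle this by the standard Abel/Liouville argument adapted to the $4\times 4$ WKI-type Lax pair \eqref{2.4}.
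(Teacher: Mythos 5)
The paper itself offers no proof of this proposition: it is imported from the inverse-scattering analysis of \cite{LSL-IST} (the text states it and immediately moves on to the residue relations), so there is no internal argument to compare against; judged on its own terms, your proposal has a genuine gap exactly where you anticipated one. The Wronskian setup is fine: on the real axis $\det[a(k)]=\det\bigl(M^-_L\,\big|\,M^+_R\bigr)$, both sides are analytic in $\bfC^+$ and the identity extends, so the issue is indeed the vanishing order of this scalar analytic function. But the step ``$\mathrm{rank}\,[a(k_n)]=1$ and the complementary $3\times3$ minor is nonzero, hence $W'(k_n)\neq0$'' is not valid: the rank of a matrix at the single point $k_n$ carries no information about the $k$-derivatives of its columns, so the derivative-of-determinant expansion can vanish even when the complementary minor does not. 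The model $a(k)=\mathrm{diag}\bigl(1,(k-k_n)^2\bigr)$ has rank one at $k_n$, nonvanishing complementary minor, and yet a double zero of the determinant, and nothing in your argument invokes structure of the ccSP scattering problem that excludes this behaviour; in fact, in the reduction $q_2\equiv0$ the block $a(k)$ becomes diagonal with scalar cSP scattering coefficients on the diagonal, so a non-simple zero of the scalar coefficient would produce precisely such a configuration, and ruling that out is a statement of the same nature as the one being proved. The second half has the same defect: $a(k_n)=\mathbf{0}_{2\times2}$ gives $\det[a(k)]=O((k-k_n)^2)$ at once, but ``exactly double'' does not follow from the symmetry \eqref{2.17} or unimodularity --- the matching of the zero orders of $\det[a]$ at $k_n$ and $\det[\bar a]$ at $k_n^*$ is automatic Schwarz reflection and constrains nothing, and full column rank of the $4\times2$ blocks $M^-_L$, $M^+_R$ individually is compatible with vanishing of the $4\times4$ determinant to arbitrary order. (A secondary point: identifying $\ker a(k_n)$ with directions in which the column span of $M^-_L(k_n)$ meets that of $M^+_R(k_n)$ also needs an argument off the real axis, since $b(k)$ is not defined there; this is standard but not ``equivalent'' for free.)

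What is missing is genuine inverse-scattering input beyond pointwise linear algebra: either an identity expressing $\frac{\dd}{\dd k}\det[a(k)]$ at $k_n$ (respectively the second derivative when $a(k_n)$ vanishes identically) in terms of the bound-state eigenfunctions, shown to be nonzero under the stated rank hypotheses, or the structural analysis carried out in \cite{LSL-IST}. As written, your argument establishes only the lower bounds on the order of vanishing (at least one, respectively at least two), which is the easy half; the ``exactly'' half, which is the actual content of the proposition, remains unproved.
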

However, it is also shown in \cite{LSL-IST} that the points $k_n$, $-k_n^*$ (as well as $k_n^*$, $-k_n$) are simple poles of the function $M^-_L(x,t;k)a^{-1}(k)$ in $\bfC^+$ (and $M^-_R(x,t;k)\bar{a}^{-1}(k)$ in $\bfC^-$). Moreover, the corresponding residues are calculated as
\begin{align}
\underset{k=k_n}{\rm Res\ }[M^-_L(x,t;k)a^{-1}(k)]=&\e^{2\ii t\hat{\theta}(x,t;k_n)}M^+_R(x,t;k_n)C_n,\\
\underset{k=-k_n^*}{\rm Res\ }[M^-_L(x,t;k)a^{-1}(k)]=&-\e^{2\ii t\hat{\theta}(x,t;-k^*_n)}M^+_R(x,t;-k^*_n)\sigma_2C^*_n\sigma_2,\\
\underset{k=k_n^*}{\rm Res\ }[M^-_R(x,t;k)\bar{a}^{-1}(k)]=&-\e^{-2\ii t\hat{\theta}(x,t;k_n^*)}M^+_L(x,t;k^*_n)C_n^\dag,\\
\underset{k=-k_n}{\rm Res\ }[M^-_R(x,t;k)\bar{a}^{-1}(k)]=&\e^{-2\ii t\hat{\theta}(x,t;-k_n)}M^+_L(x,t;-k_n)\sigma_2C^{\texttt{T}}_n\sigma_2,
\end{align}
where $C_n$ is the $2\times2$ norming constant matrix associated to the discrete eigenvalue $k_n$. Additionally, rank $[C_n]=1$ if rank $[a(k_n)]=$ rank $a[(-k_n^*)]=$ rank $[\bar{a}(-k_n)]=$ rank $[\bar{a}(k_n^*)]=1$, and $C_n$ can be either full-rank or rank-1 matrix if $a(k_n)=a(-k_n^*)=\bar{a}(-k_n)=\bar{a}(k_n^*)=\mathbf{0}_{2\times2}$.

It should be noted that if $k_n$ is purely imaginary, namely, $k_n=-k_n^*$, then $C_n=-\sigma_2C^*_n\sigma_2$. Thus, it is easy to see that if $C_n$ is a rank-1 matrix, it must be a zero matrix. Hence, in this case, the only nontrivial solutions are associated with a full rank norming constant matrix $C_n=\begin{pmatrix}\alpha_n & \beta_n^*\\ \beta_n & -\alpha^*_n\end{pmatrix}$ with $\alpha_n,\beta_n\in\bfC$. When $\rho(k)\equiv\mathbf{0}_{2\times2}$, this case above corresponds to a so-called self-symmetric soliton \cite{CGP}, while if $k_n\neq-k_n^*$ and $C_n$ is a $2\times2$ full rank matrix, then the solution is referred to as a composite breather \cite{CGP,LSL-IST}.

On the other hand, the zeros of $\det[a(k)]$ on $\bfR$ are known to occur at specific values of $k$, and these correspond to spectral singularities. To exclude this phenomena and facilitate the following asymptotic analysis, we let the initial data satisfy the hypothesis.
\begin{assumption}\label{assump1}
The initial data $q_{10}(x),q_{20}(x)\in\mathcal{S}(\bfR)$ generate scattering data which satisfy that

$\bullet$ For $k\in\bfR$, no spectral singularities exist, that is, $\det[a(k)]\neq0$;

$\bullet$ The norming constant matrices $C_n$ for $n=1,\cdots,N$ are all of full rank;

$\bullet$ All the discrete eigenvalues $\{k_n\}_{n=1}^N$ satisfy
\be
-\infty<-\frac{1}{4|k_1|^2}<-\frac{1}{4|k_2|^2}<\cdots<-\frac{1}{4|k_N|^2}<\infty.
\ee
\end{assumption}
\begin{remark}
It should be pointed that the second item in Assumption \ref{assump1} is introduced to facilitate the calculation of residue conditions in the first transformation of the asymptotic analysis, see \eqref{3.15}-\eqref{3.18}. The third item aims to avoid the unstable structure in which the self-symmetric solitons and composite breathers corresponding to the zeros $k_n$ of $\det[a(k)]$ in the same velocity. Moreover, by choosing a pair of purely imaginary spectral points $k_1=\ii\nu_1$ and $k_2=\ii\nu_2$ with $\nu_2\neq\nu_1$, this will introduce the self-symmetric two-soliton solution of Equation \eqref{ccSPE}, then a self-symmetric two-soliton initial condition will satisfy all items of Asumption \ref{assump1}.
\end{remark}
\subsection{A RH problem constructed from dedicated eigenfunctions}
The analyticity properties of $M^\pm(x,t;k)$ and $a(k)$, $\bar{a}(k)$ allow us to define
\be\label{2.25}
\mu(x,t;k)=\left\{
\begin{aligned}
&\begin{pmatrix}M^-_L(x,t;k)a^{-1}(k) & M^+_R(x,t;k) \end{pmatrix},\quad k\in\bfC^+,\\
&\begin{pmatrix}M^+_L(x,t;k) & M^-_R(x,t;k)\bar{a}^{-1}(k)\end{pmatrix},\quad k\in\bfC^-.
\end{aligned}
\right.
\ee
Then the limiting values $\mu_\pm(x,t,k)$, $k\in\bfR$ of $\mu$ as $k$ is approached from the domains $\pm$Im $k>0$ are related as follows:
\be
\mu_+(x,t;k)=\mu_-(x,t;k)J(x,t;k),\quad k\in\bfR,
\ee
where
\be
J(x,t;k)=\begin{pmatrix} \mathbb{I}_{2\times2}+\rho^\dag(k)\rho(k) & \rho^\dag(k)\e^{-2\ii t\hat{\theta}(x,t;k)}\\[4pt]
\rho(k)\e^{2\ii t\hat{\theta}(x,t;k)} &\mathbb{I}_{2\times2}\end{pmatrix}.
\ee
Moreover, as $k\to\infty$,
\be\label{2.35}
\mu(x,t;k)=\mu_\infty(x,t)+O(k^{-1}),
\ee
where $\mu_\infty(x,t)$ is invertible and the explicit expression is omitted for brevity.

In order to formulate a RH problem which has an explicit dependence on parameters and equip with the normalization condition, we introduce
\be\label{2.32}
\breve{\mu}(\zeta,t;k)\doteq\mu^{-1}_\infty(x,t)\mu(x(\zeta,t),t;k).
\ee
Then we can obtain the Riemann--Hilbert problem for $\breve{\mu}(\zeta,t;k)$ as follows:
\begin{rhp}\label{rhp2.1} Find a $4\times4$ matrix-valued function $\breve{\mu}(\zeta,t;k)$ which satisfies:
\begin{itemize}
\item  Analyticity: $\breve{\mu}(\zeta,t;k)$ is analytic in $\bfC\setminus(\bfR\cup Z\cup Z^*)$, and continuous up to the boundary $k\in\bfR$.
\item Jump condition: The jump condition of $\breve{\mu}(\zeta,t;k)$  takes the form
\be
\breve{\mu}_+(\zeta,t;k)=\breve{\mu}_-(\zeta,t;k)\breve{J}(\zeta,t;k),\quad k\in\bfR,
\ee
where
\be\label{2.34}
\breve{J}(\zeta,t;k)=\begin{pmatrix} \mathbb{I}_{2\times2}+\rho^\dag(k)\rho(k) & \rho^\dag(k)\e^{-2\ii t\breve{\theta}(\zeta,t;k)}\\[4pt]
\rho(k)\e^{2\ii t\breve{\theta}(\zeta,t;k)} &\mathbb{I}_{2\times2}\end{pmatrix},\quad
\breve{\theta}(\zeta,t;k)=\frac{\zeta}{t}k-\frac{1}{4k}.
\ee
\item Normalization:
\be
\breve{\mu}(\zeta,t;k)\to\mathbb{I}_{4\times4},\quad k\to\infty.
\ee
\item Residue conditions: $\breve{\mu}(\zeta,t;k)$ has simple poles at each point in $Z\cup Z^*$ with
    \begin{align}
    \underset{k=k_n}{\rm Res\ }\breve{\mu}(\zeta,t;k)&=\lim_{k\to k_n}\breve{\mu}(\zeta,t;k)\begin{pmatrix}
    \mathbf{0}_{2\times2} &  \mathbf{0}_{2\times2}\\
    C_n\e^{2\ii t\breve{\theta}(\zeta,t;k)}& \mathbf{0}_{2\times2}
    \end{pmatrix},\label{2.38}\\
    \underset{k=-k_n^*}{\rm Res\ }\breve{\mu}(\zeta,t;k)&=\lim_{k\to -k^*_n}\breve{\mu}(\zeta,t;k)\begin{pmatrix}
    \mathbf{0}_{2\times2} &  \mathbf{0}_{2\times2}\\
    -\sigma_2C^*_n\sigma_2\e^{2\ii t\breve{\theta}(\zeta,t;k)}& \mathbf{0}_{2\times2}
    \end{pmatrix},\label{2.39}\\
    \underset{k=k_n^*}{\rm Res\ }\breve{\mu}(\zeta,t;k)&=\lim_{k\to k_n^*}\breve{\mu}(\zeta,t;k)\begin{pmatrix}
    \mathbf{0}_{2\times2} & -C^\dag_n\e^{-2\ii t\breve{\theta}(\zeta,t;k)} \\
    \mathbf{0}_{2\times2} & \mathbf{0}_{2\times2}
    \end{pmatrix},\label{2.40}\\
    \underset{k=-k_n}{\rm Res\ }\breve{\mu}(\zeta,t;k)&=\lim_{k\to -k_n}\breve{\mu}(\zeta,t;k)\begin{pmatrix}
    \mathbf{0}_{2\times2} & \sigma_2C^{\texttt{T}}_n\sigma_2\e^{-2\ii t\breve{\theta}(\zeta,t;k)} \\
    \mathbf{0}_{2\times2} & \mathbf{0}_{2\times2}
    \end{pmatrix}.\label{2.41}
    \end{align}
\end{itemize}
\end{rhp}
\subsection{Eigenfunctions appropriate at $k=0$}
In order to have better control of the behavior of solution $\breve{\mu}(\zeta,t;k)$ as $k\to0$, and reconstruct the solution of \eqref{ccSPE}, it is convenient to rewrite the Lax pair \eqref{lax} in the form
\be\label{lax2}
\Phi_x+\ii k\Sigma_3\Phi=X_0\Phi,\quad \Phi_t-\frac{\ii}{4k}\Sigma_3\Phi=T_0\Phi,
\ee
where
\be
X_0=k\begin{pmatrix}\mathbf{0}_{2\times2} & Q_x\\
-Q^\dag_x & \mathbf{0}_{2\times2}\end{pmatrix},\quad
T_0=\begin{pmatrix}-\frac{\ii}{2}kQQ^\dag & -\frac{\ii}{2}Q+\frac{1}{2}kQQ^\dag Q_x \\
-\frac{\ii}{2}Q^\dag-\frac{1}{2}kQ^\dag QQ^\dag_x & \frac{\ii}{2} kQ^\dag Q\end{pmatrix}.
\ee
Introduce
\be
G_0(x,t;k)=\ii t\theta_0(x,t;k)\Sigma_3,\quad \theta_0(x,t;k)=\frac{x}{t}k-\frac{1}{4k},
\ee
and
\be
M_0(x,t;k)=\Phi(x,t;k)\e^{\ii t\theta_0(x,t;k)\Sigma_3}.
\ee
Then the Lax pair \eqref{lax2} can be rewritten as
\be
M_{0,x}+[G_{0,x},M_0]=X_0M_0,\quad M_{0,t}+[G_{0,t},M_0]=T_0M_0.
\ee
The Jost solutions $M_0^\pm(x,t;k)$ are determined, similarly to above, as the solutions of associated Volterra integral equations:
\be
M_0^\pm(x,t;k)=\mathbb{I}_{4\times4}+\int_{\pm\infty}^x\e^{\ii k(y-x)\Sigma_3}X_0(y,t;k)
M_0^\pm(y,t;k)\e^{-\ii k(y-x)\Sigma_3}\dd y.
\ee
\begin{proposition}
As $k\to0$, we have
\be\label{2.42}
M_0^\pm(x,t;k)=\mathbb{I}_{4\times4}+k\begin{pmatrix}\mathbf{0}_{2\times2}&Q\\
-Q^\dag & \mathbf{0}_{2\times2}\end{pmatrix}+O(k^2).
\ee
\end{proposition}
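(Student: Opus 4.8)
\emph{Proof idea.}
The starting point is the Volterra integral equation for $M_0^\pm(x,t;k)$ displayed just above the statement, together with the key structural observation that the potential $X_0(y,t;k)=k\begin{pmatrix}\mathbf{0}_{2\times2}&Q_x\\-Q_x^\dag&\mathbf{0}_{2\times2}\end{pmatrix}$ carries an explicit prefactor $k$. Hence the integral term is $O(k)$ as $k\to0$, and in particular setting $k=0$ in the equation gives $M_0^\pm(x,t;0)=\mathbb{I}_{4\times4}$ exactly. The plan is to extract the first two terms of the expansion of $M_0^\pm$ in powers of $k$ about $k=0$ directly from the integral equation.

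First I would run the standard successive-approximation scheme, writing $M_0^\pm=\sum_{n\ge0}\mathcal{K}^n[\mathbb{I}_{4\times4}]$ with $\mathcal{K}$ the integral operator sending a matrix $F$ to $\int_{\pm\infty}^x\e^{\ii k(y-x)\Sigma_3}X_0(y,t;k)F(y)\e^{-\ii k(y-x)\Sigma_3}\,\dd y$. For $k$ in a fixed (real) neighbourhood of $0$ the conjugating exponentials have unit modulus, so using $q_{jx}\in L^1(\bfR)$ (from $q_{10},q_{20}\in\mathcal{S}(\bfR)$) one obtains the Volterra bound $\|\mathcal{K}^n[\mathbb{I}_{4\times4}]\|\le(C|k|\,\|Q_x\|_{L^1})^n/n!$; thus the series converges absolutely, its $n$-th term is $O(k^n)$, and the tail $\sum_{n\ge2}\mathcal{K}^n[\mathbb{I}_{4\times4}]$ is $O(k^2)$.

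Then I would compute the single first-order term $\mathcal{K}[\mathbb{I}_{4\times4}]=\int_{\pm\infty}^x\e^{\ii k(y-x)\Sigma_3}X_0(y,t;k)\e^{-\ii k(y-x)\Sigma_3}\,\dd y$. Since $X_0$ is block anti-diagonal, conjugation by $\e^{\ii k(y-x)\Sigma_3}$ merely multiplies its two off-diagonal blocks by $\e^{\pm2\ii k(y-x)}$, and replacing those scalar factors by $1$ costs at most $2|k|\int_{\pm\infty}^x|y-x|\,|Q_x(y,t)|\,\dd y=O(k)$, the first moment of $Q_x$ being finite for Schwartz data. Using $\int_{\pm\infty}^xQ_x(y,t)\,\dd y=Q(x,t)-\lim_{y\to\pm\infty}Q(y,t)=Q(x,t)$ (the potential decays at spatial infinity), this yields $\mathcal{K}[\mathbb{I}_{4\times4}]=k\begin{pmatrix}\mathbf{0}_{2\times2}&Q\\-Q^\dag&\mathbf{0}_{2\times2}\end{pmatrix}+O(k^2)$. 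Adding the zeroth term $\mathbb{I}_{4\times4}$ and the $O(k^2)$ tail gives \eqref{2.42}. (Equivalently, one may differentiate the integral equation once in $k$ at $k=0$: the derivative of the conjugation drops out because $X_0|_{k=0}=0$, leaving $\partial_kM_0^\pm|_{k=0}=\int_{\pm\infty}^x\partial_kX_0|_{k=0}\,\dd y=\begin{pmatrix}\mathbf{0}_{2\times2}&Q\\-Q^\dag&\mathbf{0}_{2\times2}\end{pmatrix}$, and Taylor's theorem with the $k$-analyticity of $M_0^\pm$ coming from the convergent Volterra series supplies the $O(k^2)$ remainder.)

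The only delicate point — the ``main obstacle'' — is making the $k$-dependence uniform over the semi-infinite integration interval, i.e.\ ensuring the oscillatory factors $\e^{\pm2\ii k(y-x)}$ do not degrade the $O(k^2)$ error on an unbounded domain. This is handled entirely by the $L^1$ and first-moment bounds on $Q_x$ that the Schwartz hypothesis already provides; no stationary-phase estimate is needed, since the claim is purely a local expansion near $k=0$.
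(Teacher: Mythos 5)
Your argument is correct: the Neumann-series treatment of the Volterra equation, the explicit $k$-prefactor in $X_0$, the estimate $|\e^{\pm2\ii k(y-x)}-1|\leq 2|k||y-x|$ controlled by the first moment of $Q_x$, and the identity $\int_{\pm\infty}^xQ_x\,\dd y=Q$ (using decay of $Q$ at infinity) together give exactly \eqref{2.42}. The paper states this proposition without proof (deferring to the standard inverse-scattering construction of \cite{LSL-IST}), and your proof is precisely the standard argument that is implicitly intended, so there is nothing further to compare.
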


\subsection{The solution of the coupled complex short pulse equation}
We notice that $M^\pm$ and $M_0^\pm$, being related to the same system of
\eqref{lax}, must be related as
\be
M^\pm(x,t;k)=P(x,t)M_0^\pm(x,t;k)
\e^{-\ii t\theta_0(x,t;k)\Sigma_3}\Gamma^\pm(k)\e^{\ii t\hat{\theta}(x,t;k)\Sigma_3},
\ee
where
\be
\Gamma^+(k)=\mathbb{I}_{4\times4},\quad \Gamma^-(k)=\e^{\ii k\varsigma\Sigma_3},\quad
\varsigma=\int_{-\infty}^{+\infty}(q(s,t)-1)\dd s.
\ee
Combining with \eqref{2.42}, we can obtain the asymptotics of $M^\pm(x,t;k)$ as $k\to0$
\be\label{2.45}
\begin{aligned}
M^+(x,t;k)=&P(x,t)\left(\mathbb{I}_{4\times4}+k\begin{pmatrix}\mathbf{0}_{2\times2}&Q\\
-Q^\dag & \mathbf{0}_{2\times2}\end{pmatrix}-\ii k\int_{x}^{+\infty}(q(s,t)-1)\dd s\Sigma_3+O(k^2)\right),\\
M^-(x,t;k)=&P(x,t)\left(\mathbb{I}_{4\times4}+k\begin{pmatrix}\mathbf{0}_{2\times2}&Q\\
-Q^\dag & \mathbf{0}_{2\times2}\end{pmatrix}+\ii k\int^{x}_{-\infty}(q(s,t)-1)\dd s\Sigma_3+O(k^2)\right).
\end{aligned}
\ee
Recalling the definition of $M^\pm(x,t;k)$, and from \eqref{2.18}, we can rewrite the scattering matrix as
\be
S(k)=\e^{\ii t\hat{\theta}(x,t;k)\Sigma_3}[M^+(x,t;k)]^{-1}M^-(x,t;k)\e^{-\ii t\hat{\theta}(x,t;k)\Sigma_3},
\ee
and expand $S(k)$ as $k=0$, then we obtain as $k\to0$
\begin{align}
a(k)=&(1+\ii k\varsigma)\mathbb{I}_{2\times2}+O(k^2),\quad
\bar{a}(k)=(1-\ii k\varsigma)\mathbb{I}_{2\times2}+O(k^2),\label{2.47}\\
b(k)=&\bar{b}(k)=O(k^2).
\end{align}
\begin{proposition}\label{pro2.5}
As $k\to0$, $\rho(k)=O(k^2)$, that is,
\be
\lim_{k\to0}\rho(k)=\mathbf{0}_{2\times2}.
\ee
\end{proposition}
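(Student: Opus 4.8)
\textbf{Proof proposal for Proposition \ref{pro2.5}.}

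The plan is to read off the small-$k$ behaviour of $\rho(k)=b(k)a^{-1}(k)$ directly from the expansions already obtained for the scattering coefficients at $k=0$. From \eqref{2.47} we have $b(k)=O(k^2)$ and $a(k)=(1+\ii k\varsigma)\mathbb{I}_{2\times2}+O(k^2)$, so $a(k)$ is invertible in a punctured neighbourhood of $k=0$ with $a^{-1}(k)=(1-\ii k\varsigma)\mathbb{I}_{2\times2}+O(k^2)=\mathbb{I}_{2\times2}+O(k)$. Multiplying, $\rho(k)=b(k)a^{-1}(k)=O(k^2)\cdot(\mathbb{I}_{2\times2}+O(k))=O(k^2)$, and in particular $\lim_{k\to0}\rho(k)=\mathbf{0}_{2\times2}$. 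So the proposition is essentially a corollary of \eqref{2.47}, and the real content is establishing the expansions \eqref{2.47}, which I would do as follows.

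First I would use the representation $S(k)=\e^{\ii t\hat{\theta}(x,t;k)\Sigma_3}[M^+(x,t;k)]^{-1}M^-(x,t;k)\e^{-\ii t\hat{\theta}(x,t;k)\Sigma_3}$ together with the $k\to0$ asymptotics \eqref{2.45} of $M^\pm$. The crucial observation is that the prefactor $P(x,t)$ cancels in the product $[M^+]^{-1}M^-$, since it is common to both factors in \eqref{2.45}. What remains is the product of the two bracketed expansions; writing $M^\pm=P(\mathbb{I}_{4\times4}+kB^\pm+O(k^2))$ with
\be
B^+=\begin{pmatrix}\mathbf{0}_{2\times2}&Q\\ -Q^\dag&\mathbf{0}_{2\times2}\end{pmatrix}-\ii\int_x^{+\infty}(q(s,t)-1)\dd s\,\Sigma_3,\qquad
B^-=\begin{pmatrix}\mathbf{0}_{2\times2}&Q\\ -Q^\dag&\mathbf{0}_{2\times2}\end{pmatrix}+\ii\int_{-\infty}^x(q(s,t)-1)\dd s\,\Sigma_3,
\ee
one gets $[M^+]^{-1}M^-=\mathbb{I}_{4\times4}+k(B^--B^+)+O(k^2)$, and $B^--B^+=\ii\varsigma\Sigma_3$ with $\varsigma=\int_{-\infty}^{+\infty}(q(s,t)-1)\dd s$. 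Next I would incorporate the conjugation by $\e^{\pm\ii t\hat{\theta}\Sigma_3}$: since $\hat{\theta}(x,t;k)=\frac{\zeta}{t}k-\frac{1}{4k}$ has a simple pole at $k=0$, this conjugation is not obviously regular, so the key point is that the off-diagonal part of $[M^+]^{-1}M^-$ at order $k$ vanishes — indeed $B^--B^+=\ii\varsigma\Sigma_3$ is block-diagonal — so conjugation by the (diagonal) exponential acts trivially on the leading terms and does not produce the dangerous $\e^{\pm\ii t/(2k)}$ factors at the orders we track. This yields $S(k)=\mathbb{I}_{4\times4}+\ii k\varsigma\Sigma_3+O(k^2)$, which unpacks block-wise into \eqref{2.47}: $a(k)=(1+\ii k\varsigma)\mathbb{I}_{2\times2}+O(k^2)$, $\bar a(k)=(1-\ii k\varsigma)\mathbb{I}_{2\times2}+O(k^2)$, $b(k)=\bar b(k)=O(k^2)$.

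The main obstacle, and the place that needs genuine care rather than bookkeeping, is the cancellation of the singular conjugation $\e^{\pm\ii t\hat\theta\Sigma_3}$ near $k=0$: a priori the factor $\e^{\ii t/(2k)}$ blows up, so one must verify that the off-diagonal entries of $[M^+]^{-1}M^-$ vanish to sufficiently high order in $k$ that the product with the (at worst essential-singular) exponential still tends to zero — i.e. that $b(k)$ is genuinely $O(k^2)$ and not merely $O(k)$, otherwise the conjugated quantity would be uncontrolled. This is ensured by the structure of the expansion \eqref{2.45}: the $O(k)$ terms of $M^+$ and $M^-$ differ only by the block-diagonal term $\ii k\varsigma\Sigma_3$, so the off-diagonal part of $[M^+]^{-1}M^-$ is $O(k^2)$; one should also check that the $O(k^2)$ error terms in \eqref{2.45}, after conjugation, remain $O(k^2)$ rather than being amplified — this follows because those error terms are genuine powers of $k$ (analytic contributions from the Neumann series for the Volterra equation with the regular kernel $X_0=O(k)$) and the worst growth of the conjugation is polynomial once the leading off-diagonal obstruction has been removed. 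Once this is settled, the conclusion $\rho(k)=O(k^2)$ is immediate. Finally, I would note for later use (as the reconstruction in Section \ref{sec2} will need it) that the same computation shows $\breve\mu$ and the related eigenfunctions are regular at $k=0$, but for the present statement only the estimate $\rho(k)=O(k^2)$ is required.
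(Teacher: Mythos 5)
Your argument follows essentially the same route as the paper: the paper likewise obtains \eqref{2.47} by expanding $S(k)=\e^{\ii t\hat{\theta}\Sigma_3}[M^+]^{-1}M^-\e^{-\ii t\hat{\theta}\Sigma_3}$ at $k=0$ via \eqref{2.45} (the factor $P$ cancelling and $B^--B^+=\ii\varsigma\Sigma_3$ being block-diagonal), and then treats Proposition \ref{pro2.5} as an immediate consequence of $b(k)=O(k^2)$ together with the invertibility of $a(k)$ near $k=0$. The only imprecise point is your remark that ``the worst growth of the conjugation is polynomial'': the clean reason the conjugation is harmless is simply that $\rho$, $b$, $\bar b$ are defined for $k\in\bfR$, where $|\e^{\pm2\ii t\hat{\theta}(x,t;k)}|=1$, so the unimodular phases cannot amplify the $O(k^2)$ off-diagonal remainders.
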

Finally, substituting \eqref{2.45} and \eqref{2.47} into \eqref{2.25} gives
\be\label{2.53}
\mu(x,t;k)=P(x,t)\left(\mathbb{I}_{4\times4}-\ii k\begin{pmatrix}
(x-\zeta)\mathbb{I}_{2\times2} & \ii Q\\-\ii Q^\dag & (\zeta-x)\mathbb{I}_{2\times2}
\end{pmatrix}+O(k^2)\right),\quad k\to0.
\ee
Note that $\mu(x,t;0)=P(x,t)$, it follows from \eqref{2.32} and \eqref{2.53} that
\be\label{2.59}
\begin{aligned}
&\lim_{k\to0}\frac{\ii}{k}[\breve{\mu}^{-1}(\zeta,t;0)
\breve{\mu}(\zeta,t;k)-\mathbb{I}_{4\times4}]\\
=&\lim_{k\to0}\frac{\ii}{k}\left[\left(\mu_\infty^{-1}(x,t)
\mu(x,t;0)\right)^{-1}\left(\mu_\infty^{-1}(x,t)
\mu(x,t;k)\right)-\mathbb{I}_{4\times4}\right]\\
=&\lim_{k\to0}\frac{\ii}{k}\left[\mu^{-1}(x,t;0)\mu(x,t;k)-\mathbb{I}_{4\times4}\right]\\
=&\begin{pmatrix}
(x-\zeta)\mathbb{I}_{2\times2} & \ii Q\\-\ii Q^\dag & (\zeta-x)\mathbb{I}_{2\times2}
\end{pmatrix}.
\end{aligned}
\ee
This important relation gives the following result.
\begin{theorem}\label{th2.1}
The associated $4\times4$ matrix RH problem \ref{rhp2.1} for $\breve{\mu}(\zeta,t;k)$ has a unique solution. Evaluating $\breve{\mu}(\zeta,t;k)$ as $k\to0$, we get a parametric representation for the solution of the initial-value problem \eqref{ccSPE}-\eqref{IVD} in terms of the solution of this RH problem:
\be\label{2.60}
\begin{pmatrix} q_1(x,t) & q_2(x,t)\end{pmatrix}=\begin{pmatrix} q_1(\zeta(x,t),t) & q_2(\zeta(x,t),t)\end{pmatrix},
\ee
where
\begin{align}
\begin{pmatrix} q_1(\zeta,t) & q_2(\zeta,t)\end{pmatrix}=&
\lim_{k\to0}\frac{\ii}{k}\begin{pmatrix}\left(\breve{\mu}^{-1}(\zeta,t;0)
\breve{\mu}(\zeta,t;k)\right)_{13}
& \left(\breve{\mu}^{-1}(\zeta,t;0)
\breve{\mu}(\zeta,t;k)\right)_{14}\end{pmatrix},\label{2.50}\\
x-\zeta(x,t)=&\lim_{k\to0}\frac{\ii}{k}\left[\left(\breve{\mu}^{-1}(\zeta,t;0)
\breve{\mu}(\zeta,t;k)\right)_{11}-1\right].\label{2.51}
\end{align}
\end{theorem}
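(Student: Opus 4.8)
### Proof strategy for Theorem \ref{th2.1}

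The plan is to split the statement into two parts: existence/uniqueness of the solution of RH Problem \ref{rhp2.1}, and the reconstruction formulas \eqref{2.50}--\eqref{2.51}. For uniqueness, I would invoke a vanishing-lemma argument. The jump matrix $\breve{J}(\zeta,t;k)$ on $\bfR$ has the Hermitian-positive structure typical of these problems: writing $\breve{J} = (\mathbb{I}-\breve{w}_-)^{-1}(\mathbb{I}+\breve{w}_+)$, or more directly, noting that the $(1,1)$-block is $\mathbb{I}_{2\times2}+\rho^\dag\rho \geq \mathbb{I}_{2\times 2}$ and the jump satisfies the symmetry $\breve{J}(\zeta,t;k) = [\breve{J}(\zeta,t;k)]^\dagger$ combined with the Schwarz-reflection symmetry inherited from \eqref{2.17}, namely $[\breve\mu(\zeta,t;k^*)]^\dagger = [\breve\mu(\zeta,t;k)]^{-1}$. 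Given two solutions, their ratio is entire except at $Z\cup Z^*$; the residue conditions \eqref{2.38}--\eqref{2.41} are of triangular (nilpotent) type, so they do not obstruct the standard argument — one checks that $\breve\mu(\zeta,t;k)[\breve\mu(\zeta,t;k^*)]^\dagger$ extends across $\bfR$ and across the poles (the apparent singularities cancel because the residues have rank-deficient, consistent structure), is bounded, tends to $\mathbb{I}$ at infinity, hence equals $\mathbb{I}$ by Liouville. Existence then follows because Assumption \ref{assump1} (no spectral singularities, $\det[a(k)]\neq 0$ on $\bfR$) guarantees the jump matrix is well-defined and the associated singular integral equation is Fredholm of index zero, so uniqueness implies solvability. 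Since $\breve\mu$ was constructed from $\mu$ in \eqref{2.32}, which exists by the direct scattering analysis, existence is in fact immediate; the content is uniqueness.

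For the reconstruction formulas, the key input is the already-established identity \eqref{2.59}, which computes the $k\to 0$ expansion of $\breve\mu^{-1}(\zeta,t;0)\breve\mu(\zeta,t;k)$ entirely in terms of the gauge factor $P(x,t)$ and the potential matrix $Q$. Extracting the $(1,3)$ and $(1,4)$ entries of the matrix $\begin{pmatrix}(x-\zeta)\mathbb{I}_{2\times2} & \ii Q\\-\ii Q^\dagger & (\zeta-x)\mathbb{I}_{2\times2}\end{pmatrix}$ and recalling $Q=\begin{pmatrix}-\ii q_1 & -\ii q_2\\ \ii q_2^* & -\ii q_1^*\end{pmatrix}$, the $(1,3)$ entry is $\ii\cdot(-\ii q_1)=q_1$ and the $(1,4)$ entry is $\ii\cdot(-\ii q_2)=q_2$. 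Since the limit in \eqref{2.59} is $\lim_{k\to 0}\tfrac{\ii}{k}[\breve\mu^{-1}(\zeta,t;0)\breve\mu(\zeta,t;k)-\mathbb{I}_{4\times4}]$, reading off the appropriate entries gives \eqref{2.50} directly, and the $(1,1)$ entry gives $x-\zeta$, which is \eqref{2.51}. Finally \eqref{2.60} is just the tautology that the left side is evaluated at $x$ while the RH-problem side is naturally parametrized by $\zeta=\zeta(x,t)$ via \eqref{2.10}; one must observe that the map $x\mapsto\zeta(x,t)$ is a bijection (it is a smooth strictly increasing perturbation of the identity, since $q(s,t)-1 = O(|q_{jx}|^2)$ is integrable and small in the Schwartz setting), so the change of variables is legitimate.

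The main obstacle will be the vanishing lemma in the presence of the discrete spectrum: one must verify carefully that the triangular residue conditions \eqref{2.38}--\eqref{2.41}, together with the symmetry relations among the norming constants (e.g. $C_n \to -\sigma_2 C_n^*\sigma_2$ across $-k_n^*$ and the $C_n^\dagger$, $\sigma_2 C_n^{\texttt{T}}\sigma_2$ appearing at $k_n^*$, $-k_n$), are \emph{compatible} with the Schwarz symmetry of $\breve\mu$, so that the product $\breve\mu(\zeta,t;k)[\breve\mu(\zeta,t;\bar k)]^\dagger$ genuinely has removable singularities at all eight points of $Z\cup Z^*$ rather than poles. This is the standard but slightly delicate point: the nilpotency of the residue matrices (the lower-left and upper-right $2\times2$ blocks being the only nonzero parts) means $\breve\mu$ has at worst simple poles with rank-one-type residues, and the cancellation in the product follows from the fact that the residue structure is exactly that of a solution of the RH problem reflected under $k\mapsto \bar k$; I would spell this out by writing the local Laurent expansion of $\breve\mu$ near $k_n$ and near $k_n^*$ and checking the $\tfrac{1}{k-k_n}$ coefficient of the product vanishes. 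Everything else — the Fredholm/index-zero count, the small-norm reduction near $k=\infty$, and the algebraic entry-extraction — is routine.
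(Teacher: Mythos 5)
Your proposal is correct and follows essentially the same route as the paper: existence and uniqueness are obtained by a Zhou-type vanishing-lemma/Schwarz-symmetry argument (the paper implements the discrete spectrum by replacing the residue conditions \eqref{2.38}--\eqref{2.41} with Schwarz-invariant jumps on small closed contours around $\pm k_n$, $\pm k_n^*$ and citing Zhou's lemma, rather than checking the pole cancellations directly via Laurent expansions as you sketch), and the reconstruction formulas \eqref{2.50}--\eqref{2.51} are then read off from the identity \eqref{2.59} exactly as in your entry extraction. Your side remarks (existence already provided by the Jost-function construction, positivity $\breve{J}=B^\dagger B>0$ on $\bfR$) are consistent with, and no stronger than, what the paper's one-paragraph proof uses.
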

\begin{proof}
It is easy to show that the jump conditions and the jump matrices in RH problem \ref{rhp2.1} satisfy the hypotheses of Zhou's vanishing lemma \cite{ZX1989} by replacing the residue conditions \eqref{2.38}-\eqref{2.41} with Schwarz invariant jump condition across a set of complete contours centered at $\pm k_n$ and $\pm k^*_n$. Thus, the solution $\breve{\mu}(\zeta,t;k)$ of the RH problem \ref{rhp2.1} is existent and unique. In view of \eqref{2.59}, the representation formulas hold immediately.
\end{proof}
In the following, we will check that the representation results \eqref{2.60} and \eqref{2.50} in Theorem \ref{th2.1} actually satisfies the ccSP equation and the initial condition.

In order to streamline the forthcoming analysis, we now present the following definitions:
\begin{gather}
	\label{eq1}\bfu(x,t)= (q_1(x,t)\quad q_2(x,t)), \quad \hat{\bfu}(\zeta,t)= (q_1(\zeta(x,t),t)\quad q_2(\zeta(x,t),t)), \\
	 \label{eq2.62}\mu(\zeta,t;0)= P(\zeta,t)= \begin{pmatrix}
		A(\zeta,t) & B(\zeta,t) \\
		-B^{\dagger}(\zeta,t) & A(\zeta,t)
	\end{pmatrix},\quad A(\zeta,t)= m(\zeta,t)\mathbb{I}_{2\times 2},
\end{gather}
where $m(\zeta,t)$ is a real-valued function and $B(\zeta,t)$ is a $2\times 2$ complex-valued matrix. In terms of \(P^{-1}(\zeta,t)= P^{\dagger}(\zeta,t)\), we can derive that
\begin{equation*}
	A A+B B^{\dagger}= \mathbb{I}_{2\times 2}, \quad B^{\dagger}B + A A= \mathbb{I}_{2\times 2}.
\end{equation*}
By examining \eqref{eq2.62}, it is evident to observe that
\begin{equation}
	BB^{\dagger}= B^{\dagger}B = (1-m^2)\mathbb{I}_{2\times 2}.
\end{equation}
Therefore, letting \(n^{2}= 1-m^2\) leads to
\begin{equation}\label{eq3}
	BB^{\dagger}= B^{\dagger}B = n^2\mathbb{I}_{2\times 2}.
\end{equation}
For clarity, we express the \(2\times 2\) matrix \(B\) as
\begin{equation}
\label{eq2.65}	B= \begin{pmatrix}
		B_{11} & B_{12}\\
		B_{21} & B_{22}\\
	\end{pmatrix}.
\end{equation}
Denote \(\gamma = (B_{11},B_{12})\). Then, from \eqref{eq3},  it is straightforward to verify that
\begin{equation}
	\gamma\gamma^{\dagger}= n^2.
\end{equation}
On the other hand, according to \eqref{2.19} and \eqref{2.25'}, it is clear that $\mu_{\infty}(x,t)$ is a block diagonal matrix, when viewed as a $2\times 2$ block matrix. Therefore, we can express $\mu_{\infty}(x,t)$ in the following form:
\begin{equation}\label{2.69}
	\mu_{\infty}^{-1}(x,t) = \begin{pmatrix}
		C(x,t) & \mathbf{0}_{2\times2} \\
		\mathbf{0}_{2\times2}  & D(x,t),
	\end{pmatrix}
\end{equation}
where $C(x,t)$ and $D(x,t)$ are matrix-valued functions.
Based on the symmetry condition satisfied by $M(x,t;k)$ in \eqref{2.17}, it follows that $\mu_{\infty}(x,t)$ also satisfies the symmetry
\begin{equation}\label{2.70}
	[\mu_{\infty}(x,t)]^{-1} = [\mu_{\infty}(x,t)]^{\dagger}.
\end{equation}
Hence, from \eqref{2.69} and \eqref{2.70}, we have
\begin{equation}\label{2.71}
	CC^{\dagger} = C^{\dagger}C = \mathbb{I}_{2\times 2},\quad 	DD^{\dagger} = D^{\dagger}D = \mathbb{I}_{2\times 2}.
\end{equation}
In view of \eqref{2.32} and \eqref{2.53}, we obtain
\begin{equation}
	\breve{\mu}(x,t;k) = \mu_{\infty}^{-1}(x,t)P(x,t)\left(\mathbb{I}_{4\times4}-\ii k\begin{pmatrix}
		(x-\zeta)\mathbb{I}_{2\times2} & \ii Q\\-\ii Q^\dag & (\zeta-x)\mathbb{I}_{2\times2}
	\end{pmatrix}+O(k^2)\right),\quad k\to0,\label{eq2.72}
\end{equation}
and hence
\begin{equation}\label{2.73}
	\breve{\mu}(\zeta,t;0)= \mu_{\infty}^{-1}(\zeta,t)P(\zeta,t)= \begin{pmatrix}
		CA & CB \\
		-DB^{\dagger} &DA
	\end{pmatrix}(\zeta,t).
\end{equation}
\begin{theorem}\label{th2.2}
We define the newly introduced functions as follows:
\begin{equation}\label{eq6}
	\hat{q}(\zeta,t)\doteq \frac{1}{m^2(\zeta,t)-n^2(\zeta,t)},\quad \hat{\bfw}(\zeta,t)\doteq \frac{2m(\zeta,t)\gamma(\zeta,t)}{m^2(\zeta,t)- n^2(\zeta,t)}.
\end{equation}
Then the following equations (between functions of \((\zeta, t)\)) hold:
\begin{align}
\label{eqi} x_{\zeta} =& \frac{1}{\hat{q}},\\
\label{eqii} \hat{\bfu}_{\zeta} =& \frac{\hat{\bfw}}{\hat{q}},\\
\label{eqiii} \hat{q}_t =&\frac{1}{2}\hat{q}(\hat{\bfw}\hat{\bfu}+\hat{\bfu}\hat{\bfw}).
\end{align}
Moreover, we introduce $\bfu(x,t)\doteq\hat{\bfu}(\zeta(x,t),t),\ q(x,t)\doteq \hat{q}(\zeta(x,t),t).$
Thus, Equations \eqref{eqi}-\eqref{eqiii} become
\begin{align}
\label{eq26a} q_t &= \frac{1}{2}(\bfu\bfu^{\dagger}q)_x,\\
\label{eq26b} q &=\sqrt{1+\bfu_x\bfu^{\dagger}_x},
\end{align}
which is the ccSP equation \eqref{ccSPE} in the conservation law form.
\end{theorem}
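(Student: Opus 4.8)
The natural route is the classical inverse-scattering (dressing) argument: promote RH problem \ref{rhp2.1} to a Lax pair in the variables $(\zeta,t)$ for $\breve\mu$, read the reconstruction identities \eqref{eqi}--\eqref{eqiii} off the behaviour at $k=0$, and then pass to the physical variable $x$. For the Lax pair, set $\Psi(\zeta,t;k)=\breve\mu(\zeta,t;k)\,\e^{-\ii t\breve\theta(\zeta,t;k)\Sigma_3}$. Because the off-diagonal blocks of $\breve J$ in \eqref{2.34} carry exactly the oscillatory factors $\e^{\pm2\ii t\breve\theta}$, conjugation by $\e^{\ii t\breve\theta\Sigma_3}$ turns $\breve J$ into a matrix independent of $(\zeta,t)$, and the same cancellation makes the residue relations \eqref{2.38}--\eqref{2.41} for $\Psi$ independent of $(\zeta,t)$ as well (the factors $\e^{\pm2\ii t\breve\theta(\zeta,t;\pm k_n)}$ in the norming data get absorbed into $\Psi$). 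Consequently $\Psi_\zeta\Psi^{-1}$ and $\Psi_t\Psi^{-1}$ have no jump across $\bfR$ and no poles at $Z\cup Z^{*}$, hence are rational in $k$; matching at $k=\infty$ (where $\breve\mu=\mathbb I_{4\times4}+\breve\mu_1 k^{-1}+\cdots$) and at $k=0$ (where $\breve\mu$ is analytic by Proposition \ref{pro2.5}) forces
\[
\breve\mu_\zeta=-\ii k[\Sigma_3,\breve\mu]+\mathcal U(\zeta,t)\,\breve\mu,\qquad
\breve\mu_t=\frac{\ii}{4k}\bigl(\breve\mu(\zeta,t;0)\,\Sigma_3\,\breve\mu^{-1}(\zeta,t;0)\,\breve\mu-\breve\mu\,\Sigma_3\bigr),
\]
with $\mathcal U=-\ii[\breve\mu_1,\Sigma_3]$ independent of $k$. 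Since $\mu_\infty$ is block diagonal (so it commutes with $\Sigma_3$) and unitary, $\mu=\mu_\infty\breve\mu$ satisfies the same system with $\mathcal U$ replaced by $\hat{\mathcal U}=\mu_{\infty,\zeta}\mu_\infty^{-1}+\mu_\infty\mathcal U\mu_\infty^{-1}$ in the $\zeta$-equation and with the extra summand $\mu_{\infty,t}\mu_\infty^{-1}\mu$ in the $t$-equation; here $\mu(\zeta,t;0)=P(\zeta,t)$.

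Near $k=0$ write $\mu(\zeta,t;k)=P(\zeta,t)\bigl(\mathbb I_{4\times4}+kN_1(\zeta,t)+O(k^2)\bigr)$; by \eqref{2.53}, \eqref{2.59}, the symmetry \eqref{2.70} and \eqref{eq2.62}--\eqref{eq3} one has $P=\begin{pmatrix}m\,\mathbb I_{2\times2}&B\\-B^\dagger&m\,\mathbb I_{2\times2}\end{pmatrix}$ with $BB^\dagger=B^\dagger B=n^2\mathbb I_{2\times2}$ and $m^2+n^2=1$, while $N_1=-\ii\begin{pmatrix}(x-\zeta)\mathbb I_{2\times2}&\ii Q\\-\ii Q^\dagger&(\zeta-x)\mathbb I_{2\times2}\end{pmatrix}$, so that \eqref{2.50}--\eqref{2.51} reproduce $\hat\bfu$ and $x-\zeta$. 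Substituting into the $\mu$-version of the $\zeta$-equation, the $O(k^0)$ term gives $\hat{\mathcal U}=P_\zeta P^{-1}$, and the $O(k)$ term collapses to $P\,\partial_\zeta N_1=-\ii[\Sigma_3,P]=\begin{pmatrix}\mathbf0&-2\ii B\\-2\ii B^\dagger&\mathbf0\end{pmatrix}$. Its $(1,1)$- and $(1,2)$-blocks read $(x_\zeta-1)\,m\,\mathbb I_{2\times2}=\ii BQ_\zeta^\dagger$ and $mQ_\zeta=-\ii(1+x_\zeta)B$; eliminating $Q_\zeta$ and using $BB^\dagger=n^2\mathbb I_{2\times2}$, $m^2+n^2=1$ forces $x_\zeta=m^2-n^2=1/\hat q$ --- this is \eqref{eqi} --- and then $Q_\zeta=-2\ii mB$, whose top row reads $\hat\bfu_\zeta=2m\gamma=\hat\bfw/\hat q$ --- this is \eqref{eqii}; in particular $B=\tfrac{\ii}{2m}Q_\zeta$.

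Feeding the same expansion into the $\mu$-version of the $t$-equation, the $O(k^{-1})$ part cancels identically and the $O(k^0)$ part reads $P_t=\mu_{\infty,t}\mu_\infty^{-1}P+\tfrac{\ii}{2}P\begin{pmatrix}\mathbf0&Q\\Q^\dagger&\mathbf0\end{pmatrix}$. Since $\mu_{\infty,t}\mu_\infty^{-1}$ is block diagonal and anti-Hermitian, taking the Hermitian part of the $(1,1)$-block removes it and leaves $m_t\,\mathbb I_{2\times2}=\tfrac{\ii}{4}(BQ^\dagger-QB^\dagger)$; with $B=\tfrac{\ii}{2m}Q_\zeta$ and $QQ^\dagger=(\hat\bfu\hat\bfu^\dagger)\,\mathbb I_{2\times2}$ this becomes $m_t=-\tfrac1{8m}\partial_\zeta(\hat\bfu\hat\bfu^\dagger)$, which, in view of $\hat q=(2m^2-1)^{-1}$ and $\hat\bfw=\hat q\,\hat\bfu_\zeta$, is equivalent to \eqref{eqiii}. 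Finally \eqref{eqi} makes $\zeta\mapsto x$ a diffeomorphism with $x_\zeta=1/\hat q>0$; pulling everything back, \eqref{eqii} reads $\bfu_x=\hat\bfw$, hence $q=\hat q$ and $\bfu_x\bfu_x^\dagger=\hat\bfw\hat\bfw^\dagger=4m^2n^2\hat q^2$, so $(m^2+n^2)^2=1$ yields $q^2=1+\bfu_x\bfu_x^\dagger$, i.e. \eqref{eq26b}; and since $\partial_\zeta x_t=\partial_t x_\zeta=-\hat q_t/\hat q^2=-\tfrac12\partial_\zeta(\hat\bfu\hat\bfu^\dagger)$ (so $x_t=-\tfrac12\hat\bfu\hat\bfu^\dagger$ by the decay as $\zeta\to+\infty$), the chain rule converts \eqref{eqiii} into \eqref{eq26a}, which is the conservation-law form \eqref{2.8} of \eqref{ccSPE}.

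The main obstacle is the first step: one must verify carefully that $\Psi_\zeta\Psi^{-1}$ and $\Psi_t\Psi^{-1}$ are genuinely analytic off $k=0$ --- the removal of the (simple, and in the self-symmetric case possibly double) poles at $Z\cup Z^{*}$ is exactly where the precise exponential factors in the residue conditions \eqref{2.38}--\eqref{2.41} are used --- and that the essential singularity $\e^{\pm\ii t/(4k)}$ carried by $\Psi$ at $k=0$ drops out of $\Psi_\zeta\Psi^{-1}$ and $\Psi_t\Psi^{-1}$, leaving only the simple pole that feeds the $t$-flow. Re-deriving the block/scalar forms of $P$ and $N_1$ straight from the RH data (rather than from \eqref{2.1}) and the chain-rule bookkeeping in the last step are routine but must be done with care.
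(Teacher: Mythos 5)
Your proposal is correct and follows essentially the same route as the paper: Liouville's theorem applied to $\Psi_\zeta\Psi^{-1}$ and $\Psi_t\Psi^{-1}$ with $\Psi=\breve\mu\,\e^{-\ii t\breve\theta\Sigma_3}$, comparison of the expansions at $k=\infty$ and $k=0$ to obtain \eqref{eqi}--\eqref{eqiii}, and then the change of variables $\zeta\mapsto x$ (with $x_t=-\tfrac12\hat\bfu\hat\bfu^\dagger$ and the chain rule) to reach \eqref{eq26a}--\eqref{eq26b}. The only differences are cosmetic: you conjugate by $\mu_\infty$ and work with $P$ directly, and for \eqref{eqiii} you take the Hermitian part of the $(1,1)$-block of the $P_t$ equation, which is equivalent to the paper's use of $(CB)_t=\tfrac12 mCf_2$ together with $CB(CB)^\dagger=n^2\mathbb{I}_{2\times2}$.
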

\begin{proof}
The proofs of the above equations are carried out through detailed computations of \(\Psi_{\zeta} \Psi^{-1} \) and \(\Psi_{t} \Psi^{-1} \), where
\begin{equation}
	\Psi(\zeta,t;k)\doteq \breve{\mu}(\zeta,t;k)\text{e}^{(-\text{i}k\zeta- \frac{t}{4\ii k})\Sigma_3}.
\end{equation}
	We begin our analysis of \(\Psi_{\zeta} \Psi^{-1} \). By considering the expansion
	\begin{equation}
		\breve{\mu}(\zeta,t;k)= \mathbb{I}_{4\times 4} + \frac{\mu_1}{\ii k} + O(k^2), \quad k\to\infty,
	\end{equation}
	and letting \(W\doteq -[\mu_1,\Sigma_3]\), we obtain
	\begin{equation}
		\Psi_{\zeta} \Psi^{-1}(\zeta,t;k)= -\ii k \Sigma_3 + W(\zeta,t) + O(k^{-1}),\quad k\to\infty.
	\end{equation}
	Furthermore, \((\Psi_{\zeta} \Psi^{-1})(\zeta,t;k)+\ii k \Sigma_3 \) is analytic in \(\mathbb{C}\), with no jump discontinuities or singularities, and remains bounded as \(k\to\infty\). Hence, according to Liouville's theorem, we conclude that
	\begin{equation}\label{eq10}
		\Psi_{\zeta} \Psi^{-1}(\zeta,t;k)= -\ii k \Sigma_3 + W(\zeta,t).
	\end{equation}
	On the other hand, we proceed from the expansion
	\begin{equation}
		\Psi(\zeta,t;k)= G_0(\zeta,t)\left(\mathbb{I}_{4\times 4}-\ii kG_1(\zeta,t)+O(k^2)\right)\e^{(-\ii k\zeta-\frac{t}{4\ii k})\Sigma_3},\quad k\to 0,
	\end{equation}
	where, by \eqref{eq2.72} and \eqref{2.73}
	\begin{equation}
		G_0=\begin{pmatrix}
			CA & CB \\
			-DB^{\dagger} & DA
		\end{pmatrix},\quad
		G_1=\begin{pmatrix}
		(x-\zeta)\mathbb{I}_{2\times2} & \ii Q\\-\ii Q^\dag & (\zeta-x)\mathbb{I}_{2\times2}
	\end{pmatrix}\doteq\begin{pmatrix}
			f_1 & f_2 \\
			f_2^{\dagger} & -f_1
		\end{pmatrix},
	\end{equation}
	we then can get
	\begin{equation}
		\Psi_{\zeta} \Psi^{-1}= G_{0\zeta}G_0^{-1}- \ii kG_0(G_{1\zeta}+\Sigma_3)G_0^{-1} + O(k^2),\quad k\to 0.
	\end{equation}
	A comparison with \eqref{eq10} immediately yields
	\begin{equation}
		G_{1\zeta}= -\Sigma_3 + G_0^{-1}\Sigma_3G_0 = \begin{pmatrix}
			(m^2-n^2-1)\mathbb{I}_{2\times 2} & 2mB \\
			2mB^{\dagger} & -(m^2-n^2-1)\mathbb{I}_{2\times 2}
		\end{pmatrix},
	\end{equation}
	In view of \eqref{2.50}, \eqref{2.51} and \eqref{eq6}, we arrive at
	\begin{equation}
		((f_1)_{11})_\zeta= \frac{1}{\hat{q}}-1, \quad ((f_2)_{11},(f_2)_{12})_{\zeta} =\frac{\hat{\bfw}}{\hat{q}}.
	\end{equation}
	Therefore, both \eqref{eqi} and \eqref{eqii} are satisfied. In fact, by \eqref{2.50} and \eqref{2.51}, we have \(x_\zeta = 1+ ((f_1)_{11})_\zeta\) and \(\hat{\bfu}_{\zeta}= ((f_2)_{11},(f_2)_{12})_{\zeta}\).

	Now	we begin our analysis of \(\Psi_{t} \Psi^{-1}\). Since
	\begin{equation}
		\Psi_{t} \Psi^{-1}= O(k^{-1}),\quad k\to\infty,
	\end{equation}
	and
	\begin{equation}
		\Psi_{t} \Psi^{-1}= -\frac{1}{4\ii k}G_0\Sigma_3G_0^{-1}+\left\{G_{0t}+ \frac{1}{4}G_0[G_1,\Sigma_3]\right\}G_0^{-1},\quad k\to 0.
	\end{equation}
	Then, by Liouville's theorem, we derive that
	\begin{equation}\label{eq18}
		G_{0t}= -\frac{1}{4}G_0[G_1,\Sigma_3]= -\frac{1}{2}\begin{pmatrix}
			CB f_2^{\dagger} & -mCf_2 \\
			mDf_2^{\dagger} & DB^{\dagger}f_2
		\end{pmatrix}.
	\end{equation}
	It is easy to see that
	\begin{equation}\label{eq19}
		(CB)_{t} = \frac{1}{2}mCf_2.
	\end{equation}
	Based on the fact that
	\begin{equation}
		CB(CB)^{\dagger} = CBB^{\dagger}C^{\dagger} = n^2\mathbb{I}_{2\times 2}.
	\end{equation}
	Thus, we can deduce that
	\begin{equation}\label{2.89}
		\left(CB(CB)^{\dagger}\right)_t = (n^2)_t\mathbb{I}_{2\times 2}.
	\end{equation}
	On the other hand, by examining \eqref{eq19}, we have
	\begin{align}
		\left(CB(CB)^{\dagger}\right)_t &=  (CB)_t(CB)^{\dagger}+CB(CB)^{\dagger}_t= \frac{1}{2}mCf_2B^{\dagger}C^{\dagger}+\frac{1}{2}mCBf_2^{\dagger}C^{\dagger} \\
		&=\frac{1}{2}mC(f_2B^{\dagger}+Bf_2^{\dagger})C^{\dagger}.\nn
	\end{align}
	In view of \eqref{2.71}, the fact that $f_2=\ii Q$, $B=1/\sqrt{2q(1+q)}\ii Q_x$, and by comparison with \eqref{2.89}, we arrive at
	\begin{equation}\label{eq22}
		(n^2)_t=\frac{1}{2}m\hat{\bfu}\gamma^{\dagger}+\frac{1}{2}m\gamma \hat{\bfu}^{\dagger}.
	\end{equation}
	Recalling that $n^{2}= 1- m^2$, we can thus derive the expression for the derivative of $m$ with respect to $t$
	\begin{equation}
		(m^2)_t= (1-n^2)_t= -(n^2)_t.
	\end{equation}
	From the definition of \(\hat{q}\) in \eqref{eq6},  we now consider \(\hat{q}_t\)
	\begin{align} \label{eq24}
		\hat{q}_t = -\frac{(m^2)_t- (n^2)_t}{(m^2-n^2)^2} = \frac{2(n^2)_t}{(m^2-n^2)^2}
		=\frac{m\hat{\bfu}\gamma^{\dagger}+m\gamma \hat{\bfu}^{\dagger}}{(m^2-n^2)^2}.
	\end{align}
	By substituting \eqref{eq6} into \eqref{eq24}, we conclude that
	\begin{equation}
		\hat{q}_t= \frac{1}{2}\hat{q}(\hat{\bfw}\hat{\bfu}^{\dagger}+\hat{\bfu}\hat{\bfw}^{\dagger}).
	\end{equation}
	Therefore, \eqref{eqiii} is satisfied.

We next present the proof of \eqref{eq26a} and \eqref{eq26b}.
	To begin with, the equation \eqref{eqi} yields \(\zeta_x(x,t)=q(x,t)\), while equation \eqref{eqii} gives \(\hat{\bfu}_{\zeta}(\zeta(x,t),t) = \frac{\bfw}{q}(x,t)\), where \(\bfw(x,t)\doteq\hat{\bfw}(\zeta(x,t),t)\). Accordingly, the identity \(\bfu_x(x,t) = \hat{\bfu}_\zeta(\zeta(x,t),t)\zeta_x(x,t)\) implies
	\begin{equation}\label{eq27}
		\bfw=\bfu_x.
	\end{equation}
	Hence, \eqref{eq26b} takes the form $q= \sqrt{1+\bfw\bfw^{\dagger}}$, or equivalently $\hat{q}= \sqrt{1+\hat{\bfw}\hat{\bfw}^{\dagger}}$, which follows from the expressions for \(\hat{q}\) and \(\hat{\bfw}\) in \eqref{eq6}.
	To derive equation \eqref{eq26a}, we observe that \eqref{eqiii} can be expressed in the form of a conservation law
	\begin{equation}\label{eq28}
		\left(\frac{1}{\hat{q}}\right)_t = -\frac{1}{2}\left(\hat{\bfu}\hat{\bfu}^{\dagger}\right)_{\zeta}.
	\end{equation}
	This follows from the relationship between \(\hat{q}\) and \(\hat{q}_t\), which leads to
	\begin{equation}
		\left(\frac{1}{\hat{q}}\right)_t= -\frac{\hat{q}_t}{\hat{q}^2}= -\frac{\frac{1}{2}(\hat{\bfw}\hat{\bfu}^{\dagger}+\hat{\bfu}\hat{\bfw}^{\dagger})}{\hat{q}} = -\frac{1}{2}(\hat{\bfu}_\zeta\hat{\bfu}^{\dagger}+\hat{\bfu}\hat{\bfu}^{\dagger}_\zeta) = -\frac{1}{2}(\hat{\bfu}\hat{\bfu}^{\dagger})_\zeta.
	\end{equation}
We proceed by calculating \(x_t(\zeta,t)\) based on (i), and subsequently apply \eqref{eq28}:
	\begin{equation}
		x_t(\zeta,t)= -\frac{\partial}{\partial t}\left(\int_{\zeta}^{+\infty}\frac{\text{d}s}{\hat{q}(s,t)}\right) = \frac{1}{2}\int_{\zeta}^{+\infty}\left(\hat{\bfu}\hat{\bfu}^{\dagger}\right)_s(s,t)\text{d}s = -\frac{1}{2}\hat{\bfu}\hat{\bfu}^{\dagger}(\zeta,t).
	\end{equation}
	Substituting this into the identity \(\hat{q}_t = q_xx_t+q_t\) where the functions depend on \((\zeta,t)\), and applying \eqref{eqiii} gives
\be
q_t= \frac{1}{2}\hat{q}(\hat{\bfw}\hat{\bfu}^{\dagger}+\hat{\bfu}\hat{\bfw}^{\dagger}) + \frac{1}{2}q_x\hat{\bfu}\hat{\bfu}^{\dagger},
\ee
which, when written in terms of functions \((x,t)\), becomes
\be
q_t=\frac{1}{2}q(\bfw\bfu^{\dagger}+\bfu\bfw^{\dagger})+\frac{1}{2}q_x\bfu\bfu^{\dagger}.
\ee
From here, applying \eqref{eq27} leads to \eqref{eq26a}:
	\begin{equation}
		q_t = \frac{1}{2}q(\bfu_x\bfu^{\dagger}+\bfu\bfu_x^{\dagger})+\frac{1}{2}q_x\bfu\bfu^{\dagger}= \frac{1}{2}(\bfu\bfu^{\dagger}q)_x.
	\end{equation}
In order to verify the initial conditions, one observes that for $t=0$, the RH problem reduces to that associated with $q_{1,0}(x)$ and $q_{2,0}(x)$, which yields $q_1(x,t=0)=q_{1,0}(x), q_2(x,t=0)=q_{2,0}(x)$, owing to the uniqueness of the solution of the RH problem.
Therefore, we have completed the whole proof.
\end{proof}
\subsection{Classification of asymptotic regions}
The existence of a representation \eqref{2.50} of the solution in terms of the solution of the associated RH problem makes it possible to study the long-time behavior of the former problem via the long-time analysis of the latter, applying the $\bar{\partial}$-generalization of the Deift--Zhou nonlinear steepest descent method. A key feature of this method is the deformation of the original RH problem \ref{rhp2.1} according to the ``signature table" for the phase function $\breve{\theta}(\zeta,t;k)$ in the jump matrix and residue conditions. Introduce
\be
\theta(\hat{\zeta};k)\doteq\breve{\theta}(\zeta,t;k)=\hat{\zeta}k-\frac{1}{4k}, \quad \hat{\zeta}\doteq\frac{\zeta}{t}.
\ee
Then the signature table is the distribution of signs of Im$\theta(\hat{\zeta};k)$ in the $k$-plane
\be
\text{Im}\theta(\hat{\zeta};k)=\text{Im}k\cdot\left(\hat{\zeta}+\frac{1}{4|k|^2}\right).
\ee
Thus, under the condition $\hat{\zeta}>\varepsilon$ for any $\varepsilon>0$, the set $\{k|\text{Im}\theta(\hat{\zeta};k)=0\}$ coincides with the real axis Im$k=0$ and $\pm$Im$\theta>0$ for $\pm$Im$k>0$. While in the case of $\hat{\zeta}<-\varepsilon$, we have
\be
\{k|\text{Im}\theta(\hat{\zeta};k)=0\}=\{k|\text{Im}k=0\}\cup\{k||k|=(-4\hat{\zeta})^{-1/2}\},
\ee
and the sign picture of Im$\theta$ for this case is shown in Figure \ref{fig1}, where
\be\label{2.110}
k_0=\frac{1}{2\sqrt{-\hat{\zeta}}}.
\ee
\begin{figure}[htbp]
\centering
\includegraphics[width=3in]{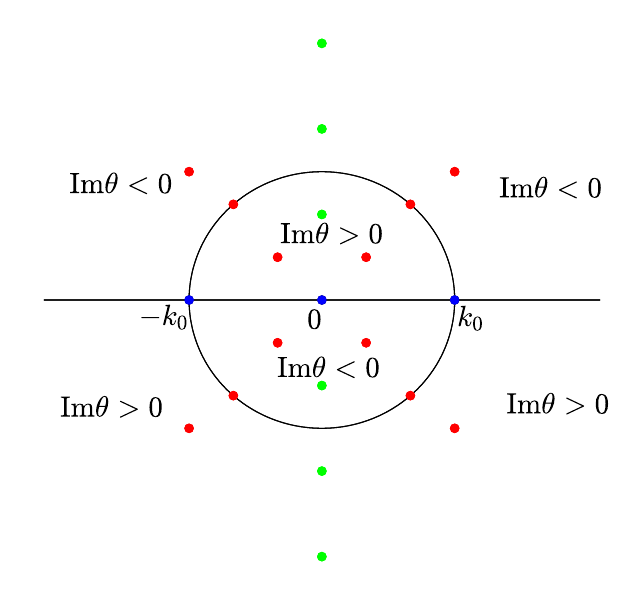}
\caption{Sign distribution of Im$\theta$ in the $k$-plane in the case $\hat{\zeta}<-\varepsilon$.}\label{fig1}
\end{figure}

These analysis suggests us to divide half-plane $\{(\zeta,t)|-\infty<\zeta<\infty, t>0\}$ in two space-time regions: Case I: $\hat{\zeta}>\varepsilon$, Case II: $\hat{\zeta}<-\varepsilon$, where $\varepsilon$ is any small positive number. For the case I, there is no stationary phase point on the real axis, while for the case II, there exist two stationary phase points on the real axis, denoted as $\pm k_0$.

\section{Asymptotics in range $\hat{\zeta}<-\varepsilon$}\label{sec3}

In a domain of the form $\hat{\zeta}<-\varepsilon$ for any $\varepsilon>0$, along a characteristic line $\zeta=v_nt$ for $v_n=-1/(4|k_n|^2)$, the associated signature table in Figure \ref{fig1} dictates the use of two factorizations of the jump matrix $\breve{J}$:
\be\label{3.1}
\breve{J}=\left\{
\begin{aligned}
&\begin{pmatrix} \mathbb{I}_{2\times2} & \rho^\dag(k)\e^{-2\ii t\theta}\\[4pt]
\mathbf{0}_{2\times2} &\mathbb{I}_{2\times2}\end{pmatrix}\begin{pmatrix} \mathbb{I}_{2\times2} & \mathbf{0}_{2\times2}\\[4pt]
\rho(k)\e^{2\ii t\theta} &\mathbb{I}_{2\times2}\end{pmatrix},\quad k\in(-k_0,k_0),\\
&\begin{pmatrix}
\mathbb{I}_{2\times2} & \textbf{0}_{2\times2}\\
\rho(k)\left(\mathbb{I}_{2\times2}+\rho^\dag(k)\rho(k)\right)^{-1}\e^{2\ii t\theta} & \mathbb{I}_{2\times2}
\end{pmatrix}
\begin{pmatrix}
\mathbb{I}_{2\times2}+\rho^\dag(k)\rho(k) & \textbf{0}_{2\times2}\\
\textbf{0}_{2\times2} & \left(\mathbb{I}_{2\times2}+\rho(k)\rho^\dag(k)\right)^{-1}
\end{pmatrix}\\
&\times
\begin{pmatrix}
\mathbb{I}_{2\times2} & \left(\mathbb{I}_{2\times2}+\rho^\dag(k)\rho(k)\right)^{-1}\rho^\dag(k)\e^{-2\ii t\theta}\\
\textbf{0}_{2\times2} & \mathbb{I}_{2\times2}
\end{pmatrix},\quad k\in(-\infty,-k_0)\cup(k_0,+\infty).
\end{aligned}
\right.
\ee
Moreover, we also need the partition $\Delta_{k_0}^\pm$ of $N$ defined by
\begin{align}
\Delta_{k_0}^+=\{n\in\{1,2,\cdots,N\}| |k_n|\leq k_0\},\ \Delta_{k_0}^-=\{n\in\{1,2,\cdots,N\}| |k_n|>k_0\}.
\end{align}
\subsection{The first transformation}
In order to decompose the two factorizations in \eqref{3.5} into the lower-upper triangular factorization, we introduce two $2\times2$ matrix-valued functions $\delta_1(k)$ and $\delta_2(k)$, and they respectively satisfy the following RH problems
\be\label{3.3}
\left\{
\begin{aligned}
\delta_{1+}(k)&=\delta_{1-}(k)
\left(\mathbb{I}_{2\times2}+\rho^\dag(k)\rho(k)\right),\quad |k|>k_0,\\
&=\delta_{1-}(k),\qquad\qquad\qquad\qquad\qquad |k|<k_0,\\
\delta_{1}(k)&\to\mathbb{I}_{2\times2},\qquad\qquad\qquad\qquad\qquad\quad k\to\infty,
\end{aligned}
\right.
\ee
and
\be\label{3.4}
\left\{
\begin{aligned}
\delta_{2+}(k)&=\left(\mathbb{I}_{2\times2}+\rho(k)\rho^\dag(k)\right)\delta_{2-}(k),\quad \ |k|>k_0,\\
&=\delta_{2-}(k),\qquad\qquad\qquad\qquad\qquad |k|<k_0,\\
\delta_{2}(k)&\to\mathbb{I}_{2\times2},\qquad\qquad\qquad\qquad\qquad\quad k\to\infty.
\end{aligned}
\right.
\ee
The unique solvability of above two RH problems is a consequence of the ``vanishing lemma" of Zhou \cite{ZX1989} since $\mathbb{I}_{2\times2}+\rho^\dag(k)\rho(k)$ and $\mathbb{I}_{2\times2}+\rho(k)\rho^\dag(k)$ are positive definite. Moreover, for $j=1,2$, $\delta_j(k)$ is bounded and satisfy the symmetry relations
\be\label{3.5}
\delta_j^{-1}(k)=\delta_j^\dag(k^*),\quad \delta_j^*(-k^*)=\sigma_2\delta_j(k)\sigma_2.
\ee
We also define a function $T(k)$ which will be used to modify the residue conditions, ensuring that they behave well as $t\to\infty$,
\be\label{3.6}
T(k)=
\prod_{\substack{\text{Re}k_n\neq0,\text{Im}k_n>0\\n\in\Delta_{k_0}^-}}
\frac{k-k_n^*}{k-k_n}\frac{k+k_n}{k+k_n^*}
\prod_{\substack{\text{Re}k_n=0,\text{Im}k_n>0\\n\in\Delta_{k_0}^-}}\frac{k-k_n^*}{k-k_n}.
\ee
The first transformation is as follows:
\be\label{3.7}
\mu^{(1)}(\zeta,t;k)=\breve{\mu}(\zeta,t;k)\tilde{\Delta}(k)[T(k)]^{-\Sigma_3},
\ee
where
\be\label{3.8}
\tilde{\Delta}(k)=\begin{pmatrix}
\delta_1^{-1}(k) & \textbf{0}_{2\times2}\\
\textbf{0}_{2\times2} & \delta_2(k)
\end{pmatrix}.
\ee
Then we get the following $4\times4$ matrix RH problem for $\mu^{(1)}$:
\begin{rhp}\label{rhp3.1} Find a $4\times4$ matrix-valued function $\mu^{(1)}(\zeta,t;k)$ which satisfies:
\begin{itemize}
\item Analyticity: $\mu^{(1)}(\zeta,t;k)$ is analytic in $\bfC\setminus(\bfR\cup Z\cup Z^*)$ and has simple poles;\\
\item Jump condition: For $k\in\bfR$,
\be
\mu^{(1)}_+(\zeta,t;k)=\mu^{(1)}_-(\zeta,t;k)J^{(1)}(\zeta,t;k),
\ee
where
\be\label{3.10}
J^{(1)}=\left\{
\begin{aligned}
&\begin{pmatrix}
\mathbb{I}_{2\times2} & \delta_{1}\rho^\dag\delta_{2}T^2\e^{-2\ii t\theta}\\
\textbf{0}_{2\times2} & \mathbb{I}_{2\times2}
\end{pmatrix}\begin{pmatrix}
\mathbb{I}_{2\times2} & \textbf{0}_{2\times2} \\
\delta_{2}^{-1}\rho\delta_{1}^{-1}T^{-2}\e^{2\ii t\theta} & \mathbb{I}_{2\times2}
\end{pmatrix},\,|k|<k_0,\\
&\begin{pmatrix}
\mathbb{I}_{2\times2} & \textbf{0}_{2\times2}\\
\delta_{2-}^{-1}\rho\left(\mathbb{I}_{2\times2}+\rho^\dag \rho\right)^{-1}\delta_{1-}^{-1}T^{-2}\e^{2\ii t\theta} & \mathbb{I}_{2\times2}
\end{pmatrix}\\
&\times\begin{pmatrix}
\mathbb{I}_{2\times2} & \delta_{1+}\left(\mathbb{I}_{2\times2}+\rho^\dag \rho\right)^{-1}\rho^\dag\delta_{2+}T^2\e^{-2\ii t\theta} \\
\textbf{0}_{2\times2} & \mathbb{I}_{2\times2}
\end{pmatrix},\, |k|>k_0.
\end{aligned}
\right.
\ee
\item Normalization: $\mu^{(1)}(\zeta,t;k)\to\mathbb{I}_{4\times4}$, as $k\to\infty$.
\item Residue conditions: $\mu^{(1)}(\zeta,t;k)$ has simple poles at each point in $Z\cup Z^*$ with:

    For $n\in\Delta_{k_0}^+$,
    \begin{align}
    \underset{k=k_n}{\rm Res\ }\mu^{(1)}(\zeta,t;k)&=\lim_{k\to k_n}\mu^{(1)}(\zeta,t;k)\begin{pmatrix}
    \mathbf{0}_{2\times2} &  \mathbf{0}_{2\times2}\\
    \delta_{2}^{-1}(k_n)C_n\delta_{1}^{-1}(k_n)T^{-2}(k_n)\e^{2\ii t\theta}& \mathbf{0}_{2\times2}
    \end{pmatrix},\label{3.11}\\
    \underset{k=-k_n^*}{\rm Res\ }\mu^{(1)}(\zeta,t;k)&=\lim_{k\to -k^*_n}\mu^{(1)}(\zeta,t;k)\begin{pmatrix}
    \mathbf{0}_{2\times2} &  \mathbf{0}_{2\times2}\\
    -\sigma_2[\delta_{2}^{-1}(k_n)C_n\delta_{1}^{-1}(k_n)]^*\sigma_2T^{-2}(-k^*_n)\e^{2\ii t\theta}& \mathbf{0}_{2\times2}
    \end{pmatrix},\label{3.12}\\
    \underset{k=k_n^*}{\rm Res\ }\mu^{(1)}(\zeta,t;k)&=\lim_{k\to k_n^*}\mu^{(1)}(\zeta,t;k)\begin{pmatrix}
    \mathbf{0}_{2\times2} & -[\delta_{2}^{-1}(k_n)C_n\delta_{1}^{-1}(k_n)]^\dag T^{2}(k_n^*)\e^{-2\ii t\theta} \\
    \mathbf{0}_{2\times2} & \mathbf{0}_{2\times2}
    \end{pmatrix},\label{3.13}\\
    \underset{k=-k_n}{\rm Res\ }\mu^{(1)}(\zeta,t;k)&=\lim_{k\to -k_n}\mu^{(1)}(\zeta,t;k)\begin{pmatrix}
    \mathbf{0}_{2\times2} & \sigma_2[\delta_{2}^{-1}(k_n)C_n\delta_{1}^{-1}(k_n)]^{\texttt{T}}\sigma_2T^{2}(-k_n)\e^{-2\ii t\theta} \\
    \mathbf{0}_{2\times2} & \mathbf{0}_{2\times2}
    \end{pmatrix}.\label{3.14}
    \end{align}

    For $n\in\Delta_{k_0}^-$,
    \begin{align}
    \underset{k=k_n}{\rm Res\ }\mu^{(1)}(\zeta,t;k)&=\lim_{k\to k_n}\mu^{(1)}(\zeta,t;k)\begin{pmatrix}
    \mathbf{0}_{2\times2} &  \delta_1(k_n)C_n^{-1}\delta_2(k_n)
(1/T)'(k_n)^{-2}\e^{-2\ii t\theta}\\
    \mathbf{0}_{2\times2}& \mathbf{0}_{2\times2}
    \end{pmatrix},\label{3.15}\\
    \underset{k=-k_n^*}{\rm Res\ }\mu^{(1)}(\zeta,t;k)&=\lim_{k\to -k^*_n}\mu^{(1)}(\zeta,t;k)\begin{pmatrix}
    \mathbf{0}_{2\times2} & -\sigma_2[\delta_1(k_n)C_n^{-1}\delta_2(k_n)]^*\sigma_2(1/T)'(-k_n^*)^{-2}\e^{-2\ii t\theta}\\
     \mathbf{0}_{2\times2}& \mathbf{0}_{2\times2}
    \end{pmatrix},\label{3.16}\\
    \underset{k=k_n^*}{\rm Res\ }\mu^{(1)}(\zeta,t;k)&=\lim_{k\to k_n^*}\mu^{(1)}(\zeta,t;k)\begin{pmatrix}
    \mathbf{0}_{2\times2} & \mathbf{0}_{2\times2} \\
    -[\delta_1(k_n)C_n^{-1}\delta_2(k_n)]^\dag
T'(k_n^*)^{-2}\e^{2\ii t\theta} & \mathbf{0}_{2\times2}
    \end{pmatrix},\label{3.17}\\
    \underset{k=-k_n}{\rm Res\ }\mu^{(1)}(\zeta,t;k)&=\lim_{k\to -k_n}\mu^{(1)}(\zeta,t;k)\begin{pmatrix}
    \mathbf{0}_{2\times2} & \mathbf{0}_{2\times2}\\
    \sigma_2[\delta_1(k_n)C_n^{-1}\delta_2(k_n)]^{\texttt{T}}\sigma_2T'(-k_n)^{-2}\e^{2\ii t\theta}  & \mathbf{0}_{2\times2}
    \end{pmatrix}.\label{3.18}
    \end{align}
\end{itemize}
\end{rhp}
\begin{proof}
The analyticity and asymptotic behavior of $\mu^{(1)}(\zeta,t;k)$ is directly form its definition \eqref{3.7} and the properties of $\breve{\mu}(\zeta,t;k)$. The jump matrix \eqref{3.10} follows from $J^{(1)}=T^{\Sigma_3}\tilde{\Delta}^{-1}_-\breve{J}\tilde{\Delta}_+T^{-\Sigma_3}$ and the factorizations in \eqref{3.1}. As for residues, since $T(k)$ is analytic at each $\pm k_n,\pm k_n^*$ for $n\in\Delta_{k_0}^+$, the residue conditions \eqref{3.11}-\eqref{3.14} at these poles are a result of \eqref{3.7} and the symmetries of $\delta_j(k)$ stated in \eqref{3.5}. For $n\in\Delta_{k_0}^-$, $T(k)$ has simple zeros at $k_n^*$, $-k_n$ and poles at $k_n$, $-k_n^*$ as $\text{Re}k_n\neq0$, also has a simple zero at $k_n^*$ and a pole at $k_n$ while $\text{Re}k_n=0$. Thus, by \eqref{3.7}, that is,
\berr
\mu^{(1)}_L(\zeta,t;k)=\breve{\mu}_L(\zeta,t;k)\delta_1^{-1}(k)T^{-1}(k),\
\mu^{(1)}_R(\zeta,t;k)=\breve{\mu}_R(\zeta,t;k)\delta_2(k)T(k),
\eerr
we learn that for $\text{Re}k_n\neq0$, $k_n$, $-k_n^*$ are no longer the poles of $\mu^{(1)}_L(\zeta,t;k)$ with $k_n^*$, $-k_n$ becoming the poles of it, while as $\text{Re}k_n\neq0$, $\mu^{(1)}_L(\zeta,t;k)$ has a removable singularity at $k_n$, but acquires a pole at $k_n^*$. And $\mu^{(1)}_R(\zeta,t;k)$ has opposite situation. At $k_n^*$, we then find by \eqref{2.40} that
\be
\begin{aligned}
\mu^{(1)}_R(\zeta,t;k_n^*)=&\lim_{k\to k_n^*}[\breve{\mu}_R(\zeta,t;k)\delta_2(k)T(k)]=
[\underset{k=k_n^*}{\rm Res\ }\breve{\mu}_R(\zeta,t;k)]\delta_2(k_n^*)T'(k_n^*)\\
=&-\breve{\mu}_L(\zeta,t;k_n^*)]C_n^\dag\delta_2(k_n^*)T'(k_n^*)\e^{-2\ii t\theta(k_n^*)}.
\end{aligned}
\ee
Therefore, we have
\be
\begin{aligned}
\underset{k=k_n^*}{\rm Res\ }\mu^{(1)}_L(\zeta,t;k)=&\breve{\mu}_L(\zeta,t;k_n^*)\delta_1^{-1}(k_n^*)
\underset{k=k_n^*}{\rm Res\ }T^{-1}(k)=\breve{\mu}_L(\zeta,t;k_n^*)\delta_1^{-1}(k_n^*)T'(k_n^*)^{-1}\\
=&-\mu^{(1)}_R(\zeta,t;k_n^*)\delta_2^{-1}(k_n^*)(C_n^\dag)^{-1}\delta_1^{-1}(k_n^*)
T'(k_n^*)^{-2}\e^{2\ii t\theta(k_n^*)}\\
=&-\mu^{(1)}_R(\zeta,t;k_n^*)[\delta_1(k_n)C_n^{-1}\delta_2(k_n)]^\dag
T'(k_n^*)^{-2}\e^{2\ii t\theta(k_n^*)},
\end{aligned}
\ee
from which \eqref{3.17} follows. At $k_n$, one has from \eqref{2.38}
\be
\begin{aligned}
\mu^{(1)}_L(\zeta,t;k_n)=&\lim_{k\to k_n}[\breve{\mu}_L(\zeta,t;k)\delta_2(k)T(k)]=
[\underset{k=k_n}{\rm Res\ }\breve{\mu}_L(\zeta,t;k)]\delta_1^{-1}(k_n)(1/T)'(k_n)\\
=&\breve{\mu}_R(\zeta,t;k_n)]C_n\delta_1^{-1}(k_n)(1/T)'(k_n)\e^{2\ii t\theta(k_n)}.
\end{aligned}
\ee
It follows that
\be
\begin{aligned}
\underset{k=k_n}{\rm Res\ }\mu^{(1)}_R(\zeta,t;k)=&\breve{\mu}_R(\zeta,t;k_n)\delta_2(k_n)
\underset{k=k_n}{\rm Res\ }T(k)=\breve{\mu}_R(\zeta,t;k_n)\delta_2(k_n)(1/T)'(k_n)^{-1}\\
=&\mu^{(1)}_L(\zeta,t;k_n)\delta_1(k_n)C_n^{-1}\delta_2(k_n)
(1/T)'(k_n)^{-2}\e^{-2\ii t\theta(k_n)},
\end{aligned}
\ee
from which \eqref{3.15} holds. The calculation of others is similar.
\end{proof}
\subsection{Contour deformation}
The next step is to make continuous extension for the jump matrix to remove the jump from the real axis. Here we just require the extension to be continuous but not necessarily analytic. The price we pay for this non-analytic transformation is that the new unknown matrix function has nonzero $\bar{\partial}$-derivatives inside the regions, and hence a  mixed $\bar{\partial}$-RH problem will be introduced.

To start with, let us define the contour as follows:
\begin{align}
L\doteq&\left\{k\in\bfC|k=k_0+k_0\alpha\e^{\frac{\pi\ii}{4}},
-\frac{\sqrt{2}}{2}\leq\alpha<\infty\right\}
\cup\left\{k\in\bfC|k=k_0\alpha\e^{-\frac{\pi\ii}{4}},
-\frac{\sqrt{2}}{2}<\alpha<\frac{\sqrt{2}}{2}\right\}\nn\\
&\cup\left\{k\in\bfC|k=-k_0+k_0\alpha\e^{\frac{\pi\ii}{4}},
-\infty<\alpha\leq\frac{\sqrt{2}}{2}\right\}.
\end{align}
Then, the complex plane $\bfC$ is split by $L$ and $L^*$ into ten regions $\{\Omega_l\}_{l=1}^{10}$, and for convenience, we write $L\cup L^*=\cup_{l=1}^{12}\Gamma_l\doteq\Gamma$, see Figure \ref{fig2}.
\begin{figure}[htbp]
 \centering
 \includegraphics[width=3in]{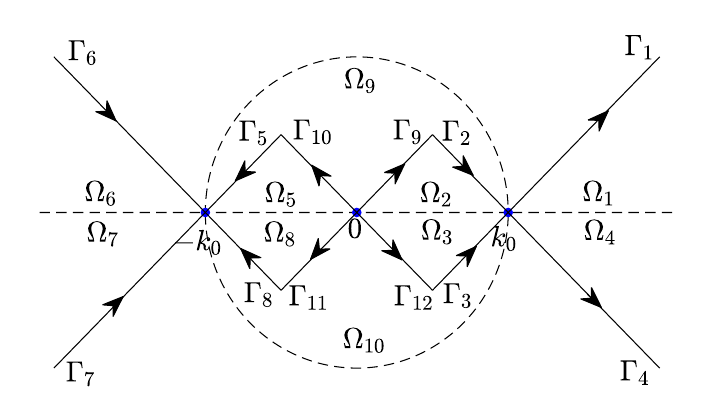}
  \caption{The open sets $\{\Omega_l\}_{l=1}^{10}$ and the contours $\{\Gamma_l\}_{l=1}^{12}$ in the complex $k$-plane.}\label{fig2}
\end{figure}

In addition, we define
\be
\varrho=\min\left\{\frac{1}{2}\min_{k\neq k'\in Z\cup Z^*}|k-k'|,~\text{dist}(Z,\bfR)\right\},
\ee
and the following smooth cut-off function
\be
\chi(k)=\left\{
\begin{aligned}
&1,\quad\text{dist}(k,Z\cup Z^*)<\varrho/4,\\
&0,\quad\text{dist}(k,Z\cup Z^*)>3\varrho/4.
\end{aligned}
\right.
\ee
Now, we introduce the $\bar{\partial}$ extensions, which aims to separate the phases and performs the contour deformation.
\begin{lemma}\label{lem3.1}
It is possible to define functions $R_j:\bar{\Omega}_j\mapsto\bfC$, $j=1,2,\cdots,8$ with boundary values satisfying
\begin{align}
R_1(k)&=\left\{
\begin{aligned}
& -\delta_{1}(k)\left(\mathbb{I}_{2\times2}+\rho^\dag(k) \rho(k)\right)^{-1}\rho^\dag(k)\delta_{2}(k)T^2(k),\qquad\qquad\qquad\quad\ k>k_0,\\
&-\delta_{1}(k)\left(\mathbb{I}_{2\times2}+\rho^\dag(k_0) \rho(k_0)\right)^{-1}\rho^\dag(k_0)\delta_{2}(k)T^2(k_0)(1-\chi(k)),\quad k\in\Gamma_1,
\end{aligned}
\right.\label{3.26}\\
R_2(k)&=\left\{
\begin{aligned}
&-\delta_{2}^{-1}(k)\rho(k)\delta_{1}^{-1}(k)T^{-2}(k),\qquad\qquad\quad\quad\,\ 0<k<k_0,\\
&-\delta_{2}^{-1}(k)\rho(k_0)\delta_{1}^{-1}(k)T^{-2}(k_0)(1-\chi(k)),\quad k\in\Gamma_2,
\end{aligned}
\right.\label{3.27}\\
R_3(k)&=\left\{
\begin{aligned}
&\delta_{1}(k)\rho^\dag(k)\delta_{2}(k)T^2(k),\qquad\qquad\qquad\ 0<k<k_0,\\
&\delta_{1}(k)\rho^\dag(k_0)\delta_{2}(k)T^2(k_0)(1-\chi(k)),\quad k\in\Gamma_3,
\end{aligned}
\right.\\
R_4(k)&=\left\{
\begin{aligned}
&\delta_{2}^{-1}(k)\rho(k)\left(\mathbb{I}_{2\times2}+\rho^\dag(k) \rho(k)\right)^{-1}\delta_{1}^{-1}(k)T^{-2}(k),\qquad\qquad\qquad\quad\ k>k_0,\\
&\delta_{2}^{-1}(k)\rho(k_0)\left(\mathbb{I}_{2\times2}+\rho^\dag(k_0) \rho(k_0)\right)^{-1}\delta_{1}^{-1}(k)T^{-2}(k_0)(1-\chi(k)), \quad k\in\Gamma_4,
\end{aligned}
\right.\\
R_5(k)&=\left\{
\begin{aligned}
&-\delta_{2}^{-1}(k)\rho(k)\delta_{1}^{-1}(k)T^{-2}(k),\qquad\qquad\qquad\quad\quad -k_0<k<0,\\
&-\delta_{2}^{-1}(k)\rho(-k_0)\delta_{1}^{-1}(k)T^{-2}(-k_0)(1-\chi(k)),\quad k\in\Gamma_5,
\end{aligned}
\right.\\
R_6(k)&=\left\{
\begin{aligned}
&-\delta_{1}(k)\left(\mathbb{I}_{2\times2}+\rho^\dag(k) \rho(k)\right)^{-1}\rho^\dag(k)\delta_{2}(k)T^2(k),\qquad\qquad\qquad\qquad\qquad k<-k_0,\\
&-\delta_{1}(k)\left(\mathbb{I}_{2\times2}+\rho^\dag(-k_0) \rho(-k_0)\right)^{-1}\rho^\dag(-k_0)\delta_{2}(k)T^2(-k_0)(1-\chi(k)),\,\ k\in\Gamma_6,
\end{aligned}
\right.\\
R_7(k)&=\left\{
\begin{aligned}
&\delta_{2}^{-1}(k)\rho(k)\left(\mathbb{I}_{2\times2}+\rho^\dag(k) \rho(k)\right)^{-1}\delta_{1}^{-1}(k)T^{-2}(k),\qquad\qquad\qquad\qquad\quad\,\ k<-k_0,\\
&\delta_{2}^{-1}(k)\rho(-k_0)\left(\mathbb{I}_{2\times2}+\rho^\dag(-k_0) \rho(-k_0)\right)^{-1}\delta_{1}^{-1}(k)T^{-2}(-k_0)(1-\chi(k)),\,\ k\in\Gamma_7,
\end{aligned}
\right.\\
R_8(k)&=\left\{
\begin{aligned}
&\delta_{1}(k)\rho^\dag(k)\delta_{2}(k)T^2(k),\qquad\qquad\qquad\qquad -k_0<k<0,\\
&\delta_{1}(k)\rho^\dag(-k_0)\delta_{2}(k)T^2(-k_0)(1-\chi(k)),\quad k\in\Gamma_8.
\end{aligned}
\right.
\end{align}
Moreover, $R_j$ admit estimates
\be\label{3.34}
|\bar{\partial}R_j(k)|\leq c_1|\rho'(\text{Re}k)|+c_2|k\mp k_0|^{-1/2}+c_3\bar{\partial}\chi(k),
\ee
for positive constants $c_1,~c_2$ and $c_3$ depended on $\|\rho\|_{H^1(\bfR)}$.
\end{lemma}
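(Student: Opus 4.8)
The plan is to build each $R_j$ by a smooth angular interpolation between its two prescribed boundary values, arranged so that the only $k$-dependence surviving under $\bar\partial$ is the ``middle factor'' built from $\rho$ and the (scalar) $T$, while the matrices $\delta_1(k),\delta_2(k)$ are left untouched: since $\delta_1,\delta_2$ solve RH problems whose jump sits on $\{|k|\ge k_0\}\subset\bfR$, they and their inverses are bounded and analytic on each open sector $\Omega_j$ (which lies strictly in the upper or lower half plane), so $\bar\partial\delta_j=0$ there. I will do the eight cases in two groups of four: the sectors attached to $k_0$ give $R_1,\dots,R_4$, and those attached to $-k_0$ give $R_5,\dots,R_8$, the latter obtained from the former by the symmetries $\delta_j^*(-k^*)=\sigma_2\delta_j(k)\sigma_2$ of \eqref{3.5} and $\rho^*(-k)=\sigma_2\rho(k)\sigma_2$ of Proposition \ref{prop2.2}; within each group the two entries of \eqref{3.1} (the ``outer'' factorization for $|k|>k_0$ and the ``inner'' one for $|k|<k_0$) are treated identically.

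Concretely, for $\Omega_1$ I write $k=k_0+r\e^{\ii\phi}$ with $\phi$ ranging over the aperture of $\Omega_1$ between the real axis and the ray $\Gamma_1$, and set
\[
R_1(k)=-\delta_1(k)\Big[h_{k_0}+\big(h(\mathrm{Re}\,k)-h_{k_0}\big)\mathcal{C}(\phi)\Big](1-\chi(k))\delta_2(k),
\]
where $h(k)=\big(\mathbb{I}_{2\times2}+\rho^\dagger(k)\rho(k)\big)^{-1}\rho^\dagger(k)T^2(k)$, $h_{k_0}=h(k_0)$, and $\mathcal{C}$ is a fixed smooth scalar function equal to $1$ near $\phi=0$ and $0$ near the $\Gamma_1$ direction (e.g. $\mathcal{C}(\phi)=\cos(2\phi)$ after an affine rescaling of the aperture). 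Since the discrete spectrum lies at distance $\varrho$ from $\bfR$, $\chi\equiv0$ on $\bfR$, so this reproduces $-\delta_1(k)h(k)\delta_2(k)$ on $(k_0,\infty)$ and the frozen value $-\delta_1(k)h_{k_0}(1-\chi(k))\delta_2(k)$ on $\Gamma_1$, i.e. \eqref{3.26} (using that $T^2$ is scalar and commutes with the $\delta_j$). The remaining $R_j$ are written the same way directly from their boundary data, with $h$ replaced by the appropriate middle factor ($\rho T^{-2}$, $\rho^\dagger T^2$, or $\rho(\mathbb{I}_{2\times2}+\rho^\dagger\rho)^{-1}T^{-2}$) and $k_0$ by $\pm k_0$.

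For the estimate \eqref{3.34}, boundedness of $\delta_1,\delta_2$ and $\bar\partial\delta_j=0$ give $|\bar\partial R_j(k)|\le C|\bar\partial(\text{middle factor})|$. Applying $\bar\partial=\tfrac12\e^{\ii\phi}\big(\partial_r+\tfrac{\ii}{r}\partial_\phi\big)$ to the bracket yields exactly three kinds of terms: $\partial_r$ acting on $h(\mathrm{Re}\,k)$ through $\mathrm{Re}\,k=k_0+r\cos\phi$, bounded by $|h'(\mathrm{Re}\,k)|\lesssim|\rho'(\mathrm{Re}\,k)|$ (since $T^2$ is smooth and bounded on $\bfR$ and $\rho\mapsto(\mathbb{I}_{2\times2}+\rho^\dagger\rho)^{-1}\rho^\dagger$ is smooth); $\partial_r$ acting on $\chi$, which is the $\bar\partial\chi(k)$ term; and $\tfrac1r\partial_\phi$ acting on $\mathcal{C}(\phi)$, bounded by $\tfrac1r|h(\mathrm{Re}\,k)-h_{k_0}|$. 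For the last term I invoke the Sobolev embedding $H^1(\bfR)\hookrightarrow C^{1/2}(\bfR)$, which gives $\|h\|_{C^{1/2}(\bfR)}\le C(\|\rho\|_{H^1(\bfR)},T)$ and hence $|h(\mathrm{Re}\,k)-h_{k_0}|\le C|\mathrm{Re}\,k-k_0|^{1/2}\le C|k-k_0|^{1/2}=Cr^{1/2}$, so $\tfrac1r|h(\mathrm{Re}\,k)-h_{k_0}|\le C|k-k_0|^{-1/2}$. Summing the three bounds gives \eqref{3.34} with constants depending only on $\|\rho\|_{H^1(\bfR)}$ (and on the fixed data $\delta_j,T$).

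The step I expect to be most delicate is the one flagged in item (III) of the Introduction: one must \emph{not} freeze $\delta_{1-},\delta_{2-}$ at $\pm k_0$, because these boundary values are only continuous on $\bfR$ (not $H^1$), so an analytic splitting of the full middle function is unavailable; keeping $\delta_1(k),\delta_2(k)$ as genuine analytic factors inside $\Omega_j$ is precisely what reduces $\bar\partial R_j$ to a quantity controlled by the $C^{1/2}$ function $h$, and hence what makes the $|k\mp k_0|^{-1/2}$ bound available. Two further routine verifications remain: that the cut-offs $\mathcal{C}$ can be chosen consistently at the corners where consecutive rays $\Gamma_l$ meet, so the $R_j$ patch together into continuous functions with the stated boundary values; and that nothing blows up near $k=0$ (through which the inner segment of $L$ passes), which is immediate from $\rho(k)=O(k^2)$ there (Proposition \ref{pro2.5}) together with the smoothness and invertibility of $T,\delta_1,\delta_2$ near the origin.
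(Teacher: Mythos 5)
Your construction is essentially the paper's own proof: the paper also extends each $R_j$ by an explicit $\cos(2\phi)$-type angular interpolation in polar coordinates centered at $\pm k_0$, keeps the factors $\delta_1(k),\delta_2(k)$ (and the cut-off $1-\chi$) exactly as you do so that they contribute only boundedness and the $\bar{\partial}\chi$ term, and derives \eqref{3.34} from the same three contributions — $\rho'(\mathrm{Re}\,k)$, the $H^1(\bfR)\hookrightarrow C^{1/2}$ difference-quotient bound giving $|k\mp k_0|^{-1/2}$, and $\bar{\partial}\chi$ — the only real difference being that the paper freezes just $\rho$ at $k_0$ while keeping $T^2(k)$ analytic (via $f(k)=\rho(k_0)T^{2}(k)T^{-2}(k_0)$, using $T^2(k)T^{-2}(k_0)=1+O(k-k_0)$), whereas you freeze the whole middle factor $h$, including $T^2$, and evaluate it at $\mathrm{Re}\,k$ on the axis. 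One small inaccuracy to repair in your version: $|h'(\mathrm{Re}\,k)|\lesssim|\rho'(\mathrm{Re}\,k)|$ is not true pointwise, since differentiating $T^2(\mathrm{Re}\,k)$ produces an extra term of the form $(\mathbb{I}_{2\times2}+\rho^\dag\rho)^{-1}\rho^\dag\,(T^2)'(\mathrm{Re}\,k)$, which on the unbounded sector must be absorbed into $c_2|k\mp k_0|^{-1/2}$ by using that $T$ is a finite Blaschke-type product, so $(T^2)'(x)=O(x^{-2})$ as $x\to\infty$ while $\mathrm{Re}\,k\gtrsim |k-k_0|$ in $\Omega_1$; the paper's choice of keeping $T^2(k)$ analytic avoids this extra term altogether.
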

\begin{proof}
Without loss of generality, we only provide the detailed proof for $R_2$, as other cases can be proved in a similar way. Define the function
\be
f(k)=\rho(k_0)T^{2}(k)T^{-2}(k_0).
\ee
Then, we can define the extension for $k\in\bar{\Omega}_2$ as follows:
\begin{align}
R_2(k)=-\delta_{2}^{-1}(k)\left[f(k)+\left(\rho(\text{Re}k)-f(k)\right)\cos(2\phi)\right]
\delta_{1}^{-1}(k)T^{-2}(k)(1-\chi(k)),\ k\in\bar{\Omega}_2.
\end{align}
where $\phi=\arg(k-k_0)$. Clearly, $R_2(k)$ satisfies the boundary values \eqref{3.27} as $\cos(2\phi)$ vanishes on $\Gamma_2$ and $\chi(k)$ is zero on the real axis. Let $k-k_0=s\e^{\ii\phi}$. It follows from
\begin{align*}
&\bar{\partial}=\frac{1}{2}\left(\frac{\partial}{\partial k_1}+\ii\frac{\partial}{\partial k_2}\right)
=\frac{1}{2}\e^{\ii\phi}\left(\frac{\partial}{\partial s}+\frac{\ii}{s}\frac{\partial}{\partial\phi}\right),\ \prod_{\substack{\text{Re}k_n=0,\text{Im}k_n>0\\n\in\Delta_{k_0}^-}}
\frac{k-k_n^*}{k-k_n}\frac{k_0-k_n^*}{k_0-k_n}=1+O(k-k_0),
\end{align*}
and
\berr
\left|\rho(\text{Re}k)-\rho(k_0)\right|\leq\left|\int_{k_0}^{\text{Re}k}\rho'(s)\dd s\right|\leq\|\rho\|_{H^1(\bfR)}|k-k_0|^{1/2}
\eerr
that
\begin{align}
|\bar{\partial}R_2(k)|=&\left|\delta_{2}^{-1}(k)\left[\frac{1}{2}\rho'(\text{Re}k)\cos(2\phi)
-\ii\e^{\ii\phi}\frac{\rho(\text{Re}k)-f(k)}{|k-k_0|}
\sin(2\phi)\right]\delta^{-1}_1(k)T^{-2}(k)(1-\chi(k))\right.\nn\\
&\left.+\delta_{2}^{-1}(k)\left[f(k)+\left(\rho(\text{Re}k)-f(k)\right)\cos(2\phi)\right]
\delta_{1}^{-1}(k)T^{-2}(k)\bar{\partial}\chi(k)\right|\\
\leq& c_1|\rho'(\text{Re}k)|+c_2|k-k_0|^{-1/2}+c_3\bar{\partial}\chi(k).\nn
\end{align}
\end{proof}
Now, we construct a matrix function $\mathcal{R}^{(2)}(k)$ by
\be\label{3.38}
\mathcal{R}^{(2)}(k)=\left\{
\begin{aligned}
&\begin{pmatrix}
\mathbb{I}_{2\times2} & R_j(k)\e^{-2\ii t\theta(k)}\\
\mathbf{0}_{2\times2} & \mathbb{I}_{2\times2}
\end{pmatrix},\  k\in\Omega_j,\,j=1,3,6,8,\\
&\begin{pmatrix}
\mathbb{I}_{2\times2} & \mathbf{0}_{2\times2}\\
R_j(k)\e^{2\ii t\theta(k)} & \mathbb{I}_{2\times2}
\end{pmatrix},\,\,\,\  k\in\Omega_j,\,j=2,4,5,7,\\
&\mathbb{I}_{4\times4},\qquad\qquad\qquad\quad\quad\,\,\ k\in\Omega_9\cup\Omega_{10}.
\end{aligned}
\right.
\ee
We then can define a new unknown matrix-valued function $\mu^{(2)}$ by
\be
\mu^{(2)}(\zeta,t;k)=\mu^{(1)}(\zeta,t;k)\mathcal{R}^{(2)}(k),
\ee
and it follows that $\mu^{(2)}$ satisfies a mixed $\bar{\partial}$-Riemann--Hilbert problem.

\begin{dbarrhp}\label{dbarrh3.1}
Find a $4\times4$ matrix-valued function $\mu^{(2)}(\zeta,t;k)$ on  $\bfC\setminus(\Gamma\cup Z\cup Z^*)$ with the following properties:
\begin{itemize}
\item Analyticity: $\mu^{(2)}(\zeta,t;k)$ is continuous with sectionally continuous first partial derivatives in $\bfC\setminus(\Gamma\cup Z\cup Z^*)$.

\item Jump condition: For $k\in\Gamma$, the continuous boundary values $\mu^{(2)}_\pm$ satisfy the jump relation
\be
\mu^{(2)}_+(\zeta,t;k)=\mu^{(2)}_-(\zeta,t;k)J^{(2)}(\zeta,t;k),
\ee
where the jump matrix $J^{(2)}(\zeta,t;k)$ is given by
\bea\label{3.41}
J^{(2)}=\left\{
\begin{aligned}
&\begin{pmatrix}
\mathbb{I}_{2\times2} & -R_1(k)\e^{-2\ii t\theta(k)}\\
\mathbf{0}_{2\times2} & \mathbb{I}_{2\times2}
\end{pmatrix},\quad k\in\Gamma_1,\\
&\begin{pmatrix}
\mathbb{I}_{2\times2} & \mathbf{0}_{2\times2}\\
-R_2(k)\e^{2\ii t\theta(k)} & \mathbb{I}_{2\times2}
\end{pmatrix},\quad\, \, \  k\in\Gamma_2\cup\Gamma_9,\\
&\begin{pmatrix}
\mathbb{I}_{2\times2} & R_3(k)\e^{-2\ii t\theta(k)}\\
\mathbf{0}_{2\times2} & \mathbb{I}_{2\times2}
\end{pmatrix},\qquad k\in\Gamma_3\cup\Gamma_{12},\\
&\begin{pmatrix}
\mathbb{I}_{2\times2} & \mathbf{0}_{2\times2}\\
R_4(k)\e^{2\ii t\theta(k)} & \mathbb{I}_{2\times2}
\end{pmatrix}, \,\ \qquad\ k\in\Gamma_4,\\
&\begin{pmatrix}
\mathbb{I}_{2\times2} &\mathbf{0}_{2\times2}\\
R_5(k)\e^{2\ii t\theta(k)} & \mathbb{I}_{2\times2}
\end{pmatrix},\qquad \,\,\,\ k\in\Gamma_5\cup\Gamma_{10},\\
&\begin{pmatrix}
\mathbb{I}_{2\times2} & -R_6(k)\e^{-2\ii t\theta(k)}\\
\mathbf{0}_{2\times2} & \mathbb{I}_{2\times2}
\end{pmatrix},\quad\,\ k\in\Gamma_6,\\
&\begin{pmatrix}
\mathbb{I}_{2\times2} &\mathbf{0}_{2\times2}\\
R_7(k)\e^{2\ii t\theta(k)} & \mathbb{I}_{2\times2}
\end{pmatrix},\quad\qquad k\in\Gamma_7,\\
&\begin{pmatrix}
\mathbb{I}_{2\times2} & -R_8(k)\e^{-2\ii t\theta(k)}\\
\mathbf{0}_{2\times2} & \mathbb{I}_{2\times2}
\end{pmatrix},\quad\,\,\,\,  k\in\Gamma_8\cup\Gamma_{11}.
\end{aligned}
\right.
\eea
\item Normalization: $\mu^{(2)}(\zeta,t;k)\to\mathbb{I}_{4\times4}$, as $k\to\infty$.
\item Residue conditions: $\mu^{(2)}(\zeta,t;k)$ has simple poles at each point in $Z\cup Z^*$ with:

    For $n\in\Delta_{k_0}^+$,
    \begin{align}
    \underset{k=k_n}{\rm Res\ }\mu^{(2)}(\zeta,t;k)&=\lim_{k\to k_n}\mu^{(2)}(\zeta,t;k)\begin{pmatrix}
    \mathbf{0}_{2\times2} &  \mathbf{0}_{2\times2}\\
    \delta_{2}^{-1}(k_n)C_n\delta_{1}^{-1}(k_n)T^{-2}(k_n)\e^{2\ii t\theta}& \mathbf{0}_{2\times2}
    \end{pmatrix},\label{3.42}\\
    \underset{k=-k_n^*}{\rm Res\ }\mu^{(2)}(\zeta,t;k)&=\lim_{k\to -k^*_n}\mu^{(2)}(\zeta,t;k)\begin{pmatrix}
    \mathbf{0}_{2\times2} &  \mathbf{0}_{2\times2}\\
    -\sigma_2[\delta_{2}^{-1}(k_n)C_n\delta_{1}^{-1}(k_n)]^*\sigma_2T^{-2}(-k^*_n)\e^{2\ii t\theta}& \mathbf{0}_{2\times2}
    \end{pmatrix},\label{3.43}\\
    \underset{k=k_n^*}{\rm Res\ }\mu^{(2)}(\zeta,t;k)&=\lim_{k\to k_n^*}\mu^{(2)}(\zeta,t;k)\begin{pmatrix}
    \mathbf{0}_{2\times2} & -[\delta_{2}^{-1}(k_n)C_n\delta_{1}^{-1}(k_n)]^\dag T^{2}(k_n^*)\e^{-2\ii t\theta} \\
    \mathbf{0}_{2\times2} & \mathbf{0}_{2\times2}
    \end{pmatrix},\label{3.44}\\
    \underset{k=-k_n}{\rm Res\ }\mu^{(2)}(\zeta,t;k)&=\lim_{k\to -k_n}\mu^{(2)}(\zeta,t;k)\begin{pmatrix}
    \mathbf{0}_{2\times2} & \sigma_2[\delta_{2}^{-1}(k_n)C_n\delta_{1}^{-1}(k_n)]^{\texttt{T}}\sigma_2T^{2}(-k_n)\e^{-2\ii t\theta} \\
    \mathbf{0}_{2\times2} & \mathbf{0}_{2\times2}
    \end{pmatrix}.\label{3.45}
    \end{align}

    For $n\in\Delta_{k_0}^-$,
    \begin{align}
    \underset{k=k_n}{\rm Res\ }\mu^{(2)}(\zeta,t;k)&=\lim_{k\to k_n}\mu^{(2)}(\zeta,t;k)\begin{pmatrix}
    \mathbf{0}_{2\times2} &  \delta_1(k_n)C_n^{-1}\delta_2(k_n)
(1/T)'(k_n)^{-2}\e^{-2\ii t\theta}\\
    \mathbf{0}_{2\times2}& \mathbf{0}_{2\times2}
    \end{pmatrix},\label{3.46}\\
    \underset{k=-k_n^*}{\rm Res\ }\mu^{(2)}(\zeta,t;k)&=\lim_{k\to -k^*_n}\mu^{(2)}(\zeta,t;k)\begin{pmatrix}
    \mathbf{0}_{2\times2} & -\sigma_2[\delta_1(k_n)C_n^{-1}\delta_2(k_n)]^*\sigma_2(1/T)'(-k_n^*)^{-2}\e^{-2\ii t\theta}\\
     \mathbf{0}_{2\times2}& \mathbf{0}_{2\times2}
    \end{pmatrix},\label{3.47}\\
    \underset{k=k_n^*}{\rm Res\ }\mu^{(2)}(\zeta,t;k)&=\lim_{k\to k_n^*}\mu^{(2)}(\zeta,t;k)\begin{pmatrix}
    \mathbf{0}_{2\times2} & \mathbf{0}_{2\times2} \\
    -[\delta_1(k_n)C_n^{-1}\delta_2(k_n)]^\dag
T'(k_n^*)^{-2}\e^{2\ii t\theta} & \mathbf{0}_{2\times2}
    \end{pmatrix},\label{3.48}\\
    \underset{k=-k_n}{\rm Res\ }\mu^{(2)}(\zeta,t;k)&=\lim_{k\to -k_n}\mu^{(2)}(\zeta,t;k)\begin{pmatrix}
    \mathbf{0}_{2\times2} & \mathbf{0}_{2\times2}\\
    \sigma_2[\delta_1(k_n)C_n^{-1}\delta_2(k_n)]^{\texttt{T}}\sigma_2T'(-k_n)^{-2}\e^{2\ii t\theta}  & \mathbf{0}_{2\times2}
    \end{pmatrix}.\label{3.49}
    \end{align}

\item $\bar{\partial}$-Derivative: For $k\in\bfC\setminus(\Gamma\cup Z\cup Z^*)$, we have
\be\label{3.50}
\bar{\partial}\mu^{(2)}(\zeta,t;k)=\mu^{(2)}(\zeta,t;k)\bar{\partial}\mathcal{R}^{(2)}(k),
\ee
where
\be\label{3.51}
\bar{\partial}\mathcal{R}^{(2)}(k)=\left\{
\begin{aligned}
&\begin{pmatrix}
\mathbb{I}_{2\times2} & \bar{\partial}R_j(k)\e^{-2\ii t\theta(k)}\\
\mathbf{0}_{2\times2} & \mathbb{I}_{2\times2}
\end{pmatrix},\  k\in\Omega_j,\,j=1,3,6,8,\\
&\begin{pmatrix}
\mathbb{I}_{2\times2} & \mathbf{0}_{2\times2}\\
\bar{\partial}R_j(k)\e^{2\ii t\theta(k)} & \mathbb{I}_{2\times2}
\end{pmatrix},\,\,\  k\in\Omega_j,\,j=2,4,5,7,\\
&\mathbf{0}_{4\times4},\qquad\qquad\qquad\qquad\quad k\in\Omega_9\cup\Omega_{10}.
\end{aligned}
\right.
\ee
\end{itemize}
\end{dbarrhp}
\subsection{Decomposition of the mixed $\bar{\partial}$-RH problem}
To solve the $\bar{\partial}$-RH problem \ref{dbarrh3.1}, we decompose it to a localized RH problem with $\bar{\partial}\mathcal{R}^{(2)}(k)=0$ and a pure $\bar{\partial}$-problem with $\bar{\partial}\mathcal{R}^{(2)}(k)\neq0$.
Now, we establish a pure RH problem part $\mu^{(RHP)}(\zeta,t;k)$, which corresponds to the following RH problem.
\begin{rhp}\label{rh3.2}
Find a $4\times4$ matrix-valued function $\mu^{(RHP)}(\zeta,t;k)$ on  $\bfC\setminus(\Gamma\cup Z\cup Z^*)$ with the following properties:
\begin{itemize}
\item Analyticity: $\mu^{(RHP)}(\zeta,t;k)$ is analytic in $\bfC\setminus(\Gamma\cup Z\cup Z^*)$.

\item Jump condition: $\mu^{(RHP)}(\zeta,t;k)$ has the continuous boundary values $\mu^{(RHP)}_\pm$ on $\Gamma$, and
\be
\mu^{(RHP)}_+(\zeta,t;k)=\mu^{(RHP)}_-(\zeta,t;k)J^{(2)}(\zeta,t;k).
\ee

\item Normalization: $\mu^{(RHP)}(\zeta,t;k)\to\mathbb{I}_{4\times4}$, as $k\to\infty$.
\item Residue conditions: $\mu^{(RHP)}(\zeta,t;k)$ has simple poles at each point in $Z\cup Z^*$ with the same residue conditions as those in the $\bar{\partial}$-RH problem \ref{dbarrh3.1}, specified in \eqref{3.42}-\eqref{3.49}.
\end{itemize}
\end{rhp}
The existence and asymptotics of $\mu^{(RHP)}(\zeta,t;k)$ will be discussed in next subsection \ref{sec3.4}. Now, we suppose that the solution $\mu^{(RHP)}(\zeta,t;k)$ exists, and perform the factorization
\be\label{3.55}
\mu^{(2)}(\zeta,t;k)=\mu^{(3)}(\zeta,t;k)\mu^{(RHP)}(\zeta,t;k),
\ee
which results in $\mu^{(3)}(\zeta,t;k)$ corresponding to the solution of a pure $\bar{\partial}$-problem without jumps and poles.
\begin{dbar}\label{dbar1}
Find a $4\times4$ matrix-valued function $\mu^{(3)}(\zeta,t;k)$ with the following properties:
\begin{itemize}
\item Analyticity: $\mu^{(3)}(\zeta,t;k)$ is continuous with sectionally continuous first partial derivatives in $\bfC\setminus\Sigma$.

\item $\bar{\partial}$-Derivative: For $k\in\bfC\setminus\Sigma$, we have
\be
\bar{\partial}\mu^{(3)}(\zeta,t;k)=\mu^{(3)}(\zeta,t;k)w^{(3)}(\zeta,t;k),
\ee
where
\be
w^{(3)}(\zeta,t;k)=\mu^{(RHP)}(\zeta,t;k)\bar{\partial}
\mathcal{R}^{(2)}(k)[\mu^{(RHP)}(\zeta,t;k)]^{-1}
\ee
\item Normalization: As $k\rightarrow\infty$, $\mu^{(3)}(\zeta,t;k)\rightarrow\mathbb{I}_{4\times4}.$
\end{itemize}
\end{dbar}
In the following, we will respectively study the solution $\mu^{(RHP)}$ of RH problem \ref{rh3.2} and solution $\mu^{(3)}$ of the pure $\bar{\partial}$-problem \ref{dbar1}. For RH problem \ref{rh3.2}, we will establish the existence and asymptotic expansion of the solution. For the pure $\bar{\partial}$-problem, efforts are put to show that it has a solution, the solution will decay rapidly as $t\to\infty$ and only contribute to an error term with higher-order decay rate.
\subsection{Analysis on the pure RH problem}\label{sec3.4}
The current subsection focuses on finding $\mu^{(RHP)}$. For a start, let $U_{\pm k_0}$ be the neighborhoods of $\pm k_0$, respectively
\be\label{3.56}
U_{\pm k_0}=\left\{k\in\bfC||k\mp k_0|\leq\text{min}\{k_0/2,\varrho/4\}\doteq\epsilon\right\}.
\ee
Then, we find that the jump matrix for the RH problem \ref{rh3.2} admits the following estimates.
\begin{proposition}\label{prop3.1}
The jump matrix $J^{(2)}(\zeta,t;k)$ satisfies the following estimates:
\begin{align}
&\|J^{(2)}-\mathbb{I}_{4\times4}\|_{L^\infty(\Gamma_{\pm k_0}\setminus U_{\pm k_0})}\leq c\e^{-\frac{\sqrt{2}\epsilon t}{16k_0}},\label{3.57}\\
&\|J^{(2)}-\mathbb{I}_{4\times4}\|_{L^\infty(\Gamma_{0}\setminus U_0)}\leq c\e^{-\frac{\sqrt{2}\epsilon t}{4k_0}},\label{3.58}
\end{align}
where $\Gamma_{k_0}=\cup_{l=1}^4\Gamma_l$, $\Gamma_{-k_0}=\cup_{l=5}^8\Gamma_l$, $\Gamma_0=\cup_{l=9}^{12}\Gamma_l$  and $U_0=\{k\in\bfC||k|\leq\epsilon\}$.
\end{proposition}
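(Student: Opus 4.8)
The plan is to read off the triangular structure of $J^{(2)}$ from \eqref{3.41} and to control separately its (bounded) matrix entry and its (exponentially small) scalar oscillation. On each arc $\Gamma_l$ the matrix $J^{(2)}(\zeta,t;k)-\mathbb{I}_{4\times4}$ has a single nonzero $2\times2$ off-diagonal block, of the form $\pm R_j(k)\e^{\mp2\ii t\theta(k)}$ for the index $j$ attached to $\Gamma_l$ in \eqref{3.41}; hence $\|J^{(2)}-\mathbb{I}_{4\times4}\|_{L^\infty(\Gamma_l)}\le\|R_j\|_{L^\infty(\Gamma_l)}\,\sup_{k\in\Gamma_l}|\e^{\mp2\ii t\theta(k)}|$, and I would bound the two factors in turn.

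For the matrix factor I would use the boundary-value formulas of Lemma \ref{lem3.1}: on every arc, $R_j$ is a product of $\delta_1^{\pm1}$, $\delta_2^{\pm1}$, one of $\rho$, $\rho(\mathbb{I}_{2\times2}+\rho^\dag\rho)^{-1}$, $(\mathbb{I}_{2\times2}+\rho^\dag\rho)^{-1}\rho^\dag$ evaluated at $\pm k_0$, the constant $T^{\pm2}(\pm k_0)$, the function $T^{\pm2}(k)$, and $1-\chi(k)$. Since $q_{10},q_{20}\in\mathcal{S}(\bfR)$ we have $\rho\in H^1(\bfR)\subset L^\infty(\bfR)$; $\delta_1,\delta_2$ are bounded, as recorded just before \eqref{3.5}; the resolvent-type matrices are bounded because $\mathbb{I}_{2\times2}+\rho^\dag\rho$ and $\mathbb{I}_{2\times2}+\rho\rho^\dag$ are positive definite; $|1-\chi|\le1$; and although $T^{\pm2}(k)$ blows up at the poles $k_n,-k_n^*$ and zeros $k_n^*,-k_n$ with $n\in\Delta_{k_0}^-$, the factor $1-\chi$ vanishes on $\{\mathrm{dist}(k,Z\cup Z^*)<\varrho/4\}$, so $(1-\chi)T^{\pm2}$ is bounded on $\bfC$. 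Consequently $\|R_j\|_{L^\infty(\Gamma_l)}\le c$ uniformly in $t$.

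For the oscillatory factor I would exploit that $k_0=1/(2\sqrt{-\hat\zeta})$ forces $\hat\zeta=-1/(4k_0^2)$, whence $\theta(k)=-\tfrac{1}{4k_0}\bigl(k/k_0+k_0/k\bigr)$ and $\mathrm{Re}\bigl(\mp2\ii t\theta(k)\bigr)=\pm2t\,\mathrm{Im}\,\theta(k)=\mp\tfrac{t}{2k_0}\,\mathrm{Im}\bigl(k/k_0+k_0/k\bigr)$. Parametrizing the rays of $\Gamma_{\pm k_0}$ by $k=\pm k_0+k_0\alpha\e^{\pm\ii\pi/4}$ and the arms of $\Gamma_0$ by $k=k_0\alpha\e^{\pm\ii\pi/4}$, and writing $u=k/k_0$, one finds $\mathrm{Im}(u+u^{-1})=\mathrm{Im}(u)\,(1-|u|^{-2})$ on the first and $\mathrm{Im}(u+u^{-1})=\pm\tfrac{1}{\sqrt2}\bigl(|\alpha|^{-1}-|\alpha|\bigr)$ on the second. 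In each case the sign is precisely the one that makes the corresponding entry of \eqref{3.41} decay in $t$ --- this is exactly the property for which $L\cup L^*$ and the extensions in Lemma \ref{lem3.1} were designed --- and $|\mathrm{Im}\,\theta|$ is monotone increasing along a ray (resp.\ arm) as one moves away from the saddle $\pm k_0$ (resp.\ the pole $k=0$). Hence the suprema of $|\e^{\mp2\ii t\theta}|$ over $\Gamma_{\pm k_0}\setminus U_{\pm k_0}$ and over $\Gamma_0\setminus U_0$ are taken on the circles $|k\mp k_0|=\epsilon$ and $|k|=\epsilon$; evaluating the displayed expressions there produces the stated exponential rates, the $\Gamma_0$-bound being the faster one because $\theta$ has a simple pole at $k=0$ while $\pm k_0$ are only saddle points. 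Multiplying by the uniform bound on $\|R_j\|_{L^\infty}$ gives \eqref{3.57}--\eqref{3.58}.

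I expect the main obstacle to be bookkeeping rather than a genuine difficulty: one must go through all twelve arcs $\Gamma_l$, match each to its triangular block and to the correct sign of $\mathrm{Im}\,\theta$, and keep careful track of the restriction away from $U_{\pm k_0}$ and $U_0$ (so that the supremum of $|\mathrm{Im}\,\theta|$ is computed at the worst point, the boundary circle). The ingredients for the $R_j$-bound --- boundedness of $\delta_1,\delta_2$, positivity of $\mathbb{I}_{2\times2}+\rho^\dag\rho$, and the vanishing of $1-\chi$ near $Z\cup Z^*$ --- are all routine once Lemma \ref{lem3.1} and the properties of $\delta_j$ and $T$ stated above are in hand.
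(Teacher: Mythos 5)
Your overall strategy coincides with the paper's proof: read off from \eqref{3.41} that $J^{(2)}-\mathbb{I}_{4\times4}$ has a single off-diagonal block $\pm R_j\e^{\mp2\ii t\theta}$, bound $\|R_j\|_{L^\infty}$ by a constant via Lemma \ref{lem3.1} (boundedness of $\delta_1,\delta_2$, of the resolvent-type factors, and of the $T$-factors, with $1-\chi$ removing any singularity), and then compute $\mathrm{Re}(\mp2\ii t\theta)$ on the arcs using $\hat\zeta=-1/(4k_0^2)$; the paper does exactly this, carrying out the phase estimate on the representative arcs $\Gamma_1$ and $\Gamma_9$ and declaring the remaining arcs analogous.

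There is, however, one step that fails as written: your claim that $|\mathrm{Im}\,\theta|$ increases along the arms of $\Gamma_0$ as one moves away from $k=0$, so that the supremum of the decaying exponential over $\Gamma_0\setminus U_0$ is attained on the circle $|k|=\epsilon$. On an arm $k=k_0\alpha\e^{\pm\ii\pi/4}$ with $0<\alpha\le\tfrac{\sqrt2}{2}$ one has $|\mathrm{Im}\,\theta|=\tfrac{\sqrt2}{8}\bigl(\tfrac{1}{|k|}-\tfrac{|k|}{k_0^2}\bigr)$, which is strictly \emph{decreasing} in $|k|$: it blows up at the pole $k=0$ and is smallest at the outer junction $|k|=k_0/\sqrt2$. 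Consequently the supremum of $|\e^{\pm2\ii t\theta}|$ over $\Gamma_0\setminus U_0$ is attained at the outer endpoints of the arms, not at $|k|=\epsilon$, and the value you would read off at $|k|=\epsilon$ (of size $\e^{-ct/\epsilon}$) does not bound that supremum. The repair is immediate and is what the paper in effect does for $\Gamma_9$: uniformly on the arm, $2t|\mathrm{Im}\,\theta|=\tfrac{\sqrt2\,t}{4k_0}\bigl(\tfrac{1}{\alpha}-\alpha\bigr)\ge\tfrac{t}{4k_0}$, the minimum being at $\alpha=\tfrac{\sqrt2}{2}$, and this uniform bound is precisely the estimate from which \eqref{3.58} is deduced (so the excision of $U_0$ plays no role in the exponential estimate there, whereas on $\Gamma_{\pm k_0}$ your monotonicity claim is correct and removing $U_{\pm k_0}$ is genuinely needed because $\mathrm{Im}\,\theta$ vanishes at the stationary points). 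With this correction, the rest of your bookkeeping plan matches the paper's argument.
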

\begin{proof}
Without loss of generality, we prove \eqref{3.57} only for the case $k\in\Gamma_1$ and \eqref{3.58} for $k\in\Gamma_9$. For $k=k_0+k_0\alpha\e^{\pi\ii/4}$ with $\epsilon<\alpha<\infty$, we find
\begin{align}
\text{Re}(-2\ii t\theta(k))=2t\text{Im}k\left(\frac{1}{4|k|^2}-\frac{1}{4k_0^2}\right)
=-\frac{\sqrt{2}t}{4}|k-k_0|(k_0^{-2}-|k|^{-2})
\leq-\frac{\sqrt{2}\epsilon t}{16k_0}.
\end{align}
Thus, by using the definition of $J^{(2)}$ and the boundedness of $R_1$, we get
\be
\|J^{(2)}-\mathbb{I}_{4\times4}\|_{L^\infty(\Gamma_1\setminus U_{k_0})}\leq\|R_1(k)\e^{-2\ii t\theta(k)}\|_{L^\infty(\Gamma_1\setminus U_{k_0})}\leq c\e^{-\frac{t}{4k_0}\frac{\epsilon^2}{\epsilon^2+1}}.
\ee
On the other hand, when $k\in\Gamma_9\setminus U_0$, that is, $k=k_0\alpha\e^{\pi\ii/4}$ with $\epsilon<\alpha<\sqrt{2}/2$, we have
\berr
\text{Re}(2\ii t\theta(k))=-\sqrt{2}k_0t\alpha\left(-\frac{1}{4k_0^2}+\frac{1}{4k_0^2\alpha^2}\right)\leq
-\frac{\sqrt{2}t\epsilon}{4k_0},
\eerr
which gives the corresponding estimate in \eqref{3.58}.
\end{proof}
This proposition implies that the jump matrix $J^{(2)}$ uniformly tends to $\mathbb{I}_{4\times4}$ on both $\Gamma_{\pm k_0}\setminus U_{\pm k_0}$ and $\Gamma_{0}\setminus U_0$ as $t\to\infty$. Thus, if we omit the jump conditions outside the $U_{\pm k_0}$ and $U_0$ of $\mu^{(RHP)}$, there only exists exponentially small error with respect to $t$. Moreover, by Proposition \ref{pro2.5}, we note that in the neighborhood $U_0$ of $k=0$, $J^{(2)}\to\mathbb{I}_{4\times4}$ as $k\to0$, and hence, the study of the neighborhood of $k=0$ alone is not necessary.

We therefore construct the solution $\mu^{(RHP)}$ of the RH problem \ref{rh3.2} in the following form
\be\label{3.60}
\mu^{(RHP)}(\zeta,t;k)=\left\{\begin{aligned}
&E(k)\mu^{(out)}(\zeta,t;k),\qquad\qquad\qquad\,\,\, k\in\bfC\setminus U_{\pm k_0},\\
&E(k)\mu^{(out)}(\zeta,t;k)\mu^{(k_0)}(\zeta,t;k),\quad\,\,\ k\in U_{k_0},\\
&E(k)\mu^{(out)}(\zeta,t;k)\mu^{(-k_0)}(\zeta,t;k),\quad k\in U_{-k_0},
\end{aligned}
\right.
\ee
where $\mu^{(out)}$ satisfies a model RH problem obtained by ignoring the jump conditions of RH problem \ref{rh3.2}, which is defined in $\bfC$ and has only discrete spectrum with no jumps. While $\mu^{(\pm k_0)}$ are the model RH problems which exactly match the jumps of $\mu^{(RHP)}$ in $U_{\pm k_0}$, respectively. The remainder $E(k)$ is a error function, which is a solution of a small-norm RH problem.
\begin{remark}
The motivation for constructing $\mu^{(RHP)}$ of the form \eqref{3.60} is that $\mu^{(RHP)}$ has no poles in $U_{\pm k_0}$, since dist$(Z,\bfR)>\varrho$ and \eqref{3.56}. Thus, we separate the jumps and poles into two parts.
\end{remark}
\begin{figure}[htbp]
  \centering
  \subfigure[]{\includegraphics[width=3.0in]{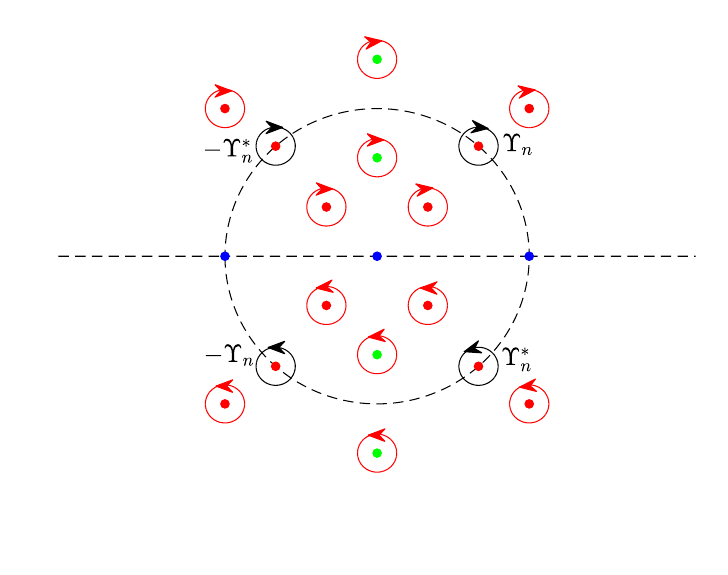}}
  \hfil
\subfigure[]{\includegraphics[width=3.0in]{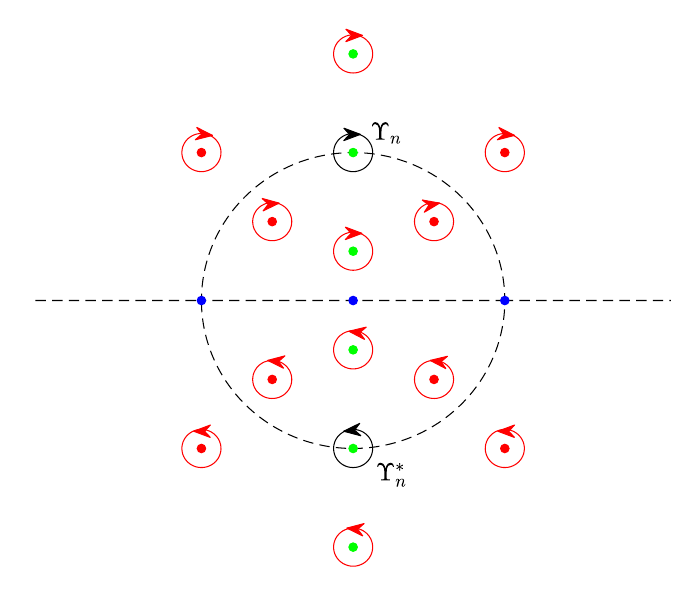}}
\caption{The contour $\Upsilon$, (a) Re$k_n>0$, (b) Re$k_n=0$. $J^{(out)}$ decays exponentially on red contours.}\label{fig3}
\end{figure}
To facilitate the analysis, it is more convenient to transform the residue conditions at the poles into jump conditions. We begin with some notations.
Let $\Upsilon_n$ be a small circle centered at $k_n$ for each $n$ with sufficiently small radius such that it lies inside the upper half plane and is disjoint from all other circles. We assume the small circles around $k_n$ and $-k_n^*$ are oriented clockwise and around $k^*_n$ and $-k_n$ are oriented counterclockwise, see Figure \ref{fig3}. Denote
\be
\Upsilon\doteq\left(\bigcup_{n=1}^N\pm\Upsilon_n\right)
\cup\left(\bigcup_{n=1}^N\pm\Upsilon_n^*\right).
\ee
By doing so, we will replace the residue conditions \eqref{3.42}-\eqref{3.49} of the RH problem with Schwarz invariant jump conditions across closed contours.
\subsubsection{The outer RH model }
We now establish the outer model RH problem for $\mu^{(out)}$.
\begin{rhp}\label{rh3.3}
Find a $4\times4$ matrix-valued function $\mu^{(out)}(\zeta,t;k)$ on $\bfC\setminus\Upsilon$ with the following properties:
\begin{itemize}
\item Analyticity: $\mu^{(out)}(\zeta,t;k)$ is analytic in $\bfC\setminus\Upsilon$.
\item Jump condition: The jump relation of the continuous boundary values $\mu^{out}_\pm$ on $\Upsilon$ is
    \be
    \mu^{(out)}_+(\zeta,t;k)=\mu^{(out)}_-(\zeta,t;k)J^{(out)}(\zeta,t;k),
    \ee
    where
\begin{align}
J^{(out)}=
\left\{
\begin{aligned}
&\begin{pmatrix}
\mathbb{I}_{2\times2} &  \mathbf{0}_{2\times2}\\
\delta_{2}^{-1}(k_n)C_n\delta_{1}^{-1}(k_n)\frac{T^{-2}(k_n)}{k-k_n}\e^{2\ii t\theta(k_n)} & \mathbb{I}_{2\times2}
\end{pmatrix},\  k\in\Upsilon_n, \ n\in\Delta_{k_0}^+,\\
&\begin{pmatrix}
\mathbb{I}_{2\times2} &  \mathbf{0}_{2\times2}\\
-\sigma_2[\delta_{2}^{-1}(k_n)C_n\delta_{1}^{-1}(k_n)]^*\sigma_2
\frac{T^{-2}(-k^*_n)}{k+k_n^*}\e^{2\ii t\theta(-k^*_n)} & \mathbb{I}_{2\times2}
\end{pmatrix},\  k\in-\Upsilon^*_n, \ n\in\Delta_{k_0}^+,\\
&\begin{pmatrix}
\mathbb{I}_{2\times2} & [\delta_{2}^{-1}(k_n)C_n\delta_{1}^{-1}(k_n)]^\dag \frac{T^{2}(k_n^*)}{k-k_n^*}\e^{-2\ii t\theta(k_n^*)} \\
\mathbf{0}_{2\times2} & \mathbb{I}_{2\times2}
    \end{pmatrix},\  k\in\Upsilon^*_n, \ n\in\Delta_{k_0}^+,\\
&\begin{pmatrix}
    \mathbb{I}_{2\times2} & \sigma_2[\delta_{2}^{-1}(k_n)C_n\delta_{1}^{-1}(k_n)]^{\texttt{T}}\sigma_2
    \frac{T^{2}(-k_n)}{k+k_n}\e^{-2\ii t\theta(-k_n)} \\
    \mathbf{0}_{2\times2} & \mathbb{I}_{2\times2}
    \end{pmatrix},\  k\in-\Upsilon_n, \ n\in\Delta_{k_0}^+, \\
&\begin{pmatrix}
    \mathbb{I}_{2\times2} &  \delta_1(k_n)C_n^{-1}\delta_2(k_n)
\frac{(1/T)'(k_n)^{-2}}{k-k_n}\e^{-2\ii t\theta(k_n)}\\
    \mathbf{0}_{2\times2}& \mathbb{I}_{2\times2}
    \end{pmatrix},\  k\in\Upsilon_n,\ n\in\Delta_{k_0}^-,\\
& \begin{pmatrix}
    \mathbb{I}_{2\times2} & -\sigma_2[\delta_1(k_n)C_n^{-1}\delta_2(k_n)]^*\sigma_2
    \frac{(1/T)'(-k_n^*)^{-2}}{k+k_n^*}\e^{-2\ii t\theta(-k_n^*)}\\
     \mathbf{0}_{2\times2}& \mathbb{I}_{2\times2}
    \end{pmatrix}, \ k\in-\Upsilon^*_n,\ n\in\Delta_{k_0}^-,\\
&\begin{pmatrix}
    \mathbb{I}_{2\times2} & \mathbf{0}_{2\times2} \\
    [\delta_1(k_n)C_n^{-1}\delta_2(k_n)]^\dag
\frac{T'(k_n^*)^{-2}}{k-k_n^*}\e^{2\ii t\theta(k_n^*)} & \mathbb{I}_{2\times2}
    \end{pmatrix}, \  k\in\Upsilon_n^*,\ n\in\Delta_{k_0}^-,\\
&\begin{pmatrix}
    \mathbb{I}_{2\times2} & \mathbf{0}_{2\times2}\\
    -\sigma_2[\delta_1(k_n)C_n^{-1}\delta_2(k_n)]^{\texttt{T}}\sigma_2
    \frac{T'(-k_n)^{-2}}{k+k_n}\e^{2\ii t\theta(-k_n)}  & \mathbb{I}_{2\times2}
    \end{pmatrix},\  k\in-\Upsilon_n,\ n\in\Delta_{k_0}^-.
\end{aligned}
\right.
\end{align}
\item Normalization: $\mu^{(out)}(\zeta,t;k)\to\mathbb{I}_{4\times4}$, as $k\to\infty$.
\end{itemize}
\end{rhp}

From the signature table Figure \ref{fig1}, we observe that along the characteristic line $\zeta=vt$ where $v=-1/(4|k_n|^2)$, by choosing the radius of each element of $\Upsilon$ small enough, we have for $k\in\Upsilon\setminus(\pm\Upsilon_n\cup\pm\Upsilon_n^*)$
\be\label{3.64}
|J^{(out)}-\mathbb{I}_{4\times4}|\leq c\e^{-ct},\quad t\to\infty.
\ee
\begin{proposition}\label{prop3.2}
There exists a $4\times4$ matrix-valued function $E_*(\zeta,t;k)$ with
\be
E_*(\zeta,t;k)=\mathbb{I}_{4\times4}+O(\e^{-ct})
\ee
such that
\be
\mu^{(out)}(\zeta,t;k)=E_*(\zeta,t;k)\mu^{(out)}_*(\zeta,t;k),
\ee
where $\mu^{(out)}_*(\zeta,t;k)$ solves the RH problem:
\begin{rhp}\label{rh3.4}
Find a $4\times4$ matrix-valued function $\mu_*^{(out)}(\zeta,t;k)$ on $\bfC\setminus(\pm\Upsilon_n\cup\pm\Upsilon_n^*)$ with the following properties:
\begin{itemize}
\item Analyticity: $\mu^{(out)}(\zeta,t;k)$ is analytic for $k\in\bfC\setminus(\pm\Upsilon_n\cup\pm\Upsilon_n^*)$ with continuous boundary values $\mu_{*\pm}^{(out)}$.
\item Jump condition: On $\pm\Upsilon_n\cup\pm\Upsilon_n^*$, we have jump relation
    \be
    \mu_{*+}^{(out)}(\zeta,t;k)=\mu_{*-}^{(out)}(\zeta,t;k)J_*^{(out)}(\zeta,t;k),
    \ee
    where
\begin{align}
J^{(out)}_*=
\left\{
\begin{aligned}
&\begin{pmatrix}
\mathbb{I}_{2\times2} &  \mathbf{0}_{2\times2}\\
\delta_{2}^{-1}(k_n)C_n\delta_{1}^{-1}(k_n)\frac{T^{-2}(k_n)}{k-k_n}\e^{2\ii t\theta(k_n)} & \mathbb{I}_{2\times2}
\end{pmatrix},\  k\in\Upsilon_n,\\
&\begin{pmatrix}
\mathbb{I}_{2\times2} &  \mathbf{0}_{2\times2}\\
-\sigma_2[\delta_{2}^{-1}(k_n)C_n\delta_{1}^{-1}(k_n)]^*\sigma_2
\frac{T^{-2}(-k^*_n)}{k+k_n^*}\e^{2\ii t\theta(-k^*_n)} & \mathbb{I}_{2\times2}
\end{pmatrix},\  k\in-\Upsilon^*_n,\\
&\begin{pmatrix}
\mathbb{I}_{2\times2} & [\delta_{2}^{-1}(k_n)C_n\delta_{1}^{-1}(k_n)]^\dag \frac{T^{2}(k_n^*)}{k-k_n^*}\e^{-2\ii t\theta(k_n^*)} \\
\mathbf{0}_{2\times2} & \mathbb{I}_{2\times2}
    \end{pmatrix},\  k\in\Upsilon^*_n,\\
&\begin{pmatrix}
    \mathbb{I}_{2\times2} & \sigma_2[\delta_{2}^{-1}(k_n)C_n\delta_{1}^{-1}(k_n)]^{\texttt{T}}\sigma_2
    \frac{T^{2}(-k_n)}{k+k_n}\e^{-2\ii t\theta(-k_n)} \\
    \mathbf{0}_{2\times2} & \mathbb{I}_{2\times2}
    \end{pmatrix},\  k\in-\Upsilon_n.
\end{aligned}
\right.
\end{align}
\item Normalization: $\mu_*^{(out)}(\zeta,t;k)\to\mathbb{I}_{4\times4}$, as $k\to\infty$.
\end{itemize}
\end{rhp}
\end{proposition}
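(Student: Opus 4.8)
The plan is to realize $\mu^{(out)}$ as a small, exponentially-in-$t$ correction of the reduced model by setting
\[
E_*(\zeta,t;k)\doteq \mu^{(out)}(\zeta,t;k)\,[\mu^{(out)}_*(\zeta,t;k)]^{-1},
\]
and then showing that, along the ray $\zeta=v_nt$ with $v_n=-1/(4|k_n|^2)$, $E_*$ is the unique solution of a small-norm RH problem whose jump differs from $\mathbb{I}_{4\times4}$ only by terms $O(\e^{-ct})$. The factorization $\mu^{(out)}=E_*\mu^{(out)}_*$ and the estimate on $E_*$ then follow at once.

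First I would verify that $\mu^{(out)}_*$ exists, is unique, and is invertible on all of $\bfC$, so that $E_*$ is well defined. The contour $\pm\Upsilon_n\cup\pm\Upsilon_n^*$ of RH problem \ref{rh3.4} is Schwarz-symmetric and every jump matrix $J^{(out)}_*$ is block upper- or lower-triangular with $\mathbb{I}_{2\times2}$ diagonal blocks, so $\det J^{(out)}_*\equiv1$; hence, exactly as in the proof of Theorem \ref{th2.1}, Zhou's vanishing lemma \cite{ZX1989} yields unique solvability (equivalently, converting the circular jumps back to residue conditions exhibits RH problem \ref{rh3.4} as the reflectionless problem attached to the single datum $C_n$ at $k_n$ — a self-symmetric soliton if $\text{Re}\,k_n=0$, a composite breather otherwise — which can also be written out explicitly). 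Liouville's theorem forces $\det\mu^{(out)}_*\equiv1$, so $[\mu^{(out)}_*]^{-1}$ exists throughout $\bfC$ and is uniformly bounded on any compact set disjoint from $\pm\Upsilon_n\cup\pm\Upsilon_n^*$; the same remarks apply to $\mu^{(out)}$ (RH problem \ref{rh3.3} carries no residue conditions, only circular jumps). Thus $E_*$ is well defined and analytic on $\bfC\setminus\bigcup_{m\neq n}(\pm\Upsilon_m\cup\pm\Upsilon_m^*)$, with $E_*\to\mathbb{I}_{4\times4}$ as $k\to\infty$.

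Next I would compute the jump of $E_*$. On the circles $\pm\Upsilon_n\cup\pm\Upsilon_n^*$ the jump matrices of $\mu^{(out)}$ and of $\mu^{(out)}_*$ coincide, so substituting $\mu^{(out)}_\pm=\mu^{(out)}_-J^{(out)}$ and $\mu^{(out)}_{*\pm}=\mu^{(out)}_{*-}J^{(out)}_*$ into the definition of $E_*$ gives $E_{*+}=E_{*-}$ there: $E_*$ has no jump across the circles it shares with $\mu^{(out)}_*$. On each remaining circle $\pm\Upsilon_m\cup\pm\Upsilon_m^*$ with $m\neq n$, the matrix $\mu^{(out)}_*$ is analytic and bounded, and one finds
\[
E_{*+}=E_{*-}\,J^{E_*},\qquad J^{E_*}-\mathbb{I}_{4\times4}=\mu^{(out)}_*\big(J^{(out)}-\mathbb{I}_{4\times4}\big)[\mu^{(out)}_*]^{-1}.
\]
Since the elements of $\Upsilon$ were chosen with fixed, $t$-independent radii (as in the construction preceding \eqref{3.64}), the bound \eqref{3.64} then gives $\|J^{E_*}-\mathbb{I}_{4\times4}\|_{L^2\cap L^\infty}\le c\,\e^{-ct}$ on the full jump contour $\bigcup_{m\neq n}(\pm\Upsilon_m\cup\pm\Upsilon_m^*)$.

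Finally I would close the argument with standard small-norm RH theory. The Beals--Coifman integral operator associated with $J^{E_*}$ has $L^2$-operator norm $O(\e^{-ct})$, so $\mathrm{id}-C_{J^{E_*}}$ is invertible for $t$ large; this yields existence and uniqueness of $E_*$, and from the Cauchy representation
\[
E_*(\zeta,t;k)=\mathbb{I}_{4\times4}+\frac{1}{2\pi\ii}\int_{\bigcup_{m\neq n}(\pm\Upsilon_m\cup\pm\Upsilon_m^*)}\frac{E_{*-}(s)\big(J^{E_*}(s)-\mathbb{I}_{4\times4}\big)}{s-k}\,\dd s
\]
one reads off $E_*(\zeta,t;k)=\mathbb{I}_{4\times4}+O(\e^{-ct})$, uniformly for $k$ bounded away from those circles and uniformly along $\zeta=v_nt$; together with $\mu^{(out)}=E_*\mu^{(out)}_*$ this is the claimed statement. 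I expect the main obstacle to be essentially bookkeeping rather than conceptual: establishing solvability of the reduced problem RH \ref{rh3.4} in the first place (handled by the vanishing-lemma or explicit-soliton argument above) and making the $O(\e^{-ct})$ estimate genuinely uniform in $k$ and in the relevant range of $\hat\zeta$, both of which become routine once the radii of the circles in $\Upsilon$ are pinned to a fixed $t$-independent size.
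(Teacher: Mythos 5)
Your proposal is correct and follows essentially the same route as the paper: solvability of RH problem \ref{rh3.4} via the Schwarz-invariance/vanishing-lemma argument, the observation that $E_*=\mu^{(out)}[\mu^{(out)}_*]^{-1}$ has no jump on $\pm\Upsilon_n\cup\pm\Upsilon_n^*$ and jump $\mu^{(out)}_*J^{(out)}[\mu^{(out)}_*]^{-1}$ on the remaining circles, and then the estimate \eqref{3.64} combined with standard small-norm RH theory (as in RH problem \ref{rhp3.8}) to conclude $E_*=\mathbb{I}_{4\times4}+O(\e^{-ct})$. Your extra verifications (unimodularity via Liouville, boundedness of $\mu^{(out)}_*$ along $\zeta=v_nt$) merely make explicit what the paper leaves implicit.
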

\begin{proof}
The solvability of the RH problem \ref{rh3.4} follows from the Schwarz invariant condition of the jump matrices \cite{ZX1989}. Moreover, it is easy to see that on $\Upsilon\setminus(\pm\Upsilon_n\cup\pm\Upsilon_n^*)$, $E_*$ satisfies the following jump condition:
\be
E_{*+}=E_{*-}\left(\mu_*^{(out)}J^{(out)}[\mu_*^{(out)}]^{-1}\right),
\ee
Using \eqref{3.64}, the conclusion follows from solving a small norm RH problem (see the solution to RH problem \ref{rhp3.8} for detail).
\end{proof}
We now study the solution to RH problem \ref{rh3.4}. Using the Plemelj formula and Cauchy Residue theorem, if Re$k_n>0$, the solution of this RH problem is given by
\begin{align}
\mu_{*L}^{(cb)}(k)=&\begin{pmatrix}\mathbb{I}_{2\times2}\\ \mathbf{0}_{2\times2}\end{pmatrix}
+\mu_{*R}^{(cb)}(k_n)\delta_{2}^{-1}(k_n)C_n\delta_{1}^{-1}(k_n)
\frac{T^{-2}(k_n)}{k-k_n}\e^{2\ii t\theta(k_n)}\\
&-\mu_{*R}^{(cb)}(-k_n^*)
\sigma_2[\delta_{2}^{-1}(k_n)C_n\delta_{1}^{-1}(k_n)]^*\sigma_2
\frac{T^{-2}(-k^*_n)}{k+k^*_n}\e^{2\ii t\theta(-k_n^*)},\nn\\
\mu_{*R}^{(cb)}(k)=&\begin{pmatrix}\mathbf{0}_{2\times2}\\ \mathbb{I}_{2\times2} \end{pmatrix}-\mu_{*L}^{(cb)}(k_n^*)[\delta_{2}^{-1}(k_n)C_n\delta_{1}^{-1}(k_n)]^\dag
\frac{T^{2}(k^*_n)}{k-k^*_n}\e^{-2\ii t\theta(k^*_n)}\label{3.71}\\
&+\mu_{*L}^{(cb)}(-k_n)\sigma_2[\delta_{2}^{-1}(k_n)
C_n\delta_{1}^{-1}(k_n)]^{\texttt{T}}\sigma_2
\frac{T^{2}(-k_n)}{k+k_n}\e^{-2\ii t\theta(-k_n)}.\nn
\end{align}
Then, we have a closed system:
\begin{align}
\mu_{*UL}^{(cb)}(k_n^*)=&\mathbb{I}_{2\times2}+\mu_{*UR}^{(cb)}(k_n)
\delta_{2}^{-1}(k_n)C_n\delta_{1}^{-1}(k_n)\frac{T^{-2}(k_n)}{k_n^*-k_n}\e^{2\ii t\theta(k_n)}\label{3.72}\\
&-\mu_{*UR}^{(cb)}(-k_n^*)\sigma_2[\delta_{2}^{-1}(k_n)C_n\delta_{1}^{-1}(k_n)]^*\sigma_2
\frac{T^{-2}(-k^*_n)}{k_n^*+k^*_n}\e^{2\ii t\theta(-k^*_n)},\nn\\
\mu_{*UL}^{(cb)}(-k_n)=&\mathbb{I}_{2\times2}+\mu_{*UR}^{(cb)}(k_n)
\delta_{2}^{-1}(k_n)C_n\delta_{1}^{-1}(k_n)\frac{T^{-2}(k_n)}{-k_n-k_n}\e^{2\ii t\theta(k_n)}\label{3.73}\\
&-\mu_{*UR}^{(cb)}(-k_n^*)\sigma_2[\delta_{2}^{-1}(k_n)C_n\delta_{1}^{-1}(k_n)]^*\sigma_2
\frac{T^{-2}(-k^*_n)}{-k_n+k^*_n}\e^{2\ii t\theta(-k^*_n)},\nn\\
\mu_{*UR}^{(cb)}(k_n)=&-\mu_{*UL}^{(cb)}(k_n^*)
[\delta_{2}^{-1}(k_n)C_n\delta_{1}^{-1}(k_n)]^\dag\frac{T^{2}(k^*_n)}{k_n-k^*_n}\e^{-2\ii t\theta(k_n^*)}\label{3.74}\\
&+\mu_{*UL}^{(cb)}(-k_n)\sigma_2[\delta_{2}^{-1}(k_n)C_n\delta_{1}^{-1}(k_n)]^{\texttt{T}}\sigma_2
\frac{T^{2}(-k_n)}{k_n+k_n}\e^{-2\ii t\theta(-k_n)},\nn\\
\mu_{*UR}^{(cb)}(-k_n^*)=&-\mu_{*UL}^{(out)}(k_n^*)
[\delta_{2}^{-1}(k_n)C_n\delta_{1}^{-1}(k_n)]^\dag\frac{T^{2}(k^*_n)}{-k^*_n-k^*_n}\e^{-2\ii t\theta(k_n^*)}\label{3.75}\\
&+\mu_{*UL}^{(cb)}(-k_n)\sigma_2[\delta_{2}^{-1}(k_n)C_n\delta_{1}^{-1}(k_n)]^{\texttt{T}}\sigma_2
\frac{T^{2}(-k_n)}{-k_n^*+k_n}\e^{-2\ii t\theta(-k_n)}.\nn
\end{align}
Substituting Equations \eqref{3.74}, \eqref{3.75} into \eqref{3.72} and \eqref{3.73}, we can obtain $\mu_{*UL}^{(cb)}(k_n^*)$ and $\mu_{*UL}^{(cb)}(-k_n)$. Then, by reconstruction formulae \eqref{2.50} and \eqref{2.51}, as well as \eqref{3.71}, we can find the composite breather solution
\begin{align}
\begin{pmatrix} q_{1n}^{(cb)}(\zeta,t) & q_{2n}^{(cb)}(\zeta,t)\end{pmatrix}=&\ii\begin{pmatrix}
\left([\mu_*^{(cb)}(0)]^{-1}
\mu_{*1}^{(cb)}\right)_{13}
& \left([\mu_*^{(cb)}(0)]^{-1}
\mu_{*1}^{(cb)}\right)_{14}\end{pmatrix},\label{eq3.77cb}\\
x-\zeta(x,t)=& \ii \left[\left([\mu_*^{(cb)}(0)]^{-1}
\mu_{*1}^{(cb)}\right)_{11}-1\right],
\end{align}
where we expand
\be
\mu_{*}^{(cb)}(k)=\mu_*^{(cb)}(0)+\mu_{*1}^{(cb)}k+O(k^2),\ \text{as} \ k\to0.
\ee
\begin{remark}
We do not give explicit expressions for the composite breather solution because they could not be simplified enough to be instructive. In \cite{LSL-IST}, for a specific choice of the discrete eigenvalue and the norming constant matrix, the magnitudes of  $q_1^{(cb)}(\zeta,t)$ and $q_2^{(cb)}(\zeta,t)$ are shown graphically. We also note that the explicit expression for a composite breather associated with the discrete eigenvalue $k_1$ is given by Equation (173) in \cite{CGP} through Darboux transformations. Moreover, we know that the speed of this composite breather is $-1/(4|k_1|^2)$.
\end{remark}
If Re$k_n=0$, that is, $k_n=-k_n^*=\ii\nu_n$, the solution of RH problem \ref{rh3.4} in this case can be expressed by
\begin{align}
\mu_{*L}^{(sol)}(k)=&\begin{pmatrix}\mathbb{I}_{2\times2}\\ \mathbf{0}_{2\times2}\end{pmatrix}
+\mu_{*R}^{(sol)}(k_n)\delta_{2}^{-1}(k_n)C_n\delta_{1}^{-1}(k_n)
\frac{T^{-2}(k_n)}{k-k_n}\e^{2\ii t\theta(k_n)},\\
\mu_{*R}^{(sol)}(k)=&\begin{pmatrix}\mathbf{0}_{2\times2}\\ \mathbb{I}_{2\times2} \end{pmatrix}-\mu_{*L}^{(sol)}(k_n^*)[\delta_{2}^{-1}(k_n)C_n\delta_{1}^{-1}(k_n)]^\dag
\frac{T^{2}(k^*_n)}{k-k^*_n}\e^{-2\ii t\theta(k^*_n)}.
\end{align}
Thus, we have
\begin{align}
\mu_{*UL}^{(sol)}(k_n^*)=&\mathbb{I}_{2\times2}+\mu_{*UR}^{(sol)}(k_n)
\delta_{2}^{-1}(k_n)C_n\delta_{1}^{-1}(k_n)\frac{T^{-2}(k_n)}{k_n^*-k_n}\e^{2\ii t\theta(k_n)},\label{3.76}\\
\mu_{*UR}^{(sol)}(k_n)=&-\mu_{*UL}^{(sol)}(k_n^*)
[\delta_{2}^{-1}(k_n)C_n\delta_{1}^{-1}(k_n)]^\dag\frac{T^{2}(k^*_n)}{k_n-k^*_n}\e^{-2\ii t\theta(k_n^*)}.\label{3.77}
\end{align}
By substituting \eqref{3.77} into \eqref{3.76} to find $\mu_{*UL}^{(sol)}(k_n^*)$,  as $k\to0$, expanding $\mu_{*}^{(sol)}(k)=\mu_*^{(sol)}(0)+\mu_{*1}^{(sol)}k+O(k^2)$,
and using \eqref{2.50} and \eqref{2.51}, we obtain the so-called self-symmetric soliton solution:
\begin{align}\label{3.78}
&\begin{pmatrix} q_{1n}^{(sol)}(\zeta,t) & q_{2n}^{(sol)}(\zeta,t)\end{pmatrix}=\ii\begin{pmatrix}
\left([\mu_*^{(sol)}(0)]^{-1}
\mu_{*1}^{(sol)}\right)_{13}
& \left([\mu_*^{(sol)}(0)]^{-1}
\mu_{*1}^{(sol)}\right)_{14}\end{pmatrix}\\
&\qquad\qquad\qquad\qquad\quad \quad\,\,\,\
=-\frac{1}{k_n}\text{sech}\left[\zeta_1(\zeta,t)
-x_0\right]\frac{\mathbf{g}^*}{|\mathbf{g}|},\nn\\
&x=\zeta+\frac{2}{\nu_n^2}\frac{1}{1+\e^{2(\zeta_1-x_0)}},\quad \zeta_1(\zeta,t)=2\nu_n\left(\zeta+\frac{t}{4\nu_n^2}\right),\quad
x_0=\log\frac{|\mathbf{g}|}{2\nu_n},\nn
\end{align}
with
\be
\mathbf{g}=\begin{pmatrix}\alpha_n\\ \beta_n\end{pmatrix},\quad\begin{pmatrix}\alpha_n&\beta_n^*\\ \beta_n&-\alpha_n^*\end{pmatrix}\doteq\delta_{2}^{-1}(k_n)C_n\delta_{1}^{-1}(k_n)T^{-2}(k_n).
\ee
\subsubsection{Local RH models near phase points}
From the Proposition \ref{prop3.1}, $J^{(2)}-\mathbb{I}_{4\times4}$ in the neighborhood $U_{\pm k_0}$ of $\pm k_0$ does not have a uniformly decay as $t\to\infty$. Moreover, $1-\chi(k)=1$ for $k\in U_{\pm k_0}$. Thus, we can establish the model RH problems for $\mu^{(\pm k_0)}$ in $U_{\pm k_0}$, which are the local models near the critical points $\pm k_0$. Denote $\Gamma^{\epsilon1}\doteq\cup_{l=1}^4\Gamma_l\cap U_{ k_0}=\cup_{l=1}^4\Gamma_l^\epsilon,$ $\Gamma^{\epsilon2}\doteq\cup_{l=5}^8\Gamma_l\cap U_{-k_0}=\cup_{l=5}^8\Gamma_l^\epsilon,$ see Figure \ref{fig4}.
\begin{figure}[htbp]
 \centering
 \includegraphics[width=3in]{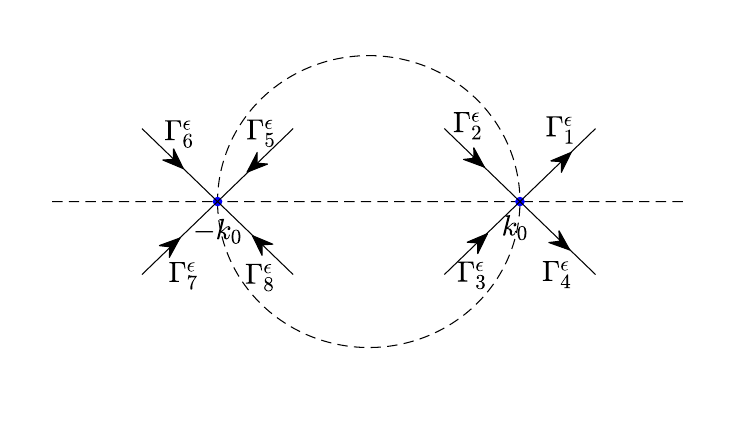}
  \caption{The contours $\Gamma^{\epsilon1}\cup\Gamma^{\epsilon2}$ in the complex $k$-plane.}\label{fig4}
\end{figure}
\begin{rhp}\label{rhp3.5}
Find a $4\times4$ matrix-valued function $\mu^{(k_0)}(\zeta,t;k)$ on  $\bfC\setminus\Gamma^{\epsilon1}$ with the following properties:
\begin{itemize}
\item Analyticity: $\mu^{(k_0)}(\zeta,t;k)$ is analytical in $\bfC\setminus\Gamma^{\epsilon1}$.

\item Jump condition: For $k\in\Gamma^{\epsilon1}$, the continuous boundary values $\mu^{(ccSP)}_\pm$ satisfy the jump relation
\be
\mu^{(k_0)}_+(\zeta,t;k)=\mu^{(k_0)}_-(\zeta,t;k)J^{(k_0)}(\zeta,t;k),
\ee
where the jump matrix $J^{(k_0)}(\zeta,t;k)$ is expressed by
\begin{align*}
J^{(k_0)}=
\left\{
\begin{aligned}
&\begin{pmatrix}
\mathbb{I}_{2\times2} & \delta_{1}(k)\left(\mathbb{I}_{2\times2}+\rho^\dag(k_0) \rho(k_0)\right)^{-1}\rho^\dag(k_0)\delta_{2}(k)T^2(k_0)\e^{-2\ii t\theta(k)}\\
\mathbf{0}_{2\times2} & \mathbb{I}_{2\times2}
\end{pmatrix},\quad k\in\Gamma_1^\epsilon,\\
&\begin{pmatrix}
\mathbb{I}_{2\times2} & \mathbf{0}_{2\times2}\\
\delta_{2}^{-1}(k)\rho(k_0)\delta_{1}^{-1}(k)T^{-2}(k_0)\e^{2\ii t\theta(k)} & \mathbb{I}_{2\times2}
\end{pmatrix},\quad\,    k\in\Gamma_2^\epsilon,\\
&\begin{pmatrix}
\mathbb{I}_{2\times2} & \delta_{1}(k)\rho^\dag(k_0)\delta_{2}(k)T^2(k_0)\e^{-2\ii t\theta(k)}\\
\mathbf{0}_{2\times2} & \mathbb{I}_{2\times2}
\end{pmatrix},\qquad k\in\Gamma_3^\epsilon,\\
&\begin{pmatrix}
\mathbb{I}_{2\times2} & \mathbf{0}_{2\times2}\\
\delta_{2}^{-1}(k)\rho(k_0)\left(\mathbb{I}_{2\times2}+\rho^\dag(k_0) \rho(k_0)\right)^{-1}\delta_{1}^{-1}(k)T^{-2}(k_0)\e^{2\ii t\theta(k)} & \mathbb{I}_{2\times2}
\end{pmatrix}, \,\,\ k\in\Gamma_4^\epsilon.
\end{aligned}
\right.
\end{align*}
\item Normalization: $\mu^{(k_0)}(\zeta,t;k)\to\mathbb{I}_{4\times4}$, as $k\to\infty$.
\end{itemize}
\end{rhp}
\begin{rhp}\label{rhp3.6}
Find a $4\times4$ matrix-valued function $\mu^{(-k_0)}(\zeta,t;k)$ on  $\bfC\setminus\Gamma^{\epsilon2}$ with the following properties:
\begin{itemize}
\item Analyticity: $\mu^{(-k_0)}(\zeta,t;k)$ is analytical in $\bfC\setminus\Gamma^{\epsilon2}$.

\item Jump condition: For $k\in\Gamma^{\epsilon2}$, the continuous boundary values $\mu^{(-k_0)}_\pm$ satisfy the jump relation
\be
\mu^{(-k_0)}_+(\zeta,t;k)=\mu^{(-k_0)}_-(\zeta,t;k)J^{(-k_0)}(\zeta,t;k),
\ee
where the jump matrix $J^{(-k_0)}(\zeta,t;k)$ is expressed by
\begin{align*}
J^{(-k_0)}=
\left\{
\begin{aligned}
&\begin{pmatrix}
\mathbb{I}_{2\times2} &\mathbf{0}_{2\times2}\\
-\delta_{2}^{-1}(k)\rho(-k_0)\delta_{1}^{-1}(k)T^{-2}(-k_0)\e^{2\ii t\theta(k)} & \mathbb{I}_{2\times2}
\end{pmatrix},\quad k\in\Gamma_5^\epsilon,\\
&\begin{pmatrix}
\mathbb{I}_{2\times2} & \delta_{1}(k)\left(\mathbb{I}_{2\times2}+\rho^\dag(-k_0) \rho(-k_0)\right)^{-1}\rho^\dag(-k_0)\delta_{2}(k)T^2(-k_0)\e^{-2\ii t\theta(k)}\\
\mathbf{0}_{2\times2} & \mathbb{I}_{2\times2}
\end{pmatrix},\,\,\ k\in\Gamma_6^\epsilon,\\
&\begin{pmatrix}
\mathbb{I}_{2\times2} &\mathbf{0}_{2\times2}\\
\delta_{2}^{-1}(k)\rho(-k_0)\left(\mathbb{I}_{2\times2}+\rho^\dag(-k_0) \rho(-k_0)\right)^{-1}\delta_{1}^{-1}(k)T^{-2}(-k_0)\e^{2\ii t\theta(k)} & \mathbb{I}_{2\times2}
\end{pmatrix},\, k\in\Gamma_7^\epsilon,\\
&\begin{pmatrix}
\mathbb{I}_{2\times2} & -\delta_{1}(k)\rho^\dag(-k_0)\delta_{2}(k)T^2(-k_0)\e^{-2\ii t\theta(k)}\\
\mathbf{0}_{2\times2} & \mathbb{I}_{2\times2}
\end{pmatrix},\quad\,\,\,\,  k\in\Gamma_8^\epsilon.
\end{aligned}
\right.
\end{align*}
\item Normalization: $\mu^{(-k_0)}(\zeta,t;k)\to\mathbb{I}_{4\times4}$, as $k\to\infty$.
\end{itemize}
\end{rhp}
We now study the solution of RH problems \ref{rhp3.5} and \ref{rhp3.6}. We will take the RH problem \ref{rhp3.5} as an example and others can be handled in the same way.

Based on the Beals--Coifman theory \cite{BC}, we decompose
$J^{(k_0)}=(\mathbb{I}_{4\times4}-w_-^{(k_0)})^{-1}(\mathbb{I}_{4\times4}+w_+^{(k_0)}),$ and let $w^{(k_0)}=w_+^{(k_0)}+w_-^{(k_0)}$. Our first step is to extend the contour $\Gamma^{\epsilon1}$ to the contour
\begin{align}
\hat{\Gamma}^{\epsilon1}\doteq\{k\in\bfC|k=k_0+k_0\alpha\e^{\pm\frac{\ii\pi}{4}},-\infty<\alpha<\infty\},\nn
\end{align}
and define $\hat{w}^{(k_0)}$ on $\hat{\Gamma}^{\epsilon1}$ through
\berr
\hat{w}^{(k_0)}=\left\{
\begin{aligned}
&w^{(k_0)},\quad k\in\Gamma^{\epsilon1}\subset\hat{\Gamma}^{\epsilon1},\\
&\mathbf{0}_{4\times4},\quad\ k\in\hat{\Gamma}^{\epsilon1}\setminus\Gamma^{\epsilon1}.
\end{aligned}
\right.
\eerr

It is first noted that the functions $\delta_1(k)$ and $\delta_2(k)$ respectively satisfy the $2\times2$ matrix RH problems \eqref{3.3} and \eqref{3.4}, and hence they can not be solved in explicit form. However, by taking the determinants, they are transformed into a same scalar RH problem as follows. Set $\delta(k)=\det[\delta_j(k)]$, we then get
\be\label{3.81}
\left\{
\begin{aligned}
\delta_{+}(k)&=\left(1+\text{tr}[\rho(k)\rho^\dag(k)]
+\det[\rho(k)\rho^\dag(k)]\right)\delta_{-}(k),\quad  |k|>k_0,\\
&=\delta_{-}(k),\qquad\qquad\qquad\qquad\qquad\qquad\qquad\quad\qquad |k|<k_0,\\
\delta(k)&\to1,\qquad\qquad\qquad\qquad\qquad\qquad\qquad\quad\qquad\qquad k\to\infty.
\end{aligned}
\right.
\ee
By the Plemelj formula, we find
\begin{align}\label{3.82}
\delta(k)=&\exp\left\{\frac{1}{2\pi\ii}\left(\int_{-\infty}^{-k_0}+\int_{k_0}^{\infty}\right)
\frac{\ln\left(1+\text{tr}[\rho(s)\rho^\dag(s)]+\det[\rho(s)\rho^\dag(s)]\right)}{s-k}\dd s\right\}\\
=&\left(\frac{k-k_0}{k+k_0}\right)^{-\ii\nu(k_0)}\e^{\chi_{k_0}(k)},
\end{align}
where
\begin{align}
\nu(k_0)&=-\frac{1}{2\pi}\ln\left(1+\text{tr}[\rho(k_0)\rho^\dag(k_0)]
+\det[\rho(k_0)\rho^\dag(k_0)]\right),\label{3.84}\\
\chi_{k_0}(k)&=\frac{1}{2\pi\ii}\left(\int_{-\infty}^{-k_0}+\int_{k_0}^{\infty}\right)
\ln\left(\frac{1+\text{tr}[\rho(s)\rho^\dag(s)]+\det[\rho(s)\rho^\dag(s)]}{1+\text{tr}[\rho(k_0)\rho^\dag(k_0)]
+\det[\rho(k_0)\rho^\dag(k_0)]}\right)\dd s,\label{3.85}
\end{align}
and hence $\delta(k)$, $\delta^{-1}(k)$ are uniformly bounded. Therefore, we can write
\begin{align}\label{3.86}
&\delta_{1}(k)\rho^\dag(k_0)\delta_{2}(k)\e^{-2\ii t\theta(k)}=\delta_{1}(k)\rho^\dag(k_0)\left(\delta_{2}(k)-\delta(k) \mathbb{I}_{2\times2}\right)\e^{-2\ii t\theta(k)}\\
&+(\delta_1(k)-\delta(k) \mathbb{I}_{2\times2})\rho^\dag(k_0)\delta(k)\e^{-2\ii t\theta(k)}+\rho^\dag(k_0)\delta^2(k)\e^{-2\ii t\theta(k)}.\nn
\end{align}
Expanding $\ii t\theta(k)$, we obtain
\be
\ii t\theta(k)=\frac{t}{2\ii k_0}+\frac{t}{4\ii k_0^3}(k-k_0)^2+\frac{\ii t}{4s^4}(k-k_0)^3,\quad s\ \text{lies\ between}\ k_0\ \text{and}\ k.
\ee
We thus define the following scaling transformation
\be
\mathcal{N}: f(k)\to(\mathcal{N}f)(k)=f\left(\frac{z}{\sqrt{k_0^{-3}t}}+k_0\right),
\ee
which acts on $\delta(k)\e^{-\ii t\theta(k)}$ and gives
\be
(\mathcal{N}\delta\e^{-\ii t\theta})(k)=\delta_{k_0}^0\delta_{k_0}^1(z),
\ee
where
\begin{align}
\delta_{k_0}^0=&\e^{\chi_{k_0}(k_0)-\frac{t}{2\ii k_0}}(4k_0^{-1}t)^{\frac{\ii\nu(k_0)}{2}},\label{deltak0}\\
\delta_{k_0}^1(z)=&z^{-\ii\nu(k_0)}\exp\left\{\frac{\ii z^2}{4}\left(1-\frac{ z}{s^4k_0^{-9/2}t^{1/2}}\right)\right\}\\
&\times\left(\frac{2k_0}{z/\sqrt{k_0^{-3}t}+2k_0}\right)^{-\ii\nu(k_0)}
\e^{\chi_{k_0}([z/\sqrt{k_0^{-3}t}]+k_0)-\chi_{k_0}(k_0)}.\nn
\end{align}
Set $\tilde{w}^{(k_0)}=\mathcal{N}\hat{w}^{(k_0)}$. \begin{figure}[htbp]
  \centering
  \includegraphics[width=3in]{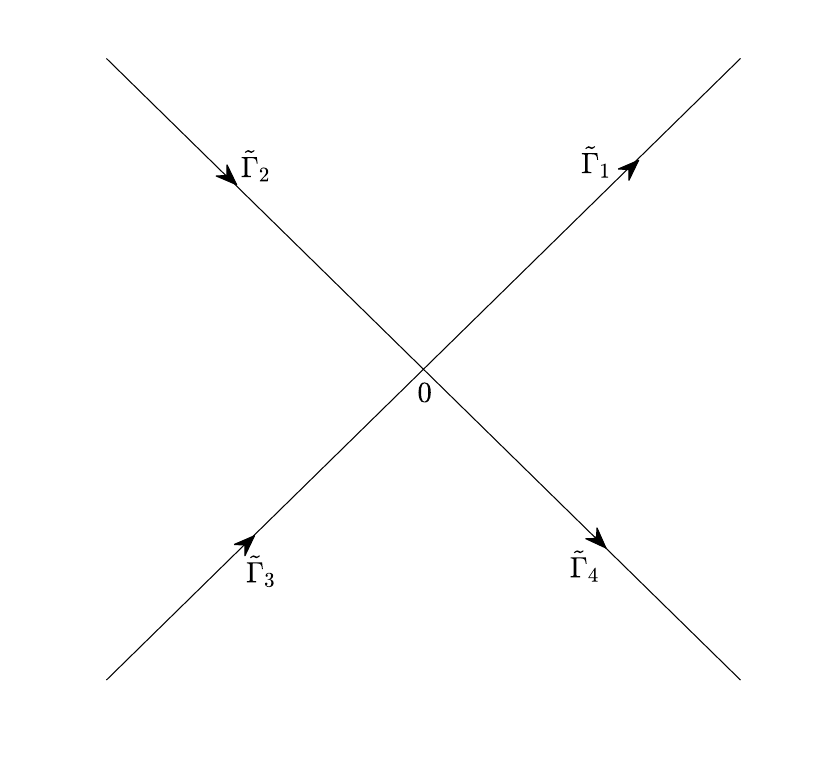}
  \caption{The oriented contour $\tilde{\Gamma}^{(k_0)}$.}\label{fig5}
\end{figure}
Let $\tilde{\Gamma}^{(k_0)}$ denote the contour $\{z\in\bfC|z=\alpha\e^{\pm\frac{\ii\pi}{4}},-\infty<\alpha<\infty\}$ centered at original point and oriented to the origin as shown in Figure \ref{fig5}.
We next let $J^{(X)}(\rho(k_0);z)=(\mathbb{I}_{4\times4}-w_-^{(X)})^{-1}(\mathbb{I}_{4\times4}+w_+^{(X)})$, where
\begin{align}
w^{(X)}=&w^{(X)}_+=\left\{
\begin{aligned}
&\begin{pmatrix}
\mathbf{0}_{2\times2} & \left(\mathbb{I}_{2\times2}+\rho^\dag(k_0) \rho(k_0)\right)^{-1}\rho^\dag(k_0)(\delta_{k_0}^0)^2T^2(k_0)z^{-2\ii\nu(k_0)}\e^{\frac{\ii z^2}{2}}\\
\mathbf{0}_{2\times2}& \mathbf{0}_{2\times2}
\end{pmatrix},\ z\in \tilde{\Gamma}_1,\\
&\begin{pmatrix}
\mathbf{0}_{2\times2} & \rho^\dag(k_0)(\delta_{k_0}^0)^2T^2(k_0)z^{-2\ii\nu(k_0)}\e^{\frac{\ii z^2}{2}}\\
 \mathbf{0}_{2\times2}& \mathbf{0}_{2\times2}
\end{pmatrix},\ z\in\tilde{\Gamma}_3,
\end{aligned}
\right.\label{WX+}\\
w^{(X)}=&w^{(X)}_-=\left\{
\begin{aligned}
&\begin{pmatrix}
\mathbf{0}_{2\times2} & \mathbf{0}_{2\times2}\\
\rho(k_0)(\delta_{k_0}^0)^{-2}T^{-2}(k_0)z^{2\ii\nu(k_0)}\e^{-\frac{\ii z^2}{2}} & \mathbf{0}_{2\times2}
\end{pmatrix},\, z\in\tilde{\Gamma}_2,\\
&\begin{pmatrix}
\mathbf{0}_{2\times2} & \mathbf{0}_{2\times2}\\
\rho(k_0)\left(\mathbb{I}_{2\times2}+\rho^\dag(k_0) \rho(k_0)\right)^{-1}(\delta_{k_0}^0)^{-2}T^{-2}(k_0)z^{2\ii\nu(k_0)}\e^{-\frac{\ii z^2}{2}} & \mathbf{0}_{2\times2}
\end{pmatrix},\ z\in\tilde{\Gamma}_4.
\end{aligned}
\right.\label{WX-}
\end{align}
Then, we have the following estimates on the rate of convergence.
\begin{proposition}\label{prop3.3}
For $z\in\{z\in\bfC|z=\sqrt{k^{-1}_0t}\alpha\e^{\frac{\pi\ii}{4}},
-\epsilon\leq\alpha\leq\epsilon\}$, as $t\to\infty$, we have the following estimations
\begin{align}
\left|[\delta_{k_0}^1(z)]^2-z^{-2\ii\nu(k_0)}\e^{\frac{\ii z^2}{2}}\right|&\leq c\frac{\ln t}{\sqrt{t}},\label{3.92}\\
\left|\left(\mathcal{N}\left(\delta_{2}-\delta \mathbb{I}_{2\times2}\right)\e^{-2\ii t\theta}\right)(k)\right|&\leq ct^{-1},\label{3.93}\\
\left|\left(\mathcal{N}(\delta_1-\delta\mathbb{I}_{2\times2})\e^{-2\ii t\theta}\right)(k)\right|&\leq ct^{-1}.\label{3.94}
\end{align}
\end{proposition}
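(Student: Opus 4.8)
The plan is to establish all three bounds by direct manipulation of the closed forms of $\delta,\delta_{k_0}^0,\delta_{k_0}^1$ and of the Riemann--Hilbert characterisations of $\delta_1,\delta_2$, exploiting throughout the Gaussian decay that the oscillatory exponential supplies along the contour rays $\arg z=\pi/4$.

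For \eqref{3.92} I would first split off the model factor. With $k=k_0+z/\sqrt{k_0^{-3}t}$, squaring the definition of $\delta_{k_0}^1(z)$ gives
\begin{equation*}
[\delta_{k_0}^1(z)]^2=z^{-2\ii\nu(k_0)}\e^{\frac{\ii z^2}{2}}\cdot\e^{-\frac{\ii z^3}{2s^4k_0^{-9/2}t^{1/2}}}\cdot\Big(1+\tfrac{k-k_0}{2k_0}\Big)^{2\ii\nu(k_0)}\cdot\e^{2[\chi_{k_0}(k)-\chi_{k_0}(k_0)]},
\end{equation*}
so it suffices to control $z^{-2\ii\nu(k_0)}\e^{\ii z^2/2}$ times the deviation from $1$ of the product of the last three factors. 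On the ray $\arg z=\pi/4$ one has $\e^{\ii z^2/2}=\e^{-|z|^2/2}$, so $|z^{-2\ii\nu(k_0)}\e^{\ii z^2/2}z^{m}|$ is bounded uniformly for each fixed $m$, which lets me absorb the polynomial $z$-growth produced by a given factor into the Gaussian. Then, using $|\e^{w}-1|\le|w|\e^{|w|}$ together with $|z|\lesssim\epsilon\sqrt{k_0^{-1}t}$ and $|s|\sim k_0$: the cubic phase correction contributes $\lesssim|z|^{3}\e^{-|z|^2/2}t^{-1/2}\lesssim t^{-1/2}$; the factor $\big(1+\tfrac{k-k_0}{2k_0}\big)^{2\ii\nu(k_0)}-1$ contributes $\lesssim|k-k_0|\lesssim|z|t^{-1/2}\lesssim t^{-1/2}$; and, since $\rho\in\mathcal S(\bfR)$, the integrand in \eqref{3.85} is smooth and vanishes at $s=k_0$, so that Cauchy-type integral has only a logarithmic endpoint singularity, whence $|\chi_{k_0}(k)-\chi_{k_0}(k_0)|\lesssim|k-k_0|\big(1+|\ln|k-k_0||\big)\lesssim|z|t^{-1/2}\ln t$, contributing $\lesssim t^{-1/2}\ln t$. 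For $|z|\gtrsim\sqrt{\ln t}$ both $[\delta_{k_0}^1(z)]^2$ and $z^{-2\ii\nu(k_0)}\e^{\ii z^2/2}$ are $O(\e^{-|z|^2/2})=O(t^{-1})$, so there the bound is immediate; assembling the pieces gives \eqref{3.92}, with the $\ln t$ produced solely by the $\chi_{k_0}$ term.

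For \eqref{3.93}--\eqref{3.94}, which is precisely the point~(III) contribution and the technically delicate step, I would compare $\delta_1,\delta_2$ with the scalar $\delta=\det\delta_j$ through their Riemann--Hilbert problems: by the vanishing-lemma argument $\delta_j,\delta^{\pm1}$ are uniformly bounded, and $\delta_j(k)\delta^{-1}(k)$ solves a Riemann--Hilbert problem whose jump on $|k|>k_0$ is $(\mathbb I_{2\times2}+\rho^\dag\rho)/\det(\mathbb I_{2\times2}+\rho^\dag\rho)$ (respectively with $\rho\rho^\dag$), with the symmetries \eqref{3.5} constraining its behaviour at the endpoint $k_0$. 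On the rescaled contour near $k_0$ the oscillatory factor obeys $|\e^{-2\ii t\theta(k)}|\lesssim\e^{-ct|k-k_0|^2}$, confining the effective support to $|k-k_0|\lesssim t^{-1/2}$; one then combines this Gaussian localisation with the precise local structure of $\delta_j-\delta\,\mathbb I_{2\times2}$ near $k_0$ obtained from the comparison above — and with the already-established \eqref{3.92} for the scalar term appearing in the decomposition \eqref{3.86} — to reach the $O(t^{-1})$ bound, the portion of the contour at distance $\gtrsim t^{-1/2}$ from $k_0$ being handled by the exponential decay alone.

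I expect \eqref{3.93}--\eqref{3.94} to be the real obstacle. Because $\delta_{2-}$ (and, near $k_0$, the matrix functions $\delta_j$ themselves) is only continuous — not more regular, and not the boundary value of an analytic extension — one cannot simply estimate $\delta_j-\delta\,\mathbb I_{2\times2}$ by a crude pointwise bound and still reach the order $t^{-1}$, nor can one directly perform an analytic decomposition of the relevant matrix combination; the decay must be squeezed out of the interplay between the local structure of $\delta_j$ at $k_0$, the symmetry relations, and the weight $\e^{-2\ii t\theta}$ after the $t$-dependent rescaling. This is exactly the gap in \cite{GLL-JDE} that the present argument is meant to fill. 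By contrast \eqref{3.92} is a routine, if somewhat lengthy, Taylor-type estimate once the model factor has been split off.
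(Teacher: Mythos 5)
Your argument for \eqref{3.92} is essentially the paper's own: split $[\delta_{k_0}^1(z)]^2$ into the cubic-phase factor, the factor $\bigl(2k_0/(z/\sqrt{k_0^{-3}t}+2k_0)\bigr)^{-2\ii\nu(k_0)}$ and the $\chi_{k_0}$ factor, estimate each deviation from $1$, let a reserved fraction of the Gaussian $\e^{\ii z^2/2}$ absorb the powers of $z$ (the paper does this with its parameter $\beta$), and let the modulus of continuity of $\chi_{k_0}$ produce the $\ln t$. The only point you gloss over is that the bound $|\e^{w}-1|\le|w|\e^{|w|}$ for the cubic correction is only usable after checking $\mathrm{Re}\bigl(1-z/(s^4k_0^{-9/2}t^{1/2})\bigr)>0$ on the whole segment (this is where the smallness of $\epsilon$ enters in the paper); that is a repairable detail.

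For \eqref{3.93}--\eqref{3.94}, however, there is a genuine gap: the mechanism you propose cannot produce the bound. The functions $\delta_1,\delta_2,\delta$ are independent of $t$, and on the inner part of the ray, $|k-k_0|\lesssim\sqrt{\ln t/t}$ (i.e. $|z|\lesssim\sqrt{\ln t}$), the weight satisfies $|\e^{-2\ii t\theta(k)}|=\e^{-|z|^2/2+o(1)}=O(1)$, so your scheme would require $|\delta_2(k)-\delta(k)\mathbb{I}_{2\times2}|\lesssim t^{-1}$ there. But $\delta_2-\delta\,\mathbb{I}_{2\times2}$ does not even vanish as $k\to k_0$: near the endpoint the matrix $\delta_2$ behaves like bounded oscillatory powers $(k-k_0)^{-\ii\nu_1},(k-k_0)^{-\ii\nu_2}$ attached to the two eigenvalues of $\mathbb{I}_{2\times2}+\rho(k_0)\rho^\dag(k_0)$, while $\delta=\det\delta_2$ carries the exponent $-\ii(\nu_1+\nu_2)$; unless $\rho(k_0)\rho^\dag(k_0)$ is a multiple of the identity these differ, and the difference stays of order one (for instance, if $\rho\rho^\dag$ is diagonal with distinct entries then $\delta_2-\delta\,\mathbb{I}_{2\times2}=\mathrm{diag}\bigl(\delta^{(a)}(1-\delta^{(b)}),\,\delta^{(b)}(1-\delta^{(a)})\bigr)$ and $1-\delta^{(b)}(k)$ oscillates without decaying as $k\to k_0$). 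So no combination of ``local structure of $\delta_j$ at $k_0$'', the symmetries \eqref{3.5} and the Gaussian weight yields $O(t^{-1})$; the decay in \eqref{3.93}--\eqref{3.94} is not pointwise decay of the two factors at all.

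The paper's actual route is different in kind: it observes that $\tilde\delta_1=(\delta_2-\delta\,\mathbb{I}_{2\times2})\e^{-2\ii t\theta}$ solves an additive RH problem with scalar jump $\det(\mathbb{I}_{2\times2}+\rho\rho^\dag)$ and forcing $f(k)\e^{-2\ii t\theta(k)}$, $f$ as in \eqref{A.12}, and writes it by the Plemelj formula as a Cauchy integral over $\bfR\setminus[-k_0,k_0]$. The $t^{-1}$ then comes from the oscillation of $\e^{-2\ii t\theta(s)}$ on the real line, extracted by splitting the integral at distance $1/t$ from $\pm k_0$ and applying the Deift--Zhou analytic decomposition to the $t$-independent density $\tilde f=\rho\rho^\dag-(\mathrm{tr}[\rho\rho^\dag]+\det[\rho\rho^\dag])\mathbb{I}_{2\times2}$ rather than to $f$ itself (this is exactly how the continuity-only nature of $\delta_{2-}$ is circumvented), with the four resulting pieces $B_1,\dots,B_4$ estimated as in \cite{GLL-JDE}. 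Your proposal contains neither this integral representation nor any substitute source of oscillatory decay, so \eqref{3.93}--\eqref{3.94} remain unproven in your scheme even though you correctly identified that a direct analytic decomposition of $f$ is unavailable.
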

\begin{proof}
See the Appendix \ref{secA}.
\end{proof}
Introduce the Cauchy operator $\mathcal{C}_\pm$ on $\Gamma$ as follows
\be
(\mathcal{C}_\pm f)(k)=\int_{\Gamma}\frac{f(s)}{s-k_\pm}\frac{\dd s}{2\pi\ii},\quad k\in\Gamma,~f\in L^2(\Gamma).
\ee
Define the operator $\mathcal{C}_{w}f\doteq\mathcal{C}_+(fw_-)+\mathcal{C}_-(fw_+)$.
\begin{lemma}\label{lem2.6}
As $t\rightarrow\infty$, $(1-\mathcal{C}_{w^{(X)}})^{-1}:L^2(\tilde{\Gamma}^{(k_0)})\rightarrow L^2(\tilde{\Gamma}^{(k_0)})$ exists and is uniformly bounded:
\berr
\left\|(1-\mathcal{C}_{w^{X}})^{-1}\right\|_{L^2(\tilde{\Gamma}^{(k_0)})}\leq c,
\eerr
and hence,
\berr
\left\|(1-\mathcal{C}_{\tilde{w}^{(k_0)}})^{-1}\right\|_{L^2(\tilde{\Gamma}^{(k_0)})}\leq c.
\eerr
\end{lemma}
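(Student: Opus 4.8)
The plan is to read off the invertibility of $1-\mathcal{C}_{w^{(X)}}$ from the Beals--Coifman correspondence between a Cauchy-type resolvent and the solvability of the associated parabolic-cylinder model RH problem, and then to transfer the conclusion to $1-\mathcal{C}_{\tilde{w}^{(k_0)}}$ via a small-norm perturbation driven by Proposition \ref{prop3.3}.

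First I would record the contour-theoretic facts. The contour $\tilde{\Gamma}^{(k_0)}$ is a finite union of straight rays meeting transversally at the origin, hence the Cauchy projections $\mathcal{C}_\pm$ are bounded on $L^2(\tilde{\Gamma}^{(k_0)})$; consequently, for any weight $w=w_++w_-$ with the strictly block-triangular structure of \eqref{WX+}--\eqref{WX-} one has $\|\mathcal{C}_w\|_{L^2\to L^2}\le c\,\|w\|_{L^\infty(\tilde{\Gamma}^{(k_0)})}$, and $w\mapsto\mathcal{C}_w$ is linear on such weights. Next I would check that $w^{(X)}\in L^2(\tilde{\Gamma}^{(k_0)})\cap L^\infty(\tilde{\Gamma}^{(k_0)})$ with bounds independent of $t$: on each ray $z=\alpha\e^{\pm\pi\ii/4}$ the factor $\e^{\pm\ii z^2/2}$ equals $\e^{-\alpha^2/2}$, which supplies both the $L^2$ integrability and an $L^\infty$ bound, while $z^{\mp2\ii\nu(k_0)}$, $T^{\pm2}(k_0)$, $(\mathbb{I}_{2\times2}+\rho^\dag(k_0)\rho(k_0))^{-1}$, $\rho(k_0)$ and $(\delta_{k_0}^0)^{\pm2}$ are all bounded in modulus; the only residual $t$-dependence of $w^{(X)}$ sits in the unimodular scalar $\e^{\mp\ii t/k_0}$ hidden in $(\delta_{k_0}^0)^{\pm2}$, which does not affect these norms. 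Hence $\mathcal{C}_{w^{(X)}}$ is bounded on $L^2(\tilde{\Gamma}^{(k_0)})$ with a $t$-independent bound, not necessarily smaller than $1$.

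For the invertibility of $1-\mathcal{C}_{w^{(X)}}$ I would proceed as follows. The associated model problem---find $X(z)$ analytic off $\tilde{\Gamma}^{(k_0)}$ with jump $J^{(X)}(\rho(k_0);z)=(\mathbb{I}_{4\times4}-w_-^{(X)})^{-1}(\mathbb{I}_{4\times4}+w_+^{(X)})$ and $X\to\mathbb{I}_{4\times4}$ at infinity---is exactly the $4\times4$ parabolic-cylinder model that is solved in closed form later in the paper (highlight (IV)), so a solution $X(z)$ exists explicitly; moreover $\det J^{(X)}\equiv1$ and $J^{(X)}$ can be arranged to satisfy the Schwarz-invariance hypothesis of Zhou's vanishing lemma \cite{ZX1989} on the reflection-symmetric contour $\tilde{\Gamma}^{(k_0)}$, so $X$ is unique, $X,X^{-1}\in L^\infty$, and $X(\cdot)-\mathbb{I}_{4\times4}\in L^2(\tilde{\Gamma}^{(k_0)})$. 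By the Beals--Coifman theory \cite{BC}, the existence and uniqueness of such an $X$ is equivalent to the invertibility of $1-\mathcal{C}_{w^{(X)}}$ on $L^2(\tilde{\Gamma}^{(k_0)})$, with $X=\mathbb{I}_{4\times4}+\mathcal{C}\big((1-\mathcal{C}_{w^{(X)}})^{-1}\mathbb{I}_{4\times4}\cdot w^{(X)}\big)$; and since $\rho(k_0),\nu(k_0),T(k_0),|(\delta_{k_0}^0)^2|$ are $t$-independent while the leftover phase $\e^{\mp\ii t/k_0}$ ranges over the compact unit circle, $\|(1-\mathcal{C}_{w^{(X)}})^{-1}\|_{L^2(\tilde{\Gamma}^{(k_0)})}\le c$ uniformly in $t$. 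This step---securing the resolvent bound when $\|\mathcal{C}_{w^{(X)}}\|$ is only $O(1)$ rather than small---is the crux of the lemma, and everything rests on the explicit solvability of the $4\times4$ parabolic-cylinder model together with the vanishing lemma.

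Finally, for the claim about $1-\mathcal{C}_{\tilde{w}^{(k_0)}}$ I would use $\tilde{w}^{(k_0)}=\mathcal{N}\hat{w}^{(k_0)}$ and compare it to $w^{(X)}$ in $L^\infty\cap L^2$. The difference has three sources: (i) replacing the $2\times2$ matrices $\delta_1(k),\delta_2(k)$ by $\delta(k)\mathbb{I}_{2\times2}$, controlled by \eqref{3.93}--\eqref{3.94} at order $O(t^{-1})$; (ii) replacing $[\delta_{k_0}^1(z)]^2$ by $z^{-2\ii\nu(k_0)}\e^{\ii z^2/2}$, controlled by \eqref{3.92} at order $O(t^{-1/2}\ln t)$; and (iii) truncation, since $\tilde{w}^{(k_0)}$ is supported only on the rescaled $\Gamma^{\epsilon1}$ whereas $w^{(X)}$ lives on the full rays, the extra tails lying where $|z|\gtrsim\epsilon\sqrt{k_0^{-1}t}$ and hence where $w^{(X)}$ is exponentially small by the Gaussian factor. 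Summing, $\|\tilde{w}^{(k_0)}-w^{(X)}\|_{L^\infty(\tilde{\Gamma}^{(k_0)})\cap L^2(\tilde{\Gamma}^{(k_0)})}\le c\,t^{-1/2}\ln t$, so by linearity $\|\mathcal{C}_{\tilde{w}^{(k_0)}}-\mathcal{C}_{w^{(X)}}\|_{L^2\to L^2}=\|\mathcal{C}_{\tilde{w}^{(k_0)}-w^{(X)}}\|_{L^2\to L^2}\le c\,t^{-1/2}\ln t\to0$. For $t$ large this is $\le\tfrac12\|(1-\mathcal{C}_{w^{(X)}})^{-1}\|^{-1}$, and the second resolvent identity gives that $1-\mathcal{C}_{\tilde{w}^{(k_0)}}$ is invertible with $\|(1-\mathcal{C}_{\tilde{w}^{(k_0)}})^{-1}\|_{L^2(\tilde{\Gamma}^{(k_0)})}\le2\|(1-\mathcal{C}_{w^{(X)}})^{-1}\|\le c$, which completes the argument.
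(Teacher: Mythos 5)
Your proposal is correct and takes essentially the same route as the paper, which at this point simply cites \cite{CL2021,PD}: there the uniform bound on $(1-\mathcal{C}_{w^{(X)}})^{-1}$ is obtained exactly as you argue, from the explicit parabolic-cylinder solvability of the model problem (with $X^{\pm1}$ bounded) via the Beals--Coifman resolvent construction, and the bound for $(1-\mathcal{C}_{\tilde{w}^{(k_0)}})^{-1}$ then follows by the same small-norm perturbation you perform using Proposition \ref{prop3.3} and the Gaussian decay of $w^{(X)}$ on the truncated tails. Note only that for the operator-norm comparison $\|\mathcal{C}_{\tilde{w}^{(k_0)}-w^{(X)}}\|_{L^2\to L^2}\leq c\,\|\tilde{w}^{(k_0)}-w^{(X)}\|_{L^\infty}$ the $L^\infty$ estimate suffices, so your additional $L^2$ closeness claim is not needed (and, as stated, is slightly optimistic on the growing segment unless the residual Gaussian factor is retained).
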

\begin{proof}
See \cite{CL2021,PD} and references therein.
\end{proof}
A simple change of variables argument shows that
$\mathcal{C}_{\hat{w}^{(k_0)}}=\mathcal{N}^{-1}\mathcal{C}_{\tilde{w}^{(k_0)}}\mathcal{N}.$ Then, we can get
\begin{align}\label{3.99}
&\int_{\Gamma^{\epsilon1}}\left(((1-\mathcal{C}_{w^{(k_0)}})^{-1}\mathbb{I}_{4\times4})
w^{(k_0)}\right)(s)\dd s=\int_{\hat{\Gamma}^{\epsilon1}}\left(((1-\mathcal{C}_{\hat{w}^{(k_0)}})^{-1}
\mathbb{I}_{4\times4})\hat{w}^{(k_0)}\right)(s)\dd s\\
&=\int_{\hat{\Gamma}^{\epsilon1}}(\mathcal{N}^{-1}(1-\mathcal{C}_{\tilde{w}^{(k_0)}})^{-1}
\mathcal{N}\mathbb{I}_{4\times4})(s)
\hat{w}^{(k_0)}(s)\dd s\nn\\
&=\int_{\hat{\Gamma}^{\epsilon1}}((1-\mathcal{C}_{\tilde{w}^{(k_0)}})^{-1}
\mathbb{I}_{4\times4})\left((s-k_0)\sqrt{k_0^{-3}t}\right)
(\mathcal{N}\hat{w}^{(k_0)})\left((s-k_0)\sqrt{k^{-3}_0t}\right)\dd s\nn\\
&=\frac{1}{\sqrt{k_0^{-3}t}}
\int_{\tilde{\Gamma}^{(k_0)}}\left(((1-\mathcal{C}_{\tilde{w}^{(k_0)}})^{-1}
\mathbb{I}_{4\times4})\tilde{w}^{(k_0)}\right)(s)\dd s\nn\\
&=\frac{1}{\sqrt{k_0^{-3}t}}\int_{\tilde{\Gamma}^{(k_0)}}
\left(((1-\mathcal{C}_{w^{(X)}})^{-1}\mathbb{I}_{4\times4})w^{(X)}\right)(s)\dd s+O\left(\frac{\ln t}{t}\right).\nn
\end{align}
For $z\in\bfC\setminus\tilde{\Gamma}^{(k_0)}$, let
\be\label{3.100}
\mu^{(X)}(\rho(k_0);z)=\mathbb{I}_{4\times4}
+\frac{1}{2\pi\ii}\int_{\tilde{\Gamma}^{(k_0)}}
\frac{\left(((1-\mathcal{C}_{w^{(X)}})^{-1}\mathbb{I}_{4\times4})w^{(X)}\right)(s)}{s-z}
\dd s,
\ee
then $\mu^{(X)}(\rho(k_0);z)$ solves the following RH problem:
\begin{rhp}\label{rhp3.7}
Find a $4\times4$ matrix-valued function $\mu^{(X)}(\rho(k_0);z)$ with the following properties:
\begin{itemize}
\item Analyticity: $\mu^{(X)}(\rho(k_0);z)$ is analytic for $z\in\bfC\setminus\tilde{\Gamma}^{(k_0)}$ and continuous on $\tilde{\Gamma}^{(k_0)}$.

\item Jump condition: The continuous boundary values $\mu^{(X)}_\pm(\rho(k_0);z)$ satisfy the following jump relation
\bea
\mu^{(X)}_+(\rho(k_0);z)=\mu^{(X)}_-(\rho(k_0);z)J^{(X)}(\rho(k_0);z),\quad z\in\tilde{\Gamma}^{(k_0)}.
\eea

\item Normalization: $\mu^{(X)}(\rho(k_0);z)\rightarrow\mathbb{I}_{4\times4},$ as $z\rightarrow\infty.$
\end{itemize}
\end{rhp}
The solution of the RH problem for $\mu^{(X)}$ can be expressed based on the PC model, see Appendix \ref{secB}, that is,
\begin{align}
\mu^{(X)}(\rho(k_0);z)=\left(\delta_{k_0}^0T(k_0)\right)^{\Sigma_3}
\mu^{(PC)}\left(\rho(k_0);z\right)\left(\delta_{k_0}^0T(k_0)\right)^{-\Sigma_3}.
\end{align}
Then, in the large $z$ expansion
\be
\mu^{(X)}(\rho(k_0);z)=\mathbb{I}_{4\times4}+\frac{\mu_1^{(X)}}{z}+O(z^{-2}),\quad z\rightarrow\infty,
\ee
then it follows from \eqref{3.100} and \eqref{B.4} that
\begin{align}
\mu_1^{(X)}=&-\frac{1}{2\pi\ii}\int_{\tilde{\Gamma}^{(k_0)}}
\left(((1-\mathcal{C}_{w^{(X)}})^{-1}\mathbb{I}_{4\times4})w^{(X)}\right)(s)\dd s\\
=&\left(\delta_{k_0}^0T(k_0)\right)^{\Sigma_3}
\mu_1^{(PC)}(\rho(k_0))\left(\delta_{k_0}^0T(k_0)\right)^{-\Sigma_3}.\nn
\end{align}
Thus, with $z=\sqrt{k_0^{-3}t}(k-k_0)$, by \eqref{3.99}, the solution of $\mu^{(k_0)}(\zeta,t;k)$ to RH problem \ref{rhp3.5} admits the following expansion:
\be\label{3.105}
\mu^{(k_0)}(\zeta,t;k)=\mathbb{I}_{4\times4}
+\frac{\mu^{(k_0)}_1(\zeta,t)}{\sqrt{k_0^{-3}t}(k-k_0)}+O\left(\frac{\ln t}{t}\right),\quad t\rightarrow\infty,
\ee
with
\begin{align}
\mu^{(k_0)}_1(\zeta,t)=&\begin{pmatrix}\mathbf{0}_{2\times2}&
\ii(\delta_{k_0}^0)^2T^2(k_0)\beta^{(k_0)}\\ -\ii(\delta_{k_0}^0)^{-2}T^{-2}(k_0)\left(\beta^{(k_0)}\right)^\dag
&\mathbf{0}_{2\times2}\end{pmatrix},\label{muk0}\\
\beta^{(k_0)}=&\frac{
\sqrt{2\pi}\e^{\frac{3\ii\pi}{4}-\frac{\pi\nu(k_0)}{2}}}
{\Gamma(\ii\nu(k_0))\det[\rho(k_0)]}\begin{pmatrix}
\rho_{22}(k_0) & -\rho_{12}(k_0)\\ -\rho_{21}(k_0) & \rho_{11}(k_0)
\end{pmatrix}.\label{betak0}
\end{align}

For the RH problem \ref{rhp3.6}, proceeding the calculation in the same way, we conclude that the solution $\mu^{(-k_0)}(\zeta,t;k)$ satisfies the following asymptotic behavior
\be\label{3.108}
\mu^{(-k_0)}(\zeta,t;k)
=\mathbb{I}_{4\times4}+\frac{\mu_{1}^{(-k_0)}(\zeta,t)}{\sqrt{k_0^{-3}t}(k+k_0)}+O\left(\frac{\ln t}{t}\right),
\ee
where
\begin{align}
\mu^{(-k_0)}_1(\zeta,t)=&\begin{pmatrix}\mathbf{0}_{2\times2}&
-\ii(\delta_{-k_0}^0)^2T^2(-k_0)\beta^{(-k_0)}\\ \ii(\delta_{-k_0}^0)^{-2}T^{-2}(-k_0)\left(\beta^{(-k_0)}\right)^\dag
&\mathbf{0}_{2\times2}\end{pmatrix},\label{mu-k0}\\
\beta^{(-k_0)}&=\frac{
\sqrt{2\pi}\e^{\frac{\ii\pi}{4}-\frac{\pi\nu(k_0)}{2}}}
{\Gamma(-\ii\nu(k_0))\det[\rho(-k_0)]}\begin{pmatrix}
\rho_{22}(-k_0) & -\rho_{12}(-k_0)\\ -\rho_{21}(-k_0) & \rho_{11}(-k_0)
\end{pmatrix},\label{beta-k0}\\
\delta_{-k_0}^0&=\e^{\chi_{k_0}(-k_0)+\frac{t}{2\ii k_0}}(4k_0^{-1}t)^{-\frac{\ii\nu(k_0)}{2}}.\label{3.117}
\end{align}
\subsubsection{Small norm RH problem for error function}\label{sec3.4.3}
Now, we consider the error function $E(k)$. Assume the boundaries of $U_{\pm k_0}$ are oriented counterclockwise. Denote $$\Gamma^{(E)}=\partial U_{\pm k_0}\cup(\Gamma\setminus U_{\pm k_0}).$$ From the definition \eqref{3.60}, we know that $E(k)$ satisfies the following $4\times4$ matrix RH problem.

\begin{rhp}\label{rhp3.8}
Find a $4\times4$ matrix-valued function $E(k)$ with the following properties:
\begin{itemize}
\item Analyticity: $E(k)$ is analytic in $\bfC\setminus\Gamma^{(E)}$.

\item Jump condition: The continuous boundary values $E_\pm(k)$ on $\Gamma^{(E)}$ satisfy the following jump relation
\be
E_+(k)=E_-(k)J^{(E)}(k),
\ee
where the jump matrix is given by
\be
J^{(E)}(k)=\left\{\begin{aligned}
&\mu^{(out)}(\zeta,t;k)J^{(2)}(\zeta,t;k)[\mu^{(out)}(\zeta,t;k)]^{-1},\qquad k\in\Gamma\setminus U_{\pm k_0},\\
&\mu^{(out)}(\zeta,t;k)\mu^{(\pm k_0)}(\zeta,t;k)[\mu^{(out)}(\zeta,t;k)]^{-1},\quad k\in\partial U_{\pm k_0}.
\end{aligned}
\right.
\ee
\item Normalization: $E(k)\rightarrow\mathbb{I}_{4\times4},$ as $k\rightarrow\infty.$
\end{itemize}
\end{rhp}
By Proposition \ref{pro2.5} and \ref{prop3.1}, we have the estimate
\be\label{3.112}
\left|J^{(E)}(k)-\mathbb{I}_{4\times4}\right|\leq c\e^{-\frac{\sqrt{2}t|k\mp k_0|}{4}(k_0^{-2}-|k|^{-2})}\leq c\e^{-\frac{\sqrt{2}t\epsilon}{16k_0}},\ \text{for}\ k\in\Gamma\setminus U_{\pm k_0}.
\ee
For $k\in\partial U_{\pm k_0}$, by \eqref{3.105} and \eqref{3.108}, one can get
\be\label{3.113}
\left|J^{(E)}(k)-\mathbb{I}_{4\times4}\right|=\left|\mu^{(out)}(\zeta,t;k)(\mu^{(\pm k_0)}(\zeta,t;k)-\mathbb{I}_{4\times4})[\mu^{(out)}(\zeta,t;k)]^{-1}\right|\leq ct^{-1/2}.
\ee
The existence and uniqueness of solution to RH problem \ref{rhp3.8} follows from the theory of small-norm RH problems. In fact, let $w^{(E)}=J^{(E)}-\mathbb{I}_{4\times4}$ and $\mathcal{C}_{w^{(E)}}f\doteq\mathcal{C}_-(fw^{(E)})$, where we have chosen, for simplicity, $w^{(E)}_+=w^{(E)}$ and $w^{(E)}_-=\mathbf{0}_{4\times4}$. Then the solution of the RH problem \ref{rhp3.8} can be given by
\be\label{3.116}
E(k)=\mathbb{I}_{4\times4}+
\frac{1}{2\pi\ii}\int_{\Gamma^{(E)}}\frac{(\mu^{(E)}w^{(E)})(s)}{s-k}\dd s,
\ee
where the $4\times4$ matrix-valued function $\mu^{(E)}(x,t;k)$ defined by $\mu^{(E)}=\mathbb{I}_{4\times4}+\mathcal{C}_{w^{(E)}}\mu^{(E)}.$
By \eqref{3.112} and \eqref{3.113}, we find
\be
\|\mathcal{C}_{w^{(E)}}\|_{\mathcal{B}(L^2(\Gamma^{(E)}))}\leq c\|w^{(E)}\|_{L^\infty(\Gamma^{(E)})}\leq ct^{-1/2},
\ee
where $\mathcal{B}(L^2(\Gamma^{(E)}))$ denotes the bounded linear operators $L^2(\Gamma^{(E)})\rightarrow L^2(\Gamma^{(E)})$. Hence, the resolvent operator $(1-\mathcal{C}_{w^{(E)}})^{-1}$ is existent and thus of both $\mu^{(E)}$ and $E$.
Moreover, using the Neumann series, the function $\mu^{(E)}(k)$ satisfies
\be
\|\mu^{(E)}(k)-\mathbb{I}_{4\times4}\|_{L^2(\Gamma^{(E)})}=O(t^{-1/2}),\quad t\rightarrow\infty.
\ee
Now, it can be explained that the definition of $\mu^{(RHP)}$ in \eqref{3.60} is reasonable.

In order to reconstruct the solution of system \eqref{ccSPE}, we need the asymptotic behavior of $E(k)$ as $k\to0$. It follows from \eqref{3.116} that, as $k\to0$
\begin{align}
E(k)=E(0)+E_1k+O(k^2),
\end{align}
where
\begin{align}
E(0)=&\mathbb{I}_{4\times4}+
\frac{1}{2\pi\ii}\int_{\Gamma^{(E)}}\frac{(\mu^{(E)}w^{(E)})(s)}{s}\dd s,\\
E_1=&
\frac{1}{2\pi\ii}\int_{\Gamma^{(E)}}\frac{(\mu^{(E)}w^{(E)})(s)}{s^2}\dd s.
\end{align}
Then, the large time asymptotic behavior of $E(0)$ and $E_1$ can be derived as
\begin{align}\label{3.126}
E(0)-\mathbb{I}_{4\times4}=&
\frac{1}{2\pi\ii}\int_{\Gamma^{(E)}}\frac{(\mu^{(E)}w^{(E)})(s)}{s}\dd s\\
=&\frac{1}{2\pi\ii}\int_{\Gamma^{(E)}}\frac{(\mu^{(E)}(s)-\mathbb{I}_{4\times4})w^{(E)}(s)}{s}\dd s+\frac{1}{2\pi\ii}\int_{\Gamma^{(E)}}\frac{w^{(E)}(s)}{s}\dd s\nn\\
=&\frac{1}{2\pi\ii}\int_{\partial U_{\pm k_0}}\frac{w^{(E)}(s)}{s}\dd s+O(\|\mu^{(E)}-\mathbb{I}_{4\times4}\|_{L^2(\Gamma^{(E)})}\|w^{(E)}\|_{L^\infty(\Gamma^{(E)})})\nn\\
&+O(\|s^{-1}\|_{L^2(\Gamma\setminus U_{\pm k_0})}\|w^{(E)}\|_{L^2(\Gamma\setminus U_{\pm k_0})})\nn\\
=&\frac{1}{2\pi\ii}\int_{\partial U_{\pm k_0}}\frac{\mu^{(out)}(s)(\mu^{(\pm k_0)}(s)-\mathbb{I}_{4\times4})[\mu^{(out)}(s)]^{-1}}{s}\dd s+O(t^{-1})\nn\\
=&t^{-1/2}\mathcal{E}_0+O(t^{-1}\ln t),\nn
\end{align}
and
\begin{align}\label{eq3.127}
E_1=
\frac{1}{2\pi\ii}\int_{\Gamma^{(E)}}\frac{(\mu^{(E)}w^{(E)})(s)}{s^2}\dd s=t^{-1/2}\mathcal{E}_1+O(t^{-1}\ln t).
\end{align}
where
\begin{align}
\mathcal{E}_0=&\sqrt{k_0}\left(\mu^{(out)}(\zeta,t;k_0)\mu_1^{( k_0)}(\zeta,t)[\mu^{(out)}(\zeta,t;k_0)]^{-1}\right.\\
&+\left.\mu^{(out)}(\zeta,t;-k_0)\mu_1^{( -k_0)}(\zeta,t)[\mu^{(out)}(\zeta,t;-k_0)]^{-1}\right),\nn\\
\mathcal{E}_1=&\frac{1}{\sqrt{k_0}}\left(\mu^{(out)}(\zeta,t;k_0)\mu_1^{( k_0)}(\zeta,t)[\mu^{(out)}(\zeta,t;k_0)]^{-1}\right.\label{eq3.128e1}\\
&+\left.\mu^{(out)}(\zeta,t;-k_0)\mu_1^{( -k_0)}(\zeta,t)[\mu^{(out)}(\zeta,t;-k_0)]^{-1}\right).\nn
\end{align}
\subsection{Analysis on the $\bar{\partial}$-problem}\label{sec3.5}
Next, we investigate the existence and long-time asymptotics of \(\mu_{3}(\zeta,t;k)\). The associated pure \(\bar{\partial}\)-problem \ref{dbar1} is equivalent to the corresponding integral equation
\begin{equation}\label{3.119}
	\mu^{(3)}(k) = \mathbb{I}_{4\times4} - \frac{1}{\pi}\iint\limits_{\mathbb{C}}\frac{\mu^{(3)}(s)w^{(3)}(s)}{s-k}\text{d}m(s),
\end{equation}
where \(m(s)\) represents the Lebesgue measure on \(\mathbb{C}\). We define $ C_k$ as the left Cauchy-Green integral operator,
\begin{equation}\label{3.120}
	C_k[f](k) = -\frac{1}{\pi}\iint\limits_{\mathbb{C}}\frac{f(s)w^{(3)}(s)}{s-k}\text{d}m(s).
\end{equation}
Thus, the Equation \eqref{3.119} can be rewritten as
\begin{equation}\label{3.121}
	(1-C_k)\mu^{(3)}(k) = \mathbb{I}_{4\times 4}.
\end{equation}
According to formula \eqref{3.121}, the solution $\mu^{(3)}(k)$ exists if and only if the inverse operator \((1-C_k)^{-1}\) exists. Therefore, our goal is to prove that the operator \((1-C_k)\) is invertible. To achieve this, we first present the following proposition.
\begin{proposition}
As $t\to\infty$, the norm of the integral operator $C_k$ decays to zero, specifically,
	\begin{equation}
	\| C_k\|_{L^{\infty}\to L^{\infty}} = O(t^{-1/6}),
	\end{equation}
which implies \((1-C_k)^{-1}\)exists.
\end{proposition}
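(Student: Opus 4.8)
The plan is to bound the operator norm of $C_k$ on $L^{\infty}(\bfC)$, uniformly in $k$, by a single scalar integral that tends to $0$, and then invert $1-C_k$ by a Neumann series. From \eqref{3.120} one has the elementary estimate
\be
\|C_k[f]\|_{L^\infty(\bfC)}\leq\|f\|_{L^\infty(\bfC)}\cdot I(k),\qquad I(k)\doteq\frac{1}{\pi}\iint_{\bfC}\frac{|w^{(3)}(s)|}{|s-k|}\,\dd m(s),
\ee
so it suffices to show $\sup_{k\in\bfC}I(k)=O(t^{-1/6})$. First I would use that $\mu^{(RHP)}(\zeta,t;k)$ and $[\mu^{(RHP)}(\zeta,t;k)]^{-1}$ are uniformly bounded on $\cup_{j=1}^{8}\Omega_j$ (which contains $\mathrm{supp}\,\bar\partial\mathcal R^{(2)}$): by \eqref{3.60} this matrix is a product of the outer model $\mu^{(out)}$ (uniformly bounded by Proposition \ref{prop3.2}), the parabolic--cylinder local models $\mu^{(\pm k_0)}$ (bounded on $U_{\pm k_0}$ by Subsection \ref{sec3.4}), and the error $E$ (with $E=\mathbb{I}_{4\times4}+O(t^{-1/2})$ by the small--norm analysis of Subsection \ref{sec3.4.3}). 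Hence $|w^{(3)}(s)|\leq c\,|\bar\partial\mathcal R^{(2)}(s)|$, and by \eqref{3.51} it remains to estimate, uniformly in $k$ and for each $j$,
\be
I_j(k)=\iint_{\Omega_j}\frac{|\bar\partial R_j(s)|\,\big|\e^{\pm2\ii t\theta(s)}\big|}{|s-k|}\,\dd m(s).
\ee

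Next I would record the phase estimates on each sector. The signature table of Figure \ref{fig1} fixes the sign of $\mathrm{Im}\,\theta$ on $\Omega_j$, and near the stationary point the expansion $\theta(s)=\theta(\pm k_0)+\tfrac12\theta''(\pm k_0)(s\mp k_0)^2+O\!\big((s\mp k_0)^3\big)$ with $\theta''(\pm k_0)=\mp\tfrac{1}{2k_0^3}$ yields a Gaussian bound $\big|\e^{\pm2\ii t\theta(s)}\big|\leq \e^{-c\,t\,k_0^{-3}|s\mp k_0|^{2}}$ on $\Omega_j\cap U_{\pm k_0}$, while on $\Omega_j\setminus U_{\pm k_0}$ one has $\big|\e^{\pm2\ii t\theta(s)}\big|\leq c\,\e^{-c't}$ exactly as in the proof of Proposition \ref{prop3.1}. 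Substituting the bound \eqref{3.34} of Lemma \ref{lem3.1}, I split $I_j(k)\leq c\big(I_j^{(1)}(k)+I_j^{(2)}(k)+I_j^{(3)}(k)\big)+O(\e^{-c't})$, where $I_j^{(1)}, I_j^{(2)}, I_j^{(3)}$ carry the weights $|\rho'(\mathrm{Re}\,s)|$, $|s\mp k_0|^{-1/2}$ and $\bar\partial\chi(s)$ respectively, the contribution of $\Omega_j\setminus U_{\pm k_0}$ being absorbed into the exponential remainder (using the rapid decay of $\rho'$).

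For $I_j^{(1)}$ I would apply Hölder's inequality, using $\rho'\in L^2(\bfR)$ together with the uniform bound $\|(s-k)^{-1}\|_{L^p(\Omega_j\cap U_{\pm k_0})}\leq c$ for a fixed $p\in(1,2)$, and then extract a power of $t$ from the Gaussian factor by the change of variables $s\mp k_0=\sqrt{k_0^{3}/t}\,w$; optimizing the exponents against the one--dimensional integration along the ``good'' direction yields $I_j^{(1)}(k)\leq c\,t^{-1/6}$, uniformly in $k$. The same rescaling shows $|s\mp k_0|^{-1/2}\e^{-c t k_0^{-3}|s\mp k_0|^{2}}$ is integrable against $|s-k|^{-1}$ and produces $I_j^{(2)}(k)\leq c\,t^{-1/4}$, which is absorbed into $t^{-1/6}$. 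For $I_j^{(3)}$, the function $\bar\partial\chi$ is supported in the fixed annuli $\{\varrho/4\leq\mathrm{dist}(s,Z\cup Z^*)\leq 3\varrho/4\}$, on which the relevant exponential is the decaying one — this is precisely what the triangular factorizations \eqref{3.1}, the conjugation by $\tilde\Delta(k)[T(k)]^{-\Sigma_3}$, and the cutoff $\chi$ are designed to ensure — so $|\e^{\pm2\ii t\theta(s)}|\leq c\,\e^{-c't}$ there and $I_j^{(3)}(k)=O(\e^{-c't})$, as in \cite{BJM,CL2021}. Summing over $j=1,\dots,8$ gives $\sup_{k\in\bfC}I(k)\leq c\,t^{-1/6}$, hence $\|C_k\|_{L^\infty\to L^\infty}=O(t^{-1/6})$; in particular, for all $t$ sufficiently large the Neumann series $(1-C_k)^{-1}=\sum_{m\geq0}C_k^{\,m}$ converges in $\mathcal B(L^\infty(\bfC))$, so $(1-C_k)^{-1}$ exists and \eqref{3.121} has the unique solution $\mu^{(3)}$.

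The step I expect to be the main obstacle is the passage from the contour estimates (as in Proposition \ref{prop3.1}) to the full two--dimensional estimate: one must establish the sign and a quantitative lower bound for $|\mathrm{Im}\,\theta|$ on the entire sectors $\Omega_j$ (not merely on the rays $\Gamma_j$), with constants uniform in $\hat\zeta\in(-\infty,-\varepsilon)$, and then balance the Hölder exponents against the Gaussian decay carefully enough to land exactly the stated $t^{-1/6}$ rate — in particular the weight $k_0^{-3}$ must be tracked through the rescaling $s\mp k_0=\sqrt{k_0^{3}/t}\,w$ so that uniformity is not lost as $\hat\zeta\to-\infty$ (equivalently $k_0\to0$).
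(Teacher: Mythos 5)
There is a genuine gap, and it sits exactly where you flagged the ``main obstacle'': your pointwise phase bounds on the two-dimensional sectors are false as stated, and the rest of the argument is built on them. You claim a Gaussian bound $|\e^{\pm2\ii t\theta(s)}|\leq \e^{-ct k_0^{-3}|s\mp k_0|^2}$ on $\Omega_j\cap U_{\pm k_0}$ and a uniform bound $|\e^{\pm2\ii t\theta(s)}|\leq c\e^{-c't}$ on $\Omega_j\setminus U_{\pm k_0}$ ``exactly as in Proposition \ref{prop3.1}''. But Proposition \ref{prop3.1} is a statement about the \emph{contours} $\Gamma_j\setminus U_{\pm k_0}$, which recede from $\bfR$ at $45^\circ$; the \emph{sectors} $\Omega_j$ hug the real axis all the way to infinity, and on $\bfR$ one has $\mathrm{Im}\,\theta=0$. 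Concretely, on $\Omega_1$ one has $|\e^{-2\ii t\theta(s)}|=\exp\bigl(2tv\bigl(\hat\zeta+\tfrac{1}{4(u^2+v^2)}\bigr)\bigr)$ with $s=u+\ii v$, which tends to $1$ as $v\to0$ for any $u>k_0$, so there is no uniform exponential decay off $U_{k_0}$; likewise, near $k_0$ the decay degenerates like $\e^{-c\,t\,v\,|s-k_0|}$ as the argument of $s-k_0$ approaches $0$, not like a Gaussian in $|s-k_0|$ (the quadratic term $\tfrac12\theta''(k_0)(s-k_0)^2$ has $\mathrm{Im}[(s-k_0)^2]\to0$ at the real edge of the sector). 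Consequently your decomposition ``Gaussian region plus exponentially small remainder'' collapses, the rescaling $s\mp k_0=\sqrt{k_0^3/t}\,w$ with Hölder optimization does not yield the claimed $t^{-1/6}$ and $t^{-1/4}$ rates, and the $\bar\partial\chi$ term cannot be dismissed by citing decay ``as in \eqref{3.1}'' without a separate argument (it does decay exponentially, but only because $\mathrm{supp}\,\bar\partial\chi$ stays at distance $\geq\varrho/4$ from $\bfR$, which you never invoke).

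The paper's proof keeps the exact modulus of the phase factor on the whole sector, integrates in $u$ first against the Cauchy kernel (Cauchy--Schwarz for the $|\bar\partial\chi|$ and $|\rho'|$ terms, giving $\|1/(s-k)\|_{L^2_u}\leq c|v-y|^{-1/2}$; Hölder with $p>2$ for the $|s-k_0|^{-1/2}$ term), and only then estimates the remaining $v$-integral, split at $v=y$. The rate $t^{-1/6}$ is produced precisely by the degenerate, cubic-in-$v$ decay near the real axis: for $0<v<y$ one uses $\e^{-z}\leq cz^{-1/6}$ with $z\sim tv^3/k_0^4$, so the integrand becomes $\sim(y-v)^{-1/2}v^{-1/2}t^{-1/6}$, integrable uniformly in $y$; the region $v>y$ contributes $O(t^{-1/2})$. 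Your Gaussian-model heuristic misses this mechanism (a true Gaussian bound would in fact give a better rate), so to repair the proposal you would need to replace your two phase bounds with the exact sectorial bound $\exp\bigl(-\tfrac{vt}{2}\bigl(\tfrac{1}{k_0^2}-\tfrac{1}{u^2+v^2}\bigr)\bigr)$ and redo the estimates along the lines above.
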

\begin{proof}
	 Assume that $f\in L^{\infty}(\Omega_1)$, with $s = u + \ii v$ and $ k= x+\ii y$. Then, according to \eqref{3.120}, we have
	\begin{align}
		|C_k[f](k)| &\leq \|f(k)\|_{L^{\infty}}\frac{1}{\pi}\iint\limits_{\mathbb{C}}\frac{|w^{(3)}(s)|}{|s-k|}\text{d}m(s) \nn \\
		&\leq c\iint\limits_{\mathbb{C}}\frac{|\bar{\partial}R^{(2)}(s)|}{|s-k|}\text{d}m(s).
	\end{align}
Thus, it remains to estimate the above integral. For $\bar{\partial}R^{(2)}(s)$ is a piece-wise function, we focus specifically on the case where the matrix function is supported in region $\Omega_1$, as the other cases can be proved similarly. From \eqref{3.34} and \eqref{3.51}, it follows that
\begin{equation}
	\iint\limits_{\Omega_1}\frac{|\bar{\partial}R^{(2)}(s)|}{|s-k|}\text{d}m(s)\leq I_1 + I_2 + I_3,
\end{equation}
where
\begin{align}
	 I_1 =& \int_{0}^{+\infty}\int_{k_0 + v}^{+\infty}\frac{|\bar{\partial}\chi(s)|\e^{\frac{vt}{2(u^2+v^2)}}}{\sqrt{(u-x)^2+(v-y)^2}}
\text{d}u\e^{2tv\hat{\zeta}}\text{d}v, \\
	  I_2 =& \int_{0}^{+\infty}\int_{k_0 + v}^{+\infty}\frac{|\rho'(u)|\e^{\frac{vt}{2(u^2+v^2)}}}{\sqrt{(u-x)^2+(v-y)^2}}
\text{d}u\e^{2tv\hat{\zeta}}\text{d}v, \\
	\label{3.127}  I_3 =& \int_{0}^{+\infty}\int_{k_0 + v}^{+\infty}\frac{\left((u-k_0)^2+v^2\right)^{-1/4}\e^{\frac{vt}{2(u^2+v^2)}}}
{\sqrt{(u-x)^2+(v-y)^2}}\text{d}u\e^{2tv\hat{\zeta}}\text{d}v.
\end{align}
In the subsequent calculations, we will make use of the inequality
\begin{equation}\label{3.128}
	\left\|\frac{1}{s-k}\right\|^2_{L^2(k_0,+\infty)} = \int_{k_0}^{+\infty}\frac{1}{|v-y|}\left[\left(\frac{u-x}{v-y}\right)^2+1\right]^{-1} \text{d}\left(\frac{u-x}{|v-y|}\right)\leq \frac{\pi}{|v-y|}.
\end{equation}
Given that \(\frac{vt}{2(u^2+v^2)}\) is monotonically decreasing with respect to \(u\), so \(I_1\) admits the following estimates
\begin{align}
	I_1 & \leq \int_0^{+\infty}\left\||s-k|^{-1}\right\|_{L^2\left(k_0,+\infty\right)}\left\|\bar{\partial} \chi(s)\right\|_{L^2\left(k_0,+\infty\right)} \e^{\frac{v t}{2\left(k_0^2+v^2\right)}} \e^{2 t v \hat{\zeta}} \text{d} v \label{3.129}\\
	& \leq c\int_0^{+\infty}|v-y|^{-1 / 2} \exp \left(-\frac{vt}{2}\left(\frac{1}{k_0^2}-\frac{1}{k_0^2+v^2}\right)\right) \text{d} v \nn\\
	& =c\int_0^y(y-v)^{-1 / 2} \exp \left(-\frac{vt}{2}\left(\frac{1}{k_0^2}-\frac{1}{k_0^2+v^2}\right)\right) \text{d} v \nn\\
	& +c\int_{y}^{+\infty}(v-y)^{-1 / 2}\exp \left(-\frac{vt}{2}\left(\frac{1}{k_0^2}-\frac{1}{k_0^2+v^2}\right)\right) \text{d} v.\nn
\end{align}
Since $e^{-z}\leq cz^{-1/6}$ for all $z>0 $, the first integral can then be estimated by
\begin{align}
	 &\int_0^y(y-v)^{-1 / 2} \exp \left(-\frac{vt}{2}\left(\frac{1}{k_0^2}-\frac{1}{k_0^2+v^2}\right)\right) \text{d} v  \label{3.130}\\
	 &\leq c\int_{0}^{y}(y-v)^{-1 / 2}v^{-1/2}t^{-1/6}\text{d}v \leq ct^{-1/6}.\nn
\end{align}
In addition, considering the remaining integral and letting $w = v-y$, we have
\begin{align}
	&\int_{y}^{+\infty}(v-y)^{-1 / 2}\exp \left(-\frac{vt}{2}\left(\frac{1}{k_0^2}-\frac{1}{k_0^2+v^2}\right)\right) \text{d} v  \label{3.131}\\
	&= \int_{y}^{+\infty}(v-y)^{-1 / 2}\exp\left(-\frac{vt}{2}\left(\frac{v^2}{k_0^2(v^2+k_0^2)}\right)\right) \text{d} v \nn\\
	&\leq \int_{y}^{+\infty}(v-y)^{-1 / 2}\exp\left(-\frac{vty^2}{2k_0^2(y^2+k_0^2)}\right) \text{d} v \nn\\
	&\leq \int_{0}^{+\infty}w^{-1/2}\exp\left(-\frac{wty^2}{2k_0^2(y^2+k_0^2)}\right)\text{d}w \exp\left(-\frac{ty^3}{2k_0^2(y^2+k_0^2)}\right)\leq ct^{-1/2}.\nn
\end{align}
Inserting \eqref{3.130} and \eqref{3.131} into \eqref{3.129} yields
\begin{equation}\label{3.132}
	I_1\leq ct^{-1/6}.
\end{equation}
$I_2$ satisfies the same estimate as \eqref{3.132}, and for $I_3$, we first obtain
\begin{align}
	&\left\| \left((u-k_0)^2+v^2\right)^{-1/4}\right\|_{L^{p}(k_0,+\infty)} \label{3.133} \\
	&=\left\{\int_{k_0}^{+\infty}\left[(u-k_0)^2+v^2\right]^{-p/4}\text{d} u\right\}^{1/p}\nn\\
	&= \left\{\int_{k_0}^{+\infty}\left[\left(\frac{u-k_0}{v}\right)^2+1 \right]^{-p/4}\text{d}\left(\frac{u-k_0}{v}\right)\right\}^{1/p}v^{1/p-1/2} \nn\\
	&\leq cv^{1/p-1/2},\nn
\end{align}
and
\begin{align}
	\left\| \frac{1}{|s-k|}\right\|_{L^q(k_0,+\infty)} &= \left\{\int_{k_0}^{+\infty}\left[\left(\frac{u-x}{v-y}\right)^2+1\right]^{-q/2} \text{d}\left(\frac{u-x}{|v-y|}\right)\right\}^{1/q}|v-y|^{1/q-1}\label{3.134} \\
	&\leq c|v-y|^{1/q-1},\nn
\end{align}
where $ p>2$ and $\frac{1}{p}+\frac{1}{q} = 1$. By examining \eqref{3.127}, we arrive that
\begin{align}
	I_3 &\leq\int_{0}^{+\infty} \left\||s-k|^{-1}\right\|_{L^q\left(k_0,+\infty\right)}\left\| \left((u-k_0)^2+v^2\right)^{-1/4}\right\|_{L^{p}(k_0,+\infty)} \e^{\frac{v t}{2\left(k_0^2+v^2\right)}} \e^{2 t v \hat{\zeta}} \text{d} v  \\
	&\leq \int_{0}^{+\infty}v^{1/p-1/2}|v-y|^{1/q-1}\exp \left(-\frac{vt}{2}\left(\frac{1}{k_0^2}-\frac{1}{k_0^2+v^2}\right)\right) \text{d}v \nn \\
	&=\int_{0}^{y}v^{1/p-1/2}(y-v)^{1/q-1}\exp \left(-\frac{vt}{2}\left(\frac{1}{k_0^2}-\frac{1}{k_0^2+v^2}\right)\right) \text{d}v \nn \\
	&+\int_{y}^{+\infty}v^{1/p-1/2}(v-y)^{1/q-1}\exp \left(-\frac{vt}{2}\left(\frac{1}{k_0^2}-\frac{1}{k_0^2+v^2}\right)\right) \text{d}v.\nn
\end{align}
The first integral can then be estimated by
	\begin{align}
		&\int_{0}^{y}v^{1/p-1/2}(y-v)^{1/q-1}\exp \left(-\frac{vt}{2}\left(\frac{1}{k_0^2}-\frac{1}{k_0^2+v^2}\right)\right) \text{d}v \\
		&\leq c\int_{0}^{y}v^{1/p-1}(y-v)^{1/q-1} t^{-1/6} \text{d}v \leq ct^{-1/6}.\nn
	\end{align}
Moreover, considering the remaining integral with $w = v-y$, we have
\begin{align}
	&\int_{y}^{+\infty}v^{1/p-1/2}(v-y)^{1/q-1}\exp \left(-\frac{vt}{2}\left(\frac{1}{k_0^2}-\frac{1}{k_0^2+v^2}\right)\right) \text{d}v  \\
	&\leq \int_{0}^{+\infty}(w+y)^{1/p-1/2}w^{1/q-1}\exp\left(-\frac{wty^2}{2k_0^2(y^2+k_0^2)}\right)\text{d}w \exp\left(-\frac{ty^3}{2k_0^2(y^2+k_0^2)}\right) \nn \\
	&\leq c\int_{0}^{+\infty}w^{-1/2}\exp\left(-\frac{wty^2}{2k_0^2(y^2+k_0^2)}\right)\text{d}w \leq ct^{-1/2}.\nn
\end{align}
Finally, we obtain
\begin{equation}
	I_3\leq ct^{-1/6}.
\end{equation}
Collecting the above results, the proof of the proposition is thus completed.

\end{proof}
Next, to achieve the final goal of reconstructing the potential $(q_1(x,t)\quad q_2(x,t))$ as $t\to\infty$, we need to analyze the asymptotic expansion of $\mu^{(3)}(\zeta,t;k)$ as $k\to 0$. We expand the $\mu^{(3)}(\zeta,t;k)$ as
\begin{equation}
	\mu^{(3)}(\zeta,t;k) = \mathbb{I}_{4\times 4} + \mu^{(3)}_0(\zeta,t)+\mu^{(3)}_1(\zeta,t)k + O(k^2),\quad k\to 0,
\end{equation}
where
\begin{align}
\label{3.140}	\mu^{(3)}_0(\zeta,t) &= -\frac{1}{\pi}\iint\limits_{\mathbb{C}}\frac{\mu^{(3)}(s)w^{(3)}(s)}{s}\text{d}m(s), \\
	\label{3.141}\mu^{(3)}_1(\zeta,t)&=-\frac{1}{\pi}\iint\limits_{\mathbb{C}}\frac{\mu^{(3)}(s)w^{(3)}(s)}{s^2}\text{d}m(s).
\end{align}
Then, $\mu^{(3)}_0(\zeta,t)$ and $\mu^{(3)}_1(\zeta,t)$ satisfy the following properties.
\begin{proposition}\label{pro3.5}
	As $t\to\infty$, the following estimate holds for $\mu^{(3)}(\zeta,t;0)$, that is,
	\begin{equation}
		\|\mu^{(3)}(\zeta,t;0)-\mathbb{I}_{4\times 4}\| = \|\mu^{(3)}_0(\zeta,t)\| \leq ct^{-1}.
	\end{equation}
\end{proposition}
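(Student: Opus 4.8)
The plan is to run the standard $\bar\partial$-estimate for the solid Cauchy--Green operator on the representation \eqref{3.140}, but to exploit the fact that the evaluation point $k=0$ sits far from the support of $\bar{\partial}\mathcal{R}^{(2)}$: this removes the Cauchy-kernel singularity that limited the rate in the preceding proposition and upgrades the bound from $O(t^{-1/6})$ to $O(t^{-1})$. First I would record the elementary bounds on the integrand of \eqref{3.140}. By the preceding proposition $(1-C_k)^{-1}$ is a bounded operator on $L^\infty(\bfC)$, so $\mu^{(3)}=(1-C_k)^{-1}\mathbb{I}_{4\times4}$ satisfies $\|\mu^{(3)}\|_{L^\infty(\bfC)}\le c$ uniformly in $t$; and since $w^{(3)}=\mu^{(RHP)}\,\bar{\partial}\mathcal{R}^{(2)}\,[\mu^{(RHP)}]^{-1}$ with $\mu^{(RHP)}=E\mu^{(out)}$ on the support of $\bar{\partial}\mathcal{R}^{(2)}$, which is disjoint from $\Upsilon$, $\Gamma^{(E)}$, $U_{\pm k_0}$ and $Z\cup Z^*$, both $\mu^{(RHP)}$ and $[\mu^{(RHP)}]^{-1}$ are bounded there; hence $|\mu^{(3)}(s)w^{(3)}(s)|\le c\,|\bar{\partial}\mathcal{R}^{(2)}(s)|$. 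The key point is that, by \eqref{3.51}, $\bar{\partial}\mathcal{R}^{(2)}$ vanishes identically on $\Omega_9\cup\Omega_{10}$, which contains a full neighbourhood of $k=0$; therefore $\mathrm{supp}\,\bar{\partial}\mathcal{R}^{(2)}\subset\bigcup_{j=1}^8\overline{\Omega}_j$ stays at positive distance from $0$, so $1/|s|$ is bounded on it, and \eqref{3.140} yields
\[
\bigl\|\mu^{(3)}_0(\zeta,t)\bigr\|\le \frac{c}{\pi}\iint_{\bigcup_{j=1}^8\Omega_j}\bigl|\bar{\partial}\mathcal{R}^{(2)}(s)\bigr|\,\dd m(s).
\]
In contrast with the estimate of $\|C_k\|$, there is here no $|s-k|^{-1}$ factor to contend with.

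Next I would estimate each of the eight area integrals by splitting $\Omega_j=(\Omega_j\setminus U_{\pm k_0})\cup(\Omega_j\cap U_{\pm k_0})$ and recalling from \eqref{3.51} that $\bar{\partial}\mathcal{R}^{(2)}$ carries the factor $\e^{-2\ii t\theta(s)}$ or $\e^{2\ii t\theta(s)}$ according to $j$, together with the bound \eqref{3.34} on $\bar{\partial}R_j$. On $\Omega_j\setminus U_{\pm k_0}$, Proposition \ref{prop3.1} gives $|\e^{\pm2\ii t\theta(s)}|\le c\,\e^{-ct}$ while $|\bar{\partial}R_j|$ is integrable over $\Omega_j$ (the $|\rho'|$-term because $\rho\in\mathcal{S}(\bfR)$, the $|\bar{\partial}\chi|$-term because $\chi$ has compact support); moreover on $\mathrm{supp}\,\bar{\partial}\chi$, which consists of thin annuli about $Z\cup Z^*$, the function $\mathrm{Im}\,\theta$ vanishes at most along a transversal arc, so that piece contributes $O(t^{-1})$ and the remainder is $O(\e^{-ct})$. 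Inside $U_{\pm k_0}$ I would rescale as in Section \ref{sec3.4}, $z=\sqrt{k_0^{-3}t}\,(s\mp k_0)$, so that $\dd m(s)=(k_0^{-3}t)^{-1}\dd m(z)$ and $\mathrm{Re}(\mp2\ii t\theta(s))\le -c\,|\mathrm{Re}\,z|\,|\mathrm{Im}\,z|$; using that $q_{10},q_{20}\in\mathcal{S}(\bfR)$ so $\rho\in C^\infty(\bfR)$, one may choose the non-analytic extensions $R_j$ of Lemma \ref{lem3.1} so that $\bar{\partial}R_j$ vanishes at $\pm k_0$ (matching the first Taylor coefficient of $\rho$ there), giving $|\bar{\partial}R_j(s)|\le c\,|s\mp k_0|^{1/2}$ near $\pm k_0$; the rescaled integral then equals $c\,(k_0^{-3}t)^{-5/4}\iint_{|z|\le c\sqrt t}|z|^{1/2}\e^{-c|\mathrm{Re}\,z|\,|\mathrm{Im}\,z|}\dd m(z)=O(t^{-1})$. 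Summing over $j$ gives $\|\mu^{(3)}_0\|\le ct^{-1}$, which by \eqref{3.140} and the small-$k$ expansion preceding the proposition is precisely the assertion $\|\mu^{(3)}(\zeta,t;0)-\mathbb{I}_{4\times4}\|\le ct^{-1}$.

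The main obstacle is the contribution from the neighbourhoods $U_{\pm k_0}$ of the stationary points: there $\e^{\pm2\ii t\theta}$ decays only to quadratic (product-type) order and $\bar{\partial}R_j$ is least regular, and the naive $H^1$-bound $|\bar{\partial}R_j|\lesssim|s\mp k_0|^{-1/2}$ of \eqref{3.34} would produce only an $O(t^{-3/4})$ estimate there. The crux is therefore to combine the rescaling of Section \ref{sec3.4} with the additional smoothness of the reflection coefficient inherited from $q_{10},q_{20}\in\mathcal{S}(\bfR)$, so as to make $\bar{\partial}R_j$ vanish to positive order at $\pm k_0$; this is exactly what brings the $\bar\partial$-contribution down to the sharp $O(t^{-1})$. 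Everything else — the exponential smallness of the off-critical pieces and the $\bar{\partial}\chi$-annulus estimate — is a routine modification of the integral estimates already carried out in the proof of the preceding proposition.
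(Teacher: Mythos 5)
Your proposal rests on two claims that are false for this contour geometry, and both are load-bearing. First, the support of $\bar{\partial}\mathcal{R}^{(2)}$ is \emph{not} at positive distance from $k=0$: the contour $L\cup L^*$ passes through the origin (the middle segment $\{k_0\alpha\e^{-\pi\ii/4}\}$ of $L$ and its conjugate), so the four lens regions $\Omega_2,\Omega_3,\Omega_5,\Omega_8$ attached to $(-k_0,0)\cup(0,k_0)$ have the origin as a corner point, and $\bar{\partial}R_j$ does not vanish there (near $k=0$ the interpolation term $(\rho(\mathrm{Re}\,k)-f(k))/|k-k_0|$ tends to a nonzero constant); only the two sectors about the imaginary axis near $0$ lie in $\Omega_9\cup\Omega_{10}$. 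Hence the factor $1/|s|$ in \eqref{3.140} cannot simply be discarded; controlling those corners requires either Proposition \ref{pro2.5} ($\rho(k)=O(k^2)$) or the strong decay of $\e^{\pm2\ii t\theta}$ produced by the $-1/(4k)$ part of the phase, which is precisely why the paper displays the computation only for $\Omega_1$, where $|s|\ge k_0$, and keeps the $(u^2+v^2)^{-1/2}$ factor in $I_4,I_5,I_6$. Second, Proposition \ref{prop3.1} is a statement about the jump matrix on the contours $\Gamma\setminus U_{\pm k_0}$; it does not give $|\e^{\pm2\ii t\theta(s)}|\le c\,\e^{-ct}$ on the two-dimensional sets $\Omega_j\setminus U_{\pm k_0}$. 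These regions reach the real axis, where $\mathrm{Im}\,\theta=0$, so there is no uniform exponential decay off the critical neighbourhoods: for instance the $|\rho'(\mathrm{Re}\,s)|$-piece of $\bar{\partial}R_1$ over the strip of $\Omega_1$ hugging $(k_0,\infty)$ is only algebraically small, and the entire content of the paper's proof is to reduce it to $O(t^{-1})$ by Cauchy--Schwarz/H\"older in $u$ followed by integration in $v=\mathrm{Im}\,s$ of the non-uniform factor $\exp\bigl(-\frac{vt}{2}\bigl(\frac{1}{k_0^2}-\frac{1}{u^2+v^2}\bigr)\bigr)$. With these two steps removed, your bookkeeping ``exponentially small bulk $+$ rescaled local piece'' does not close.

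Near $\pm k_0$ you also depart from the paper in a way you do not justify: you propose to re-choose the extensions of Lemma \ref{lem3.1} so that $\bar{\partial}R_j=O(|s\mp k_0|^{1/2})$ and then rescale. The $R_j$ were fixed earlier and are used throughout (they determine the jumps of $\mu^{(2)}$, the local models and the error problem), so a new extension cannot be invoked at this stage without verifying it has the same boundary values on $\bfR$ and on $\Gamma$ and re-deriving the analogue of \eqref{3.34}; you assert this rather than carry it out, and note also that the $|s\mp k_0|^{-1/2}$ term you want to remove is the very term supplying decay of $\bar{\partial}R_j$ as $s\to\infty$, without which your ``integrability over $\Omega_j$'' step fails. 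The paper's route is different: it keeps \eqref{3.34} as stated and estimates the three resulting integrals $I_4,I_5,I_6$ (cutoff term, $|\rho'|$ term, $|s-k_0|^{-1/2}$ term) by H\"older in $u$ together with the $v$-integration just described, the improvement from $t^{-1/6}$ to $t^{-1}$ coming from the absence of the $|s-k|^{-1}$ singularity at the evaluation point rather than from any distance of $k=0$ to the support. As written, therefore, your argument has genuine gaps and is not a proof of the proposition.
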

\begin{proof}
	We first consider the case $k\in\Omega_1$, and the proofs for the other cases follow analogously. Applying \eqref{3.34} and \eqref{3.140}, and taking into account the boundedness of $\mu^{(3)}(k)$ and $\mu^{(RHP)}(k)$, we drive
\begin{align}
	\|\mu^{(3)}_0(\zeta,t)\| &\leq \frac{1}{\pi}\iint\limits_{\Omega_1}\frac{|\mu^{(3)}(s)\mu^{(RHP)}(s)\bar{\partial}
		\mathcal{R}^{(2)}(s)[\mu^{(RHP)}(s)]^{-1}|}{|s|}\text{d}m(s) \\
		&\leq c\int_{0}^{+\infty}\int_{k_0+v}^{+\infty}\frac{|\bar{\partial}R_1(s)|
\e^{\frac{vt}{2(u^2+v^2)}}}{(u^2+v^2)^{1/2}}\e^{2tv\hat{\zeta}}\text{d}u\text{d}v \nn\\
		&\leq c(I_4+I_5+I_6),\nn
\end{align}
where $s= u+\ii v$, and
\begin{align}
	I_4 =& \int_{0}^{+\infty}\int_{k_0 + v}^{+\infty}\frac{|\bar{\partial}\chi(s)|\e^{\frac{vt}{2(u^2+v^2)}}}{(u^2+v^2)^{1/2}}
\text{d}u\e^{2tv\hat{\zeta}}\text{d}v, \\
	I_5 =& \int_{0}^{+\infty}\int_{k_0 + v}^{+\infty}\frac{|\rho'(u)|\e^{\frac{vt}{2(u^2+v^2)}}}{(u^2+v^2)^{1/2}}
\text{d}u\e^{2tv\hat{\zeta}}\text{d}v, \\
  I_6 =& \int_{0}^{+\infty}\int_{k_0 + v}^{+\infty}\frac{\left((u-k_0)^2+v^2\right)^{-1/4}
  \e^{\frac{vt}{2(u^2+v^2)}}}{(u^2+v^2)^{1/2}}\text{d}u\e^{2tv\hat{\zeta}}\text{d}v.
\end{align}
To facilitate the estimation, we split $I_4$ into two parts
\begin{equation}
	I_4=\int_{0}^{k_0}\int_{k_0 + v}^{+\infty}\frac{|\bar{\partial}\chi(s)|\e^{\frac{vt}{2(u^2+v^2)}}}{(u^2+v^2)^{1/2}}
\text{d}u\e^{2tv\hat{\zeta}}\text{d}v +\int_{k_0}^{+\infty}\int_{k_0 + v}^{+\infty}\frac{|\bar{\partial}\chi(s)|\e^{\frac{vt}{2(u^2+v^2)}}}
{(u^2+v^2)^{1/2}}\text{d}u\e^{2tv\hat{\zeta}}\text{d}v.
	\end{equation}
	For the first integral, we have
\begin{align}
		&\int_{0}^{k_0}\int_{k_0 + v}^{+\infty}\frac{|\bar{\partial}\chi(s)|\e^{\frac{vt}{2(u^2+v^2)}}}{(u^2+v^2)^{1/2}}
\text{d}u\e^{2tv\hat{\zeta}}\text{d}v  \\
		& \leq c\int_0^{k_0}\left(\left(v+k_0\right)^2+v^2\right)^{-1 / 2}\exp\left(-\frac{vt}{2}\left(\frac{1}{k_0^2}-\frac{1}{(v+k_0)^2+v^2}\right)\right)
\text{d}v \nn\\
		&\leq c\int_0^{k_0}\left(\left(v+k_0\right)^2+v^2\right)^{-1 / 2}\exp\left(\frac{vt}{2((v+k_0)^2+v^2)}\right)\text{d}v \nn\\
		&\leq c\int_{0}^{k_0}\exp\left(\frac{vt}{2k_0^2}\right)\text{d}v
		\leq ct^{-1}.\nn
\end{align}
	For the last integral, by applying \eqref{3.128} with $y=0$, we obtain
\begin{align}
	&\int_{k_0}^{+\infty}\int_{k_0 + v}^{+\infty}\frac{|\bar{\partial}\chi(s)|\e^{\frac{vt}{2(u^2+v^2)}}}{(u^2+v^2)^{1/2}}
\text{d}u\e^{2tv\hat{\zeta}}\text{d}v \\
	&\leq \int_{k_0}^{+\infty}\left\||s|^{-1}\right\|_{L^2}\left\|\bar{\partial} \chi(s)\right\|_{L^2}\exp\left(-\frac{vt}{2}\left(\frac{1}{k_0^2}-\frac{1}{(v+k_0)^2+v^2}\right)\right) \text{d}v \nn\\
	&\leq c\int_{k_0}^{+\infty} v^{-1/2}\exp\left(-\frac{2tv}{5k_0^2}\right)\text{d}v \nn\\
	&\leq c\int_{k_0}^{+\infty}\exp\left(-\frac{2tv}{5k_0^2}\right)\text{d}v \leq ct^{-1}.\nn
\end{align}
	A similar estimate for $I_5$ can be obtained in the same manner.
	As with $I_4$, we split $I_6$ into two parts to facilitate the estimation
	\begin{align}
		I_6 &= \int_{0}^{k_0}\int_{k_0 + v}^{+\infty}\frac{\left((u-k_0)^2+v^2\right)^{-1/4}\e^{\frac{vt}{2(u^2+v^2)}}}
{(u^2+v^2)^{1/2}}\text{d}u\e^{2tv\hat{\zeta}}\text{d}v\\
		&+\int_{k_0}^{+\infty}\int_{k_0 + v}^{+\infty}\frac{\left((u-k_0)^2+v^2\right)^{-1/4}
\e^{\frac{vt}{2(u^2+v^2)}}}{(u^2+v^2)^{1/2}}\text{d}u\e^{2tv\hat{\zeta}}\text{d}v. \nn
	\end{align}
	Given that
	\begin{align}
		&\left\|(u^2+v^2)^{-1/2}\exp\left(\frac{vt}{2(u^2+v^2)}\right)\right\|_{L^4(k_0+v,+\infty)} \\
		&= \left\{\int_{k_0+v}^{+\infty}(u^2+v^2)^{-2}\exp\left(\frac{2vt}{u^2+v^2}\right)
\text{d}u\right\}^{1/4} \nn \\
		&= \left\{\int_{k_0+v}^{+\infty}(-4t)^{-1}v^{-1}u^{-1}
\left[\exp\left(\frac{2vt}{u^2+v^2}\right)\right]'\text{d}u\right\}^{1/4} \nn\\
		&\leq ct^{-1/4}v^{-1/4}\left(\exp\left(\frac{vt}{2((k_0+v)^2+v^2)}\right)+1\right).\nn
	\end{align}
	An estimate for the first integral can be given by combining \eqref{3.133}
	\begin{align}
	&\int_{0}^{k_0}\int_{k_0 + v}^{+\infty}\frac{\left((u-k_0)^2+v^2\right)^{-1/4}\e^{\frac{vt}{2(u^2+v^2)}}}
{(u^2+v^2)^{1/2}}\text{d}u\e^{2tv\hat{\zeta}}\text{d}v	 \\
	&\leq \int_{0}^{k_0}\left\|\left((u-k_0)^2+v^2\right)^{-1/4}\right\|_{L^{4/3}}
\left\|(u^2+v^2)^{-1/2}\exp\left(\frac{vt}{2(u^2+v^2)}\right)\right\|_{L^4}
\e^{2tv\hat{\zeta}}\text{d}v\nn\\
	&\leq ct^{-1/4}\int_{0}^{k_0}v^{3/4-1/2}v^{-1/4}
\left(\exp\left(\frac{vt}{2((k_0+v)^2+v^2)}\right)+1\right)\e^{-\frac{vt}{2k_0^2}}\text{d}v \nn\\
	&\leq ct^{-1/4}\int_{0}^{k_0}\exp\left(-\frac{vt}{2k_0^2}\right)\text{d}v \leq ct^{-5/4}.\nn
	\end{align}
	Let $p>2$ satisfy $\frac{1}{p}+\frac{1}{q}=1$. Then, by applying \eqref{3.133} and \eqref{3.134}, we estimate the second integral as follows:
	\begin{align}
		&\int_{k_0}^{+\infty}\int_{k_0 + v}^{+\infty}\frac{\left((u-k_0)^2+v^2\right)^{-1/4}\e^{\frac{vt}{2(u^2+v^2)}}}
{(u^2+v^2)^{1/2}}\text{d}u\e^{2tv\hat{\zeta}}\text{d}v \\
		&\leq \int_{k_0}^{+\infty} \left\||s|^{-1}\right\|_{L^q}\left\| \left((u-k_0)^2+v^2\right)^{-1/4}\right\|_{L^{p}} \exp\left(-\frac{vt}{2}\left(\frac{1}{k_0^2}-\frac{1}{(v+k_0)^2+v^2}\right)\right) \text{d} v \nn\\
		&\leq c\int_{k_0}^{+\infty} v^{-1/2}\exp\left(-\frac{2tv}{5k_0^2}\right)\text{d}v \nn \\
		&\leq c\int_{k_0}^{+\infty}\exp\left(-\frac{2tv}{5k_0^2}\right)\text{d}v  \leq ct^{-1}.\nn
	\end{align}
	Therefore, we obtain we conclude that
	\begin{equation}
		I_6\leq ct^{-1}.
	\end{equation}
 In summary, based on the above analysis, we complete the proof of the proposition.
\end{proof}
 \begin{proposition}\label{prop3.6}
 	As $t\to\infty$, the following estimate for $\mu^{(3)}_1(\zeta,t)$ holds:
 \begin{equation}
 	\|\mu^{(3)}_1(\zeta,t)\|\leq ct^{-1}.
 \end{equation}
 \end{proposition}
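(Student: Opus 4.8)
The plan is to argue exactly as in the proof of Proposition \ref{pro3.5}, replacing everywhere the weight $1/s$ by $1/s^2$. By \eqref{3.141},
\[
\mu^{(3)}_1(\zeta,t)=-\frac{1}{\pi}\iint_{\bfC}\frac{\mu^{(3)}(s)\,w^{(3)}(s)}{s^2}\,\dd m(s),\qquad
w^{(3)}=\mu^{(RHP)}\bar{\partial}\mathcal{R}^{(2)}[\mu^{(RHP)}]^{-1}.
\]
Since $\|C_k\|_{L^\infty\to L^\infty}=O(t^{-1/6})$, the function $\mu^{(3)}=(1-C_k)^{-1}\mathbb{I}_{4\times4}$ is uniformly bounded, and $\mu^{(RHP)}$, which equals $E(k)\mu^{(out)}(\zeta,t;k)$ possibly multiplied by one of the bounded local parametrices $\mu^{(\pm k_0)}$, is likewise uniformly bounded. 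Hence the estimate reduces to controlling, for each of the finitely many regions $\Omega_j$ on which $\bar{\partial}\mathcal{R}^{(2)}\neq\mathbf{0}_{4\times4}$, the integral $\iint_{\Omega_j}|s|^{-2}\,|\bar{\partial}R_j(s)|\,\e^{\operatorname{Re}(\mp 2\ii t\theta(s))}\,\dd m(s)$, where the sign depends on which triangular factor $R_j$ belongs to, $|\bar{\partial}R_j|$ obeys \eqref{3.34}, and $\operatorname{Re}(-2\ii t\theta(s))=2t\,\operatorname{Im}s\,(\hat{\zeta}+\tfrac{1}{4|s|^2})$.

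For those regions $\Omega_j$ whose closures are bounded away from $k=0$, the factor $|s|^{-2}$ is bounded there, so after absorbing it into the constants one is in precisely the situation treated for $\mu^{(3)}_0$ in Proposition \ref{pro3.5}: split the integral according to the three terms of \eqref{3.34}, estimate by H\"older's inequality using the $L^p$-bounds \eqref{3.133}--\eqref{3.134}, and convert the oscillatory exponential into genuine decay through the inequality $\operatorname{Re}(\mp 2\ii t\theta(s))\le -c\,|\operatorname{Im}s|\,|k_0^{-2}-|s|^{-2}|$ valid on $\Omega_j$. The outcome is $O(t^{-1})$, exactly as before.

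The one genuinely new point is that the regions $\Omega_2,\Omega_3,\Omega_5,\Omega_8$ (those whose defining real segment has $0$ as an endpoint) contain the origin in their closure, where $|s|^{-2}$ is no longer locally integrable against a bounded density. To deal with this I would split such a region at a small fixed radius, $\Omega_j=(\Omega_j\cap\{|s|\le\delta\})\cup(\Omega_j\cap\{|s|>\delta\})$, the outer part being covered by the preceding paragraph. On the inner part $\chi$ vanishes and $|s\mp k_0|^{-1/2}$ is bounded, and --- this is the crucial observation --- because $\rho(k)=O(k^2)$ as $k\to0$ by Proposition \ref{pro2.5}, revisiting the construction of the non-analytic extensions in Lemma \ref{lem3.1} yields, near the origin, the refined bounds $R_j(k)=O(|k|^2)$ and $|\bar{\partial}R_j(k)|\le c|k|$, one power of $|k|$ better than \eqref{3.34}. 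Thus the integrand is $\le c\,|s|^{-1}\e^{\operatorname{Re}(\mp 2\ii t\theta(s))}$ on $\{|s|\le\delta\}$; writing $s=r\e^{\ii\psi}$ on the thin wedge one has $\operatorname{Re}(\mp 2\ii t\theta(s))=-\tfrac{t|\sin\psi|\,(k_0^2-r^2)}{2r k_0^2}\le -c't\,|\psi|/r$, so with $\dd m(s)=r\,\dd r\,\dd\psi$ the $\psi$-integration produces a factor $\lesssim r/t$ which cancels the $r^{-1}$ weight, leaving $c\int_0^\delta (r/t)\,\dd r=O(t^{-1})$.

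Summing the $O(t^{-1})$ contributions of the individual regions gives $\|\mu^{(3)}_1(\zeta,t)\|\le ct^{-1}$. The main obstacle is precisely this behaviour near $k=0$: the crude bound \eqref{3.34} is not strong enough to make $|\bar{\partial}R_j(s)|/|s|^2$ integrable there, and one must exploit the second-order vanishing of $\rho$ at the origin to gain the missing power of $|s|$, together with the pointwise decay of the phase on the wedges abutting $0$. Everything else is the same H\"older-inequality and $\bar{\partial}$-steepest-descent bookkeeping already carried out in the proof of Proposition \ref{pro3.5}.
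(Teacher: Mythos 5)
Your argument is correct, and on the region the paper actually writes out it is the paper's argument: the paper estimates only $\Omega_1$, notes that $(u^2+v^2)^{-1/2}\le k_0^{-1}$ there, so the weight $|s|^{-2}$ is controlled by $k_0^{-1}|s|^{-1}$ and the integrals $I_7,I_8,I_9$ are bounded by $k_0^{-1}$ times the integrals $I_4,I_5,I_6$ already shown to be $O(t^{-1})$ in Proposition \ref{pro3.5}; the remaining regions are dismissed as ``similar''. What you add is an explicit treatment of $\Omega_2,\Omega_3,\Omega_5,\Omega_8$, whose closures contain $k=0$ and where the reduction $|s|^{-2}\le c\,|s|^{-1}$ genuinely fails; your splitting at a small radius $\delta$, the phase bound $\operatorname{Re}(\mp 2\ii t\theta)\le -c'\,t|\psi|/r$ on the wedges, and the refined bounds $R_j=O(|k|^2)$, $|\bar{\partial}R_j|\le c|k|$ near the origin (which do follow from Proposition \ref{pro2.5} together with the interpolation of Lemma \ref{lem3.1}, since $\cos(2\phi)\to 1$ and $\sin(2\phi)=O(\operatorname{Im}k)$ there, and $\bar{\partial}\chi$ vanishes near $0$) are all sound. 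One simplification worth noting: the extra power of $|k|$ is not actually needed. With only the crude bound $|\bar{\partial}R_j|\le C$ near $0$, the integrand times the Jacobian is at most $C r^{-1}\e^{-c'\,t|\psi|/r}$, whose $\psi$-integral is bounded by $C/(c't)$, so the inner piece is already $O(t^{-1})$; the superexponential decay of the phase off the real axis alone closes the estimate. So your proposal is best viewed as a completion of the paper's proof (filling in the origin-adjacent regions the paper leaves implicit) rather than a different route.
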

 \begin{proof}
 	We focus on estimating the integral over $\Omega_1$, as the estimates for the other regions are similar. Similar to the previous proposition, we set $s= u+iv$. By applying \eqref{3.34}, \eqref{3.51} and \eqref{3.141}, we derive the following result
 	\begin{align}
 		\|\mu_1^{(3)}(\zeta,t)\| &\leq \frac{1}{\pi}\iint\limits_{\Omega_1}\frac{|\mu^{(3)}(s)\mu^{(RHP)}(s)\bar{\partial}
 			\mathcal{R}^{(2)}(s)[\mu^{(RHP)}(s)]^{-1}|}{|s|^2}\text{d}m(s) \nn \\
 			&\leq c\int_{0}^{+\infty}\int_{k_0+v}^{+\infty}\frac{|\bar{\partial}R_1(s)|\e^{\frac{vt}{2(u^2+v^2)}}}{u^2+v^2}\e^{2tv\zeta}\text{d}u\text{d}v \nn\\
 			&\leq c(I_7+I_8+I_9),
 	\end{align}
 	where
 	\begin{align}
 			I_7 =& \int_{0}^{+\infty}\int_{k_0 + v}^{+\infty}\frac{|\bar{\partial}\chi(s)|\e^{\frac{vt}{2(u^2+v^2)}}}{u^2+v^2}
 \text{d}u\e^{2tv\hat{\zeta}}\text{d}v, \\
 		I_8 =& \int_{0}^{+\infty}\int_{k_0 + v}^{+\infty}\frac{|\rho'(u)|\e^{\frac{vt}{2(u^2+v^2)}}}{u^2+v^2}
 \text{d}u\e^{2tv\hat{\zeta}}\text{d}v, \\
 		I_9 =& \int_{0}^{+\infty}\int_{k_0 + v}^{+\infty}\frac{\left((u-k_0)^2+v^2\right)^{-1/4}
 \e^{\frac{vt}{2(u^2+v^2)}}}{u^2+v^2}\text{d}u\e^{2tv\hat{\zeta}}\text{d}v.
 	\end{align}
 	Observe that for all $s\in\Omega_1$, it holds that
 	\begin{equation}
 		(u^2+v^2)^{-1/2}\leq \frac{1}{k_0}.
 	\end{equation}
 	Consequently, we obtain
 	\begin{equation}
 		I_j\leq \frac{1}{k_0}I_{j-3}, \quad \text{for}\ j= 7,8,9.
 	\end{equation}
 	Hence, the conclusion readily follows from Proposition \ref{pro3.5}.

 \end{proof}

\subsection{Long-time asymptotics for the ccSPE}\label{sec3.6}
We now put together our previous results and formulate the long-time asymptotic formula of $\begin{pmatrix}q_1(x,t)&q_2(x,t)\end{pmatrix}$ in region $\hat{\zeta}<-\varepsilon$. Undoing all transformations carried out previously, we have
\be
\breve{\mu}(\zeta,t;k)=\mu^{(3)}(\zeta,t;k)E(k)\mu^{(out)}(\zeta,t;k)
[\mathcal{R}^{(2)}(k)]^{-1}[T(k)]^{\Sigma_3}[\tilde{\Delta}(k)]^{-1},\quad k\in\bfC\setminus U_{\pm k_0}.
\ee
Moreover, by \eqref{3.6} and \eqref{3.8}, as $k\to0$, we have
\begin{align}
[\tilde{\Delta}(k)]^{-1}=\tilde{\Delta}_0+\tilde{\Delta}_1k+O(k^2),\
T(k)=T(0)(1+T_1k)+O(k^2),\end{align}
where
\begin{align}
\tilde{\Delta}_0=&\begin{pmatrix} \delta_{10} & \mathbf{0}_{2\times2}\\
\mathbf{0}_{2\times2} & \delta_{20}^{-1}\end{pmatrix},\
\tilde{\Delta}_1=\begin{pmatrix} \delta_{11} & \mathbf{0}_{2\times2}\\
\mathbf{0}_{2\times2} & -\delta_{20}^{-1}\delta_{21}\delta_{20}^{-1}\end{pmatrix},\\
T_1=&-4\ii\sum_{\substack{\text{Re}k_n\neq0,\text{Im}k_n>0\\n\in\Delta_{k_0}^-}}
\frac{\text{Im}k_n}{|k_n|^2}
-2\ii\sum_{\substack{\text{Re}k_n=0,\text{Im}k_n>0\\n\in\Delta_{k_0}^-}}
\frac{\text{Im}k_n}{|k_n|^2},
\end{align}
with $\delta_{10}$, $\delta_{11}$, $\delta_{20}$ and $\delta_{21}$ are $2\times2$ constant matrices independent of $k$.
We take $k\to0$ along the imaginary axis such that $\mathcal{R}^{(2)}(k)=\mathbb{I}_{4\times4}$. Expanding $\mu_{*}^{(out)}(\zeta,t;k)=\mu_*^{(out)}(\zeta,t;0)+\mu_{*1}^{(out)}(\zeta,t)k+O(k^2)$, it follows from \eqref{3.126}, \eqref{eq3.127}, Propositions \ref{pro3.5} and \ref{prop3.6} that
\begin{align}
&\lim_{k\to0}\frac{\left[\breve{\mu}^{-1}(\zeta,t;0)\breve{\mu}(\zeta,t;k)\right]_{UR}}{k}\\
=&
\left[\tilde{\Delta}_0^{-1}[T(0)]^{-\Sigma_3}[\mu^{(out)}_*(\zeta,t;0)]^{-1}
\mu^{(out)}_{*1}(\zeta,t)[T(0)]^{\Sigma_3}\tilde{\Delta}_0\right.\nn\\
&+t^{-1/2}\left.\tilde{\Delta}_0^{-1}[T(0)]^{-\Sigma_3}[\mu^{(out)}_*(\zeta,t;0)]^{-1}
\mathcal{E}_1\mu^{(out)}_*(\zeta,t;0)[T(0)]^{\Sigma_3}\tilde{\Delta}_0\right]_{UR}+O(t^{-1}\ln t).\nn
\end{align}
By the first symmetry in \eqref{3.5}, we find $\delta_{j0}^{-1}=\delta_{j0}^\dag$ for $j=1,2$. And hence, we can express
\be\label{3.180}
\delta_{j0}=\begin{pmatrix} a_j & b_j\\ -b_j^*\e^{\ii\phi_j} & a_j^*\e^{\ii\phi_j}\end{pmatrix},
\ee
where $a_j$, $b_j$ are complex constant, $\det[\delta_{j0}]=\e^{\ii\phi_j}$, $\phi_j=\arg\det[\delta_{j0}]$. Thus, by \eqref{eq3.77cb} and \eqref{3.78}, the $(1,3)$ and $(1,4)$-entries of $\tilde{\Delta}_0^{-1}[T(0)]^{-\Sigma_3}\ii[\mu^{(out)}_*(\zeta,t;0)]^{-1}
\mu^{(out)}_{*1}(\zeta,t)[T(0)]^{\Sigma_3}\tilde{\Delta}_0$ can be respectively written as
\begin{align}
\label{3.181}\tilde{q}_{1n}^{(\ell)}(\zeta,t)\doteq\left(\cdot\right)_{13}
=&T^{-2}(0)\left(\left(a_1^*q_{1n}^{(\ell)}(\zeta,t)
+b_1\e^{-\ii\phi_1}[q_{2n}^{(\ell)}(\zeta,t)]^*\right)a_2^*\right.\\
&+\left.\left(a_1^*q_{2n}^{(\ell)}(\zeta,t)
-b_1\e^{-\ii\phi_1}[q_{1n}^{(\ell)}(\zeta,t)]^*\right)b_2^*\right),\nn\\
\label{3.182}\tilde{q}_{2n}^{(\ell)}(\zeta,t)\doteq\left(\cdot\right)_{14}
=&T^{-2}(0)\left(\left(a_1^*q_{1n}^{(\ell)}(\zeta,t)
+b_1\e^{-\ii\phi_1}[q_{2n}^{(\ell)}(\zeta,t)]^*\right)(-b_2\e^{-\ii\phi_2})\right.\\
&+\left.\left(a_1^*q_{2n}^{(\ell)}(\zeta,t)
-b_1\e^{-\ii\phi_1}[q_{1n}^{(\ell)}(\zeta,t)]^*\right)a_2\e^{-\ii\phi_2}\right),\quad \ell=cb, sol.\nn
\end{align}
Finally, together with reconstruction formulae \eqref{2.50}-\eqref{2.51}, we arrive at the asymptotic result \eqref{1.4} described in Theorem \ref{th1.1}.

If $\zeta/t=v$ with $v<0$ but $v\neq v_n$ for all $n=1,\cdots,N$, then the solution $\mu^{(RHP)}$ of the RH problem \ref{rh3.2} should take the following form
\be
\mu^{(RHP)}(\zeta,t;k)=\left\{\begin{aligned}
&E(k),\qquad\qquad\qquad\,\,\, k\in\bfC\setminus U_{\pm k_0},\\
&E(k)\mu^{(k_0)}(\zeta,t;k),\quad\,\,\ k\in U_{k_0},\\
&E(k)\mu^{(-k_0)}(\zeta,t;k),\quad k\in U_{-k_0}.
\end{aligned}
\right.
\ee
Thus, we have
\begin{align}
Q=\lim_{k\to0}\frac{\left[\breve{\mu}^{-1}(\zeta,t;0)\breve{\mu}(\zeta,t;k)\right]_{UR}}{k}
=\frac{T^{-2}(0)}{\sqrt{tk_0}}\left[\tilde{\Delta}_0^{-1}\left(\mu_1^{( k_0)}(\zeta,t)+\mu_1^{( -k_0)}(\zeta,t)\right)\tilde{\Delta}_0\right]_{UR}.
\end{align}
By \eqref{muk0} and \eqref{mu-k0}, we find the conclusion \eqref{1.8} presented in Theorem \ref{th1.1}.
\section{Asymptotics in range $\hat{\zeta}>\varepsilon$}\label{sec4}
We now turn to the study of the asymptotic behavior when $\hat{\zeta}>\varepsilon$. Our starting point is RH problem \ref{rhp2.1}. In this case, there is no stationary point, and we only need the following decomposition for the jump matrix $\breve{J}$ on the $\bfR$:
\be\label{4.1}
\breve{J}=\begin{pmatrix} \mathbb{I}_{2\times2} & \rho^\dag(k)\e^{-2\ii t\theta}\\[4pt]
\mathbf{0}_{2\times2} &\mathbb{I}_{2\times2}\end{pmatrix}\begin{pmatrix} \mathbb{I}_{2\times2} & \mathbf{0}_{2\times2}\\[4pt]
\rho(k)\e^{2\ii t\theta} &\mathbb{I}_{2\times2}\end{pmatrix}.
\ee
Since all the pole conditions have desired decay properties, and hence, following the same argument of Proposition \ref{prop3.2}, we have
\be\label{4.2}
\breve{\mu}(\zeta,t;k)=\left(\mathbb{I}_{4\times4}+O(\e^{-ct})\right)
\mu^{(1)}(\zeta,t;k),
\ee
where $\mu^{(1)}(\zeta,t;k)$ satisfies the following RH problem.
\begin{rhp}\label{rhp4.1}
Find a $4\times4$ matrix-valued function $\mu^{(1)}(\zeta,t;k)$ with the following properties:
\begin{itemize}
\item Analyticity: $\mu^{(1)}(\zeta,t;k)$ is analytic in $\bfC\setminus\bfR$.

\item Jump condition: The continuous boundary values of $\mu^{(1)}$ on $\bfR$ satisfy the jump relation
\be
\mu^{(1)}_+(\zeta,t;k)=\mu^{(1)}_-(\zeta,t;k)J^{(1)}(\zeta,t;k),
\ee
where the jump matrix is given by
\be
J^{(1)}(\zeta,t;k)=\begin{pmatrix} \mathbb{I}_{2\times2} & \rho^\dag(k)\e^{-2\ii t\theta}\\[4pt]
\mathbf{0}_{2\times2} &\mathbb{I}_{2\times2}\end{pmatrix}\begin{pmatrix} \mathbb{I}_{2\times2} & \mathbf{0}_{2\times2}\\[4pt]
\rho(k)\e^{2\ii t\theta} &\mathbb{I}_{2\times2}\end{pmatrix}.
\ee
\item Normalization: $\mu^{(1)}(\zeta,t;k)\rightarrow\mathbb{I}_{4\times4},$ as $k\rightarrow\infty.$
\end{itemize}
\end{rhp}
We open the contours at $k=0$ and define several regions and lines as shown in Figure \ref{fig6}, where
\berr
\Xi=\bigcup_{l=1}^4\Xi_l,\quad \Xi_l=\e^{\frac{(l-1)\ii\pi}{2}+\frac{\ii\pi}{4}}\alpha,\quad l=1,2,3,4,\ 0\leq\alpha<\infty.
\eerr
\begin{figure}[htbp]
  \centering
  \includegraphics[width=3in]{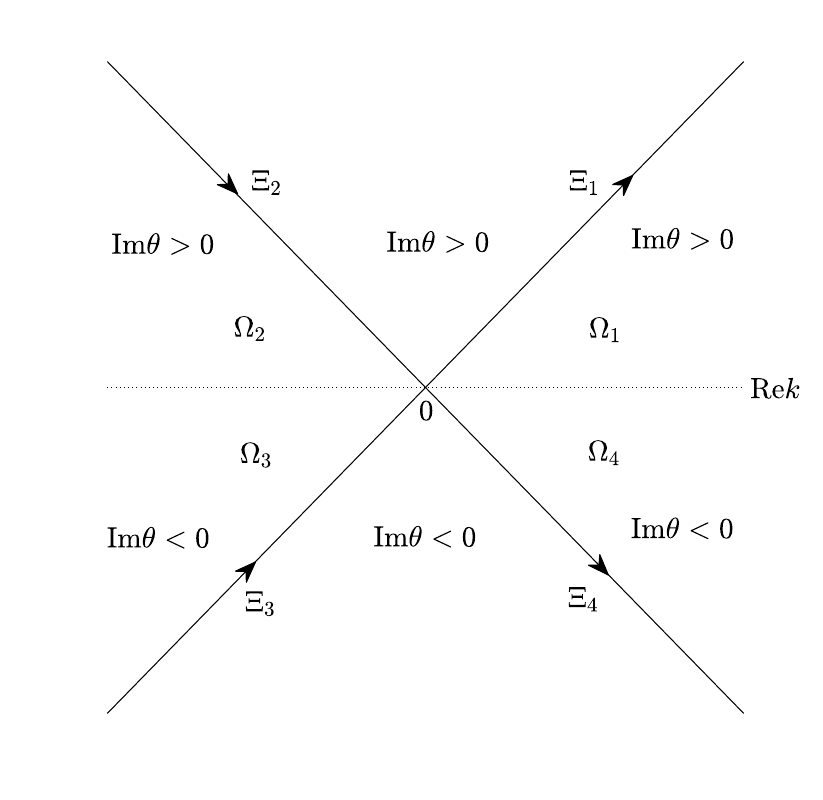}
  \caption{The signature of Im$\theta$ in the case $\hat{\zeta}>\varepsilon$ and the jump lines $\Xi_l$.}\label{fig6}
\end{figure}
Now, we open the jump line at $k=0$ to make a continuous extension, and the first step is to introduce several new functions.
\begin{lemma}\label{lem4.1}
Define functions $R_j:\bar{\Omega}_j\mapsto\bfC$, $j=1,2,3,4$ with boundary values satisfying
\begin{align}
R_1(k)&=\left\{
\begin{aligned}
&-\rho(k),\quad k>0,\\
&-\rho(0),\quad k\in\Xi_1,
\end{aligned}
\right.\quad
R_2(k)=\left\{
\begin{aligned}
&-\rho(k),\quad k<0,\\
&-\rho(0),\quad k\in\Xi_2,
\end{aligned}
\right.\label{4.5}\\
R_3(k)&=\left\{
\begin{aligned}
&\rho^\dag(k),\quad k<0,\\
&\rho^\dag(0),\quad k\in\Xi_3,
\end{aligned}
\right.\quad
R_4(k)=\left\{
\begin{aligned}
&\rho^\dag(k),\quad k>0,\\
&\rho^\dag(0), \quad k\in\Xi_4.
\end{aligned}
\right.
\end{align}
Moreover, $R_j$ admit estimates
\be\label{4.7}
|\bar{\partial}R_j(k)|\leq c_1|\rho'(\text{Re}k)|+c_2|k|^{-1/2},
\ee
for positive constants $c_1$ and $c_2$ depended on $\|\rho\|_{H^1(\bfR)}$.
\end{lemma}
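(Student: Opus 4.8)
The plan is to imitate the non-analytic extension performed in Lemma \ref{lem3.1}, taking advantage of the fact that in the regime $\hat{\zeta}>\varepsilon$ there is no stationary phase point and, by \eqref{4.2}, the discrete spectrum has already been absorbed; consequently no cut-off function is needed here and each $R_j$ can be produced by a single angular interpolation between $\rho$ (resp.\ $\rho^\dagger$) on the real axis and its value at $k=0$. Throughout I will use that $q_{10},q_{20}\in\mathcal{S}(\bfR)$ together with Assumption \ref{assump1} force $\rho\in\mathcal{S}(\bfR)\subset H^1(\bfR)$, and that $\|\rho^\dagger\|_{H^1(\bfR)}=\|\rho\|_{H^1(\bfR)}$.

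First I would carry out the case $j=1$ in full, the remaining three being identical up to relabelling. On $\bar{\Omega}_1$ write $k=s\e^{\ii\phi}$ with $\phi=\arg k$, so that $\phi$ runs from $0$ on the positive real axis to $\pi/4$ on $\Xi_1$, and define
\be
R_1(k)=-\rho(0)-\big(\rho(\mathrm{Re}\,k)-\rho(0)\big)\cos(2\phi).
\ee
Since $\cos(2\phi)=1$ on $\bfR_{>0}$ and $\cos(2\phi)=0$ on $\Xi_1$, this reproduces the boundary values in \eqref{4.5}, and it extends continuously to $k=0$ because $\rho(\mathrm{Re}\,k)\to\rho(0)$. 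Using $\bar{\partial}=\tfrac12\e^{\ii\phi}\big(\partial_s+\tfrac{\ii}{s}\partial_\phi\big)$ and $\mathrm{Re}\,k=s\cos\phi$, a direct computation gives
\be
\bar{\partial}R_1(k)=\tfrac12\e^{\ii\phi}\left[-\rho'(\mathrm{Re}\,k)\cos(2\phi)\big(\cos\phi-\ii\sin\phi\big)+\tfrac{2\ii}{s}\big(\rho(\mathrm{Re}\,k)-\rho(0)\big)\sin(2\phi)\right],
\ee
whence
\be
|\bar{\partial}R_1(k)|\leq c_1|\rho'(\mathrm{Re}\,k)|+c_2\,\frac{|\rho(\mathrm{Re}\,k)-\rho(0)|}{|k|}.
\ee
The bound \eqref{4.7} then follows from the Cauchy--Schwarz estimate $|\rho(\mathrm{Re}\,k)-\rho(0)|=\big|\int_0^{\mathrm{Re}\,k}\rho'(s)\,\dd s\big|\leq\|\rho\|_{H^1(\bfR)}|\mathrm{Re}\,k|^{1/2}\leq\|\rho\|_{H^1(\bfR)}|k|^{1/2}$.

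For $j=2$ one repeats the argument on $\bar{\Omega}_2$ with the angle measured from the negative real axis, again interpolating $-\rho$; for $j=3,4$ one interpolates $\rho^\dagger$ in place of $\rho$ on $\bar{\Omega}_3$, $\bar{\Omega}_4$, using the angle measured from the negative, resp.\ positive, real axis, and the same computation together with $\|\rho^\dagger\|_{H^1}=\|\rho\|_{H^1}$ again yields \eqref{4.7}. I do not expect any genuine obstacle here: the only points requiring care are choosing, in each sector, the angular factor $\cos(2\phi)$ with $\phi$ normalized so that it equals $1$ on the relevant part of $\bfR$ and $0$ on the corresponding ray $\Xi_j$, and noting that the $|k|^{-1/2}$ contribution, although not in $L^\infty$ near the origin, is precisely the one produced by the $H^1$-Hölder bound on $\rho$ and is harmless (indeed integrable) in the subsequent $\bar{\partial}$-analysis of Section \ref{sec4}.
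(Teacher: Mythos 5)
Your proposal is correct and follows essentially the same route as the paper: the same angular interpolation $R_1(k)=-\rho(0)-(\rho(\mathrm{Re}\,k)-\rho(0))\cos(2\phi)$, the same polar-coordinate formula for $\bar{\partial}$, and the same $H^1$ Cauchy--Schwarz bound $|\rho(\mathrm{Re}\,k)-\rho(0)|\leq\|\rho\|_{H^1(\bfR)}|k|^{1/2}$ yielding the $|k|^{-1/2}$ term (the paper likewise treats only $R_1$ and notes the other sectors are identical, and likewise needs no cut-off since the poles were removed before this step). Your write-up is, if anything, slightly more explicit in the derivative computation than the paper's.
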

\begin{proof}
We only prove the lemma for $R_1$ on $\bar{\Omega}_1$. Let $k=s\e^{\ii\phi}$. Define the interpolation
\begin{align}
R_1(z)=-\rho(0)+\left[-\rho(\text{Re} k)+\rho(0)\right]\cos(2\phi).
\end{align}
 Then we find
\begin{align}
\bar{\partial}R_4(z)=-\frac{1}{2}\rho'(\text{Re}k)\cos(2\phi)-
\frac{\ii}{2}\e^{\ii\phi}\frac{-\rho(\text{Re}k)+\rho(0)}
{|k|}\sin(2\phi).
\end{align}
A basic estimate shows that \eqref{4.7} holds.
\end{proof}
Now we introduce a $4\times4$ matrix function by the following transformation
\be\label{4.10}
\mu^{(2)}(\zeta,t;k)=\mu^{(1)}(\zeta,t;k)\mathcal{R}^{(2)}(k),
\ee
where
\be
\mathcal{R}^{(2)}(k)=\left\{
\begin{aligned}
&\begin{pmatrix}
\mathbb{I}_{2\times2} & \mathbf{0}_{2\times2}\\
R_j(k)\e^{2\ii t\theta(k)} & \mathbb{I}_{2\times2}
\end{pmatrix},\,\,\  k\in\Omega_j,\,j=1,2,\\
&\begin{pmatrix}
\mathbb{I}_{2\times2} & R_j(k)\e^{-2\ii t\theta(k)}\\
\mathbf{0}_{2\times2} & \mathbb{I}_{2\times2}
\end{pmatrix},\,  k\in\Omega_j,\,j=3,4,\\
&\mathbb{I}_{4\times4},\qquad\qquad\qquad\quad\quad\,\,\,\ \text{elsewhere}.
\end{aligned}
\right.
\ee
Then $\mu^{(2)}(\zeta,t;k)$ is the solution of a mixed $\bar{\partial}$-RH problem as follows:
\begin{dbarrhp}\label{rhp4.1}
Find a $4\times4$ matrix-valued function $\mu^{(2)}(\zeta,t;k)$ with the following properties:
\begin{itemize}
\item Analyticity: $\mu^{(2)}(\zeta,t;k)$ is continuous with sectionally continuous first partial derivatives in $\bfC\setminus\Xi$.

\item Jump condition: The continuous boundary values $\mu^{(2)}(\zeta,t;k)$ satisfy
$\mu^{(2)}_+(\zeta,t;k)=\mu^{(2)}_-(\zeta,t;k)J^{(2)}(\zeta,t;k)$ across $\Xi$ with
\be\label{3.25}
J^{(2)}(\zeta,t;k)=\left\{
\begin{aligned}
&\begin{pmatrix}
\mathbb{I}_{2\times2} & \mathbf{0}_{2\times2}\\
\rho(0)\e^{2\ii t\theta} & \mathbb{I}_{2\times2}
\end{pmatrix},\qquad k\in \Xi_1\cup\Xi_2,\\
&\begin{pmatrix}
\mathbb{I}_{2\times2} & \rho^\dag(0)\e^{-2\ii t\theta}\\
\mathbf{0}_{2\times2} &\mathbb{I}_{2\times2}
\end{pmatrix},\quad k\in\Xi_3\cup\Xi_4.
\end{aligned}
\right.
\ee

\item $\bar{\partial}$-Derivative: For $k\in\bfC\setminus \Xi$, we have
\be
\bar{\partial}\mu^{(2)}(\zeta,t;k)=\mu^{(2)}(\zeta,t;k)\bar{\partial}\mathcal{R}^{(2)}(k).
\ee

\item Normalization: As $k\rightarrow\infty$,  $\mu^{(2)}(\zeta,t;k)\rightarrow\mathbb{I}_{4\times4}.$
\end{itemize}
\end{dbarrhp}
Denote $\mu^{(0)}(\zeta,t;k)$ be the solution of the RH problem by dropping the $\bar{\partial}$ component, namely, letting $\bar{\partial}\mathcal{R}^{(2)}\equiv0$ in $\bar{\partial}$-Riemann--Hilbert problem \ref{rhp4.1}. We then have
\be\label{4.14}
\mu^{(0)}(\zeta,t;k)=\mathbb{I}_{4\times4}+O(\e^{-ct}),\quad t\to\infty.
\ee
Now we introduce
\be\label{4.15}
\mu^{(3)}(\zeta,t;k)=\mu^{(2)}(\zeta,t;k)[\mu^{(0)}(\zeta,t;k)]^{-1},
\ee
then we get the following pure $\bar{\partial}$-problem for $\mu^{(3)}(\zeta,t;k)$.
\begin{dbar}\label{dbar4.1}
Find a $4\times4$ matrix-valued function $\mu^{(3)}(\zeta,t;k)$ with the following properties:
\begin{itemize}
\item Analyticity: $\mu^{(3)}(\zeta,t;k)$ is continuous with sectionally continuous first partial derivatives in $\bfC\setminus\Xi$.

\item $\bar{\partial}$-Derivative: For $z\in\bfC\setminus\Xi$, we have
\be
\bar{\partial}\mu^{(3)}(\zeta,t;k)=\mu^{(3)}(\zeta,t;k)w^{(3)}(\zeta,t;k),
\ee
where
\be
w^{(3)}(\zeta,t;k)=\mu^{(0)}(\zeta,t;k)\bar{\partial}
\mathcal{R}^{(2)}(z)[\mu^{(0)}(\zeta,t;k)]^{-1}.
\ee

\item Normalization: As $k\rightarrow\infty$, $\mu^{(3)}(\zeta,t;k)\rightarrow\mathbb{I}_{4\times4}.$
\end{itemize}
\end{dbar}
We now proceed as in the previous section and study the integral equation related to the $\bar{\partial}$-problem \ref{dbar4.1}
\be
\mu^{(3)}(\zeta,t;k)=\mathbb{I}_{4\times4}
-\frac{1}{\pi}\iint\limits_{\bfC}\frac{(\mu^{(3)}w^{(3)})(\zeta,t;s)}{s-k}\dd m(s).
\ee
Writing $s=u+\ii v$ and $k=\alpha+\ii\beta$, then region $D_1$ corresponds to $u\geq v\geq0$ and
\begin{align}
\text{Re}(2\ii t\theta(s))=-2tv\left(\hat{\zeta}+\frac{1}{4(u^2+v^2)}\right)\leq-\frac{vt}{2(u^2+v^2)}.
\end{align}
We estimate
\be
\iint\limits_{D_1}\frac{|w^{(3)}(s)|}{|s-k|}\dd m(s)\leq c(I_1+I_2),
\ee
where
\begin{align}
I_1=&\int_0^\infty\int_{v}^\infty\frac{1}{|s-k|}
\left|\rho'(\text{Re}s)\right|e^{-\frac{vt}{2(u^2+v^2)}}\dd u\dd v,\nn\\
I_2=&\int_0^\infty\int_{v}^\infty\frac{1}{|s-k|}\frac{1}{|u+\ii v|^{1/2}}e^{-\frac{vt}{2(u^2+v^2)}}\dd u\dd v.\nn
\end{align}
It follows from the estimates in Subsection \ref{sec3.5} that $|I_1|,|I_2|\leq ct^{-1/6}.$
This proves that
\be
\iint\limits_{D_1}\frac{|w^{(3)}(s)|}{|s-z|}\dd m(s)\leq ct^{-1/6},
\ee
which yields the following result.
\begin{proposition}
As $t\to\infty$, the norm of the integral operator $C_k$ given in \eqref{3.120} decays to zero:
\be
\|C_k\|_{L^\infty\rightarrow L^\infty}\leq ct^{-1/6},
\ee
which yields the existence of operator $(1-C_k)^{-1}$, and hence $\mu^{(3)}(\zeta,t;k)$.
\end{proposition}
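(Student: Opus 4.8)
The plan is to reduce the operator-norm bound to a single scalar integral estimate and then exploit the $\bar\partial$-bounds of Lemma \ref{lem4.1} quarter-plane by quarter-plane, exactly in the spirit of the computation just carried out on $D_1$. For $f\in L^\infty$ one reads off from \eqref{3.120} that
\[
\|C_k[f]\|_{L^\infty}\le \|f\|_{L^\infty}\,\sup_{k\in\bfC}\frac{1}{\pi}\iint_{\bfC}\frac{|w^{(3)}(s)|}{|s-k|}\dd m(s),
\]
so it suffices to show the last supremum is $O(t^{-1/6})$. Since $w^{(3)}=\mu^{(0)}\bar\partial\mathcal{R}^{(2)}[\mu^{(0)}]^{-1}$ with $\mu^{(0)}=\mathbb{I}_{4\times4}+O(\e^{-ct})$ uniformly bounded, and $\bar\partial\mathcal{R}^{(2)}$ is supported in $\Omega_1\cup\Omega_2\cup\Omega_3\cup\Omega_4$, the integral splits into four pieces, one over each region $\Omega_j$; by the symmetry of the configuration about the coordinate axes it is enough to treat $\Omega_1$, which is precisely the estimate displayed above giving $c(I_1+I_2)\le ct^{-1/6}$.

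For the remaining regions I would repeat the argument verbatim: writing $s=u+\ii v$, on the relevant sector one has $\mathrm{Re}(\pm2\ii t\theta(s))\le-\tfrac{|v|t}{2(u^2+v^2)}$; then \eqref{4.7} dominates $|\bar\partial R_j(s)|$ by $c_1|\rho'(\mathrm{Re}\,s)|+c_2|s|^{-1/2}$, and the resulting double integral splits into the two model integrals $I_1$ (with the $|\rho'|$ factor) and $I_2$ (with the $|s|^{-1/2}$ factor). For $I_1$ one uses $\rho'\in L^2(\bfR)$ together with the Cauchy--Schwarz bound $\||s-k|^{-1}\|_{L^2}\le c|v-\beta|^{-1/2}$ as in \eqref{3.128}; for $I_2$, since $|s|^{-1/2}$ is not square-integrable in $u$ at infinity, one pairs it in H\"older with a conjugate exponent $q<2$ so that $|s|^{-1/2}\in L^p$ and $|s-k|^{-1}\in L^q$ along each horizontal line, exactly as in \eqref{3.133}--\eqref{3.134}. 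In both cases the remaining $v$-integral is controlled by converting the Gaussian-type exponential decay into a polynomial rate via the elementary inequality $\e^{-z}\le cz^{-1/6}$ on $(0,\infty)$, producing an $O(t^{-1/6})$ contribution together with a genuinely exponentially small remainder away from the origin. Summing the four contributions gives $\|C_k\|_{L^\infty\to L^\infty}\le ct^{-1/6}$, and once this is $<1$ for $t$ large the Neumann series $\sum_{j\ge0}C_k^{\,j}$ converges in $\mathcal{B}(L^\infty)$, so $(1-C_k)^{-1}$ exists and is bounded; hence $\mu^{(3)}(\zeta,t;k)=(1-C_k)^{-1}\mathbb{I}_{4\times4}$ is well defined and solves the $\bar\partial$-problem \ref{dbar4.1} (equivalently \eqref{3.119}).

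The only genuinely delicate point is the $I_2$ estimate near $k=0$: the factor $|s|^{-1/2}$ coming from the $|\rho'|$-free part of $\bar\partial\mathcal{R}^{(2)}$ fails to be in $L^2$ there, so the naive Cauchy--Schwarz bound used for $I_1$ is unavailable, and one must instead use the H\"older split with $p>2$, $q<2$ and then verify that the exponents so produced still leave a $v$-integral that is uniformly $O(t^{-1/6})$ in $k$. This is exactly the mechanism already established in Subsection \ref{sec3.5}, so no new estimate is required; everything else is a routine transcription of that analysis to the present contour $\Xi$ and the jump data at $k=0$.
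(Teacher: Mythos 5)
Your proposal follows essentially the same route as the paper: bound $\|C_k\|_{L^\infty\to L^\infty}$ by $\sup_{k}\frac{1}{\pi}\iint_{\bfC}|w^{(3)}(s)|\,|s-k|^{-1}\,\dd m(s)$, use the boundedness of $\mu^{(0)}$, the phase decay in each sector and the estimate \eqref{4.7} to reduce to the model integrals $I_1$, $I_2$, control these by the Cauchy--Schwarz/H\"older arguments of Subsection \ref{sec3.5} together with $\e^{-z}\le cz^{-1/6}$, and finally invert $1-C_k$ via a Neumann series. The only slip is cosmetic: the H\"older pairing with $p>2$ is forced because $|s|^{-1/2}$ fails to be square-integrable in $u$ at infinity along each horizontal line (cf. \eqref{3.133}), not because of the singularity at $k=0$, but the remedy you invoke is exactly the paper's.
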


We now consider the asymptotic expansion of $\mu^{(3)}(\zeta,t;k)$ at $k=0$
\be
\mu^{(3)}(\zeta,t;k)=\mathbb{I}_{4\times4}+\mu_0^{(3)}(\zeta,t)+\mu_1^{(3)}(\zeta,t)k+O(k^2),
\ee
where
\begin{align}
\mu_0^{(3)}(\zeta,t)=&-\frac{1}{\pi}\iint\limits_{\bfC}\frac{(\mu^{(3)}w^{(3)})(s)}{s}\dd m(s),\\
\mu_1^{(3)}(\zeta,t)=&-\frac{1}{\pi}\iint\limits_{\bfC}\frac{(\mu^{(3)}w^{(3)})(s)}{s^2}\dd m(s).
\end{align}
\begin{proposition}
As $t\to\infty$, $\mu_0^{(3)}(\zeta,t)$ and $\mu_1^{(3)}(\zeta,t)$ admit the following estimate
\be
|\mu_0^{(3)}(\zeta,t)|\leq ct^{-1},\quad |\mu_1^{(3)}(\zeta,t)|\leq ct^{-1}.
\ee
\end{proposition}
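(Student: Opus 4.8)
The plan is to estimate the two double integrals defining $\mu_0^{(3)}(\zeta,t)$ and $\mu_1^{(3)}(\zeta,t)$ directly, exactly as in the $\hat\zeta<-\varepsilon$ case treated in Propositions \ref{pro3.5} and \ref{prop3.6}, but now exploiting the crucial simplification that there is no stationary phase point: on the support of $\bar\partial\mathcal{R}^{(2)}$ the exponent obeys $\mathrm{Re}(2\ii t\theta(s))\le-\dfrac{vt}{2(u^2+v^2)}$, and the troublesome factor $|k\mp k_0|^{-1/2}$ from Lemma \ref{lem3.1} is replaced by the milder $|k|^{-1/2}$ of Lemma \ref{lem4.1}. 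First I would fix attention on the region $\Omega_1$ (i.e. $D_1$, $u\ge v\ge0$), the other three sectors being handled by the same computation after the obvious reflections. Using the boundedness of $\mu^{(3)}$ and $\mu^{(0)}$ (the latter being $\mathbb{I}_{4\times4}+O(\e^{-ct})$ by \eqref{4.14}) together with the bound \eqref{4.7}, I reduce $\|\mu_0^{(3)}\|$ to a sum of three integrals of the schematic form
\begin{align*}
\int_0^{\infty}\!\!\int_v^{\infty}\frac{|\rho'(u)|}{|s|}\,\e^{-\frac{vt}{2(u^2+v^2)}}\,\dd u\,\dd v,
\quad
\int_0^{\infty}\!\!\int_v^{\infty}\frac{|u+\ii v|^{-1/2}}{|s|}\,\e^{-\frac{vt}{2(u^2+v^2)}}\,\dd u\,\dd v,
\end{align*}
and similarly with the cut-off term $|\bar\partial\chi(s)|$ (which is absent here since there are no discrete eigenvalues entering the contour deformation in this region, but in any case is bounded with compact support). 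Note that on $\Omega_1$ one has $|s-0|=|s|\ge v$ and, more usefully, $|s|\ge u\ge v$, so the inner integral never sees a singularity from $1/|s|$: this is the key structural difference from the $\hat\zeta<-\varepsilon$ analysis, where the presence of $k_0$ forced the use of Hölder splittings.

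The key steps, in order, are: (i) split the outer $v$-integral at $v=1$ (or any fixed positive number), observing that for $v\ge1$ the exponential $\e^{-vt/(2(u^2+v^2))}$ is not by itself summable in $u$ but, combined with the extra decay $\e^{2tv\hat\zeta}$ coming from $\hat\zeta>\varepsilon>0$, gives genuine exponential decay $\e^{-\varepsilon vt}$, contributing $O(t^{-\infty})$; (ii) for the remaining $0<v<1$ piece, bound $\e^{-vt/(2(u^2+v^2))}$ using $\e^{-z}\le c\,z^{-1}$ to extract a full power $t^{-1}$, i.e. $\e^{-vt/(2(u^2+v^2))}\le c\,t^{-1}v^{-1}(u^2+v^2)$, which after insertion cancels the decaying kernel against the polynomial growth and leaves an integrable remainder in $(u,v)$ over the fixed bounded region; (iii) for the $|u+\ii v|^{-1/2}$ term, use the Hölder inequality in $u$ with the $L^p$--$L^q$ estimates \eqref{3.133}-\eqref{3.134} (with $k_0$ replaced by $0$ in the relevant range, which only improves the bounds since the weight $(u^2+v^2)^{-1}$ is now evaluated away from the origin of the $u$-integration), arriving again at a bound $O(t^{-1})$. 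For $\mu_1^{(3)}(\zeta,t)$ one repeats verbatim with $1/|s|$ replaced by $1/|s|^2$; since on $\Omega_1$ we have $|s|^{-1}\le v^{-1}\le$ (a fixed constant on the relevant $v$-range, using that the $\bar\partial$-data is supported away from $k=0$ in this $\hat\zeta>\varepsilon$ configuration — alternatively one simply notes $|s|^{-1}\le 1/v$ and carries the extra $v^{-1}$ through the same computation), the estimate $|\mu_1^{(3)}|\le c\,|\mu_0^{(3)}|\le ct^{-1}$ follows exactly as in Proposition \ref{prop3.6}.

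I expect the only mild obstacle to be bookkeeping near $k=0$: the expansion point of $\mu^{(3)}$ is $k=0$, which lies on the boundary of the jump lines $\Xi_l$, so one must make sure the $\bar\partial$-data $\bar\partial\mathcal{R}^{(2)}$ is supported in a region where $1/|s|$ (resp. $1/|s|^2$) is controlled — this is guaranteed by the contour being opened at $k=0$ together with the geometric fact $|s|\ge v$ on each sector, so no genuine singularity of the kernel is ever integrated. Everything else is a routine transcription of the estimates in Subsection \ref{sec3.5}, with the stationary-phase complications removed; the positivity of $\hat\zeta$ provides, if anything, stronger exponential decay than was available in the $\hat\zeta<-\varepsilon$ case, so the final error bound $O(t^{-1})$ is not tight but is all that is needed for the asymptotic formula in the region $\hat\zeta>\varepsilon$.
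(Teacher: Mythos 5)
Your overall plan—transplant the estimates of Propositions \ref{pro3.5} and \ref{prop3.6} to the sectors of Section 4, using the phase decay $\mathrm{Re}(2\ii t\theta(s))=-2t\hat{\zeta}v-\frac{vt}{2(u^2+v^2)}$ (note the sign: the decaying factor is $\e^{-2tv\hat{\zeta}}$, not $\e^{2tv\hat{\zeta}}$)—is the right idea, but your treatment of the point $k=0$ contains a genuine gap. In this region the contour is opened \emph{at} the origin: the sectors $\Omega_j$ are bounded by the rays $\Xi_l$ emanating from $k=0$, so the support of $\bar{\partial}\mathcal{R}^{(2)}$ touches $s=0$, and your parenthetical claim that "the $\bar{\partial}$-data is supported away from $k=0$" is false. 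The fallback you offer, $|s|^{-1}\le v^{-1}$, does not close the argument: carrying an extra $v^{-1}$ (or $v^{-2}$ for $\mu_1^{(3)}$) into the outer integral produces $\int_0 v^{-1}\e^{-cvt}\,\dd v=\infty$. Likewise the shortcut $|\mu_1^{(3)}|\le c\,|\mu_0^{(3)}|$ used in Proposition \ref{prop3.6} is unavailable here, because there $|s|\ge k_0$ on the support gave $|s|^{-1}\le k_0^{-1}$, while now $|s|$ is not bounded below. The missing ingredient is Proposition \ref{pro2.5}: since $\rho(k)=O(k^2)$, the interpolation of Lemma \ref{lem4.1} gives near the origin $|\bar{\partial}R_j(s)|\le c\bigl(|\rho'(u)|+|\rho(u)|/|s|\bigr)=O(|s|)$ (the crude $|s|^{-1/2}$ of \eqref{4.7} is far from sharp there), so $|\bar{\partial}R_j|/|s|$ is bounded and $|\bar{\partial}R_j|/|s|^2\le c|s|^{-1}$ near $s=0$; the residual $|s|^{-1}$ in the $\mu_1^{(3)}$ estimate is then absorbed by integrating the factor $\e^{-vt/(2|s|^2)}$ in $v$ first over the sector $0<v<u$ (which yields a factor $O(u^2/t)$), rather than by discarding the exponential.

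Step (ii) is also mechanically flawed: replacing $\e^{-vt/(2(u^2+v^2))}$ by $c\,t^{-1}v^{-1}(u^2+v^2)$ destroys all decay in $u$, and the region is not bounded (the inner integral runs over $u\in(v,\infty)$, contrary to what you write). For the $|s|^{-1/2}$ piece of \eqref{4.7} the resulting integrand $t^{-1}v^{-1}|s|^{1/2}$ is not integrable in $u$, and the leftover $v^{-1}$ is not integrable at $v=0$ in any case. The workable version—which is what the paper's Section 3 proofs do once transplanted—is to keep the exponentials: bound the inner $u$-integral uniformly in $v$ by Cauchy--Schwarz or the $L^p$--$L^q$ estimates \eqref{3.133}--\eqref{3.134} combined with the improved behavior of $\bar{\partial}R_j$ near the origin described above, and only then integrate $\e^{-2\varepsilon tv}$ (together with, near the origin, $\e^{-vt/(2|s|^2)}$) in $v$ to extract the full $t^{-1}$. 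As written, your argument does not deliver the claimed $O(t^{-1})$ bound for $\mu_0^{(3)}$ and fails entirely for $\mu_1^{(3)}$.
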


Inverting the sequence of transformations \eqref{4.2}, \eqref{4.10}, \eqref{4.14} and \eqref{4.15}, using \eqref{2.50} and \eqref{2.51} and taking $k\rightarrow0$ vertically, we have as $t\to\infty$
\begin{align}
q_1(x,t)&=q_1(\zeta(x,t),t)=O(t^{-1}),\\
q_2(x,t)&=q_2(\zeta(x,t),t)=O(t^{-1}),
\end{align}
and
\be
x=\zeta(x,t)+O(t^{-1}).
\ee

{\bf Acknowledgments.}
This work was supported by the National Natural Science Foundation of China under Grant No. 12301311 and Natural Science Foundation of Jiangsu Province under Grant No. BK20220434.

\appendix
\section{Proof of Proposition \ref{prop3.3}}\label{secA}
\begin{proof}
We first prove \eqref{3.92}. For a fixed small number $\beta$ with $0<2\beta<1$, we write
\begin{align*}\label{B.1}
&[\delta_{k_0}^1(z)]^2-z^{-2\ii\nu(k_0)}\e^{\frac{\ii z^2}{2}}\\
=&\e^{\ii\beta z^2}\Bigg\{z^{-2\ii\nu(k_0)}\exp\left\{\frac{\ii(1-2\beta) z^2}{2}\left(1-\frac{z}{(1-2\beta)s^4k_0^{-9/2}t^{1/2}}\right)\right\}
\left(\frac{2k_0}{z/\sqrt{k_0^{-3}t}+2k_0}\right)^{-2\ii\nu(k_0)}\\
&\times\e^{2[\chi_{k_0}([z/\sqrt{k_0^{-3}t}]+k_0)-\chi_{k_0}(k_0)]}-z^{-2\ii\nu(k_0)}\e^{\frac{\ii(1-2\beta) z^2}{2}}\Bigg\}\nn\\
=&\e^{\frac{\ii\beta z^2}{2}}(I+II+III),
\end{align*}
where
\begin{align}
I=&\e^{\frac{\ii\beta z^2}{2}}z^{-2\ii\nu(k_0)}\left[
\exp\left\{\frac{\ii(1-2\beta)z^2}{2}\left(1-\frac{z}{(1-2\beta)s^4k_0^{-9/2}t^{1/2}}\right)
\right\}-\e^{\frac{\ii(1-2\beta) z^2}{2}}\right],\nn\\
II=&\e^{\frac{\ii\beta z^2}{2}}z^{-2\ii\nu(k_0)}
\exp\left\{\frac{\ii(1-2\beta)z^2}{2}\left(1-\frac{z}{(1-2\beta)s^4k_0^{-9/2}t^{1/2}}\right)
\right\}\nn\\
&\times\left[\left(\frac{2k_0}{z/\sqrt{k_0^{-3}t}+2k_0}\right)^{-2\ii\nu(k_0)}-1\right],\nn\\
III=&\e^{\frac{\ii\beta z^2}{2}}z^{-2\ii\nu(k_0)}
\exp\left\{\frac{\ii(1-2\beta)z^2}{2}\left(1-\frac{z}{(1-2\beta)s^4k_0^{-9/2}t^{1/2}}\right)
\right\}\left(\frac{2k_0}{z/\sqrt{k_0^{-3}t}+2k_0}\right)^{-2\ii\nu(k_0)}\nn\\
&\times\left[\e^{2[\chi_{k_0}([z/\sqrt{k_0^{-3}t}]+k_0)-\chi_{k_0}(k_0)]}-1\right].\nn
\end{align}
Obviously, $|z^{-2\ii\nu(k_0)}|=\e^{2\nu(k_0)\arg z}$ and $|[2k_0/(z/\sqrt{k_0^{-3}t}+2k_0)]^{-2\ii\nu(k_0)}|
=\e^{2\nu(k_0)\arg(1+\alpha\e^{\pi\ii/4}/2)},$ thus they are bounded. Moreover,
$\e^{\frac{\ii(1-2\beta)z^2}{2}(1-z/((1-2\beta)s^4k_0^{-9/2}t^{1/2}))}$ is bounded as \berr
\text{Re}[1-z/((1-2\beta)s^4k_0^{-9/2}t^{1/2})]
=1-k_0^4\alpha\frac{\text{Re}s^4+\text{Im}s^4}{(1-2\beta)\sqrt{2}|s|^4}
\geq1-\frac{k_0^4\epsilon}{(1-2\beta)}>0
\eerr
for $\beta$ sufficiently small.
On the other hand, we have
\begin{align}
|I|\leq&c\left|\e^{\frac{\ii\beta z^2}{2}}\right|\left|\exp\left\{\frac{\ii(1-2\beta)z^2}{2}\left(1-\frac{ z}{(1-2\beta)s^4k_0^{-9/2}t^{1/2}}\right)\right\}-\e^{\frac{\ii (1-2\beta)z^2}{2}}\right|\\
\leq&c\left|\e^{\frac{\ii\beta z^2}{2}}\right|\sup_{0\leq \eta\leq1}\left|\frac{\dd}{\dd\eta}
\exp\left\{\frac{\ii z^2}{2}\left(1-\frac{\eta z}{s^4k_0^{-9/2}t^{1/2}}\right)\right\}\right|\leq\left|\e^{\frac{\ii\beta z^2}{2}}z^3\right|\frac{c}{\sqrt{t}}\leq\frac{c}{\sqrt{t}},\nn\\
|II|\leq&c\left|\e^{\frac{\ii\beta z^2}{2}}\right|\left|\left(\frac{2k_0}{z/\sqrt{k_0^{-3}t}+2k_0}\right)^{-2\ii\nu(k_0)}-1\right|
=c\left|\e^{\frac{\ii\beta z^2}{2}}\int_1^{1+\frac{z}{\sqrt{4k_0^{-1}t}}}2\ii\nu(k_0)\zeta^{2\ii\nu(k_0)-1}\dd\zeta\right|\\
\leq&c\left|\e^{\frac{\ii\beta z^2}{2}}z\right|\sup\left\{|\zeta^{2\ii\nu(k_0)-1}|:\zeta=1+\frac{\varsigma z}{\sqrt{4k_0^{-1}t}},0\leq \varsigma\leq1\right\}\leq \frac{c}{\sqrt{t}}.\nn
\end{align}
Next, we consider
\begin{align}
&\left|\e^{\frac{\ii\beta z^2}{2}}\right|\left|\e^{2[\chi_{k_0}([z/\sqrt{k_0^{-3}t}]+k_0)-\chi_{k_0}(k_0)]}-1\right|\\
\leq& c\sup_{0\leq\varsigma\leq1}\left|
\e^{2\varsigma[\chi_{k_0}([z/\sqrt{k_0^{-3}t}]+k_0)-\chi_{k_0}(k_0)]}\right|
\left|2\e^{\frac{\ii\beta z^2}{2}}\left[\chi_{k_0}\left(\frac{z}{\sqrt{k_0^{-3}t}}+k_0\right)
-\chi_{k_0}(k_0)\right]\right|.\nn
\end{align}
It follows from \eqref{3.85} that
\begin{align}
\chi_{k_0}(k)=-\frac{1}{2\pi\ii}\left(\int_{-\infty}^{-k_0}+\int_{k_0}^{\infty}\right)
\ln(k-s)\dd\ln(1+\text{tr}[\rho(s)\rho^\dag(s)]+\det[\rho(s)\rho^\dag(s)]).\nn
\end{align}
Denote $g(s)=\partial_s\ln(1+\text{tr}[\rho(k_0s)\rho^\dag(k_0s)]+\det[\rho(k_0s)\rho^\dag(k_0s)])$, then, we find
\begin{align}
&-2\pi\ii\left(\chi_{k_0}\left(\frac{z}{\sqrt{k_0^{-3}t}}+k_0\right)-\chi_{k_0}(k_0)\right)\\
=&\int_{-\infty}^{-k_0}\ln\left(\frac{\frac{z}{\sqrt{k_0^{-3}t}}+k_0-s}{k_0-s}\right)
\dd\ln(1+\text{tr}[\rho(s)\rho^\dag(s)]+\det[\rho(s)\rho^\dag(s)])\nn\\
&+\int^{\infty}_{k_0}\ln\left(\frac{\frac{z}{\sqrt{k_0^{-3}t}}+k_0-s}{k_0-s}\right)
\dd\ln(1+\text{tr}[\rho(s)\rho^\dag(s)]+\det[\rho(s)\rho^\dag(s)])\nn\\
=&\int_{-\infty}^{-1}\ln\left(1+\frac{\frac{z}{\sqrt{k_0^{-3}t}}}{1-s}\right)
\dd\ln(1+\text{tr}[\rho(k_0s)\rho^\dag(k_0s)]+\det[\rho(k_0s)\rho^\dag(k_0s)])\nn\\
&+\int^{\infty}_{1}\ln\left(1-\frac{\frac{z}{\sqrt{k_0^{-3}t}}}{s-1}\right)
\dd\ln(1+\text{tr}[\rho(k_0s)\rho^\dag(k_0s)]+\det[\rho(k_0s)\rho^\dag(k_0s)])\nn\\
=&\left[\int_{-\infty}^{-1}(g(s)-g(1))\ln\left(1+\frac{\frac{z}{\sqrt{k_0^{-3}t}}}{1-s}\right)
\dd s+\int^{\infty}_{1}(g(s)-g(1))\ln\left(1-\frac{\frac{z}{\sqrt{k_0^{-3}t}}}{s-1}\right)
\dd s\right]\nn\\
&+\left[\int_{-\infty}^{-1}g(1)\ln\left(1+\frac{\frac{z}{\sqrt{k_0^{-3}t}}}{1-s}\right)\dd s+\int^{\infty}_{1}g(1)\ln\left(1-\frac{\frac{z}{\sqrt{k_0^{-3}t}}}{s-1}\right)\dd s\right]=
III_1+III_2.\nn
\end{align}
Using the Lipschitz condition, $|\ln(1+a)|\leq|a|$, we have
\begin{align}
\left|\e^{\frac{\ii\beta z^2}{2}}III_1\right|\leq c\left|\e^{\frac{\ii\beta z^2}{2}}\frac{z}{\sqrt{k_0^{-3}t}}\right|
\left(\int_{-\infty}^{-1}+\int_1^\infty\right)\left|\frac{g(s)-g(1)}{s-1}\right|\dd s\leq\frac{c}{\sqrt{t}},
\end{align}
since $g(s)$ is rapidly decay as $s\rightarrow\infty$.
Moreover, we have
\begin{align}
III_2=&\int_0^2g(1)\ln\left(1-\frac{z}{\sqrt{k_0^{-3}t}s}\right)\dd s+\int_2^\infty g(1)\ln\left(1-\frac{z^2}{k_0^{-3}t s^2}\right)\dd s\nn\\
=&\left(\int_0^1+\int_1^2\right)g(1)\ln\left(1-\frac{z}{\sqrt{k_0^{-3}t}s}\right)\dd s+\int_2^\infty g(1)\ln\left(1-\frac{z^2}{k_0^{-3}t s^2}\right)\dd s\nn\\
=&III_{2,1}+III_{2,2}+III_{2,3}.\nn
\end{align}
Thus, we have
\begin{align}
\left|\e^{\frac{\ii\beta z^2 }{2}}III_{2,2}\right|&\leq c\left|\e^{\frac{\ii\beta z^2}{2}}\frac{z}{\sqrt{k_0^{-3}t}}\right|\leq\frac{c}{\sqrt{t}},\\
\left|\e^{\frac{\ii\beta z^2}{2}}III_{2,3}\right|&\leq c\left|\e^{\frac{\ii\beta z^2}{2}}\frac{z^2}{k_0^{-3}t}\right|\leq\frac{c}{t}.
\end{align}
On the other hand, we have
\begin{align}
III_{2,1}=g(1)\left(1-\frac{z}{\sqrt{k_0^{-3}t}}\right)
\ln\left(1-\frac{z}{\sqrt{k_0^{-3}t}}\right)
+g(1)\frac{z}{\sqrt{k_0^{-3}t}}\ln\left(\frac{-z}{\sqrt{k_0^{-3}t}}\right).\nn
\end{align}
Thus,
\begin{align}
\left|\e^{\frac{\ii\beta z^2 }{2}}III_{2,1}\right|\leq&\frac{c\left|\e^{\frac{\ii\beta z^2 }{2}}z\right|}{\sqrt{k_0^{-3}t}}
\left(1-\frac{|z|}{\sqrt{k_0^{-3}t}}\right)+\frac{c\left|\e^{\frac{\ii\beta z^2}{2}}z\ln z\right|}{\sqrt{k_0^{-3}t}}+c\left|\e^{\frac{\ii\beta z^2}{2}}z\right|\frac{\ln\left(\sqrt{k_0^{-3}t}\right)}
{\sqrt{k_0^{-3}t}}
\leq c\frac{\ln t}{\sqrt{t}}.
\end{align}
Therefore, we prove that
\begin{align}
|III|\leq c\frac{\ln t}{\sqrt{t}}.
\end{align}

We now focus on the proof of case \eqref{3.93}, the case \eqref{3.94} is similar. Denote \be
\tilde{\delta}_1(k)=\left(\delta_{2}(k)-\delta(k) \mathbb{I}_{2\times2}\right)\e^{-2\ii t\theta(k)}.
\ee
It then follows from \eqref{3.4} and \eqref{3.81} that $\tilde{\delta}_1(k)$ satisfies
\begin{align}
\left\{
\begin{aligned}
\tilde{\delta}_{1+}(k)&=\left(1+\text{tr}[\rho(k)\rho^\dag(k)]
+\det[\rho(k)\rho^\dag(k)]\right)\tilde{\delta}_{1-}(k)+f(k)\e^{-2\ii t\theta(k)},\quad  |k|>k_0,\\
&=\tilde{\delta}_{1-}(k),\qquad\qquad\qquad\qquad\quad\quad\quad\,\
\qquad\qquad\qquad\qquad\qquad\qquad |k|<k_0,\\
\tilde{\delta}_1(k)&\to\mathbf{0}_{2\times2},\qquad\qquad\qquad\qquad\
\qquad\qquad\qquad\qquad\quad\qquad\qquad\qquad\quad k\to\infty,
\end{aligned}
\right.
\end{align}
where
\be\label{A.12}
f(k)=\left[\rho(k)\rho^\dag(k)-\left(\text{tr}[\rho(k)\rho^\dag(k)]
+\det[\rho(k)\rho^\dag(k)]\right)\mathbb{I}_{2\times2}\right]\delta_{2-}(k).
\ee
In terms of Plemelj formula, the function $\tilde{\delta}_1(k)$ can be represented as follows:
\begin{equation}
\left\{
\begin{aligned}
\tilde{\delta}_1(k)&=X(k)\int_{\bfR\setminus[-k_0,k_0]}
\frac{\e^{-2\ii t\theta(s)}f(s)}{X_+(s)(s-k)}\dd s,\\
X(k)&=\exp\left\{{\frac{1}{2\pi \ii}\int_{\bfR\setminus[-k_0,k_0]}\frac{\ln\left(1+\text{tr}[\rho(s)\rho^\dag(s)]
+\det[\rho(s)\rho^\dag(s)]\right)}{s-k}\dd s }\right\}.
\end{aligned}
\right.
\end{equation}
Note that $\delta_{2-}(k)$ is only continuous on $\mathbb{R}$. Thus, we denote
\be
\tilde{f}(k)=\rho(k)\rho^\dag(k)-\left(\text{tr}[\rho(k)\rho^\dag(k)]
+\det[\rho(k)\rho^\dag(k)]\right)\mathbb{I}_{2\times2}.
\ee
By the analytic decomposition method for scattering data in \cite{PD}, we conclude that $\tilde{f}(k)$ can be decomposed into $\tilde{f}(k)=\tilde{f}_1(k)+\tilde{f}_2(k)+\tilde{f}_3(k)$, where $\tilde{f}_1(k)$ is only defined on $\bfR$, $\tilde{f}_2(k)$ has an analytic continuation to $L_t$
\begin{align*}
L_t:=&L_{t1}\cup L_{t2}=\left\{k\in\bfC|k=k_0+\frac{1}{t}+k_0\alpha\e^{\frac{\pi\ii}{4}},
0<\alpha<\infty\right\}\\
&\cup\left\{k\in\bfC|k=-k_0-\frac{1}{t}+k_0\alpha\e^{\frac{\pi\ii}{4}},
-\infty<\alpha<0\right\},
\end{align*}
and $\tilde{f}_3(k)$ is rational function. Moreover, we have
\begin{align}
\left|\e^{-2\ii t\theta(k)}\tilde{f}_1(k)\right|\leq& \frac{c}{(1+|k|^2)t^l},\quad k\in\bfR,\\
\left|\e^{-2\ii t\theta(k)}\tilde{f}_2(k)\right|\leq& \frac{c}{(1+|k|^2)t^l},\quad k\in L_t,\\
\left|\tilde{f}_3(k)\right|\leq& \frac{c}{1+|k|^5},
\end{align}
for arbitrary natural number $l$. Thus, for $z\in\{z\in\bfC|z=\sqrt{k^{-1}_0t}\alpha\e^{\frac{\pi\ii}{4}},
-\epsilon\leq\alpha\leq\epsilon\}$, we find
\begin{align}
\left(\mathcal{N}\tilde{\delta}_1\right)(k)\leq&cX\left(\frac{z}{\sqrt{k_0^{-3}t}}+k_0\right)
\left(\int_{-k_0-\frac{1}{t}}^{-k_0}+\int_{k_0}^{k_0+\frac{1}{t}}\right)
\frac{\e^{-2\ii t\theta(s)}\tilde{f}(s)}{X_+(s)\left(s-\frac{z}{\sqrt{k_0^{-3}t}}-k_0\right)}\dd s\\
&+cX\left(\frac{z}{\sqrt{k_0^{-3}t}}+k_0\right)
\left(\int^{-k_0-\frac{1}{t}}_{-\infty}+\int^{\infty}_{k_0+\frac{1}{t}}\right)
\frac{\e^{-2\ii t\theta(s)}\tilde{f}_1(s)}{X_+(s)\left(s-\frac{z}{\sqrt{k_0^{-3}t}}-k_0\right)}\dd s\nn\\
&+cX\left(\frac{z}{\sqrt{k_0^{-3}t}}+k_0\right)
\left(\int^{-k_0-\frac{1}{t}}_{-\infty}+\int^{\infty}_{k_0+\frac{1}{t}}\right)
\frac{\e^{-2\ii t\theta(s)}\tilde{f}_2(s)}{X_+(s)\left(s-\frac{z}{\sqrt{k_0^{-3}t}}-k_0\right)}\dd s\nn\\
&+cX\left(\frac{z}{\sqrt{k_0^{-3}t}}+k_0\right)
\left(\int^{-k_0-\frac{1}{t}}_{-\infty}+\int^{\infty}_{k_0+\frac{1}{t}}\right)
\frac{\e^{-2\ii t\theta(s)}\tilde{f}_3(s)}{X_+(s)\left(s-\frac{z}{\sqrt{k_0^{-3}t}}-k_0\right)}\dd s,\nn\\
:=&B_1+B_2+B_3+B_4.\nn
\end{align}
Then, it follows form \cite{GLL-JDE} that
\be
|B_1|\leq ct^{-1},\ |B_2|\leq ct^{-l+1},\ |B_3|\leq ct^{-l+1},\ |B_4|\leq ct^{-1}.
\ee
\end{proof}

\section{The parabolic cylinder model RH problem}\label{secB}
\begin{figure}[htbp]
  \centering
  \includegraphics[width=3in]{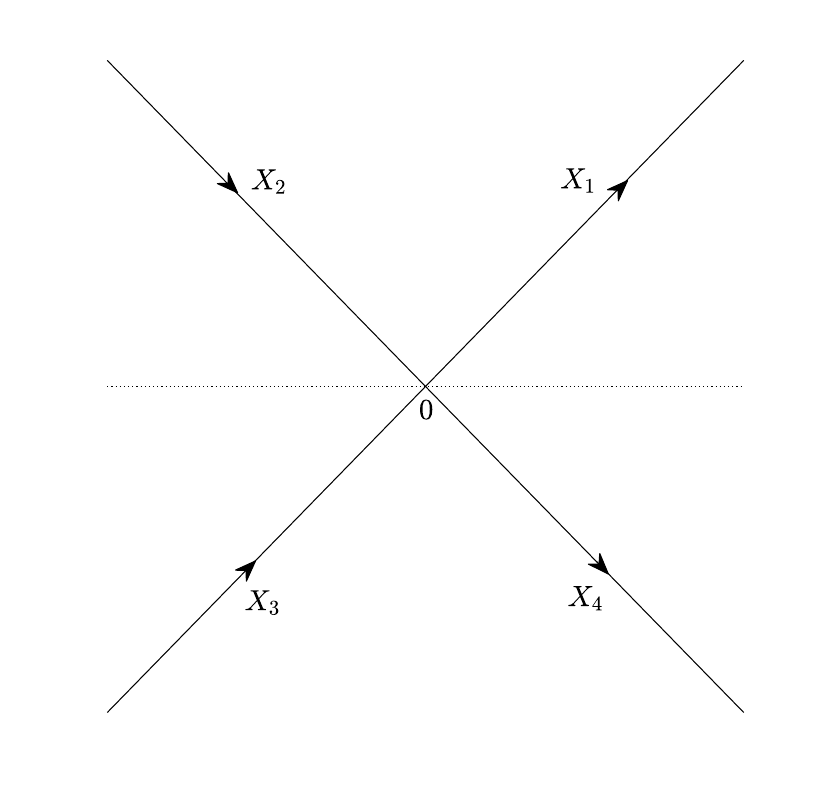}
  \caption{The contour $X=X_1\cup X_2\cup X_3\cup X_4$.}\label{fig7}
\end{figure}

Define the contour $X=X_1\cup X_2\cup X_3\cup X_4\subset\bfC$, where
\be\label{B.1}
\begin{aligned}
X_1&=\{\kappa\e^{\frac{\ii\pi}{4}}|0\leq \kappa<\infty\},\,\,\,\,\ X_2=\{\kappa\e^{\frac{3\ii\pi}{4}}|0\leq\kappa<\infty\},\\
X_3&=\{\kappa\e^{-\frac{3\ii\pi}{4}}|0\leq \kappa<\infty\},\, X_4=\{\kappa\e^{-\frac{\ii\pi}{4}}|0\leq\kappa<\infty\},
\end{aligned}
\ee
and oriented as in Figure \ref{fig7}. For a $2\times2$ complex-valued matrix $q$, define the function $\nu$ by $\nu(q)=-\frac{1}{2\pi}\ln(1+\text{tr}[qq^\dag]+\det[qq^\dag ])$. We consider the following parabolic cylinder model RH problem.
\begin{rhp}\label{rhB.1}
Find a $4\times4$ matrix-valued function $\mu^{(PC)}(q;z)$ on $\bfC\setminus X$ with the following properties:
\begin{itemize}
\item
 Analyticity: $\mu^{(PC)}(q;z)$ is analytic for $z\in\bfC\setminus X$ and extends continuously to $X$.
\item Jump condition:  The jump relation of the continuous boundary values $\mu^{(PC)}_\pm$ on $X$ is
\be\label{B.2}
\mu^{(PC)}_+(q;z)=\mu^{(PC)}_-(q;z)J^{(PC)}(q;z),\quad z\in X,
\ee
where
\be\label{B.3}
J^{(PC)}(q;z)=\left\{
\begin{aligned}
&\begin{pmatrix}
\mathbb{I}_{2\times2} & (\mathbb{I}_{2\times2}+q^\dag q)^{-1}q^\dag\e^{\frac{\ii z^2}{2}}z^{-2\ii\nu(q)}\\
\textbf{0}_{2\times2} & \mathbb{I}_{2\times2}
\end{pmatrix}, \,\ z\in X_1,\\
&\begin{pmatrix}
\mathbb{I}_{2\times2} & \textbf{0}_{2\times2}\\
q\e^{-\frac{\ii z^2}{2}}z^{2\ii\nu(q)} & \mathbb{I}_{2\times2}
\end{pmatrix},\qquad\qquad\qquad\quad\,\ z\in X_2,\\
&\begin{pmatrix}
\mathbb{I}_{2\times2} & q^\dag\e^{\frac{\ii z^2}{2}}z^{-2\ii\nu(q)}\\
\textbf{0}_{2\times2} & \mathbb{I}_{2\times2}
\end{pmatrix},\qquad\qquad\qquad\,\,\,\,\ z\in X_3,\\
&\begin{pmatrix}
\mathbb{I}_{2\times2} & \textbf{0}_{2\times2}\\
q(\mathbb{I}_{2\times2}+q^\dag q)^{-1}\e^{-\frac{\ii z^2}{2}}z^{2\ii\nu(q)} & \mathbb{I}_{2\times2}
\end{pmatrix},\quad z\in X_4.
\end{aligned}
\right.
\ee
\item Normalization: $\mu^{(PC)}(q;z)\rightarrow \mathbb{I}_{4\times4}$, as $z\rightarrow\infty$.
\end{itemize}
\end{rhp}
\begin{theorem}\label{thB.1}
The RH problem \ref{rhB.1} has a unique solution $\mu^{(PC)}(q;z)$ for each $2\times2$ matrix $q$, and this solution satisfies
\be\label{B.4}
\mu^{(PC)}(q;z)=\mathbb{I}_{4\times4}+\frac{\mu^{(PC)}_1(q)}{z}+O\left(z^{-2}\right),\quad z\rightarrow\infty,
\ee
where
\be\label{B.5}
\begin{aligned}
&\mu^{(PC)}_1(q)=\begin{pmatrix}\mathbf{0}_{2\times2}&\ii\beta^{(PC)}\\ -\ii\left(\beta^{(PC)}\right)^\dag&\mathbf{0}_{2\times2}\end{pmatrix},\\
&\beta^{(PC)}=\frac{\sqrt{2\pi}\e^{\frac{3\ii\pi}{4}-\frac{\pi\nu(q)}{2}}}
{\Gamma(\ii\nu(q))\det q}\begin{pmatrix}
q_{22} & -q_{12}\\ -q_{21} &  q_{11}
\end{pmatrix},
\end{aligned}
\ee
where $\Gamma(\cdot)$ denotes the standard Gamma function.
\end{theorem}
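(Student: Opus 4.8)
The plan is to prove Theorem \ref{thB.1} by the standard route for parabolic cylinder model problems: remove the oscillation from the jump by an explicit conjugation so as to obtain a problem with piecewise constant jumps, recognise the resulting matrix as a solution of a Weber (parabolic cylinder) differential equation, solve that equation explicitly in terms of parabolic cylinder functions, and read off from their connection formulae both the jump data (which gives existence) and the subleading coefficient $\mu^{(PC)}_1$ (which gives \eqref{B.4}--\eqref{B.5}). Uniqueness comes first and is immediate: every jump matrix in \eqref{B.3} is $4\times4$ block-triangular with identity diagonal blocks, so $\det J^{(PC)}(q;z)\equiv1$ on $X$; hence $\det\mu^{(PC)}(q;z)$ continues across $X$ to an entire function tending to $1$ at infinity, so $\det\mu^{(PC)}\equiv1$ by Liouville. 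Any solution is therefore invertible on $\bfC\setminus X$, and if $\mu,\tilde\mu$ both solve RH problem \ref{rhB.1} then $\mu\tilde\mu^{-1}$ is entire, jump-free and normalised to $\mathbb{I}_{4\times4}$, hence $\mu\equiv\tilde\mu$.

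For existence and the asymptotics, set $\Psi(q;z)=\mu^{(PC)}(q;z)\,z^{-\ii\nu(q)\Sigma_3}\e^{\frac{\ii z^2}{4}\Sigma_3}$ with a branch of $z^{-\ii\nu(q)\Sigma_3}$ chosen compatibly with the contour $X$. Using \eqref{B.3}, one checks that this conjugation absorbs the factors $z^{\pm2\ii\nu(q)}$ and $\e^{\pm\ii z^2/2}$, so that $\Psi$ has $z$-independent jumps on each ray $X_j$. Then $(\partial_z\Psi)\Psi^{-1}$ is jump-free and extends to an entire matrix function, and from $\mu^{(PC)}=\mathbb{I}_{4\times4}+\mu^{(PC)}_1/z+O(z^{-2})$ one computes $(\partial_z\Psi)\Psi^{-1}=\tfrac{\ii z}{2}\Sigma_3+\tfrac{\ii}{2}[\mu^{(PC)}_1,\Sigma_3]+O(z^{-1})$, so Liouville gives
\begin{equation}
\partial_z\Psi=\Big(\tfrac{\ii z}{2}\Sigma_3+B\Big)\Psi,\qquad B=\tfrac{\ii}{2}\big[\mu^{(PC)}_1,\Sigma_3\big],
\end{equation}
a linear ODE whose coefficient matrix is affine in $z$; here $B$ is block-off-diagonal with upper-right block $\beta^{(PC)}$ and lower-left block $(\beta^{(PC)})^\dag$ in the notation of \eqref{B.5}. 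Matching the $O(z^{-1})$ coefficient in addition fixes the diagonal blocks of $\mu^{(PC)}_1$ and yields an algebraic identity tying $\beta^{(PC)}(\beta^{(PC)})^\dag$ to $\nu(q)$ and $q^\dag q$, which will pin down the parameter of the Weber equation.

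Differentiating the ODE once more shows that the columns of $\Psi$ satisfy a second-order system of Weber type; diagonalising the relevant Hermitian $2\times2$ data, each scalar component solves a parabolic cylinder equation with parameter $a$ built from $\nu(q)$ (in the end $a=\ii\nu(q)$, up to the standard shift $a\mapsto-a-1$). Building $\Psi$ sector by sector from $D_a(\e^{3\ii\pi/4}z)$, $D_a(\e^{-\ii\pi/4}z)$ and their $a\mapsto-a-1$ counterparts, and using $D_a(\zeta)=\zeta^a\e^{-\zeta^2/4}\big(1+O(\zeta^{-2})\big)$ to enforce the normalisation in each sector, one obtains a sectionally analytic $\Psi$; the connection formula $D_a(\zeta)=\tfrac{\Gamma(a+1)}{\sqrt{2\pi}}\big(\e^{\ii\pi a/2}D_{-a-1}(\ii\zeta)+\e^{-\ii\pi a/2}D_{-a-1}(-\ii\zeta)\big)$ then shows that the jumps of $\mu^{(PC)}=\Psi\,\e^{-\frac{\ii z^2}{4}\Sigma_3}z^{\ii\nu(q)\Sigma_3}$ across $X_1,\dots,X_4$ are exactly those of \eqref{B.3}, which establishes existence. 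Finally, extracting the $z^{-1}$ term of $D_a$ and $D_{-a-1}$ in that connection formula gives $\mu^{(PC)}_1$, and the constants $\e^{3\ii\pi/4}$, $\e^{-\pi\nu(q)/2}$ and $\Gamma(\ii\nu(q))$ combine as in \eqref{B.5}, the matrix factor $(\det q)\,q^{-1}$ (the adjugate of $q$) arising from the diagonalisation of $q^\dag q$.

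The main obstacle is not conceptual but the matrix bookkeeping interlaced with the branch choices. One must transport the branch of $z^{-\ii\nu(q)\Sigma_3}$ and the sector phases $\e^{\pm\ii\pi/4},\e^{\pm3\ii\pi/4}$ consistently across all four rays of $X$; diagonalise $q^\dag q$ (equivalently $qq^\dag$) while keeping track of the adjugate $(\det q)q^{-1}$ that appears in $\beta^{(PC)}$, so that the theorem is read for invertible $q$ and the degenerate case recovered by continuity; and push the Weber connection formula through with the exact phases so that $\Gamma(\ii\nu(q))$ --- and not $\Gamma(-\ii\nu(q))$ --- together with the prefactor $\sqrt{2\pi}\,\e^{3\ii\pi/4-\pi\nu(q)/2}$ come out precisely as in \eqref{B.5}. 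Once the scalar parabolic cylinder model is set up with care (cf.\ \cite{Its,PD,CL2021}), the $4\times4$ version follows by running the same analysis on the two $2\times2$ blocks in parallel and invoking the constraint on $\beta^{(PC)}(\beta^{(PC)})^\dag$ derived above.
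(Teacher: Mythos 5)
Your proposal follows essentially the same route as the paper's proof: conjugate by $z^{-\ii\nu(q)\Sigma_3}\e^{\frac{\ii z^2}{4}\Sigma_3}$ to obtain constant jumps, apply Liouville's theorem to get the ODE $\partial_z\Psi=\bigl(\tfrac{\ii z}{2}\Sigma_3+\beta\bigr)\Psi$ with $\beta=\tfrac{\ii}{2}[\mu^{(PC)}_1,\Sigma_3]$, reduce to Weber equations solved by parabolic cylinder functions $D_a$ with $a=\mp\ii\nu(q)$ fixed by the normalization in each sector, and extract $\beta^{(PC)}$ from the connection formula for $D_a$ together with the jump across $\arg z=\tfrac{3\pi}{4}$, which is exactly how the paper obtains \eqref{B.5}. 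Your small additions (the explicit determinant/Liouville uniqueness argument, and decoupling via diagonalization of $q^\dagger q$ instead of the paper's separation-constant argument showing $\beta_{12}\beta_{21}$ is diagonal) do not change the substance.
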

\begin{proof}
In this part, we address the model RH problem for $\mu^{(PC)}(q;z)$ and derive an explicit expression for $\mu^{(PC)}_1(q)$ in terms of the standard parabolic cylinder functions. To begin with, we introduce the following transformation
\begin{equation}\label{b.6}
	\Psi(z) = \mu^{(PC)}(z)z^{-\ii\nu\Sigma_3}\e^{\frac{\ii z}{4}\Sigma_3},
\end{equation}
which implies that
\begin{equation}
	\Psi_+(z) = \Psi_-(z)G(q),\quad G(q) = \e^{-\frac{\ii z}{4}\hat{\Sigma}_3}z^{\ii\nu\hat{\Sigma}_3}J^{(PC)}(q;z).
\end{equation}
Since the jump matrix $G(q)$ is independent of $z$ along each ray, it follows that
\begin{equation}
	\frac{\text{d}\Psi_+(z)}{\text{d}z} = \frac{\text{d}\Psi_-(z)}{\text{d}z}G(q).
\end{equation}
In addition, by applying the transformation \eqref{b.6} and extending $\mu^{(PC)}(q;z)$ as given in \eqref{B.4}, we get
\begin{equation}
	\frac{\text{d}\Psi(z)}{\text{d}z}\Psi^{-1}(z) =  \frac{\ii}{2}z\Sigma_3 + \frac{\ii}{2}[\mu_1^{(PC)},\Sigma_3] +O(z^{-1}).
\end{equation}
According to Liouville's theorem, we conclude that
\begin{equation}\label{b.10}
	\frac{\text{d}\Psi(z)}{\text{d}z} -\frac{\ii}{2}z\Sigma_3\Psi(z) = \beta\Psi(z),
\end{equation}
where
\begin{equation}
	\beta = \frac{\ii}{2}[\mu_1^{(PC)},\Sigma_3] = \begin{pmatrix}
		\mathbf{0}_{2\times 2} & \beta_{12} \\
		\beta_{21} &\mathbf{0}_{2\times 2}
	\end{pmatrix}.
\end{equation}
In particular, we get
\begin{equation}
	(\mu_1^{(PC)})_{12} = \ii\beta_{12},\quad (\mu_1^{(PC)})_{21} = -\ii\beta_{21}.
\end{equation}
Meanwhile, notice that $\mu^{(PC)}(q,z)$ satisfies the symmetry relation
\begin{equation}
	\Sigma_3\left[\mu^{(PC)}(z^*)\right]^{\dagger}\Sigma_3 = \left[\mu^{(PC)}(z)\right]^{-1},
\end{equation}
which further yields that
\begin{equation}
	\beta_{12} = \beta_{21}^{\dagger}.
\end{equation}

Now, we rewrite $\Psi(z)$ as a block matrix
\begin{equation*}
\Psi(z) = 	\begin{pmatrix}
		\Psi_{11}(z) & \Psi_{12}(z) \\
		\Psi_{21}(z) & \Psi_{22}(z)
	\end{pmatrix},
\end{equation*}
where $\Psi_{ij}, i, j=1,2$ are all $2\times 2$ matrices. In view of \eqref{b.10}, we arrive at
\begin{gather}
\label{b.15}	\frac{\text{d}^2\Psi_{11}(z)}{\text{d}z^2}+\left(\left(-\frac{\ii}{2}+\frac{z^2}{4}\right)\mathbb{I}_{2\times 2}-\beta_{12}\beta_{21}\right)\Psi_{11} = 0, \\
	\label{b.16}	\beta_{12}\Psi_{21}(z) = \frac{\text{d}\Psi_{11}(z)}{\text{d}z}-\frac{\ii}{2}z\Psi_{11}(z), \\
\label{b.17}	\frac{\text{d}^2\beta_{12}\Psi_{22}(z)}{\text{d}z^2}+\left(\left(\frac{\ii}{2}+\frac{z^2}{4}\right)\mathbb{I}_{2\times 2}-\beta_{12}\beta_{21}\right)\beta_{12}\Psi_{22}(z) = 0, \\
	\label{b.18}	\Psi_{12} = (\beta_{12}\beta_{21})^{-1}\left(\frac{\text{d}\beta_{12}\Psi_{22}(z)}{\text{d}z} +\frac{\ii}{2}z\beta_{12}\Psi_{22}(z)\right).
\end{gather}
Given that $\beta_{12}$ and $\beta_{21}$  are constant $2\times 2$ matrices independent of $z$, we express them as follows:
\begin{equation*}
	\beta_{12} = \begin{pmatrix}
		A & B \\
		C & D
	\end{pmatrix},\quad
	\beta_{12}\beta_{21} = \begin{pmatrix}
		\tilde{A} & \tilde{B} \\
		\tilde{C} & \tilde{D}
	\end{pmatrix}.
\end{equation*}
Set $\Psi_{11} = (\Psi_{11}^{ij})_{2\times 2}, i,j=1,2$. By examining the (1,1) entry and (2,1) entry of \eqref{b.15}, we obtain
\begin{equation}
	\begin{aligned}
		\frac{\text{d}^2\Psi_{11}^{(11)}(z)}{\text{d}z^2}+\left(-\frac{\ii}{2}+\frac{z^2}{4}\right)\Psi_{11}^{(11)}(z) -\tilde{A}\Psi_{11}^{(11)}(z) -\tilde{B}\Psi_{11}^{(21)}(z) = 0, \\
		\frac{\text{d}^2\Psi_{11}^{(21)}(z)}{\text{d}z^2}+\left(-\frac{\ii}{2}+\frac{z^2}{4}\right)\Psi_{21}^{(11)}(z) -\tilde{C}\Psi_{11}^{(11)}(z) -\tilde{D}\Psi_{11}^{(21)}(z) = 0.
	\end{aligned}
\end{equation}
Let $s$ be a constant satisfying $\tilde{B}\tilde{C}= (s-\tilde{D})(s-\tilde{A})$. Then, a straightforward computation yields
\begin{equation}\label{b.20}
	\frac{\text{d}^2\left[\tilde{C}\Psi_{11}^{(11)}(z)+(s-\tilde{A})\Psi_{11}^{(21)}(z)\right]}{\text{d}z^2}+\left(-\frac{\ii}{2}+\frac{z^2}{4}-s\right)\left[\tilde{C}\Psi_{11}^{(11)}(z)+(s-\tilde{A})\Psi_{11}^{(21)}(z)\right]=0.
\end{equation}
 By substituting variables, \eqref{b.20} can be transformed into the well known Weber equation:
 \begin{equation}
 	\frac{\text{d}^2g(\xi)}{\text{d}\xi^2}+\left(a+\frac{1}{2}-\frac{\xi^2}{4}\right)g(\xi) = 0,
 \end{equation}
which has two linear independent solutions $D_a(\xi)$ and $D_a(-\xi)$. Therefore, there exists two constants $c_1$ and $c_2$ such that
\begin{equation}
	g(\xi) = c_1D_a(\xi)+c_2D_a(-\xi),
\end{equation}
where $D_a(\cdot)$ is the standard parabolic cylinder function. Then, letting $a=-\ii s$ and $\xi = \e^{-\frac{\pi\ii}{4}}z$, \eqref{b.20} can be rewritten as
\begin{equation}\label{b.23}
	\tilde{C}\Psi_{11}^{(11)}(z)+(s-\tilde{A})\Psi_{11}^{(21)}(z) = c_1D_a(\e^{-\frac{\pi\ii}{4}}z) +c_2D_a(\e^{\frac{3\pi\ii}{4}}z).
\end{equation}
 Furthermore, by \cite{ET1927}, as $\xi\to\infty$,
 \begin{equation}\label{b.24}
 		D_a(\xi)=\begin{cases}
 		\xi^a\e^{-\frac{\xi^2}{4}}(1+O(\xi^{-2})),&|\arg \xi|<\frac{3\pi}{4},\\
 		\xi^a\e^{-\frac{\xi^2}{4}}(1+O(\xi^{-2}))
 		-\frac{\sqrt{2\pi}}{\Gamma(-a)}\e^{a\pi \ii}\xi^{-a-1}\e^{\frac{\xi^2}{4}}(1+O(\xi^{-2})),&\frac{\pi}{4}<\arg \xi< \frac{5\pi}{4},\\
 		\xi^a\e^{-\frac{\xi^2}{4}}(1+O(\xi^{-2}))
 		-\frac{\sqrt{2\pi}}{\Gamma(-a)}\e^{-a\pi \ii}\xi^{-a-1}\e^{\frac{\xi^2}{4}}(1+O(\xi^{-2})),&-\frac{5\pi}{4} <\arg \xi<-\frac{\pi}{4}.
 	\end{cases}
 \end{equation}
Note that as $z\to\infty$,
\begin{equation}\label{b.25}
	\Psi_{11}(z)\to z^{-\ii\nu}\e^{\frac{\ii z^2}{4}}\mathbb{I}_{2\times 2}, \quad \Psi_{22}(z)\to z^{\ii\nu}\e^{-\frac{\ii z^2}{4}}\mathbb{I}_{2\times 2}.
\end{equation}
Based on the asymptotic expansion given in \eqref{b.25} and the expressions in \eqref{b.23} and \eqref{b.24}, along the line $z = \sigma\e^{\frac{\pi\ii}{4}}$ with $\sigma > 0$, we can deduce that $c_1 = \tilde{C}z^{-\ii\nu-a}\e^{\frac{a\pi\ii}{4}}$ and $c_2 = 0$. Moreover, it is evident that $\Psi_{11}^{(11)}(z)$ and $\Psi_{11}^{(21)}(z)$ both satisfy the asymptotic expansion \eqref{b.25}, indicating that they are linearly independent. Consequently, it follows from \eqref{b.23} that $s$ must be unique. By the definition of
 $s$, we obtain $\tilde{C} = \tilde{B} = 0$. Thus, $\beta_{12}\beta_{21}= \text{diag}\{\tilde{A},\tilde{D}\}$, and \eqref{b.15} simplifies accordingly
 \begin{equation}
 	\frac{\text{d}^2}{\text{d}z^2}\begin{pmatrix}
 		\Psi_{11}^{(11)} & \Psi_{11}^{(12)}\\
 		\Psi_{11}^{(21)} & \Psi_{11}^{(22)}
  		 	\end{pmatrix}
  	+\left(-\frac{\ii}{2}+\frac{z^2}{4}\right)\begin{pmatrix}
  		\Psi_{11}^{(11)} & \Psi_{11}^{(12)}\\
  		\Psi_{11}^{(21)} & \Psi_{11}^{(22)}
  	\end{pmatrix}
  	-\begin{pmatrix}
  			\tilde{A}\Psi_{11}^{(11)} & \tilde{A}\Psi_{11}^{(12)}\\
  		\tilde{D}\Psi_{11}^{(21)} & \tilde{D}\Psi_{11}^{(22)}
  	\end{pmatrix}= 0.
 \end{equation}
Since $\Psi_{11}^{(11)},\Psi_{11}^{(12)}$ and $\Psi_{11}^{(21)},\Psi_{11}^{(22)}$ satisfy the same differential equation, by setting
\begin{equation}
	a_1 = -\ii\tilde{A},\quad a_2 = -\ii\tilde{D},
\end{equation}
similar to \eqref{b.23}, $\Psi_{11}^{(11)}$ and $\Psi_{11}^{(22)}$ can be expressed linearly in terms of
$D_{a_1}(\e^{-\frac{\pi\ii}{4}}z),D_{a_1}(\e^{\frac{3\pi\ii}{4}}z)$ and $D_{a_2}(\e^{-\frac{\pi\ii}{4}}z),D_{a_2}(\e^{\frac{3\pi\ii}{4}}z)$. On the other hand, in view of \eqref{b.25}, we obtain
\begin{equation}
	\Psi_{11}^{(12)}(z)\to 0,\quad \Psi_{11}^{(21)}(z)\to 0,\quad z\to\infty.
\end{equation}
Hence, form \eqref{b.15} and \eqref{b.17}, we have
\begin{equation}
	\begin{aligned}
		\Psi_{11}^{(11)}(z) &= c_1^{(1)}D_{a_1}(\e^{-\frac{\pi\ii}{4}}z)+ c_2^{(1)}D_{a_1}(\e^{\frac{3\pi\ii}{4}}z), \\
		\Psi_{11}^{(22)}(z) &= c_1^{(2)}D_{a_1}(\e^{-\frac{\pi\ii}{4}}z)+ c_2^{(2)}D_{a_1}(\e^{\frac{3\pi\ii}{4}}z), \\
		\tilde{A}\Psi_{22}^{(11)}(z) &= c_1^{(3)}D_{a_1}(\e^{\frac{\pi\ii}{4}}z)+ c_2^{(3)}D_{a_1}(\e^{-\frac{3\pi\ii}{4}}z),\\
		\tilde{A}\Psi_{22}^{(22)}(z) &= c_1^{(4)}D_{a_1}(\e^{\frac{\pi\ii}{4}}z)+ c_2^{(4)}D_{a_1}(\e^{-\frac{3\pi\ii}{4}}z),
	\end{aligned}
\end{equation}
where $c_1^{(j)}$ and $c_2^{(j)}\ (j=1,2,3,4)$ are constants. Then, for $\arg z\in(\frac{3\pi}{4},\frac{5\pi}{4})$, we find that
\begin{equation}
	\begin{aligned}
		\Psi_{11}^{(11)}(z) &= \Psi_{11}^{(22)}(z) = \e^{-\frac{3\pi\nu}{4}}D_{a_1}(\e^{\frac{3\pi\ii}{4}}z),\quad a_1 = a_2 = -\ii\nu,\\
		\Psi_{22}^{(11)}(z) &= \Psi_{22}^{(22)}(z) = \e^{-\frac{3\pi\nu}{4}}D_{-a_1}(\e^{-\frac{3\pi\ii}{4}}z).
	\end{aligned}
\end{equation}
 From \eqref{b.16}, \eqref{b.18} and the properties of $D_a(\cdot)$
 \begin{equation}
 	\frac{\text{d}D_a(\xi)}{\text{d}\xi} + \frac{\xi}{2}D_a(\xi)- aD_{a-1}(\xi) = 0,
 \end{equation}
 we can infer that
\begin{align}
	\beta_{12}\Psi_{21}(z) &= \e^{\frac{3\pi}{4}(\ii-\nu)}a_1D_{a_1-1}(\e^{\frac{3\pi\ii}{4}}z)\mathbb{I}_{2\times 2}, \\
	\Psi_{12}(z) &=\beta_{12}\e^{-\frac{\pi}{4}(\ii+3\nu)}D_{-a_1-1}(\e^{-\frac{3\pi\ii}{4}}z).
\end{align}
Denote
\begin{equation}\label{b.34}
	\tilde{\Psi}(z) =\begin{pmatrix}
		\Psi_{11}(z) & \Psi_{12}(z) \\
		\beta_{12}\Psi_{21}(z) & \Psi_{22}(z)
	\end{pmatrix}.
\end{equation}
Therefore, by a similar computation, it is easy to get
\begin{equation}\label{b.35}
	\tilde{\Psi}(z)= \begin{cases}
		\begin{pmatrix}
			\e^{-\frac{3\pi\nu}{4}}D_{a_1}(\e^{\frac{3\pi\ii}{4}}z)\mathbb{I}_{2\times2} & \beta_{12}\e^{-\frac{\pi}{4}(\ii+3\nu)}D_{-a_1-1}(\e^{-\frac{3\pi\ii}{4}}z) \\
			 \e^{\frac{3\pi}{4}(\ii-\nu)}a_1D_{a_1-1}(\e^{\frac{3\pi\ii}{4}}z)\mathbb{I}_{2\times 2}& \e^{-\frac{3\pi\nu}{4}}D_{-a_1}(\e^{-\frac{3\pi\ii}{4}}z)\mathbb{I}_{2\times 2}
		\end{pmatrix},\quad &\arg z\in(\frac{3\pi}{4},\frac{5\pi}{4}), \\
		\begin{pmatrix}
			\e^{\frac{\pi\nu}{4}}D_{a_1}(\e^{-\frac{\pi\ii}{4}}z)\mathbb{I}_{2\times 2} & \beta_{12}\e^{-\frac{\pi}{4}(\ii+3\nu)}D_{-a_1-1}(\e^{-\frac{3\pi\ii}{4}}z) \\
			\e^{\frac{\pi}{4}(\nu- i)}a_1D_{a_1-1}(\e^{-\frac{\pi\ii}{4}}z)\mathbb{I}_{2\times 2} &\e^{-\frac{3\pi\nu}{4}}D_{-a_1}(\e^{-\frac{3\pi\ii}{4}}z)\mathbb{I}_{2\times 2}
		\end{pmatrix},\quad &\arg z\in (\frac{\pi}{4},\frac{3\pi}{4}),\\
	
		\begin{pmatrix}
			\e^{\frac{\pi\nu}{4}}D_{a_1}(\e^{-\frac{\pi\ii}{4}}z)\mathbb{I}_{2\times 2} & \beta_{12}\e^{\frac{\pi}{4}(3\ii+\nu)}D_{-a_1}(\e^{\frac{\pi\ii}{4}}z) \\
				\e^{\frac{\pi}{4}(\nu- i)}a_1D_{a_1-1}(\e^{-\frac{\pi\ii}{4}}z)\mathbb{I}_{2\times 2} &  \e^{\frac{\pi\nu}{4}}D_{-a_1-1}(\e^{\frac{\pi\ii}{4}}z)\mathbb{I}_{2\times 2}
		\end{pmatrix},\quad &\arg z\in(-\frac{\pi}{4},\frac{\pi}{4}),\\
	
		\begin{pmatrix}
			\e^{-\frac{3\pi\nu}{4}}D_{a_1}(\e^{\frac{3\pi\ii}{4}}z)\mathbb{I}_{2\times2} & \beta_{12}\e^{\frac{\pi}{4}(3\ii+\nu)}D_{-a_1}(\e^{\frac{\pi\ii}{4}}z) \\
			\e^{\frac{3\pi}{4}(\ii-\nu)}a_1D_{a_1-1}(\e^{\frac{3\pi\ii}{4}}z)\mathbb{I}_{2\times 2} &\e^{\frac{\pi\nu}{4}}D_{-a_1-1}(\e^{\frac{\pi\ii}{4}}z)\mathbb{I}_{2\times 2}
		\end{pmatrix},\quad &\arg z\in (-\frac{3\pi}{4},-\frac{\pi}{4}).
	\end{cases}
\end{equation}

 Along the ray $\arg z = \frac{3\pi}{4}$, we know that
 \begin{equation}
 	\Psi_+(z) = \Psi_-(z)\begin{pmatrix}
 		\mathbb{I}_{2\times2} & \textbf{0}_{2\times2}\\
 		q & \mathbb{I}_{2\times2}
 	\end{pmatrix}.
 \end{equation}
From \eqref{b.34} and \eqref{b.35}, we deduce that the (2,1) entry of the above RH problem satisfies
\begin{equation}\label{b.37}
	\e^{\frac{\pi}{4}(\nu- i)}a_1D_{a_1-1}(\e^{-\frac{\pi\ii}{4}}z)\mathbb{I}_{2\times 2} = \e^{\frac{3\pi}{4}(\ii-\nu)}a_1D_{a_1-1}(\e^{\frac{3\pi\ii}{4}}z)\mathbb{I}_{2\times 2} +\beta_{12}\e^{-\frac{3\pi\nu}{4}}D_{-a_1}(\e^{-\frac{3\pi\ii}{4}}z)q
\end{equation}
Meanwhile, from \cite{ET1927}, it follows that
\begin{equation}\label{b.38}
		D_a(z)=\frac{\Gamma(a+1)}{\sqrt{2\pi}}\left[\e^{\frac{a\pi \ii}{2}}D_{-a-1}(\ii z)+\e^{-\frac{a\pi \ii}{2}}D_{-a-1}(-\ii z)\right].
\end{equation}
Therefore,
\begin{equation}\label{b.39}
	D_{-a_1}(\e^{-\frac{3\pi\ii}{4}})=\frac{\Gamma(-a_1+1)}{\sqrt{2\pi}}\left[\e^{-\frac{a_1\pi \ii}{2}}D_{a_1-1}(\e^{-\frac{\pi\ii}{4}}z)+\e^{\frac{a_1\pi \ii}{2}}D_{a_1-1}(\e^{\frac{3\pi\ii}{4}} z)\right]
\end{equation}
By comparing the coefficients in \eqref{b.37} and \eqref{b.39}, we get
\begin{equation}
	\beta_{12} =\frac{\sqrt{2\pi}\e^{\frac{3\pi\ii}{4}-\frac{\pi\nu}{2}}}
	{\Gamma(\ii\nu)\det q}\begin{pmatrix}
		q_{22} & -q_{12}\\ -q_{21} &  q_{11}
	\end{pmatrix},
\end{equation}
which is related to \eqref{B.5} by setting $\beta^{(PC)} = \beta_{12}$.

\end{proof}

\medskip
\small{

}
\end{document}